\documentclass[11pt,reqno]{amsart}
\usepackage{amsmath, amsthm, amssymb}
\usepackage{enumitem}
\usepackage[foot]{amsaddr}

\usepackage{fullpage}
\usepackage[many]{tcolorbox}
\usepackage{xcolor}
\usepackage{wasysym}
\usepackage{mathtools}
\usepackage{scalerel}
\usepackage{bbm}

\usepackage{tikz}
\usetikzlibrary{arrows,backgrounds,patterns.meta}
\usetikzlibrary{positioning,shadings,cd}
\usetikzlibrary{shapes}
\usetikzlibrary{backgrounds}
\usetikzlibrary{decorations,decorations.pathreplacing,decorations.markings,decorations.pathmorphing}
\usetikzlibrary{fit,calc,through}
\usetikzlibrary{external}
\newcommand{\tikzmath}[2][]
{\vcenter{\hbox{\begin{tikzpicture}[#1]#2\end{tikzpicture}}}
}
\newcommand{\roundNbox}[6]{
	\draw[rounded corners=5pt, very thick, #1] ($#2+(-#3,-#3)+(-#4,0)$) rectangle ($#2+(#3,#3)+(#5,0)$);
	\coordinate (ZZa) at ($#2+(-#4,0)$);
	\coordinate (ZZb) at ($#2+(#5,0)$);
	\node at ($1/2*(ZZa)+1/2*(ZZb)$) {#6};
}
\tikzset{super thick/.style={line width=3pt}}
\tikzstyle{far>}=[decoration={markings, mark=at position 0.75 with {\arrow{>}}}, postaction={decorate}]
\tikzstyle{mid>}=[decoration={markings, mark=at position 0.55 with {\arrow{>}}}, postaction={decorate}]
\tikzstyle{mid<}=[decoration={markings, mark=at position 0.55 with {\arrow{<}}}, postaction={decorate}]
\tikzset{super thick/.style={line width=3pt}}
\tikzstyle{far>}=[decoration={markings, mark=at position 0.75 with {\arrow{>}}}, postaction={decorate}]
\tikzstyle{mid>}=[decoration={markings, mark=at position 0.55 with {\arrow{>}}}, postaction={decorate}]
\tikzstyle{mid<}=[decoration={markings, mark=at position 0.55 with {\arrow{<}}}, postaction={decorate}]
\tikzstyle{knot}=[preaction={super thick, white, draw}]
\tikzstyle{coupon}=[draw, very thick, rectangle, rounded corners=5pt]
\tikzset{Rightarrow/.style={double equal sign distance,>={Implies},->},
triplecd/.style={-,preaction={draw,Rightarrow}},
quadruplecd/.style={preaction={draw,Rightarrow,
shorten >=0pt
},
shorten >=1pt,
-,double,double
distance=0.2pt}}
\tikzset{
    tripleline/.style args={[#1] in [#2] in [#3]}{
        #1,preaction={preaction={draw,#3},draw,#2}
    }
}
\tikzstyle{triple}=[tripleline={[line width=.15mm,black] in
      [line width=.7mm,white] in
      [line width=1mm,black]}] 
\tikzset{
    quadrupleline/.style args={[#1] in [#2] in [#3] in [#4]}{
        #1,preaction={preaction={preaction={draw,#4},draw,#3}, draw,#2}
    }
}
\tikzstyle{quadruple}=[quadrupleline={[line width=.3mm,white] in
      [line width=.6mm,black] in
      [line width=1.2mm,white] in
      [line width=1.5mm,black]}]

\definecolor{violet}{RGB}{148,0,211}
\definecolor{DarkGreen}{RGB}{0,150,0}
\definecolor{rufous}{HTML}{A81C07}

\usepackage[plainpages=false,hypertexnames=false,pdfpagelabels]{hyperref}
\definecolor{medium-blue}{rgb}{0,0,.8}
\hypersetup{colorlinks, linkcolor={purple}, citecolor={medium-blue}, urlcolor={medium-blue}}
\newcommand{\arxiv}[1]{\href{http://arxiv.org/abs/#1}{\tt arXiv:\nolinkurl{#1}}}
\newcommand{\arXiv}[1]{\href{http://arxiv.org/abs/#1}{\tt arXiv:\nolinkurl{#1}}}

\DeclareMathOperator{\Ad}{Ad}
\DeclareMathOperator{\Aut}{Aut}
\DeclareMathOperator{\coev}{coev}
\DeclareMathOperator{\End}{End}
\DeclareMathOperator{\eval}{eval}

\DeclareMathOperator{\ev}{ev}

\DeclareMathOperator{\Hom}{Hom}
\DeclareMathOperator{\id}{id}

\DeclareMathOperator{\Irr}{Irr}

\DeclareMathOperator{\op}{op}
\DeclareMathOperator{\Path}{Path}

\DeclareMathOperator{\spec}{spec}

\DeclareMathOperator{\Tr}{Tr}
\DeclareMathOperator{\tr}{tr}

%



\newcommand{\set}[2]{\left\{#1 \middle| #2\right\}}
\newcommand{\pbk}[2]{\left| #1 \middle\rangle \middle\langle #2\right|}


\newcommand{\DHR}{\mathsf{DHR}}
\newcommand{\Fun}{\mathsf{Fun}}
\newcommand{\fdHilb}{\mathsf{Hilb_{fd}}}

\newcommand{\Vect}{\mathsf{Vect}}

\newcommand{\Rep}{\mathsf{Rep}}

\def\semicolon{;}
\def\applytolist#1{
    \expandafter\def\csname multi#1\endcsname##1{
        \def\multiack{##1}\ifx\multiack\semicolon
            \def\next{\relax}
        \else
            \csname #1\endcsname{##1}
            \def\next{\csname multi#1\endcsname}
        \fi
        \next}
    \csname multi#1\endcsname}

\def\calc#1{\expandafter\def\csname c#1\endcsname{{\mathcal #1}}}
\applytolist{calc}QWERTYUIOPLKJHGFDSAZXCVBNM;
\def\bbc#1{\expandafter\def\csname bb#1\endcsname{{\mathbb #1}}}
\applytolist{bbc}QWERTYUIOPLKJHGFDSAZXCVBNM;
\def\bfc#1{\expandafter\def\csname bf#1\endcsname{{\mathbf #1}}}
\applytolist{bfc}QWERTYUIOPLKJHGFDSAZXCVBNM;
\def\sfc#1{\expandafter\def\csname s#1\endcsname{{\sf #1}}}
\applytolist{sfc}QWERTYUIOPLKJHGFDSAZXCVBNM;
\def\fc#1{\expandafter\def\csname f#1\endcsname{{\mathfrak #1}}}
\applytolist{fc}QWERTYUIOPLKJHGFDSAZXCVBNM;
\def\rmc#1{\expandafter\def\csname rm#1\endcsname{{\mathrm #1}}}
\applytolist{rmc}QWERTYUIOPLKJHGFDSAZXCVBNM;

\numberwithin{equation}{section}

\theoremstyle{plain}
\newtheorem{thm}[equation]{Theorem}
\newtheorem*{thm*}{Theorem}
\newtheorem{cor}[equation]{Corollary}
\newtheorem{lem}[equation]{Lemma}
\newtheorem{prop}[equation]{Proposition}
\newtheorem{conj}[equation]{Conjecture}

\newtheorem*{claim*}{Claim}

\newtheorem{thmalpha}{Theorem}

\newtheorem{coralpha}[thmalpha]{Corollary}
\theoremstyle{definition}
\newtheorem{defn}[equation]{Definition}
\newtheorem{algorithm}[equation]{Algorithm}

\newtheorem*{trick*}{Trick}
\newtheorem{construction}[equation]{Construction}
\newtheorem{nota}[equation]{Notation}

\newtheorem{ex}[equation]{Example}

\newtheorem{rem}[equation]{Remark}


\title{Local topological order and boundary algebras}
\date{\today}
\begin{document}
\author{Corey Jones$^1$}
\address{$^1$ Department of Mathematics, North Carolina State University, Raleigh, NC 27695, USA}
\author{Pieter Naaijkens$^2$}
\address{$^2$ School of Mathematics, Cardiff University, Cardiff, CF24 4AG, United Kingdom}
\author{David Penneys$^3$}
\author{Daniel Wallick$^3$}
\address{$^3$ Department of Mathematics, The Ohio State University, Columbus, OH 43210, USA}
\makeatletter
\let\@wraptoccontribs\wraptoccontribs
\makeatother
\contrib[with an appendix by]{Masaki Izumi$^4$}
\address{$^4$ Graduate School of Science, Kyoto University, Sakyo-ku, Kyoto 606-8502, Japan}

\begin{abstract}
We introduce a set of axioms for locally topologically ordered quantum spin systems in terms of nets of local ground state projections, and we show they are satisfied by Kitaev's Toric Code and Levin-Wen type models. 
For a locally topologically ordered spin system on $\mathbb{Z}^{k}$, we define a local net of boundary algebras on $\mathbb{Z}^{k-1}$, which provides a mathematically precise algebraic description of the holographic dual of the bulk topological order.
We construct a canonical quantum channel so that states on the boundary quasi-local algebra parameterize bulk-boundary states without reference to a boundary Hamiltonian. 
As a corollary, we obtain a new proof of a recent result of Ogata [Ann. H. Poincar\'e 25, 2024] that the bulk cone von Neumann algebra in the Toric Code is of type $\rm{II}$, and we show that Levin-Wen models can have cone algebras of type $\rm{III}$.
Finally, we argue that the braided tensor category of DHR bimodules for the net of boundary algebras characterizes the bulk topological order in (2+1)D, and can also be used to characterize the topological order of boundary states.
\end{abstract}
\maketitle
\tableofcontents
\section{Introduction}

In 2+1 dimensions, topologically ordered spin systems display a number of interesting phenomena, from non-trivial braiding statistics of quasi-particles to robust error correction properties. 
The most widely studied class of topological spin systems are exactly solvable with commuting projector Hamiltonians \cite{cond-mat/0506438,PhysRevB.71.045110}. These have the property that the useful error correction features of the system are present in the local ground state spaces, which has led to these systems being called $\textit{local topological ordered}$.

In this article, we propose an axiomatization for local topological order (LTO) in terms of nets of projections in the quasi-local algebra (Definition \ref{defn:StrongLocalTopologicalOrder}).
Our axioms are stronger than previous axiomatizations of topological quantum order (the `TQO conditions')~\cite{MR2742836,MR2842961}, but we show that our stronger axioms hold for the Levin--Wen models~\cite{PhysRevB.71.045110} and Kitaev's Toric Code~\cite{MR1951039}.\footnote{Our stronger axioms were recently shown to hold for Kitaev's Quantum Double model in \cite{MR4814524} and for twisted quantum double models in \cite{2411.08675}.}
The primary motivation for our stronger axiomatization is that LTOs in our sense give rise to a local net of boundary algebras (Construction \ref{const:LTO-BoundaryNet}). These are nets of $\rm C^*$-algebras defined on a lattice in one spatial dimension lower than the original LTO. 
In general, the local boundary algebras do not factorize as tensor products of algebras assigned to sites, and thus carry important topological information about the bulk system. 
In general, they do not embed as unital subalgebras of the original quasi-local algebra. 

The first main result of our paper is that Kitaev's Toric Code model and the Levin-Wen models satisfy our axiomatization. 
This provides a new proof (in the case of the Levin-Wen model) for the quantum error correction properties for these classes~\cite{MR1951039,Cui2020kitaevsquantum,QIU2020168318}.  
We also identify the boundary nets as \emph{fusion categorical nets}.
Such nets first emerged from subfactor theory ~\cite{MR1642584, MR3406647}, and have recently also found use in applications to topological phases of matter~\cite{2211.03777,2112.09091,MR4109480,MR4272039,MR4814692} and have connections to conformal field theory \cite{MR3719546,PhysRevLett.128.231603,PhysRevLett.128.231602,MR4580529}.

\begin{thmalpha}
\label{thm:LTOExamples}
The Toric Code and Levin-Wen models satisfy the LTO axioms \ref{LTO:Hastings}--\ref{LTO:Injective}.
The boundary nets are fusion categorical nets over the lattice $\bbZ$.
\end{thmalpha}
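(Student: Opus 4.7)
The plan is to verify axioms \ref{LTO:Hastings}--\ref{LTO:Injective} separately for each model, then compute the boundary net explicitly and match it to a known fusion categorical net. I would treat the Toric Code first as a warm-up and a degenerate special case of the Levin--Wen construction (corresponding to the input fusion category $\mathsf{Hilb}(\mathbb{Z}/2)$). In both cases the local Hamiltonian is already a sum of mutually commuting projectors, so the natural candidate for the local ground state projection on a region $\Lambda$ is the product $p_\Lambda = \prod_{h \subset \Lambda} h$ of all stabilizers (resp.\ vertex and plaquette operators) supported inside $\Lambda$, and the task reduces to checking that this family of projections satisfies our axioms.

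For the Toric Code, axiom \ref{LTO:Hastings} is the usual Hastings-style local indistinguishability statement: $p_\Lambda O p_\Lambda \in \mathbb{C} p_\Lambda$ for any local observable $O$ supported sufficiently inside $\Lambda$. This follows from the standard stabilizer formalism argument that the only Pauli operators commuting with all interior stabilizers and acting non-trivially on the ground space are non-trivial loop (Wilson/'t Hooft) operators, which must reach the boundary. The nesting, factorization and injectivity axioms follow from the homological description of the ground space of $p_\Lambda$: for a disk region, the ground space is one-dimensional and equals the image of the coboundary projector, so gluing is controlled by homology on the boundary circle.

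For the Levin--Wen models the verification is substantially more involved and forms the technical core of the theorem. Here I would work directly in the graphical calculus of the input unitary fusion category $\mathcal{C}$: ground state configurations on a disk $\Lambda$ are identified with string-net states modulo local $F$-move relations, and the identities required by axioms \ref{LTO:Hastings}--\ref{LTO:Injective} (idempotence, self-commutation, and commutation with neighbouring plaquette operators, together with a Schur-type lemma for the interior of $\Lambda$) all reduce to consequences of the pentagon equation and the semisimplicity of $\mathcal{C}$. I would factor out as much as possible through existing computations in \cite{PhysRevB.71.045110,Cui2020kitaevsquantum,QIU2020168318}, repackaging their results about the plaquette projector as the statements required by our axioms. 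The main obstacle I anticipate is axiom \ref{LTO:Injective}, which requires a genuinely new computation: one must show that an operator supported on an annular neighbourhood of $\partial\Lambda$ that acts trivially on the ground space of $p_\Lambda$ is necessarily zero on the image of a slightly larger ground-state projection. I would prove this by identifying the relevant annular algebra with a corner of the tube algebra $\Tube(\mathcal{C})$ and using the known faithfulness of its action on string-net states.

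Having verified the axioms, the identification of the boundary net as fusion categorical is largely a bookkeeping exercise. For an interval $I \subset \mathbb{Z}$, the construction $\cB(I)$ of Construction \ref{const:LTO-BoundaryNet} is computed by restricting $p_\Lambda \cA_\Lambda p_\Lambda$ to a thin strip above $I$; using the graphical calculus, elements of this algebra are represented by morphisms in $\mathcal{C}$ between tensor products of the generator objects sitting on the boundary edges of $I$, and the inclusions $\cB(I) \hookrightarrow \cB(J)$ for $I \subset J$ correspond to tensoring on the right or left with identity morphisms. This yields a canonical isomorphism with the fusion categorical net of \cite{MR1642584,MR3406647} associated to $\mathcal{C}$, with surjectivity again controlled by axiom \ref{LTO:Injective}.
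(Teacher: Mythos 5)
Your overall architecture parallels the paper's (commuting-projector $p_\Lambda$, stabilizer/string analysis for the Toric Code, graphical calculus and the tube algebra for Levin--Wen, boundary algebras identified with $\End_\cC(X^{\#I})$), but two load-bearing steps are wrong or missing. For the Toric Code, your claim that ``for a disk region the ground space is one-dimensional'' is false: $p_\Lambda$ only contains the stars and plaquettes fully supported in $\Lambda$, so the local ground space has dimension exponential in the boundary length (the paper computes $2^{2i+2j-1}$ for an $i\times j$ rectangle in the proof of Proposition~\ref{prop:ToricCodeInjective}); this boundary degeneracy is precisely what the boundary algebras $\fC(I)$ and $\fD(J)$ act on, and the axioms \ref{LTO:Boundary}--\ref{LTO:Injective} require tracking it rather than a homological one-dimensionality statement. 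The paper instead runs an explicit reduction (Algorithm~\ref{alg:PauliReduction}) writing any Pauli monomial that commutes with all stabilizers of $\Delta$ as a product of stabilizers in $\Lambda$ times an element of $\fC(J)$, proves injectivity by the parity/ground-space argument of Proposition~\ref{prop:ToricCodeInjective}, and identifies $\fC\cong\fD\cong\fF$ for $\fdHilb(\bbZ/2)$ with $X=1\oplus g$ via the explicit embedding of Construction~\ref{construction:ToricCodeInjective}; none of this is recoverable from the one-dimensionality claim.

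For Levin--Wen, the part you defer to ``bookkeeping'' is the actual core of \ref{LTO:Boundary}: one must show the upper bound $p_\Delta\fA(\Lambda)p_\Delta\subseteq \fF(I)p_\Delta$, i.e., that nothing beyond boundary morphisms survives compression, and this does not follow from the pentagon equation, semisimplicity, and an interior Schur-type lemma. The paper proves it by identifying $p_\Delta\bigotimes_{v\in\Delta}\cH_v$ with the skein module $\cS_\cC(\#\partial\Delta)\cong\cC(X^{\#\partial\Delta\setminus I}\to X^{\#I})$ (Lemma~\ref{lem:KongSkeinModule}), noting that compressed operators commute with the right $\End_\cC(X^{\#\partial\Delta\setminus I})$-action, and using the invertible-bimodule/commutant (Yoneda) argument to conclude $p_\Delta x p_\Delta=\Gamma_\varphi p_\Delta$ with $\varphi\in\End_\cC(X^{\#I})$. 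The tube algebra does appear in the paper, but for \ref{LTO:Hastings}: irreducibility of $\cS_\cC(\#\partial\Delta)$ as a $\Tube_\cC(\partial\Delta)$-module gives the Schur argument there, while \ref{LTO:Injective} needs no tube-algebra faithfulness at all --- it follows from positive definiteness of the skein-module inner product via the gluing maps $\Gamma_\varphi$ (Lemma~\ref{lem:GammaInjective}). Your tube-algebra route to \ref{LTO:Injective} could likely be made to work, but as written your plan contains no argument for the crucial containment in \ref{LTO:Boundary}, and without it both the LTO verification and the identification of the boundary net as the fusion categorical net remain unproven.
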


Our primary motivation for the stronger axiomatization of local topological order is that the resulting boundary net gives us a powerful tool to rigorously analyze the entire system in at least two ways.

\begin{enumerate}
\item
States on the boundary algebra correspond to states on the bulk-boundary system which restrict to the canonical ground state in the bulk. 
This gives a Hamiltonian-free approach to boundary states.
\item 
The boundary algebra of an LTO can be viewed as a discrete algebraic quantum field theory in one dimension lower, holographically dual to the bulk theory. 
The topological order of the bulk should then be algebraically characterized by the category of DHR bimodules of the net of boundary algebras (see Conjecture \ref{conj:BulkBoundary}), giving a precise mathematical formulation of topological holography in the sense of \cite{2310.05790, PhysRevB.107.155136}. 
\end{enumerate}

With a concrete description of the boundary nets in hand given by Theorem \ref{thm:LTOExamples}, we can explicitly study boundary states and their relation to topological order. 
There are several salient boundary states to investigate. 
Of particular interest are the canonical boundary states, obtained from the LTO axioms by simply `compressing' (that is, projecting the observables down to a suitable subspace) the canonical bulk ground states. 
We show that these states can be viewed as equilibrium states (or more properly, KMS states) for natural locally representable 1-parameter automorphism groups. 
We use this to prove the following surprising theorem.

\begin{thmalpha}
For a Levin-Wen model over a fusion category $\mathcal{C}$, the canonical boundary state is a factor state on the quasi-local algebra of the boundary. 
The corresponding factor is of type $\rm II_{1}$ if and only if all simple objects in $\mathcal{C}$ are invertible ($d_c=1$ for all $c\in \Irr(\cC)$); otherwise, it is of type $\rm III$.
\end{thmalpha}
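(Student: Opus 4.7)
The plan is to use Theorem \ref{thm:LTOExamples} to realize the boundary algebra of the Levin-Wen model concretely as a fusion categorical net built from $\cC$, to identify the canonical boundary state $\psi$ with the Frobenius-Perron--weighted categorical trace, and then to classify the GNS factor via Tomita-Takesaki theory. After Theorem \ref{thm:LTOExamples} the local algebras $\cA_I$ over intervals $I\subset\bbZ$ are built from hom-spaces in $\cC$ involving tensor powers of the regular object. I would verify by direct computation using Construction \ref{const:LTO-BoundaryNet} that compressing the canonical bulk ground state assigns to a minimal central projection $\pi_c\in\cA_I$ a weight proportional to $d_c$, thereby identifying $\psi$ as the weighted Markov state governed by the quantum dimensions.

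Next, I would exhibit the locally representable 1-parameter group $\alpha_t$ referenced in the text: on the minimal summand indexed by a simple $c$ it should act by multiplication by $d_c^{it}$. A short KMS check against $\psi$ then shows that $\alpha_t$ implements the modular group $\sigma_t^\psi$ on the GNS completion $M:=\pi_\psi(\cA_\partial)''$. Factoriality of $M$ should follow from translation invariance and locality via a standard clustering estimate: for local $a,b$ one expects $\psi(a T^n(b))\to\psi(a)\psi(b)$ as $n\to\infty$, forcing the relative commutant to be trivial.

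With the modular data in hand, the dichotomy is clean. If every $d_c=1$ then $\alpha_t$ is trivial, $\psi$ is a faithful normal trace on $M$, so $M$ is a finite factor; being infinite-dimensional it is of type $\rm II_1$ (and in fact hyperfinite by AFD-ness of the boundary net). If some $d_c>1$ then $\alpha_t$ is non-trivial, and to conclude $M$ is of type $\rm III$ I would show $\alpha_t$ is not inner on $M$, equivalently that the Connes $T$-invariant $T(M)$ sits inside the proper subgroup $\{t\in\bbR:d_c^{it}=1\text{ for all }c\in\Irr(\cC)\}$. The main obstacle is exactly this non-innerness step: since the local algebras of the fusion categorical net do not factor as tensor products, one cannot simply diagonalize a Powers/ITPFI density matrix to read off the modular spectrum. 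I expect the cleanest route is to exploit the Jones-tower structure of the fusion categorical net to reduce the innerness question to a known calculation of asymptotic ratios of products of quantum dimensions, or to build an explicit asymptotic morphism into $M$ whose image witnesses outerness of $\sigma_t^\psi$.
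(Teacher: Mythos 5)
Your overall strategy---identify the boundary net with the fusion categorical net $\fF(I)=\End_\cC(X^{\#I})$, compute the canonical state, realize it as a KMS state for the quantum-dimension dynamics, and then decide type ${\rm II}_1$ versus type ${\rm III}$ from the modular flow---is the same as the paper's (\S\ref{sec:LWBoundaryState}--\S\ref{sec:KMS}). However, the step you yourself flag as ``the main obstacle,'' namely outerness of $\sigma_t$ when some $d_c\neq 1$, is precisely the content of Theorem \ref{thm:InnerIffPointed}, and your proposal supplies no argument for it: you only suggest that the Jones tower or an ``asymptotic morphism'' might work. The paper closes this gap with a concrete $\psi_\fF$-central sequence: $x_n:=\tau_n(\ev_c)$, the evaluation $\ev_c:\overline{c}\otimes c\to 1_\cC$ translated off to infinity. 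The KMS condition gives the bound $\|y x_n\|_{\psi_\fF}\leq d_c^{-1}\|x_n\|\,\|y\|_{\psi_\fF}$ for $y$ in the centralizer, so $(x_n)$ is $\psi_\fF$-central (Lemma \ref{lem:ApproximateCentralizer} is needed to approximate centralizer elements by local ones); since $\sigma_t(x_n)=d_c^{2it}x_n$, Lemma \ref{lem:PhiCentralInner} shows $\sigma_t$ is outer whenever $d_c^{2it}\neq 1$, and Takesaki's theorem then rules out semifiniteness. Without this (or an equivalent) construction the type ${\rm III}$ half of the statement is unproved. Note also that the flow does not literally ``act by $d_c^{it}$ on the minimal summand indexed by $c$''---that would not even be an automorphism; rather the single-site generator $\sum_{c\in\Irr(\cC)}\log(d_c)\id_c$ produces the ratio $\bigl(d_{a_1}\cdots d_{a_n}/d_{b_1}\cdots d_{b_n}\bigr)^{it}$ on $\cC(a_1\otimes\cdots\otimes a_n\to b_1\otimes\cdots\otimes b_n)$ as in \eqref{eq:ModularAutomorphism}, and the scalar picked up by $x_n$ is $d_c^{2it}$, not $d_c^{it}$.

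A second, smaller gap is factoriality. Pointwise clustering $\psi(a\,\tau_n(b))\to\psi(a)\psi(b)$ for fixed local $a,b$ only yields extremality among translation-invariant states; to get a trivial center one needs the uniform (Powers-type) cluster estimate $|\psi(ab)-\psi(a)\psi(b)|\leq\epsilon\|b\|$ for all $b$ localized outside a sufficiently large interval, which you neither state nor prove. The paper sidesteps this: because the Bratteli diagram of $\fF$ is connected and stationary, $\psi_\fF$ is the unique KMS state at $\beta=1$ for $\sigma$ (uniqueness of the Frobenius--Perron eigenvector, via \cite{MR1772604}), and uniqueness forces $\fF''$ to be a factor; faithfulness then comes from Lemma \ref{lem:CanonicalStateFaithful}. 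Since you already have the KMS identification in hand, this is the shorter and rigorous route. Your pointed case (trivial flow, hence a trace and a hyperfinite ${\rm II}_1$ factor) matches Remark \ref{rem:MarkovTrace}, and the precise form of the state is Proposition \ref{prop:CanonicalState}: the weights are $d_{c_1}\cdots d_{c_n}$ on the projections onto $c_1\otimes\cdots\otimes c_n\subset X^n$, not simply $d_c$ on minimal central projections.
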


In the approach to the superselection theory of topologically ordered spin systems introduced in \cite{MR2804555, MR3426207}, a fundamental role is played by the \emph{cone algebras}. 
These are the von Neumann algebras constructed from completing (in the weak or strong operator topology) the $\rm C^*$-algebras assigned to an infinite cone in the GNS representation of the bulk ground state.
Ogata showed that for Kitaev quantum double models, the cone algebras with rough edges are type $\rm II_{\infty}$~\cite{MR4721705}.\footnote{Ogata only claims the result for abelian quantum double models, but the result can be obtained in the more general setting by combining her proof with remarks made in \cite{MR3426207}.} 
As an application of our previous theorem, we have the following result.

\begin{coralpha}
\label{cor:ConeAlgebraTypes}
For Levin-Wen models over the fusion category $\mathcal{C}$, the cone algebras are of type $\rm II_\infty$ if and only if all simple objects in $\mathcal{C}$ are invertible ($d_c=1$ for all $c\in \Irr(\cC)$); otherwise they are of type $\rm III$.
\end{coralpha}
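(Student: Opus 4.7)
The plan is to deduce the cone algebra type from the boundary factor type established in Theorem~B, by exhibiting the cone von Neumann algebra as isomorphic to the boundary factor tensored with a type $\rm I_\infty$ factor. Once such a decomposition is in hand, the classification is automatic, since $\rm II_1 \, \overline{\otimes} \, B(H) = \rm II_\infty$ while $\rm III \, \overline{\otimes} \, B(H) = \rm III$, matching exactly the dichotomy claimed in the corollary.

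Fix a cone $C \subset \bbZ^2$ with rough edge $\partial C \cong \bbZ$, and let $\omega_0$ denote the canonical bulk ground state of the Levin-Wen model, so the cone algebra is $\cR_C := \pi_{\omega_0}(\cA(C))''$. By Theorem~\ref{thm:LTOExamples}, Construction~\ref{const:LTO-BoundaryNet} applies to the Levin-Wen model, yielding the boundary net $\mathfrak{B}(\partial C)$ together with a canonical quantum channel sending $\omega_0$ to the canonical boundary state $\psi$ on $\mathfrak{B}(\partial C)$. To establish the isomorphism $\cR_C \cong \pi_\psi(\mathfrak{B}(\partial C))'' \, \overline{\otimes} \, B(H)$, I would compress by the bulk LTO projections: heuristically, these reduce local Hilbert spaces in the interior of $C$ to a single ray determined by boundary data, so ``interior'' observables generate a type $\rm I_\infty$ factor in the GNS space of $\omega_0$, while all nontrivial correlations with $C^c$ are carried by observables near $\partial C$, whose weak closure is the boundary factor. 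Combined with Theorem~B---which identifies $\pi_\psi(\mathfrak{B}(\partial C))''$ as type $\rm II_1$ in the invertible case and type $\rm III$ otherwise---this decomposition yields the corollary.

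The main obstacle is the tensor factorization step. One must rigorously verify that the ``interior'' cone observables indeed generate a commuting type $\rm I_\infty$ factor (rather than a larger or more exotic commutant), and that compression by the LTO projectors extends to a genuine von Neumann algebra isomorphism onto the boundary factor. This should follow from the explicit commuting-projector structure of the Levin-Wen Hamiltonian, together with Haag-duality-type properties expected of the net under the LTO axioms, but requires careful control of the weak operator topology on the cone algebra and of the canonical quantum channel's extension from the quasi-local to the von Neumann setting.
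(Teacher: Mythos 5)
Your instinct to reduce everything to the boundary factor is right, but the step you lean on---the tensor factorization $\fA(\Lambda)'' \cong \fF''\,\overline{\otimes}\,B(H)$ with the $B(H)$ tensorand generated by ``interior'' cone observables---is exactly where the argument breaks. The weak closure of the observables supported deep inside the cone is \emph{not} a type $\rm I_\infty$ factor: the interior of a cone minus a boundary collar is again an infinite cone-like region, the ground state restricted to it is long-range entangled, and its weak closure is again an infinite factor of type $\rm II_\infty$ or $\rm III$---indeed that is the very statement being proved, so your heuristic is circularly self-defeating. Nor do the LTO projections produce a commuting type $\rm I$ tensor factor from the interior: by \ref{LTO:Hastings} they compress interior observables to \emph{scalars}, and there is no spatial splitting of the cone algebra into a commuting pair (interior)$\,\otimes\,$(boundary collar) that could be promoted to a von Neumann tensor factorization. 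The isomorphism $\fA(\Lambda)''\cong \fF''\,\overline{\otimes}\,B(H)$ is true \emph{a posteriori} (once the type is known, using that the cone algebra is properly infinite with separable predual), but it cannot serve as the first step, and you also never independently establish that the cone algebra is a factor or that it is infinite.

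The paper's route sidesteps the factorization entirely. First, Proposition~\ref{prop:ConeAlgebrasFactors} shows $\fA(\Lambda)''$ is an infinite factor (factoriality from purity of $\psi$, infiniteness from the cone argument of \cite{MR2804555}). Second, one identifies a \emph{corner} rather than a tensor factor: with $p_\Lambda$ the SOT-limit of the local ground-state projections $p_{\Delta_n}$ for rectangles $\Delta_n$ exhausting the cone, one has $\varinjlim p_{\Delta_{n+1}}\fA(\Delta_n)p_{\Delta_{n+1}}\cong \fF$ with the ground state restricting to $\psi_\fF$, whence $p_\Lambda\fA(\Lambda)''p_\Lambda\cong \fF''$ in the GNS representation of $\psi_\fF$. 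Since the type of a factor is detected by any nonzero corner (an infinite factor with a finite $\rm II_1$ corner is $\rm II_\infty$; a factor with a type $\rm III$ corner is type $\rm III$), Remark~\ref{rem:MarkovTrace} and Theorem~\ref{thm:InnerIffPointed} finish the proof. To repair your write-up, replace the tensor-factorization step by this corner identification (and add the factor/infiniteness input); your appeal to the boundary-type dichotomy then goes through unchanged.
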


This should be contrasted with the conformal nets or local algebras in algebraic quantum field theory, which are generically type $\rm III$ independent of the corresponding category of superselection sectors.
In the appendix, authored by Masaki Izumi, the analysis of the type $\rm III$ case is refined to obtain the type ${\rm III}_\lambda$.
We note that the type of the von Neumann algebra is of relevance in the context of embezzlement of entanglement~\cite{2401.07299}.
In particular, any normal state on a type ${\rm III}_1$-factor is a so-called \emph{universal embezzler}.

One of the main motivations for studying the boundary nets is topological holography.
Recall that anyonic excitations in the Levin-Wen model for $\cC$ are described by the quantum double/Drinfeld center $Z(\cC)$.
If the boundary algebra is truly a holographic dual of the bulk theory, it should recover the topological order. 
We show this for (2+1)D Levin-Wen models based on a fusion category $\cC$ and for Kitaev's Toric Code model on the plane.\footnote{This is was also recently proven for Kitaev's Quantum Double model in \cite{MR4814524}.}
For a net of algebras $\fF$ over $\mathbbm{Z}^{n}$, there is braided tensor category $\DHR(\fF)$ of \emph{DHR bimodules}~\cite{MR4814692}, inspired by the Doplicher--Haag--Roberts theory of superselection sectors (see~\cite{MR1405610,math-ph/0602036} for an introduction). For nets of algebras $\fF$ built from a fusion category $\cC$, $\DHR(\fF)\cong Z(\cC)$.
Our characterization of the boundary algebras in the Levin-Wen and Toric Code models leads to the following corollary:
\begin{coralpha}
    \label{cor:BulkBoundary}
    Let $\fF$ be the boundary net of the Levin-Wen model over the fusion category $\cC$ (respectively the boundary net for the Toric Code).
    Then $\DHR(\fF) \cong Z(\cC)$ (with $\cC = \fdHilb(\mathbb{Z}/2)$ for the Toric Code model).
\end{coralpha}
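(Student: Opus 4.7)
The strategy is to reduce this computation to a known input about DHR bimodules of fusion categorical nets. By Theorem \ref{thm:LTOExamples}, the boundary net $\fF$ associated with the Levin--Wen model for $\cC$ is identified with a fusion categorical net over $\bbZ$ built out of $\cC$; likewise, the boundary net for the Toric Code is the fusion categorical net associated with $\cC=\fdHilb(\bbZ/2)$. With this identification in hand, the claim $\DHR(\fF)\cong Z(\cC)$ becomes a statement purely about the net obtained from $\cC$, which is the input quoted in the introduction from \cite{MR4814692}.

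More concretely, the plan is as follows. First, unpack the boundary net $\fF$ produced by Construction~\ref{const:LTO-BoundaryNet}: for each bounded interval $I\subset \bbZ$, the algebra $\fF(I)$ should be computed and shown to coincide, as a net of $\mathrm{C}^*$-algebras with its inclusion structure, with the standard fusion categorical net for $\cC$ whose local algebras are built from hom spaces $\End_\cC(X^{\otimes |I|})$ (or a string-net/tube-algebra presentation equivalent to it). This matching is precisely the content asserted in the second sentence of Theorem~\ref{thm:LTOExamples}, so it can be invoked directly; the only work is naturality in $I$, which should follow from the commuting-square structure of the local projectors inside the LTO framework.

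Second, apply the general theorem that for the fusion categorical net $\fF_\cC$ associated with a fusion category $\cC$, one has an equivalence of braided tensor categories $\DHR(\fF_\cC)\cong Z(\cC)$. This is established in \cite{MR4814692} by constructing mutually inverse braided tensor functors: objects of $Z(\cC)$ give rise to DHR bimodules via the standard localization/transport arguments (using the half-braiding to move past any finite interval), while a DHR bimodule produces a half-braided object by restricting to an interval and using the local isomorphisms to transport past the complement. Combining this with the identification in the previous paragraph yields $\DHR(\fF)\cong Z(\cC)$. The Toric Code case is then the special instance $\cC=\fdHilb(\bbZ/2)$, giving the standard Drinfeld center $Z(\fdHilb(\bbZ/2))$, which is the toric code modular tensor category.

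The main obstacle I foresee is in the first step, namely ensuring that the boundary net coming from the LTO Construction~\ref{const:LTO-BoundaryNet} genuinely agrees with the fusion categorical net in the sense of \cite{MR4814692}, and not merely levelwise as $\mathrm{C}^*$-algebras. One must verify that the inclusion maps $\fF(I)\hookrightarrow \fF(J)$ for $I\subset J$ are compatible with the tensor-categorical structure (i.e.\ correspond to tensoring with identities on the added sites), and that the resulting net structure matches the one for which the $\DHR\cong Z(\cC)$ result is proved. Once this compatibility is in place, the rest of the argument is a direct application of the cited equivalence.
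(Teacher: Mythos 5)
Your proposal is correct and follows essentially the same route as the paper: identify the boundary net with a fusion categorical net (via Theorem \ref{thm:LTOExamples}, i.e.\ Theorem \ref{thm:TC-LTO} and Construction \ref{construction:ToricCodeInjective} for Toric Code, and Remark \ref{rem:IdentifyBoundaryLW} for Levin-Wen), then apply the results of \cite{MR4814692} (their Theorems B and C) to conclude $\DHR(\fF)\cong Z(\cC)$. The compatibility concern you flag at the end is precisely what the paper addresses by establishing that the identification is a bounded spread isomorphism of nets and invoking invariance of DHR under such isomorphisms.
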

This result immediately leads to an interesting observation in light of Corollary \ref{cor:ConeAlgebraTypes}, given that inequivalent fusion categories can have the same Drinfeld center $Z(\cC)$.
In particular, there are two Levin-Wen models that yield the Drinfeld double $\cD(G)$ topological order for a finite group $G$: $\cC=\fdHilb(G)$ and $\cC=\Rep(G)$.
The first is pointed (all simple objects are invertible), but the second is only pointed when $G$ is abelian.
Thus, the type of the cone algebras is specific to the model, and does not only depend on the bulk topological order.
We come back to this point later in Remark~\ref{rem:FiniteDepthQuantumCircuit}.

Finally, we can use the above to give a categorical analysis of the superselection theory of boundary states. 
Given a boundary state $\phi$ on the boundary net $\fB$, its superselection category, called the `boundary order,' is the category of representations of the boundary quasi-local algebra that are quasi-contained in the GNS representation of $\phi$ restriced to the algebras of operators localized outside any sufficiently large interval (see Definition~\ref{defn:BoundarySuperSelection}). 
The superselection category naturally forms a module category over $\DHR(\fB)$, and choosing the GNS representation as a distinguished object, taking internal end produces a $\rm W^*$-algebra object $A_{\phi}$ internal to the unitary tensor category $\DHR(\fB)$ in the sense of \cite{MR3687214}.
We say the boundary order of a state $\phi$ is \textit{topological} if $A_{\phi}$ is a Lagrangian algebra object~\cite[Defn. 4.6]{MR3039775}). 
This is the case we expect to match with `gapped boundaries' in the usual sense \cite{MR2942952}.
We show that for the boundary state associated to the vacuum in Levin-Wen type models built from the fusion category $\cC$, this algebra is indeed the canonical Lagrangian algebra in $\DHR(\fF)\cong Z(\cC)$.

The examples we consider in this paper are mostly restricted to 2D topological order,
but we note that our nets of boundary algebras work in arbitrary dimensions.
In higher dimensions, topological order is characterized by a braided fusion $n$-category rather than simply a braided fusion category \cite{MR4444089}.
Thus the tensor category of DHR bimodules at the boundary is not sufficient to fully characterize the topological order. 
However, we expect that the category of DHR bimodules has a higher categorical generalization, which we plan to pursue in future work.

In \S\ref{sec:LTO}, we introduce the basic setup and our local topological order axioms.
We use these axioms to show that we have a canonical state in the bulk, and a quantum channel from boundary to bulk states.
The reader mostly interested in applications to physical systems may wish to skip over these operator-algebraic constructions at first reading, and jump straight to \S\ref{sec:toric} and \S\ref{sec:LWStringNet}, where we discuss the Toric Code and Levin-Wen examples.
In \S\ref{sec:BoundaryStates} we take a closer look at states on the boundary algebra, and study the type of the von Neumann algebra generated by the `canonical' boundary state.
Finally, details on the bulk-boundary correspondence are given in~\S\ref{sec:BulkBoundaryCorrespondence}.

\subsection*{Acknowledgements}

The authors would like to thank
Dave Aasen,
Sven Bachmann,
Jeongwan Haah,
Peter Huston,
Theo Johnson-Freyd,
Kyle Kawagoe,
Brent Nelson,
David P{\'e}rez-Garc{\'i}a
and
Shuqi Wei
for helpful conversations.
Corey Jones was supported by NSF grant DMS 2100531.
David Penneys and Daniel Wallick were supported by NSF grant DMS 2154389.

\section{Nets of algebras and local topological order}
\label{sec:LTO}

In this section, we work with nets of $\rm C^*$-algebras on square lattices, i.e., $\bbZ^\ell$ for some $\ell$.
The methods here work in greater generality, but passing to more general lattices would require substantially more space and heavier notation.
We therefore restrict to the simpler case for clarity.

\subsection{Nets of algebras and nets of projections}

\begin{defn}[Nets of algebras]
\label{defn:LocalNet}
Suppose $\cL$ is the $\bbZ^\ell$ lattice.
An $\ell$D \emph{(local) net of algebras}\footnote{A local net of algebras such that $\fA(\Lambda)$ is finite dimensional for all $\Lambda$ could be called an \emph{abstract spin system}, cf.~the following Example~\ref{ex:SpinSystem} of a (concrete) quantum spin system where the local algebras $\fA(\Lambda)$ are tensor products of full matrix algebras.} 
on $\cL$ in the (unital) ambient $\rm C^*$-algebra $\fA$ (called the \emph{quasi-local algebra}) 
is an assignment of a $\rm C^*$-subalgebra $\fA(\Lambda)\subset \fA$ to each bounded rectangle $\Lambda\subset \cL$ such that
\begin{enumerate}[label=(N\arabic*)]
    \item $\fA(\emptyset)=\bbC 1_{\fA}$,
    \label{N:empty}
    \item 
    \label{N:inclusion}
    if $\Lambda\subset \Delta$, then $\fA(\Lambda)\subset \fA(\Delta)$,
    \item 
    \label{N:local}
    if $\Lambda\cap \Delta= \emptyset$, then $[\fA(\Lambda), \fA(\Delta)]=0$, and
    \item 
    \label{N:dense}
    $\bigcup_\Lambda \fA(\Lambda)$ is norm dense in $\fA$.
\end{enumerate}
The first and second conditions above are equivalent to the data of a functor from the poset of rectangles in $\cL$ ordered by inclusion to the poset of unital $\rm C^*$-subalgebras of $\fA$ ordered by inclusion.
\end{defn}

We will only be considering nets of algebras which satisfy the locality condition~\ref{N:local}, and will subsequently drop the adjective `local'.

\begin{rem}
\label{rem:SufficientlyLargeRectangle}
In some circumstances, we are not concerned with all $\ell$D rectangles, but only rectangles $\Lambda$ which are \emph{sufficiently large}, meaning there is a global constant $r>0$ such that $\Lambda$ contains a closed $r^\ell$-cube.
An assignment of $\rm C^*$-subalgebras $\fA(\Lambda)\subset \fA$ for sufficiently large rectangles $\Lambda$ satisfying \ref{N:inclusion}, \ref{N:local}, and \ref{N:dense} can be canonically augmented to a net of algebras for all rectangles by defining $\fA(\Delta):=\bbC 1_\fA$ whenever $\Delta$ is not sufficiently large.
\end{rem}

\begin{defn}
Let $\cL$ be the $\bbZ^\ell$ lattice, and write $\Aut_{\tr}(\cL)$ for its group of translation symmetries, where we write $\Lambda \mapsto g+\Lambda$ for $g\in \Aut_{\tr}(\cL)$.
A net of algebras $\Lambda \mapsto \fA(\Lambda)$ is called \emph{translation invariant} if there is an $\Aut_{\tr}(\cL)$-action on $\fA$ by unital $*$-automorphisms such that $g\cdot\fA(\Lambda) = \fA(g+\Lambda)$ for all $g\in \Aut_{\tr}(\cL)$.
\end{defn}

\begin{ex}
\label{ex:SpinSystem}
The canonical example of a net of algebras on a lattice is a \emph{quantum spin system}. 
We take the lattice $\cL=\bbZ^{\ell}$, and at each site/vertex, we view a copy of $\bbC^d$.
For each bounded rectangle $\Lambda \subset \cL$, we define $\fA(\Lambda):=\bigotimes_{v\in \Lambda} M_d(\bbC)$.
When $\Lambda \subset \Delta$, we have obvious inclusions $\fA(\Lambda)\subseteq \fA(\Delta)$. 
The quasi-local algebra $\fA$ is the colimit of this directed system in the category of $\rm C^*$-algebras.
Observe that $\fA$ carries a canonical action of $\bbZ^\ell=\Aut_{\tr}(\bbZ^\ell)$, and translation invariance means with respect to this canonical action.
\end{ex}

\begin{ex}
\label{ex:TensorCat}
We define a 1D net $I\mapsto \fF(I)$ from a unitary tensor category $\cC$ and a choice of object $X\in \cC$. 
For any interval $I\subseteq \mathbbm{Z}$ with $n$ points, we set $\fF(I):=\End_\cC(X^{n})$. 
If $I\subseteq J$, there are natural inclusions $\fF(I)\hookrightarrow \fF(J)$ given by tensoring with $\id_{X}$ on the left and/or right depending on the relative position of $I$ and $J$. 
The quasi-local algebra $\fF$ is the colimit in the category of unital $\rm C^*$-algebras, and we identify $\fF(I)$ with its image in this colimit. 
That is, we regard $\fF(I)$ as a $\rm{C}^*$-subalgebra of $\fF$.

When $\cC$ is a unitary fusion category, we call nets constructed in this way \textit{fusion categorical nets}. 
We will see such nets arise as the net of boundary algebras for Levin-Wen models (see \S\ref{sec:LWStringNet}, Remark~\ref{rem:IdentifyBoundaryLW}) and the Toric Code (in \S\ref{sec:toric}, Construction~\ref{construction:ToricCodeInjective}).
We note that if $X$ strongly generates the fusion category (i.e., there exists an $n$ such that every simple is isomorphic to a direct summand of $X^n$), then the net of algebras $\fF$ satisfies \emph{weak algebraic Haag duality} by~\cite[Prop. 4.3]{MR4814692}, which allows one to define and identify its category of DHR bimodules; we refer the reader to \S\ref{sec:BulkBoundaryCorrespondence} below for more details.
\end{ex}

A natural equivalence relation between nets of algebras on a lattice is \textit{bounded spread} isomorphism.
Many interesting properties, such as the category of DHR bimodules that we introduce later, are preserved under bounded spread isomorphism. This seems to be a good notion of equivalence for discrete nets, so we include it here.

\begin{defn}[Bounded spread isomorphism]
\label{defn:BoundedSpread}
Suppose we have two nets of algebras, 
$\Lambda\mapsto \fA(\Lambda)$ in $\fA$ 
and 
$\Lambda\mapsto \fB(\Lambda)$ in $\fB$, 
on the same lattice $\cL=\bbZ^\ell$.
A unital $*$-algebra isomorphism $\Psi:\fA\to \fB$ is said to have \emph{bounded spread} if there is an $s>0$ such that 
$\Psi(\fA(\Lambda)) \subseteq \fB(\Lambda^{+s})$ 
and
$\Psi^{-1}(\fB(\Lambda)) \subseteq \fA(\Lambda^{+s})$ 
for all rectangles $\Lambda$, where $\Lambda^{+s}$ is the smallest rectangle containing $\Lambda$ and all points of distance at most $s$.
\end{defn}


\begin{nota}
Suppose $\cL$ is the $\bbZ^\ell$ lattice.
We write $\partial \Lambda$ for the vertices (or sites) at the boundary of a rectangle $\Lambda$.
We say that a rectangle \emph{$\Delta$ surrounds $\Lambda$ by $s>0$}
if 
\begin{itemize}
\item 
$\Lambda \subset \Delta$,
\item
$\partial \Lambda \cap \partial \Delta$ is either empty or an $(\ell-1)$D rectangle, and
\item
Every vertex $v\in\Delta\setminus \Lambda$ is contained in some closed $s^\ell$-cube contained entirely in $\Delta\setminus \Lambda$.
\end{itemize}
If $\partial\Lambda \cap \partial \Delta=\emptyset$, we say \emph{$\Delta$ completely surrounds $\Lambda$ by $s>0$}, and we denote this by $\Lambda \ll_s \Delta$.
If $\partial\Lambda \cap \partial \Delta$ is non-empty, we denote this by $\Lambda \Subset_s \Delta$.
Here are two examples where $s=2$ and $\ell = 2$:
$$
\tikzmath{
\foreach \y in {0,1,2,3,4,5}{
\foreach \x in {0,1,2,3,4,5}{
\filldraw (\x,\y) circle (.05cm);
}
}
\draw[thick, blue, rounded corners=5pt] (3.2,1.8) rectangle (1.8,3.2);
\node[blue] at (2.5,2.5) {$\Lambda$};
\draw[thick, cyan, rounded corners=5pt] (-.2,-.2) rectangle (5.2,5.2);
\node[cyan] at (.5,2.5) {$\Delta$};
\node at (2.5,-.5) {$\Lambda\ll_2 \Delta$};
}
\qquad\qquad\qquad
\tikzmath{
\foreach \y in {0,1,2,3,4,5}{
\foreach \x in {0,1,2,3,4}{
\filldraw (\x,\y) circle (.05cm);
}
}
\draw[thick, blue, rounded corners=5pt] (4.3,1.7) rectangle (1.7,3.3);
\node[blue] at (2.5,2.5) {$\Lambda$};
\draw[thick, cyan, rounded corners=5pt] (-.2,-.2) rectangle (4.4,5.2);
\node[cyan] at (.5,2.5) {$\Delta$};
\draw[thick, red, rounded corners=5pt] (4.2,1.8) rectangle (3.8,3.2);
\node[red] at (5.2,2.5) {$\leftarrow \partial \Lambda \cap \partial \Delta$};
\node at (2,-.5) {$\Lambda\Subset_2 \Delta$};
}
$$
\end{nota}

\begin{defn}
Suppose we have a translation-invariant net of algebras $\Lambda\mapsto \fA(\Lambda) \subset \fA$ on the lattice $\cL=\bbZ^\ell $.
By convention, whenever $\Lambda$ is not sufficiently large, $\fA(\Lambda)=\bbC 1_\fA$, and thus for the empty rectangle, $\fA(\emptyset):=\bbC 1_\fA$.

A \emph{net of projections} on $\cL$ is an assignment of a non-zero orthogonal projection $p_\Lambda\in \fA(\Lambda)$ associated to every rectangle $\Lambda$ in $\cL$ ordered by reverse inclusion, i.e., $\Lambda \subset \Delta$ implies $p_\Delta \leq p_\Lambda$.
A net of projections is called \emph{translation invariant} if $g\cdot p_\Lambda = p_{g+\Lambda}$ for all rectangles $\Lambda$.
\end{defn}

\subsection{Local topological order axioms}

In this section, we assume $\fA$ is a translation invariant net of algebras and $p=(p_\Lambda)$ is a translation invariant net of projections on $\cL=\bbZ^\ell$.

The following algebras play an important role in the local topological order conditions below.

\begin{defn}
Given $s > 0$ and subsets $\Lambda$ and $\Delta$ of $\cL$ with $\Lambda \Subset_s \Delta$, we define
$$
\fB(\Lambda \Subset_s \Delta)
:=
\set{xp_\Delta}{x\in p_{\Lambda} \fA(\Lambda)p_\Lambda\text{ and }xp_{\Delta'}=p_{\Delta'}x\text{ whenever }\Lambda \Subset_s \Delta'\text{ with }\partial \Lambda \cap \partial \Delta' = \partial \Lambda \cap \partial \Delta}.
$$
\end{defn}

\noindent Observe that $\fB(\Lambda \Subset_s \Delta)$ is a unital $*$-algebra with unit $p_\Delta$. Similar algebras were considered for annular regions in \cite{MR3465431,MR4071117}.

We now have introduced all necessary notation to state the local topological order axioms, which are the main object of study in this paper.
\begin{defn}
\label{defn:StrongLocalTopologicalOrder}
We say $(\fA,p)$ is \emph{locally topologically ordered} if it satisfies the following four axioms for sufficiently large $\Lambda$ (for $r>0$) and a globally fixed `surrounding constant' $s>0$:
\begin{enumerate}[label=(LTO\arabic*)]
\item 
\label{LTO:Hastings}
Whenever $\Lambda \ll_s \Delta$, $p_\Delta \fA(\Lambda)p_\Delta = \bbC p_\Delta$.
\item 
\label{LTO:Boundary}
Whenever $\Lambda \Subset_s \Delta$, $p_\Delta \fA(\Lambda)p_\Delta 
= 
\fB(\Lambda \Subset_s \Delta)$
(which is equal to 
$\fB(\Lambda \Subset_s \Delta) p_\Delta$).
\item
\label{LTO:Surjective}
Whenever
$\Lambda_1\subset  \Lambda_2\Subset_s \Delta$ with 
$\partial \Lambda_1 \cap \partial\Delta= \partial \Lambda_2 \cap \partial \Delta$,
$\fB(\Lambda_1 \Subset_s \Delta)=\fB(\Lambda_2 \Subset_s \Delta)$.
$$
\tikzmath{
\foreach \y in {0,1,2,3,4,5}{
\foreach \x in {-3,-2,-1,0,1,2,3,4}{
\filldraw (\x,\y) circle (.05cm);
}
}
\draw[thick, purple, rounded corners=5pt] (3.3,1.7) rectangle (1.7,3.3);
\node[purple] at (2.5,2.5) {$\Lambda_1$};
\draw[thick, blue, rounded corners=5pt] (3.4,1.6) rectangle (-.2,3.4);
\node[blue] at (.5,2.5) {$\Lambda_2$};
\draw[thick, cyan, rounded corners=5pt] (3.5,.8) rectangle (-2.2,4.2);
\node[cyan] at (-1.5,2.5) {$\Delta$};
\draw[thick, red, rounded corners=5pt] (3.2,1.8) rectangle (2.8,3.2);
\node[red] at (5.4,2.5) {$\leftarrow \partial \Lambda_1 \cap \partial \Delta=\partial \Lambda_2\cap \partial \Delta$};
}
$$
\item
\label{LTO:Injective}
Whenever $\Lambda \Subset_s \Delta_1\subset \Delta_2$ with 
$\partial \Lambda \cap \partial\Delta_1= \partial \Lambda \cap \partial \Delta_2$,
if $x\in \fB(\Lambda \Subset_s \Delta_1)$ with $xp_{\Delta_2}=0$, then $x=0$.
$$
\tikzmath{
\foreach \y in {-1,0,1,2,3,4,5,6}{
\foreach \x in {-1,0,1,2,3,4}{
\filldraw (\x,\y) circle (.05cm);
}
}
\draw[thick, purple, rounded corners=5pt] (3.3,1.7) rectangle (1.7,3.3);
\node[purple] at (2.5,2.5) {$\Lambda$};
\draw[thick, blue, rounded corners=5pt] (3.4,.6) rectangle (.8,4.4);
\node[blue] at (1.5,3.5) {$\Delta_1$};
\draw[thick, cyan, rounded corners=5pt] (3.5,-.2) rectangle (-.2,5.2);
\node[cyan] at (.5,4.5) {$\Delta_2$};
\draw[thick, red, rounded corners=5pt] (3.2,1.8) rectangle (2.8,3.2);
\node[red] at (5.4,2.5) {$\leftarrow \partial \Lambda \cap \partial \Delta_1=\partial \Lambda\cap \partial \Delta_2$};
}
$$
\end{enumerate}
Observe that the algebra $\fB(\Lambda \Subset_s \Delta)$ in \ref{LTO:Boundary} plays the role of $\bbC$ from \ref{LTO:Hastings}.
\end{defn}

The first condition \ref{LTO:Hastings} implies the topological quantum order conditions (TQO1) and (TQO2) of~\cite{MR2742836}, and hence the LTO axioms are stronger (see Proposition~\ref{prop:LTQO} below).
Nevertheless, these stronger conditions are satisfied by the Toric Code and Levin-Wen examples, as we show in \S\ref{sec:toric} and \S\ref{sec:LWStringNet} respectively below.
The remaining three LTO axioms allow us to talk about operators acting along a boundary of a region that are `compatible' with the bulk.
In concrete examples, these are the operators that create excitations along the boundary only, but not in the bulk.
The definition of boundary algebras, however, depends on the choice of region $\Delta$, which should not matter as long as $\Delta$ is large enough, and we should be able to identify the algebras for different choices of $\Delta$.
Axioms~\ref{LTO:Surjective} and~\ref{LTO:Injective} guarantee that this identification can be done consistently, allowing us to define an abstract local net of boundary operators.

\begin{ex}
Our first example in this article is Kitaev's Toric Code~\cite{MR1951039}, which appears in \S\ref{sec:toric} below.
\end{ex}

\begin{ex}
Our main example in this article is the (2+1)D Levin-Wen string net model~\cite{PhysRevB.71.045110} associated to a unitary fusion category, which appears in \S\ref{sec:LWStringNet} below.
\end{ex}

For the above models,
we verify a simplified/stronger version of \ref{LTO:Boundary}--\ref{LTO:Injective} based on the following observation.

\begin{lem}
\label{lem:StabilityOfBoundaryAlgebras}
Suppose we have a quadrilateral of rectangles
\[
\begin{tikzcd}[column sep=0, row sep =0]
\Lambda_2 & \Subset_s & \Delta_2
\\
\cup &&\cup
\\
\Lambda_1 & \Subset_s & \Delta_1
\end{tikzcd}
\qquad\qquad
\text{such that} 
\qquad
\partial \Lambda_1 \cap \partial \Delta_1 = \partial \Lambda_1\cap \partial \Delta_2.
\]
The map $\fB(\Lambda_1\Subset_s \Delta_1) \to \fB(\Lambda_2\Subset_s \Delta_2)$ given by multiplication by $p_{\Delta_2}$ is an injective $*$-algebra map onto $\fB(\Lambda_1\Subset_s \Delta_2)$. 
If moreover $\partial \Lambda_1 \cap \partial \Delta_1 = \partial \Lambda_2\cap \partial \Delta_2$, then this map is an isomorphism.
\end{lem}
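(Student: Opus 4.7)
The plan is to verify each claim by unpacking the definition of $\fB(\Lambda \Subset_s \Delta)$ and invoking the LTO axioms, with the crucial observation being that the commutation conditions in the definitions of $\fB(\Lambda_1 \Subset_s \Delta_1)$ and $\fB(\Lambda_1 \Subset_s \Delta_2)$ coincide under the hypothesis $\partial\Lambda_1\cap\partial\Delta_1 = \partial\Lambda_1\cap\partial\Delta_2$, since the set of admissible witnesses $\Delta'$ is literally the same in both cases. An element of $\fB(\Lambda_1 \Subset_s \Delta_1)$ has the form $xp_{\Delta_1}$ with $x\in p_{\Lambda_1}\fA(\Lambda_1)p_{\Lambda_1}$ commuting with every admissible $p_{\Delta'}$; in particular $p_{\Delta_2}$ itself is such a witness, so $x$ commutes with $p_{\Delta_2}$, and since $p_{\Delta_2}\leq p_{\Delta_1}$ the map $\,\cdot\,p_{\Delta_2}$ sends $xp_{\Delta_1}$ to $xp_{\Delta_2}$.

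To place the image in $\fB(\Lambda_2 \Subset_s \Delta_2)$, I would chain
\[
xp_{\Delta_2} \;=\; p_{\Delta_2} x p_{\Delta_2} \;\in\; p_{\Delta_2}\fA(\Lambda_1)p_{\Delta_2} \;\subseteq\; p_{\Delta_2}\fA(\Lambda_2)p_{\Delta_2} \;=\; \fB(\Lambda_2\Subset_s\Delta_2),
\]
using \ref{LTO:Boundary} applied to $(\Lambda_2,\Delta_2)$ at the final equality; the same calculation with \ref{LTO:Boundary} applied to $(\Lambda_1,\Delta_2)$ simultaneously places $xp_{\Delta_2}$ in $\fB(\Lambda_1\Subset_s\Delta_2)$. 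Linearity is immediate; multiplicativity $(xp_{\Delta_2})(yp_{\Delta_2}) = (xy)p_{\Delta_2}$ and $*$-preservation follow from $y$ and $x^*$ commuting with $p_{\Delta_2}$, exactly mirroring the corresponding calculations for the product/involution in $\fB(\Lambda_1\Subset_s\Delta_1)$.

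Injectivity is a direct application of \ref{LTO:Injective}: if $xp_{\Delta_1}\in \fB(\Lambda_1\Subset_s\Delta_1)$ satisfies $xp_{\Delta_2}=0$, the axiom forces $x=0$ and hence $xp_{\Delta_1}=0$. For surjectivity onto $\fB(\Lambda_1\Subset_s\Delta_2)$, any element $yp_{\Delta_2}$ of the target comes from some $y\in p_{\Lambda_1}\fA(\Lambda_1)p_{\Lambda_1}$ satisfying the shared commutation condition, so $yp_{\Delta_1}\in\fB(\Lambda_1\Subset_s\Delta_1)$ is a preimage. Finally, when the extra hypothesis $\partial\Lambda_1\cap\partial\Delta_1 = \partial\Lambda_2\cap\partial\Delta_2$ holds, combining it with the standing one gives $\partial\Lambda_1\cap\partial\Delta_2 = \partial\Lambda_2\cap\partial\Delta_2$, and \ref{LTO:Surjective} identifies $\fB(\Lambda_1\Subset_s\Delta_2) = \fB(\Lambda_2\Subset_s\Delta_2)$, upgrading the surjection-onto-the-subalgebra into the claimed isomorphism.

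The main difficulty is essentially bookkeeping: one has to check at each step that the relevant admissibility condition for witnesses $\Delta'$ is preserved, and keep straight which axiom is being used. The real leverage comes from \ref{LTO:Boundary}, which rewrites $\fB$ elements in the symmetric form $p_\Delta a p_\Delta$ so that inclusions $\fA(\Lambda_1)\subseteq\fA(\Lambda_2)$ of ordinary local algebras translate into inclusions of boundary algebras; \ref{LTO:Injective} and \ref{LTO:Surjective} then supply injectivity and the final identification respectively.
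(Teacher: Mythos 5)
Your proof is correct and follows essentially the same route as the paper's: \ref{LTO:Boundary} is used to identify the boundary algebras with the compressions $p_{\Delta}\fA(\Lambda)p_{\Delta}$ (giving both the image computation and the containment $\fB(\Lambda_1\Subset_s\Delta_2)\subseteq\fB(\Lambda_2\Subset_s\Delta_2)$), \ref{LTO:Injective} applied to $\Lambda_1\Subset_s\Delta_1\subset\Delta_2$ gives injectivity, and \ref{LTO:Surjective} applied to $\Lambda_1\subset\Lambda_2\Subset_s\Delta_2$ gives the final isomorphism. Your opening observation that the admissible-witness conditions for $\fB(\Lambda_1\Subset_s\Delta_1)$ and $\fB(\Lambda_1\Subset_s\Delta_2)$ coincide (so that $p_{\Delta_2}$ commutes with the former and surjectivity onto the latter is immediate) is the same fact the paper invokes implicitly in its first line; only note that \ref{LTO:Injective} directly yields $xp_{\Delta_1}=0$ for the boundary-algebra element rather than $x=0$ for its representative, which is all injectivity requires.
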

\begin{proof}
By definition, $p_{\Delta_2}$ commutes with $\fB(\Lambda_1 \Subset_s \Delta_1)=p_{\Delta_1}
\fA(\Lambda_1)
p_{\Delta_1}$ (by \ref{LTO:Boundary}) inside $\fA(\Delta_2)$.
Again by \ref{LTO:Boundary}, we see
$$
\fB(\Lambda_1 \Subset_s \Delta_1)p_{\Delta_2}
\underset{\text{\ref{LTO:Boundary}}}{=}
p_{\Delta_2}
p_{\Delta_1}
\fA(\Lambda_1)
p_{\Delta_1}
p_{\Delta_2}
=
p_{\Delta_2}
\fA(\Lambda_1)
p_{\Delta_2}
\underset{\text{\ref{LTO:Boundary}}}{=}
\fB(\Lambda_1 \Subset_s \Delta_2).
$$
Thus multiplication by $p_{\Delta_2}$ is a well-defined surjective unital $*$-algebra map
$\fB(\Lambda_1 \Subset_s \Delta_1)\to\fB(\Lambda_1 \Subset_s \Delta_2)$.
By \ref{LTO:Injective} applied to $\Lambda_1\Subset_s \Delta_1\subset \Delta_2$,
this map is also injective.
Since $\partial \Lambda_1 \cap \partial \Delta_1 = \partial \Lambda_1 \cap \partial \Delta_2$,
clearly
$$
\fB(\Lambda_1\Subset_s \Delta_2)
\underset{\text{\ref{LTO:Boundary}}}{=}
p_{\Delta_2}\fA(\Lambda_1)p_{\Delta_2}
\subseteq 
p_{\Delta_2}\fA(\Lambda_2)p_{\Delta_2}
\underset{\text{\ref{LTO:Boundary}}}{=}
\fB(\Lambda_2\Subset_s \Delta_2).
$$
When in addition $\partial \Lambda_1 \cap \partial \Delta_1 = \partial \Lambda_2\cap \partial \Delta_2$,
$\fB(\Lambda_1 \Subset_s \Delta_2)
=
\fB(\Lambda_2 \Subset_s \Delta_2)
$
by \ref{LTO:Surjective} applied to $\Lambda_1 \subset \Lambda_2\Subset \Delta_2$.
\end{proof}

The above lemma says that $\fB(\Lambda \Subset_s \Delta)$ really only depends on sites near the boundary interval $I:= \partial \Lambda \cap \partial \Delta$, together with a chosen `side of $I$' on which $\Lambda\Subset_s \Delta$ live.
Indeed, let $\Lambda_I\subset \Lambda$ to be the smallest sufficiently large rectangle with $\partial \Lambda_I \cap \partial \Delta = I$,
and let $\Delta_I\subset \Delta$ be the smallest rectangle such that $\Lambda_I \Subset_s \Delta_I$.
Setting
\begin{equation}
\label{eq:BoundaryAlgebra}
\fB(I):= \fB(\Lambda_I \Subset_s \Delta_I),
\end{equation}
Lemma \ref{lem:StabilityOfBoundaryAlgebras} says that
$\fB(I)p_\Delta = \fB(\Lambda\Subset_s \Delta)$.
We suppress the dependence on the fixed `side of $I$' in the notation.\footnote{The algebras $\fB(I)$ depend on a choice of half-space in which $\Lambda_I\Subset_s\Delta_I$ live.
It is possible one could choose $\widetilde{\Lambda}_I\Subset_s\widetilde{\Delta}_I$ with
$\partial \widetilde{\Lambda}_I \cap \partial \widetilde{\Delta}_I = I$
on the `other side' of $I$
such that $\fB(\widetilde{\Lambda}_I\Subset_s\widetilde{\Delta}_I)$
may not be isomorphic to $\fB(I)$.
}
This leads to an alternative characterisation of~\ref{LTO:Boundary}--\ref{LTO:Injective}.

\begin{prop}
The axioms \ref{LTO:Boundary},  \ref{LTO:Surjective}, and \ref{LTO:Injective} are equivalent to the following two axioms:
{\textup{
\begin{enumerate}[label=(LTO\arabic*$'$)]
\setcounter{enumi}{1}
\item 
\label{LTO':B(I)}
Whenever $\Lambda \Subset_s \Delta $ with $\partial \Lambda \cap \partial \Delta =I$,
$p_\Delta \fA(\Lambda)p_\Delta = \fB(I) p_\Delta$.
\setcounter{enumi}{3}
\item
\label{LTO':injective}
Whenever $\Lambda_I \Subset_s \Delta_I \subset \Delta$,
$xp_\Delta = 0$ implies $x=0$ for all $x\in \fB(I)$.
\end{enumerate}
}}
\end{prop}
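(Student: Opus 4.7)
The plan is to prove both implications by establishing the key identification $\fB(\Lambda \Subset_s \Delta) = \fB(I) p_\Delta$ whenever $\partial \Lambda \cap \partial \Delta = I$, from which the individual axioms fall out in both directions.

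For the forward implication, assume \ref{LTO:Boundary}--\ref{LTO:Injective}. Given $\Lambda \Subset_s \Delta$ with boundary intersection $I$, I apply Lemma~\ref{lem:StabilityOfBoundaryAlgebras} to the quadrilateral with $\Lambda_1 = \Lambda_I$, $\Lambda_2 = \Lambda$, $\Delta_1 = \Delta_I$, $\Delta_2 = \Delta$, whose boundary compatibility $\partial \Lambda_I \cap \partial \Delta_I = I = \partial \Lambda_I \cap \partial \Delta$ holds by construction of $\Lambda_I$. This yields an injection $\fB(I) \xrightarrow{\,\cdot\, p_\Delta\,} \fB(\Lambda \Subset_s \Delta)$ with image $\fB(\Lambda_I \Subset_s \Delta)$; applying \ref{LTO:Surjective} to $\Lambda_I \subset \Lambda \Subset_s \Delta$ identifies this image with $\fB(\Lambda \Subset_s \Delta)$, and \ref{LTO:Boundary} identifies that with $p_\Delta \fA(\Lambda) p_\Delta$. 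Chaining these equalities gives \ref{LTO':B(I)}. Axiom \ref{LTO':injective} is then a direct instance of \ref{LTO:Injective} with $(\Lambda, \Delta_1, \Delta_2) = (\Lambda_I, \Delta_I, \Delta)$.

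For the reverse implication, I target the same identification $\fB(\Lambda \Subset_s \Delta) = \fB(I) p_\Delta$ without using Lemma~\ref{lem:StabilityOfBoundaryAlgebras}. The inclusion $\subseteq$ is immediate: any element $x p_\Delta$ of the former equals $p_\Delta x p_\Delta$ (by taking $\Delta' = \Delta$ in the defining commutation condition), which lies in $p_\Delta \fA(\Lambda) p_\Delta = \fB(I) p_\Delta$ by \ref{LTO':B(I)}. For the reverse inclusion, given $y = w p_{\Delta_I} \in \fB(I)$, I propose $z := p_\Lambda w p_\Lambda$ as the representing element in $p_\Lambda \fA(\Lambda) p_\Lambda$ (using $w \in \fA(\Lambda_I) \subseteq \fA(\Lambda)$). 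Using $p_\Lambda p_\Delta = p_\Delta$ (from $\Delta \supset \Lambda$) together with the fact that $w$ commutes with $p_\Delta$, one quickly verifies $zp_\Delta = yp_\Delta$; the required commutation $zp_{\Delta'} = p_{\Delta'} z$ for admissible $\Delta'$ reduces in the same way to $p_\Lambda p_{\Delta'} = p_{\Delta'}$ and the defining commutation of $w$ with $p_{\Delta'}$. Once the identification is established, \ref{LTO:Boundary} is just \ref{LTO':B(I)}; \ref{LTO:Surjective} follows because the right-hand side $\fB(I) p_\Delta$ depends only on $I$; and \ref{LTO:Injective} reduces to \ref{LTO':injective} by writing $x = y p_{\Delta_1}$ with $y \in \fB(I)$ and noting $x p_{\Delta_2} = y p_{\Delta_2}$.

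The main obstacle will be the commutation verification in the reverse direction: I must show that the commutation property built into $w \in \fB(I)$, stated for $\Delta'$ satisfying $\Lambda_I \Subset_s \Delta'$ with $\partial \Lambda_I \cap \partial \Delta' = I$, can be invoked for every $\Delta'$ satisfying the superficially different conditions $\Lambda \Subset_s \Delta'$ with $\partial \Lambda \cap \partial \Delta' = I$. This is a purely geometric check: since $\Lambda_I$ shares its $I$-side with $\Lambda$ and is no larger on the remaining sides, any $\Delta'$ that $s$-surrounds $\Lambda$ with boundary intersection $I$ automatically $s$-surrounds $\Lambda_I$ with the same intersection, and the commutation transfers. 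Once this geometric compatibility is pinned down, both directions are essentially bookkeeping against the definitions.
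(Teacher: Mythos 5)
Your proof takes essentially the same route as the paper's: the forward direction relies on Lemma~\ref{lem:StabilityOfBoundaryAlgebras} plus \ref{LTO:Boundary}--\ref{LTO:Surjective}, and the reverse direction establishes $\fB(\Lambda\Subset_s\Delta)=\fB(I)p_\Delta$ via the element $z=p_\Lambda w p_\Lambda$, which is exactly the content of the paper's inclusion $p_\Lambda\fB(\Lambda_I\Subset_s\Delta_I)p_\Lambda p_\Delta\subseteq\fB(\Lambda\Subset_s\Delta)p_\Delta$, with \ref{LTO:Surjective} and \ref{LTO:Injective} then falling out as you describe. You are slightly more explicit than the paper about the geometric check that commutation of $w$ with $p_{\Delta'}$ for $\Lambda_I\Subset_s\Delta'$ transfers to the $\Delta'$ admissible for $\Lambda$, a point the paper leaves implicit.
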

\begin{proof}
The forward direction is immediate by Lemma \ref{lem:StabilityOfBoundaryAlgebras} as noted right above the proposition.

Now suppose \ref{LTO':B(I)} and \ref{LTO':injective} hold.
To see \ref{LTO:Boundary}, observe that
$$
p_\Delta \fA(\Lambda) p_\Delta
\underset{\text{\ref{LTO':B(I)}}}{=}
\fB(I)p_\Delta
=
p_\Lambda\fB(\Lambda_I \Subset_s \Delta_I)p_\Lambda p_\Delta
\subseteq
\fB(\Lambda\Subset_s \Delta)p_\Delta
\subseteq
p_\Delta \fA(\Lambda) p_\Delta,
$$
so equality follows.

Note that \ref{LTO:Surjective} is now immediate as whenever
$\Lambda_1\subset  \Lambda_2\Subset_s \Delta$ with 
$\partial \Lambda_1 \cap \partial\Delta= I=\partial \Lambda_2 \cap \partial \Delta$,
$$
\fB(\Lambda_1\Subset_s \Delta)
\underset{\text{\ref{LTO:Boundary}}}{=}
p_\Delta \fA(\Lambda_1)p_\Delta
\underset{\text{\ref{LTO':B(I)}}}{=}
\fB(I)p_\Delta
\underset{\text{\ref{LTO':B(I)}}}{=}
p_\Delta \fA(\Lambda_2)p_\Delta
\underset{\text{\ref{LTO:Boundary}}}{=}
\fB(\Lambda_2 \Subset_s \Delta).
$$

Finally, to prove \ref{LTO:Injective},
suppose $\Lambda \Subset_s \Delta_1\subset \Delta_2$ with 
$\partial \Lambda \cap \partial\Delta_1= I=\partial \Lambda \cap \partial \Delta_2$.
Since $p_{\Delta_2}\leq p_{\Delta_1}$, the map 
$$
\fB(\Lambda \Subset_s \Delta_1)
=
\fB(I)p_{\Delta_1}\xrightarrow{\cdot  p_{\Delta_2}} \fB(I)p_{\Delta_2}=\fB(\Lambda \Subset_s \Delta_2)
$$ 
given by multiplication by $p_{\Delta_2}$ is always surjective.
If
$x\in \fB(\Lambda \Subset_s \Delta_1)=\fB(I)p_{\Delta_1}$ with $xp_{\Delta_2}=0$,
let $y\in \fB(I)$ such that $x=yp_{\Delta_1}$.
Then 
$$
yp_{\Delta_2}=yp_{\Delta_1}p_{\Delta_2}=xp_{\Delta_2}=0,
$$ 
so $y=0$ by \ref{LTO':injective}.
We conclude that $x=yp_{\Delta_1}=0$, and thus \ref{LTO:Injective} holds.
\end{proof}

\begin{rem}
\label{rem:StrategyForExamples}
For our examples in \S\ref{sec:toric} and \S\ref{sec:LWStringNet} below, we actually prove something slightly stronger.
For each interval $I$, let $\Lambda_I$ be the minimal sufficiently large rectangle (chosen such that $I$ is always on the same side of $\Lambda_I$) as mentioned above equation~\eqref{eq:BoundaryAlgebra}.
We identify an abstract $\rm C^*$-algebra $\fC(I)\subset p_{\Lambda_I}\fA(\Lambda_I)p_{\Lambda_I}$ which commutes with every $p_\Lambda$ such that $\Lambda_I \subset \Lambda$ with $\partial \Lambda_I \cap \partial \Lambda=I$.
We then show that whenever $\Lambda \Subset_s \Delta$ with $\partial \Lambda \cap \partial \Delta=I$,
then $p_\Delta \fA(\Lambda) p_\Delta = \fC(I) p_\Delta$, 
and that
$xp_\Delta = 0$ implies $x=0$ for $x\in \fC(I)$.
Thus $\fC(I)\cong \fC(I)p_{\Delta_I}=\fB(I)$, but it lives inside $p_{\Lambda_I}\fA(\Lambda_I)p_{\Lambda_I}$ rather than $p_{\Delta_I}\fA(\Lambda_I)p_{\Delta_I}$.
\end{rem}

Before getting to our examples, we now analyze a canonical state on $\fA$ from \ref{LTO:Hastings} and a canonical quantum channel from $\fA$ to a `quasi-local boundary algebra' $\fB=\varinjlim \fB(I)$ coming from \ref{LTO:Boundary}, \ref{LTO:Surjective}, and \ref{LTO:Injective}.
If the reader is more interested in the examples, they may skip directly to \S\ref{sec:toric} and \S\ref{sec:LWStringNet} below.

\subsection{Canonical state of a locally topologically ordered net of projections}
\label{sec:CanonicalState}

We now show that a net of projections $(p_\Lambda)$ on a net of algebras $\fA$ satisfying \ref{LTO:Hastings} has a canonical state.
Recall that if ground states are locally indistinguishable, all of them will converge to the same ground state when taking the thermodynamic limit.
To see this, consider an increasing sequence $\Lambda_1 \subset \Lambda_2 \subset \dots$ of rectangles exhausting $\cL$.
Suppose that $\phi_n : \fA \to \mathbb{C}$ is a sequence of states such that $\phi_n|_{\fA(\Lambda)}$ is a ground state for the local dynamics $H_n$ on $\fA(\Lambda_n)$.
Then the weak-* limit $\phi$ is a ground state of the thermodynamic limit.
Suppose that $\psi_n$ is another such sequence.
Then for $x \in \fA(\Lambda)$, by local indistinguishability we have that $\psi_n(x) = \phi_n(x)$ for $n$ large enough.
Hence the weak-* limits coincide.
Coming back to our setting of nets of projections satisfying our LTO axioms,
if the $(p_\Lambda)$ are the local ground state projections of a quantum spin model with local topological quantum order, this state is precisely the canonical state that we define here (see Remark~\ref{rem:PsiIsUniqueTranslationInvariantGroundState} below).

Below, we write $\ll$ instead of $\ll_s$ to ease the notation.

\begin{lem}
\label{lem:StateIndependentOfDelta}
Suppose $(\fA,p)$ satisfies \ref{LTO:Hastings}.
For $\Lambda$ sufficiently large with $\Lambda \ll \Delta$ and $x\in \fA(\Lambda)$, define $\psi_\Delta(x)\in \bbC$ as the scalar such that $p_\Delta xp_\Delta=\psi_\Delta(x)p_\Delta$.
Then $\psi_\Delta(x)$ is independent of $\Delta$.
We may thus denote $\psi_\Delta(x)$ simply by $\psi(x)$.
\end{lem}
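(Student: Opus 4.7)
The plan is to reduce independence of $\Delta$ to the monotonicity of the scalar under enlargement of the surrounding rectangle, by producing, for any two candidates $\Delta_1,\Delta_2$, a common enlargement $\Delta \supseteq \Delta_1 \cup \Delta_2$ with $\Lambda \ll_s \Delta$.

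First, I would establish the lattice-geometric fact that the set of rectangles completely surrounding $\Lambda$ by $s$ is upward directed under inclusion. Given $\Lambda \ll_s \Delta_1$ and $\Lambda \ll_s \Delta_2$, the smallest rectangle containing $\Delta_1 \cup \Delta_2$ already has $\Lambda$ in its strict interior with $\partial \Lambda \cap \partial\Delta = \emptyset$ (since $\Lambda$ was strictly inside each $\Delta_i$ with an $s$-buffer); enlarging $\Delta$ slightly further if necessary ensures that every vertex of $\Delta\setminus \Lambda$ lies in a closed $s^\ell$-cube contained in $\Delta\setminus \Lambda$. So such a $\Delta$ always exists.

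With such a $\Delta \supseteq \Delta_i$ in hand, the reverse-inclusion property of the net of projections gives $p_\Delta \leq p_{\Delta_i}$, hence $p_{\Delta_i}p_\Delta = p_\Delta = p_\Delta p_{\Delta_i}$. Then for any $x \in \fA(\Lambda)$,
$$
p_\Delta\, x\, p_\Delta = p_\Delta \bigl(p_{\Delta_i} x p_{\Delta_i}\bigr) p_\Delta = \psi_{\Delta_i}(x)\, p_\Delta p_{\Delta_i} p_\Delta = \psi_{\Delta_i}(x)\, p_\Delta,
$$
where the middle equality uses \ref{LTO:Hastings} applied to $\Lambda \ll_s \Delta_i$. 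By uniqueness of the scalar, $\psi_\Delta(x) = \psi_{\Delta_i}(x)$ for $i=1,2$, and chaining yields $\psi_{\Delta_1}(x) = \psi_\Delta(x) = \psi_{\Delta_2}(x)$, as required.

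The only real obstacle is the first step, the combinatorial verification that common enlargements satisfying $\Lambda \ll_s \Delta$ always exist; this is a mild geometric check on rectangles in $\bbZ^\ell$ and does not interact with the algebraic structure. The algebraic content of the argument is the one-line projection identity above, which is an immediate consequence of $p_\Delta \leq p_{\Delta_i}$ together with axiom \ref{LTO:Hastings}; translation invariance plays no role here.
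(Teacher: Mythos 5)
Your proof is correct and follows essentially the same route as the paper: pick a rectangle $\Delta_3$ containing $\Delta_1\cup\Delta_2$, use $p_{\Delta_3}\leq p_{\Delta_i}$ together with \ref{LTO:Hastings} applied to $\Lambda\ll_s\Delta_i$ to get $p_{\Delta_3}xp_{\Delta_3}=\psi_{\Delta_i}(x)p_{\Delta_3}$ for $i=1,2$, and compare. The geometric directedness step you single out as the main obstacle is not actually needed: since $p_{\Delta_3}$ is a non-zero projection, the two identities already force $\psi_{\Delta_1}(x)=\psi_{\Delta_2}(x)$ without ever requiring $\Lambda\ll_s\Delta_3$, which is exactly how the paper argues.
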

\begin{proof}
If $\Lambda \ll \Delta_i$ for $i=1,2$, then pick $\Delta_3$ containing $\Delta_1\cup \Delta_2$.
Since $p_{\Delta_3}\leq p_{\Delta_i}$ for $i=1,2$,
\begin{equation*}
p_{\Delta_3} x p_{\Delta_3}
=
p_{\Delta_3}p_{\Delta_i} x p_{\Delta_i}p_{\Delta_3}
=
p_{\Delta_3}\psi_{\Delta_i}(x)p_{\Delta_i}p_{\Delta_3}
=
\psi_{\Delta_i}(x)p_{\Delta_3}
\qquad\qquad
\text{for }
i=1,2,
\end{equation*}
and so $\psi_{\Delta_1}(x)=
\psi_{\Delta_3}(x)=\psi_{\Delta_2}(x)$.
\end{proof}

The following lemma and corollary are certainly known to experts.
We include a proof for convenience and completeness.

\begin{lem}
\label{lem:PositiveAFLimit}
Suppose $A=\varinjlim A_n$ is a unital AF $\rm C^*$-algebra where each $A_n$ is a finite dimensional $\rm C^*$-algebra.
Then $A^+ = \varinjlim A_n^+$.
\end{lem}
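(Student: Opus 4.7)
The plan is to prove the nontrivial inclusion $A^+ \subseteq \varinjlim A_n^+$; the reverse inclusion is immediate, since each $A_n^+$ sits inside $A^+$ and $A^+$ is norm-closed, hence contains the closure $\varinjlim A_n^+$.

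For the main direction, I would start with an arbitrary $a \in A^+$ and use the AF hypothesis to choose $a_n \in A_n$ with $a_n \to a$ in norm. These $a_n$ need not be self-adjoint, let alone positive, so the first step is to replace them with their self-adjoint parts $b_n := (a_n + a_n^*)/2 \in A_n$. Because $a = a^*$, we still have $b_n \to a$ in norm, and in particular $\|b_n\|$ is bounded.

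The key step is then to apply continuous functional calculus to the function $f(t) = \max(t,0) = t_+$. Each $A_n$ is a finite-dimensional (hence unital) $\rm C^*$-subalgebra of $A$, so the continuous functional calculus of $b_n$ inside $A$ actually lives in $A_n$, meaning $c_n := f(b_n) = (b_n)_+ \in A_n^+$. Since $a \geq 0$, we have $f(a) = a$, so it remains to show $f(b_n) \to f(a)$ in norm. This follows from the standard fact that $t \mapsto t_+$ is 1-Lipschitz, together with the norm-continuity of continuous functional calculus for self-adjoint elements on a bounded set: explicitly, $\|f(b_n) - f(a)\| \leq \|b_n - a\| \to 0$. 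Thus $c_n \in A_n^+ \subseteq \varinjlim A_n^+$ and $c_n \to a$, giving $a \in \varinjlim A_n^+$, as required.

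I do not anticipate a serious obstacle here; the only subtlety worth being careful about is making sure the continuous functional calculus is performed inside $A_n$ rather than just inside $A$, which is automatic once one observes that $A_n$ is unital and finite-dimensional (so that $\rm C^*(b_n,1) \subseteq A_n$). If one prefers to avoid invoking norm-continuity of functional calculus, one can instead use the identity $t_+ = (|t|+t)/2$ and note that $|b_n| = (b_n^2)^{1/2}$ is approximated in $A_n$ by polynomials in $b_n^2$, which again stay inside the finite-dimensional algebra $A_n$.
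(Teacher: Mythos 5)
Your overall architecture is the same as the paper's: pass to the self-adjoint parts $b_n=(a_n+a_n^*)/2$, apply the positive-part function, note that each finite-dimensional $A_n$ is closed under continuous functional calculus so that $(b_n)_+\in A_n^+$, and finish by showing $(b_n)_+\to a$. The remark about the reverse inclusion and the membership $(b_n)_+\in A_n$ are fine. The genuine problem is the justification of the convergence step: from the fact that $t\mapsto t_+$ is $1$-Lipschitz as a scalar function you cannot conclude $\|f(b_n)-f(a)\|\le\|b_n-a\|$, because $b_n$ and $a$ need not commute, and a scalar Lipschitz function need not be \emph{operator} Lipschitz. Indeed $t_+=(t+|t|)/2$, and it is a classical fact that the absolute value map on self-adjoint elements is not operator Lipschitz with any uniform constant (for $n\times n$ self-adjoint matrices the optimal constant grows like $\log n$); since the identity is trivially operator Lipschitz, the same failure holds for $t\mapsto t_+$. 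So the explicit inequality you wrote down is false in general, and as written this key step does not go through.

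The conclusion $(b_n)_+\to a$ is nevertheless true, and the standard repair is exactly the paper's argument: since $b_n\to a$ and $\spec(a)\subseteq[0,\|a\|]$, upper semicontinuity of the spectrum gives, for $n$ large, $\spec(b_n)$ inside an $\varepsilon/2$-neighborhood of $\spec(a)$, so $|t_+-t|\le\varepsilon/2$ on $\spec(b_n)$, whence $\|(b_n)_+-b_n\|\le\varepsilon/2$, and the triangle inequality with $\|b_n-a\|<\varepsilon/2$ finishes. Alternatively, fix $M$ bounding all $\|b_n\|$ and $\|a\|$, approximate $t_+$ uniformly on $[-M,M]$ by a polynomial $p$ with $p(0)=0$, and use norm-continuity of $x\mapsto p(x)$; this proves the ``norm-continuity of functional calculus on bounded sets of self-adjoint elements'' that you invoke, but it yields continuity, not a $1$-Lipschitz estimate. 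Note also that your fallback suggestion via $|b_n|=(b_n^2)^{1/2}$ only addresses membership of $(b_n)_+$ in $A_n$, not the convergence, so it does not close this gap by itself.
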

\begin{proof}
Suppose $a_n\in A_n$ with $a_n \to a\in A^+$.
Then $a_n^*\to a$, so $b_n:=\frac{a_n+a_n^*}{2}\to a$.
This means for every $\varepsilon>0$, there is an $N>0$ such that $n>N$ implies 
$$
\spec(b_n)\subset N_{\varepsilon/2}(\spec(a))\subset [-\varepsilon/2, \|a\|+\varepsilon/2],
$$
where $N_{\varepsilon/2}(\spec(a))$ is an $\varepsilon/2$ neighborhood of $\spec(a)$.
Since each $A_n$ is closed under functional calculus, we see that applying the function $(\,\cdot\,)_+:r\mapsto \max\{0,r\}$ to $b_n$ gives a positive sequence $(b_n)_+$ such that $\|(b_n)_+-b_n\| \leq \varepsilon/2$ for all $n>N$.
Picking $N'>N$ such that $\|b_n-a\|<\varepsilon/2$,
for all $n>N'$,
$$
\|(b_n)_+ - a\|
\leq
\|(b_n)_+ - b_n\|+\|b_n-a\|
<\varepsilon.
$$
Hence $(b_n)_+\to a$, and the result follows.
\end{proof}

\begin{cor}
\label{cor:ExtendAF-UCP}
Suppose $A=\varinjlim A_n$ is a unital AF $\rm C^*$-algebra and $B$ is another unital $\rm C^*$-algebra.
A unital (completely) positive map $\phi: \bigcup A_n \to B$
uniquely extends to a unital (completely) positive map $A\to B$.
\end{cor}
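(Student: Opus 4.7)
The plan is to extend $\phi$ by norm-continuity and then use Lemma \ref{lem:PositiveAFLimit} to transfer positivity from the dense subalgebra to the completion.

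First I would check that $\phi$ is bounded on $\bigcup A_n$. Since $\phi$ is unital and positive between unital $\rm C^*$-algebras, standard $\rm C^*$-algebra theory (using functional calculus to dominate self-adjoint elements $a$ by $\|a\| \cdot 1$) gives $\|\phi(x)\| \leq \|x\|$ for all $x \in \bigcup A_n$. Since $\bigcup A_n$ is norm-dense in $A$, there is a unique bounded linear extension $\widetilde\phi : A \to B$, which is automatically unital. This handles existence and uniqueness as a linear map.

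Next I would verify that $\widetilde\phi$ is positive. Let $a \in A^+$. By Lemma \ref{lem:PositiveAFLimit}, we may write $a = \lim_n a_n$ with $a_n \in A_n^+ \subseteq \bigcup_m A_m$ positive already in some $A_n$ (hence positive in $A$, and therefore in the image under $\phi$). Then $\phi(a_n) \geq 0$ for every $n$, and $\phi(a_n) \to \widetilde\phi(a)$ in norm by continuity of $\widetilde\phi$. Since $B^+$ is norm-closed in $B$, we conclude $\widetilde\phi(a) \in B^+$.

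For the complete positivity statement I would apply the positive case to matrix amplifications. Fix $k \geq 1$ and observe that $A \otimes M_k(\bbC) = \varinjlim (A_n \otimes M_k(\bbC))$ is again a unital AF $\rm C^*$-algebra, with dense union $\bigcup_n (A_n \otimes M_k(\bbC)) = (\bigcup_n A_n) \otimes M_k(\bbC)$. Since $\phi$ is completely positive, the map $\phi \otimes \id_{M_k(\bbC)}$ is unital positive on this dense union, and its unique bounded extension to $A \otimes M_k(\bbC)$ is exactly $\widetilde\phi \otimes \id_{M_k(\bbC)}$. The positive case above applied at level $k$ gives that this amplification is positive, so $\widetilde\phi$ is completely positive. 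The only nontrivial ingredient in the argument is the AF-approximation Lemma \ref{lem:PositiveAFLimit}, which is exactly what makes the limiting argument for positivity go through; everything else is routine continuity and matrix amplification.
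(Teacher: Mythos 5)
Your proposal is correct and follows essentially the same route as the paper: deduce boundedness of $\phi$ from unitality and positivity, extend by norm-continuity, and then use Lemma \ref{lem:PositiveAFLimit} to transfer (complete) positivity to the extension. The only difference is cosmetic — you spell out the matrix amplification $A\otimes M_k(\bbC)=\varinjlim (A_n\otimes M_k(\bbC))$ for the completely positive case, which the paper leaves implicit in the phrase ``(complete) positivity follows directly from Lemma \ref{lem:PositiveAFLimit}.''
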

\begin{proof}
Since $\phi$ is unital, for all $a\in \bigcup_n A_n$,
$$
\phi(a^*a) \leq \|a^*a\| \phi(1) = \|a^*a\|.
$$
Since positives in $\bigcup_n A_n$ span $\bigcup_n A_n$, we see $\phi$ is bounded.
Hence $\phi$ uniquely extends to a map $A\to B$, and (complete) positivity follows directly from Lemma \ref{lem:PositiveAFLimit}.
\end{proof}

\begin{defn}
\label{defn:CanonicalState}
By Lemma \ref{lem:StateIndependentOfDelta}, $\psi$ is a well-defined positive linear functional on $\bigcup_\Lambda \fA(\Lambda)$ such that $\psi(1_\fA)=1_\bbC$, and thus extends to a unique state on $\fA$ by Corollary \ref{cor:ExtendAF-UCP}.
We call this the \emph{canonical state} associated to the net $(p_\Lambda)$.
\end{defn}

\begin{cor}
\label{cor:TranslationInvariant}
When $\fA$ and $(p_\Lambda)$ are translation invariant, so is the state $\psi$.
\end{cor}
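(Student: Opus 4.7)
The plan is to verify translation invariance directly from the defining formula $p_\Delta x p_\Delta = \psi(x) p_\Delta$ (for $x\in\fA(\Lambda)$ with $\Lambda\ll_s\Delta$), using the translation invariance assumptions on both the net $\fA$ and the projection net $(p_\Lambda)$, and then extending by continuity.

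First I would reduce to the local case. By Corollary~\ref{cor:ExtendAF-UCP}, $\psi$ is the unique bounded extension of its restriction to $\bigcup_\Lambda \fA(\Lambda)$, so it suffices to show that $\psi(g\cdot x) = \psi(x)$ for every $g\in \Aut_{\tr}(\cL)$ and every $x \in \fA(\Lambda)$ with $\Lambda$ sufficiently large (as $\psi$ and $g\cdot(-)$ are continuous). Fix such an $x$ and choose $\Delta$ sufficiently large that $\Lambda \ll_s \Delta$.

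Next I would apply the automorphism $g$ to the defining identity. By translation invariance of $\fA$, $g$ is a unital $*$-automorphism sending $\fA(\Lambda)$ to $\fA(g+\Lambda)$, and by translation invariance of $(p_\Lambda)$, $g\cdot p_\Delta = p_{g+\Delta}$. Since $g+\Lambda \ll_s g+\Delta$, applying $g$ to $p_\Delta x p_\Delta = \psi(x) p_\Delta$ yields
\[
p_{g+\Delta}\,(g\cdot x)\,p_{g+\Delta} \;=\; \psi(x)\, p_{g+\Delta}.
\]
On the other hand, $g\cdot x \in \fA(g+\Lambda)$, so by the definition of $\psi$ (and the independence of $\Delta$ from Lemma~\ref{lem:StateIndependentOfDelta}),
\[
p_{g+\Delta}\,(g\cdot x)\,p_{g+\Delta} \;=\; \psi(g\cdot x)\, p_{g+\Delta}.
\]
Comparing the two expressions and using $p_{g+\Delta}\neq 0$ gives $\psi(g\cdot x) = \psi(x)$, as desired.

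There is essentially no obstacle here beyond carefully matching definitions: the whole content is that the scalar $\psi(x)$ determined by the compression $p_\Delta x p_\Delta$ is intertwined by $g$ with the scalar determined by the compression $p_{g+\Delta}(g\cdot x) p_{g+\Delta}$, and translation invariance of the projection net is precisely what makes these compressions correspond. The only mild subtlety is ensuring the local reduction is legitimate, which is handled by density of $\bigcup_\Lambda \fA(\Lambda)$ and the norm continuity of both $\psi$ and the translation action.
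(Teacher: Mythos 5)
Your proof is correct and follows essentially the same route as the paper: both arguments hinge on translation invariance mapping $p_\Delta$ to $p_{g+\Delta}$ together with the $\Delta$-independence from Lemma~\ref{lem:StateIndependentOfDelta}. The only cosmetic difference is that you transport $\Delta$ along $g$ and compare scalars there, whereas the paper fixes a single $\Delta$ surrounding both $\Lambda$ and $g+\Lambda$; your explicit density/continuity reduction to $\bigcup_\Lambda \fA(\Lambda)$ is a fine (implicit in the paper) finishing touch.
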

\begin{proof}
For every rectangle $\Lambda$ and $g\in \Aut_{\tr}(\cL)$ there is a rectangle $\Delta$ with both $\Lambda \ll \Delta$ 
and 
$g+\Lambda \ll \Delta \Leftrightarrow \Lambda \ll -g+\Delta$.
Then using the notation from Lemma \ref{lem:StateIndependentOfDelta},
\[
\psi(g\cdot x)
=
\psi_\Delta(g\cdot x)
=
\psi_{-g+\Delta}(x) 
=
\psi_\Delta(x) 
=
\psi(x)
\qquad\qquad
\forall\, x\in \fA(\Lambda).
\qedhere
\]
\end{proof}

The next lemma follows by a simple application of the Cauchy-Schwarz inequality.

\begin{lem}[{\cite[\textsection 2.1.1]{MR2345476}}]
\label{lem:AFHCauchySchwarz}
Let $A$ be a unital $\rmC^*$-algebra and $\phi \colon A \to \bbC$ be a state.  
Suppose $x \in A$ satisfies $x \leq 1_A$ and $\phi(x) = 1$.  
Then for all $y \in A$, 
\[
\phi(xy) = \phi(yx) = \phi(y).
\]
\end{lem}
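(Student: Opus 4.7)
The plan is to reduce the identity $\phi(xy) = \phi(yx) = \phi(y)$ to showing that $\phi$ annihilates products with the defect $p := 1_A - x$. By hypothesis $x \leq 1_A$, which already entails $x = x^*$ (the order only makes sense on self-adjoint elements), so $p$ is a positive element of $A$ with $\phi(p) = \phi(1_A) - \phi(x) = 1 - 1 = 0$. Since $p \geq 0$, continuous functional calculus produces a self-adjoint square root $p^{1/2} \in A$.

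Next, I would invoke the Cauchy--Schwarz inequality for states: $|\phi(a^*b)|^2 \leq \phi(a^*a)\,\phi(b^*b)$ for all $a,b \in A$. Taking $a = p^{1/2}$ and $b = y$ gives
\[
|\phi(p^{1/2} y)|^2 \leq \phi(p)\,\phi(y^*y) = 0,
\]
so $\phi(p^{1/2} y) = 0$ for every $y \in A$. Since this holds for all $y$, I substitute $y \mapsto p^{1/2} y$ to conclude $\phi(py) = 0$. The symmetric identity $\phi(yp) = 0$ follows either by the analogous choice $a = y^*$, $b = p^{1/2}$ in Cauchy--Schwarz and the same substitution trick, or by applying the above to the state $y \mapsto \overline{\phi(y^*)}$ (which has the same hypotheses since $p$ is self-adjoint).

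With $\phi(py) = \phi(yp) = 0$ in hand, the conclusion is immediate: $\phi(xy) = \phi(y) - \phi(py) = \phi(y)$ and likewise $\phi(yx) = \phi(y) - \phi(yp) = \phi(y)$.

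There is no genuine obstacle; the only subtlety worth flagging is ensuring that $x$ is self-adjoint so that $p = 1_A - x$ is positive and admits a self-adjoint square root inside $A$ (so no enveloping von Neumann algebra is needed). Everything else is a direct application of Cauchy--Schwarz for states combined with the substitution $y \mapsto p^{1/2} y$.
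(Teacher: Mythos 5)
Your proof is correct and follows essentially the same route the paper indicates: the paper does not write out an argument but simply remarks that the lemma ``follows by a simple application of the Cauchy--Schwarz inequality'' (citing \cite[\textsection 2.1.1]{MR2345476}), and your argument --- setting $p = 1_A - x \geq 0$, noting $\phi(p)=0$, and applying Cauchy--Schwarz with $p^{1/2}$ together with the substitution $y \mapsto p^{1/2}y$ --- is precisely that application, carried out carefully.
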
 

\begin{cor}
For every rectangle $\Lambda$,
$\psi(x) = \psi(p_\Lambda x) = \psi(xp_\Lambda)$ for all $x\in \fA$.
\end{cor}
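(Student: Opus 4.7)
The plan is to apply Lemma \ref{lem:AFHCauchySchwarz} directly, with the ambient algebra taken to be $\fA$, the state taken to be $\psi$, and the distinguished element taken to be $x = p_\Lambda$. Since $p_\Lambda$ is an orthogonal projection, we have $p_\Lambda \leq 1_\fA$ automatically. The one nontrivial hypothesis to verify is that $\psi(p_\Lambda) = 1$; once this is in hand, the lemma gives $\psi(p_\Lambda y) = \psi(yp_\Lambda) = \psi(y)$ for every $y\in \fA$, which is exactly the claim.

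For small rectangles $\Lambda$ (those which are not sufficiently large), the convention $\fA(\Lambda) = \bbC 1_\fA$ forces $p_\Lambda = 1_\fA$, and the statement of the corollary is trivial. For sufficiently large $\Lambda$, I would pick any $\Delta$ with $\Lambda \ll_s \Delta$, which exists since $\cL = \bbZ^\ell$ admits arbitrarily large surrounding rectangles. The net of projections is ordered by reverse inclusion, so $\Lambda \subset \Delta$ gives $p_\Delta \leq p_\Lambda$, and therefore
\[
p_\Delta\, p_\Lambda\, p_\Delta = p_\Delta.
\]
Comparing with the defining relation $p_\Delta\, p_\Lambda\, p_\Delta = \psi_\Delta(p_\Lambda)\, p_\Delta$ from Lemma \ref{lem:StateIndependentOfDelta}, and using that $\psi_\Delta = \psi$ is independent of $\Delta$ (Definition \ref{defn:CanonicalState}), we conclude $\psi(p_\Lambda) = 1$.

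With $\psi(p_\Lambda) = 1$ established, the Cauchy--Schwarz-type Lemma \ref{lem:AFHCauchySchwarz} applied to the state $\psi$ and the contraction $p_\Lambda$ yields $\psi(p_\Lambda y) = \psi(y p_\Lambda) = \psi(y)$ for all $y \in \fA$. There is no real obstacle in this proof; the only minor subtlety is making sure that $\psi(p_\Lambda) = 1$ follows cleanly from reverse-inclusion monotonicity of the net of projections, but this is immediate from the definition.
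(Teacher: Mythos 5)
Your proof is correct and follows essentially the same route as the paper: establish $\psi(p_\Lambda)=1$ via $p_\Delta p_\Lambda p_\Delta = p_\Delta$ for $\Lambda\ll\Delta$, then invoke Lemma~\ref{lem:AFHCauchySchwarz}. The only difference is that you spell out the trivial case of non-sufficiently-large $\Lambda$ and the appeal to Lemma~\ref{lem:StateIndependentOfDelta}, which the paper leaves implicit.
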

\begin{proof}
Whenever $\Lambda \ll \Delta$, 
$p_\Delta p_\Lambda p_\Delta = p_\Delta$.
Thus $\psi(p_\Lambda)=1$ for every rectangle $\Lambda$.
Now apply Lemma~\ref{lem:AFHCauchySchwarz}.
\end{proof}

\begin{cor}
\label{cor:UniqueStateTakingOneOnNet}
If $\phi$ is a state on $\fA$ satisfying $\phi(p_\Lambda) = 1$ for all rectangles $\Lambda$, then $\phi=\psi$.
In particular, $\psi$ is pure.
\end{cor}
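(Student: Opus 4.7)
The plan is to exploit Lemma 2.16 to control $\phi$ on local observables, and then appeal to density. Specifically, I would first observe that $\psi(p_\Lambda) = 1$ for every rectangle $\Lambda$: picking $\Delta$ with $\Lambda \ll \Delta$, we have $p_\Delta \leq p_\Lambda$, so $p_\Delta p_\Lambda p_\Delta = p_\Delta$, and hence $\psi_\Delta(p_\Lambda) = 1$. Thus $\phi$ and $\psi$ both satisfy the hypothesis of Lemma~\ref{lem:AFHCauchySchwarz} with $x = p_\Delta$.

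Next, let me fix a rectangle $\Lambda$ and $x \in \fA(\Lambda)$, and choose $\Delta$ with $\Lambda \ll \Delta$. By Lemma~\ref{lem:AFHCauchySchwarz} applied twice (once on each side), using $p_\Delta \leq 1_\fA$ and $\phi(p_\Delta) = 1$,
\[
\phi(x) = \phi(p_\Delta x) = \phi(p_\Delta x p_\Delta).
\]
By definition of $\psi$ we have $p_\Delta x p_\Delta = \psi(x) p_\Delta$, so $\phi(x) = \psi(x) \phi(p_\Delta) = \psi(x)$. Thus $\phi$ and $\psi$ agree on $\bigcup_\Lambda \fA(\Lambda)$, which is norm dense in $\fA$ by~\ref{N:dense}, so $\phi = \psi$ by continuity.

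For purity, suppose $\psi = t\omega_1 + (1-t)\omega_2$ with $t \in (0,1)$ and states $\omega_1, \omega_2$ on $\fA$. Since $\omega_i(p_\Lambda) \leq \omega_i(1_\fA) = 1$ and
\[
1 = \psi(p_\Lambda) = t\omega_1(p_\Lambda) + (1-t)\omega_2(p_\Lambda),
\]
we must have $\omega_i(p_\Lambda) = 1$ for every rectangle $\Lambda$ and $i = 1,2$. Applying the first part of the corollary to each $\omega_i$ gives $\omega_1 = \omega_2 = \psi$, so $\psi$ admits no nontrivial convex decomposition and is therefore pure.

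I do not anticipate any genuine obstacle here; the only subtle point is remembering that Lemma~\ref{lem:AFHCauchySchwarz} requires the symmetric identity $\phi(yx) = \phi(y)$ to compress $x$ on the right as well as on the left, which is exactly what makes the scalar $\psi(x)$ pop out of $p_\Delta x p_\Delta$.
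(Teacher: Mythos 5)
Your proof is correct and follows essentially the same route as the paper: both compress $x$ between $p_\Delta$'s using Lemma~\ref{lem:AFHCauchySchwarz} and the scalar identity $p_\Delta x p_\Delta = \psi(x)p_\Delta$, then conclude by density. The only cosmetic difference is in the purity step, where you use the extreme-point (convex decomposition) characterization while the paper checks that any positive functional $\varphi \leq \psi$ satisfies $\varphi(p_\Lambda)=\varphi(1_\fA)$ and is thus a multiple of $\psi$; these are equivalent and the substance of the argument is the same.
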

\begin{proof}
It suffices to prove that $\phi=\psi$ on every $\fA(\Lambda)$ for $\Lambda$ sufficiently large.
Pick any rectangle $\Delta$ with $\Lambda \ll \Delta$.
Then for all $x\in \fA(\Lambda)$,
\[
\phi(x)
\underset{\text{\scriptsize (Lem.~\ref{lem:AFHCauchySchwarz})}}{=}
\phi(p_\Delta x p_\Delta) 
\underset{\text{\scriptsize (Lem.~\ref{lem:StateIndependentOfDelta})}}{=}
\psi(x)
\cdot 
\phi(p_\Delta) 
\underset{\text{\scriptsize (Lem.~\ref{lem:AFHCauchySchwarz})}}{=}
\psi(x).
\]
Purity of $\psi$ now follows quickly.  
Indeed, suppose $\varphi \colon \fA \to \bbC$ is a functional satisfying $0 \leq \varphi \leq \psi$.  
Then for all rectangles $\Lambda$, 
\[
0
\leq
\varphi(1_\fA - p_\Lambda)
\leq
\psi(1_\fA - p_\Lambda)
=
0.
\]
Hence $\varphi(p_\Lambda) = \varphi(1)$ for all rectangles $\Lambda$,
so $\varphi = \varphi(1_\fA)\cdot\psi$.
\end{proof}

While the quasi-local algebra $\fA$ carries a canonical pure state $\psi$, we do not, \emph{a priori}, have a local Hamiltonian for which $\psi$ is the ground state.

\begin{rem}
\label{rem:PsiIsUniqueTranslationInvariantGroundState}
Suppose that $\fA$ arises from a translation-invariant frustration free local Hamiltonian $H$ on a spin system.
Recall that an interaction is an assigment $X \mapsto \Phi(X) = \Phi(X)^* \in \fA(X)$ to each finite subset $X \subset \cL$.
The interaction $X \mapsto \Phi(X)$ is called \emph{frustration free} if $\Phi(X) \geq 0$ and the ground states of the local Hamiltonians are given by $\ker(H_\Lambda)$, with $H_\Lambda = \sum_{X \subset \Lambda} \Phi(X)$.
This implies that if $p_\Lambda$ is the projection onto the local ground space of $H_\Lambda$, we have $p_\Delta = p_\Delta p_\Lambda = p_\Lambda p_\Delta$ if $\Lambda \subset \Delta$, and so $(p_\Lambda)$ is a net of projections.
Suppose that $(p_\Lambda)$ satisfies
\ref{LTO:Hastings}. 
We claim that the canonical state $\psi$ is the unique translation-invariant ground state for $H$.
Translation invariance follows immediately from Corollary \ref{cor:TranslationInvariant}.

It is easy to check that $\psi(H_\Lambda)=0$ for every sufficiently large $\Lambda$, which is the minimum possible value a state can take on $H_\Lambda$.
Indeed, using the spectral theorem locally, we can write $H_\Lambda = \lambda_0p_\Lambda + \sum \lambda_i q_i$ where each $\lambda_i > \lambda_0= 0$, and the $p_\Lambda,q_i$ are commuting non-zero projections that sum to 1.
Since 
$$
\psi(q_i) 
= 
\psi(p_\Lambda q_i) 
= 
\psi(0)=0
\qquad
\forall\, i,
\qquad
\Longrightarrow
\qquad
\psi(H_\Lambda)
=
\lambda_0 \psi(p_\Lambda) + \sum_i \lambda_i \psi(q_i)
=
0.
$$
This immediately implies $\psi(H_\Lambda)=0$ for every rectangle $\Lambda$ by considering $\Lambda \subset \Delta$ with $\Delta$ suitably large.
Moreover, if $\phi$ is a state on $\fA$ such that $\phi(H_\Lambda)=0$ for every $\Lambda$, then necessarily $\phi(q_i)=0$ for all $i$, so
$$
\phi(p_\Lambda)
=
\phi(p_\Lambda)
+
\sum_i \phi(q_i)
=
\phi(1_\fA) 
=
1
\qquad\qquad\qquad
\forall\, \Lambda
$$
and thus $\phi=\psi$ by Corollary \ref{cor:UniqueStateTakingOneOnNet}.
It now follows by \cite[Thm.~6.2.58]{MR1441540} that $\psi$ is the unique translation-invariant ground state for $H$.

\end{rem}

The article \cite{MR2742836} considers a finite 2D quantum spin system defined on $\Gamma := \mathbb{Z}_L \times \mathbb{Z}_L$ for suffieciently large integers $L$, with periodic boundary conditions.
Furthermore, they assume in \cite[\S~II.A]{MR2742836} that they have a frustration-free commuting projector local Hamiltonian.
In this setting, for each rectangle $\Lambda \subset \Gamma$, we define a projection $p_\Lambda$ onto the ground space of the local Hamiltonian supported on $\Lambda$.
This gives a net $\Lambda \mapsto p_\Lambda$ of projections as above.
The authors then define a pair of \emph{topological quantum order conditions}.
We paraphrase them here for convenience using our notation:
\begin{enumerate}[label=(TQO\arabic*)]
    \item For $a \in \fA(\Lambda)$ with $\Lambda$ small enough compared to the system size, we have $p_\Gamma a p_\Gamma \in \bbC p_\Gamma$.
    \item Let $\Lambda$ be square that is small enough compared to the system size. Then $\ker \Tr_{\Lambda^c}(p_\Gamma) = \ker \Tr_{\Lambda^c}(p_{\Lambda^{+1}})$.
\end{enumerate}
Here $\Tr_{\Lambda^c} : \fA \to \fA(\Lambda)$ is the $\tr$-preserving conditional expectation which traces out the degrees of freedom localized in $\Lambda^c := \Gamma \setminus \Lambda$.
The condition (TQO2) can be interpreted as the ``local'' ground spaces being compatible with the ``global'' one.
Our~\ref{LTO:Hastings} condition implies both of these conditions.
\begin{prop}
\label{prop:LTQO}
The condition~\ref{LTO:Hastings} in the setup of \cite{MR2742836}
implies both topological quantum order conditions (TQO1) and (TQO2) of~\cite{MR2742836}.
\end{prop}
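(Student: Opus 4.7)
My plan is to prove the two topological quantum order conditions separately; both rely on the fact that \ref{LTO:Hastings} pins down a canonical state on $\fA(\Lambda)$ via compression by $p_\Delta$ for any sufficiently surrounding $\Delta$.

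For (TQO1), the argument is direct: suppose $\Lambda$ is small enough relative to the system size so that $\Lambda \ll_s \Gamma$. Then \ref{LTO:Hastings} applied with $\Delta = \Gamma$ gives $p_\Gamma \fA(\Lambda) p_\Gamma = \bbC p_\Gamma$ immediately, which is exactly (TQO1).

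For (TQO2), I will prove the stronger statement that $\Tr_{\Lambda^c}(p_\Gamma)$ and $\Tr_{\Lambda^c}(p_{\Lambda^{+1}})$ are positive scalar multiples of one another inside $\fA(\Lambda)$, which forces equality of their kernels. The key input is Lemma \ref{lem:StateIndependentOfDelta}, which shows that the scalar function $a \mapsto \psi_\Delta(a)$ extracted from compression is independent of $\Delta$ across every $\Delta$ with $\Lambda \ll_s \Delta$. Applying \ref{LTO:Hastings} to both $\Delta = \Gamma$ and $\Delta = \Lambda^{+1}$ (tacitly replacing the latter by $\Lambda^{+s}$ if $s>1$, which is harmless because the paraphrased (TQO2) really concerns any sufficiently thick collar), for every $a \in \fA(\Lambda)$ we obtain
\[
p_\Gamma a p_\Gamma = \psi(a) p_\Gamma \qquad\text{and}\qquad p_{\Lambda^{+1}} a p_{\Lambda^{+1}} = \psi(a) p_{\Lambda^{+1}}
\]
with the \emph{same} scalar $\psi(a)$.

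Taking the ordinary trace and using the defining identity $\tr(a\,\Tr_{\Lambda^c}(x)) = \tr(ax)$ of the partial trace, the pairings $a \mapsto \tr(a\,\Tr_{\Lambda^c}(p_\Gamma))$ and $a \mapsto \tr(a\,\Tr_{\Lambda^c}(p_{\Lambda^{+1}}))$ both equal $\psi(a)$ times a fixed positive constant ($\tr(p_\Gamma)$ and $\tr(p_{\Lambda^{+1}})$ respectively). Faithfulness of the trace on the finite-dimensional algebra $\fA(\Lambda)$ then forces
\[
\Tr_{\Lambda^c}(p_\Gamma) = \frac{\tr(p_\Gamma)}{\tr(p_{\Lambda^{+1}})}\,\Tr_{\Lambda^c}(p_{\Lambda^{+1}}),
\]
so the two positive operators have the same kernel. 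The only real obstacle is bookkeeping---reconciling the fixed surrounding constant $s$ of \ref{LTO:Hastings} with the ``$+1$'' in the paraphrase of (TQO2)---and it is not conceptual, since LTO1 yields the relevant compression identity with any completely surrounding $\Delta$.
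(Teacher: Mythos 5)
Your proposal is correct and follows essentially the same route as the paper: (TQO1) is immediate from \ref{LTO:Hastings} applied with $\Delta=\Gamma$, and for (TQO2) the paper likewise uses Lemma \ref{lem:StateIndependentOfDelta} to show that the normalized functionals $x\mapsto \tr(p_\Gamma x)/\tr(p_\Gamma)$ and $x\mapsto \tr(p_\Delta x)/\tr(p_\Delta)$ (for a collar region $\Delta$ around $\Lambda$) both equal $\psi$ on $\fA(\Lambda)$, concluding via the trace pairing that the partial traces are proportional and hence have equal kernels --- exactly your ``strictly stronger'' proportionality statement, which the paper records in a footnote. The collar-thickness bookkeeping you flag is handled the same way in the paper (it works with $\Lambda\ll_2\Delta$ and notes this matches the locality assumptions of the cited reference), so there is no gap.
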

\begin{proof}
Since we now consider a finite quantum system, $p_\Gamma$ is a projection in $\fA$.
If a rectangle $\Lambda$ is small compared to the system size $L$, we have for all $x \in \fA(\Lambda)$
\[
    p_\Gamma x p_\Gamma = \psi(x) p_\Gamma
\]
as before.
This is precisely (TQO1).

Now let $\tr$ be the unique tracial state on $\fA$ and suppose that $\Lambda$ again is a rectangle which is small compared to $L$.
Let $\Delta$ be such that $\Lambda \ll_2 \Delta$.\footnote{The $\ll_2$ matches the assumptions on the locality of the local terms in the Hamiltonian in~\cite[\S~II.B]{MR2742836}.}
Consider the state on $\fA$ given by $\phi(x) = (\tr(p_\Delta))^{-1} \cdot \tr(p_\Delta x)$.
Then by Lemma \ref{lem:StateIndependentOfDelta},
\[
\phi(x)
=
\frac{\tr(p_\Delta x p_\Delta)}{\tr(p_\Delta)} 
=
\frac{\psi(x) \tr(p_\Delta)}{\tr(p_\Delta)}
=
\psi(x)
\qquad\qquad\qquad
\forall\,
x \in \fA(\Lambda).
\]
Since 
$\phi|_{\fA(\Lambda)}(x) = \tr(p_\Delta)^{-1}\tr(\Tr_{\Lambda^c}(p_\Delta)x)$
and
$\psi|_{\fA(\Lambda)}(x) = \tr(p_\Gamma)^{-1}\tr(\Tr_{\Lambda^c}(p_\Gamma)x)$,
and both these are equal for all $x\in \fA(\Lambda)$,
we see that 
$\tr(p_\Gamma)^{-1}\Tr_{\Lambda^c}(p_\Gamma) 
=
\tr(p_\Delta)^{-1}\Tr_{\Lambda^c}(p_\Delta)$.
In particular, $\Tr_{\Lambda^c}(p_\Gamma)$ and $\Tr_{\Lambda^c}(p_\Delta)$ have the same kernel, which is (TQO2) of~\cite{MR2742836}.\footnote{Note that we have shown something strictly stronger than (TQO2); in fact, not only do these operators share the same kernel, but they are actually proportional.}
\end{proof}

\subsection{The boundary net}
\label{sec:BoundaryNet}

In this section, we use \ref{LTO:Boundary}, \ref{LTO:Surjective}, and \ref{LTO:Injective} to build a canonical boundary net of algebras $\fB$ on a codimension one $\bbZ^{\ell-1}$ sublattice $\cK$ of $\cL$.
For a choice of half-plane $\bbH$ bounded by $\cK$, setting 
$$
\fA_\bbH
:=
\varinjlim_{\substack{
\Lambda\subset \bbH
\\
\partial \Lambda \cap \cK \neq \emptyset
}} 
\fA(\Lambda),
$$
we construct a unital completely positive (ucp) map\footnote{Such a map is also called a \emph{quantum channel} in the Heisenberg picture.} $\bbE:\fA_\bbH \to \fB$
satisfying
\begin{equation}
\label{eq:DefiningConditionOfE}
p_\Delta x p_\Delta = \bbE(x)p_\Delta
\qquad\qquad\qquad\qquad
\forall\, x\in \fA(\Lambda),
\quad
\forall\, \Lambda\Subset\Delta \text{ with }\partial \Delta \cap \cK\neq \emptyset.
\end{equation}
Thus one way to think of $\fB$ 
is as a generalization of the role played by $\bbC$ in \ref{LTO:Hastings} as a receptacle for the canonical state $\psi$, whose role here is played by $\bbE$.

\begin{construction}[\textbf{The boundary net}.]
\label{const:LTO-BoundaryNet}
Consider a net of projections $(\fA, p)$ satisfying the axioms \ref{LTO:Hastings}--\ref{LTO:Injective}.
For an interval $I\subset \cK$, 
let $\Lambda_I\subset \bbH$ be the smallest sufficiently large rectangle with $\partial \Lambda_I\cap \cK = I$.
Let $\Delta_I$ be the smallest rectangle with $\Lambda_I \Subset_s \Delta_I$ and $\partial \Lambda_I \cap \partial \Delta_I =I$.
As in \eqref{eq:BoundaryAlgebra}, we define
$
\fB(I):=\fB(\Lambda_I \Subset_s \Delta_I).
$

We now show that $I\mapsto \fB(I)$ defines a net of algebras.
By convention, $\Lambda_\emptyset:=\emptyset$, so $\fB(\emptyset)=\bbC$, and \ref{N:empty} holds.

If $I\subset J\subset \cK$, the map $\fB(I) \to \fB(J)$ given by $x\mapsto xp_{\Delta_J}$ is a well-defined injective $*$-algebra homomorphism by Lemma \ref{lem:StabilityOfBoundaryAlgebras}.
Since $p_{\Delta_K}\leq p_{\Delta_J}$ whenever $J\subset K\subset \cK$,
the algebras $\fB(I)$ form an inductive limit system.
Setting $\fB := \varinjlim \fB(I)$, we see \ref{N:inclusion} and \ref{N:dense} hold.

Finally, suppose we have disjoint intervals $I\cap J = \emptyset$ in $\cK$, so that also $\Lambda_I\cap \Lambda_J = \emptyset$.
For $xp_{\Delta_I}\in \fB(I)$ 
with $x\in p_{\Lambda_I}\fA(\Lambda_I)p_{\Lambda_I}$
and $yp_{\Delta_J}\in \fB(J)$
with $y\in p_{\Lambda_J}\fA(\Lambda_J)p_{\Lambda_J}$,
$[x,y]=0$.
Thus for any interval $K\subset \cK$ containing $I\cup J$,
$
xy p_{\Delta_K}
=
yx p_{\Delta_K}
$,
so
\ref{N:local} holds.
\end{construction}

\begin{lem}
\label{lem:EIndependentOfDelta}
Suppose $(\fA,p)$ satisfies \ref{LTO:Boundary}, \ref{LTO:Surjective}, and \ref{LTO:Injective}.
For $\Lambda \Subset_s \Delta$ with $\partial\Lambda \cap \partial \Delta = I$,
and $x\in \fA(\Lambda)$,
the operator $\bbE_\Delta(x)\in \fB(I)$ satisfying $p_\Delta x p_\Delta = \bbE_\Delta(x)p_\Delta$ is independent of the choice of $\Delta$.
We may thus denote $\bbE_\Delta(x)$ simply by $\bbE(x)$.
\end{lem}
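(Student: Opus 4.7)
\textbf{(Proof plan.)} The plan is to mimic the proof of Lemma~\ref{lem:StateIndependentOfDelta}, with the scalar $\psi_\Delta(x) \in \bbC$ upgraded to an operator $\bbE_\Delta(x) \in \fB(I)$ and \ref{LTO:Hastings} replaced by the trio \ref{LTO:Boundary}--\ref{LTO:Injective}.

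First I would establish, for each $\Delta$ with $\Lambda \Subset_s \Delta$ and $\partial \Lambda \cap \partial \Delta = I$, that the equation $p_\Delta x p_\Delta = \bbE_\Delta(x) p_\Delta$ has a unique solution $\bbE_\Delta(x) \in \fB(I)$. By \ref{LTO:Boundary} we have $p_\Delta x p_\Delta \in \fB(\Lambda \Subset_s \Delta)$, and Lemma~\ref{lem:StabilityOfBoundaryAlgebras} combined with \ref{LTO:Surjective} (identifying $\fB(\Lambda_I \Subset_s \Delta) = \fB(\Lambda \Subset_s \Delta)$) gives a canonical $*$-isomorphism $\fB(I) \xrightarrow{\sim} \fB(\Lambda \Subset_s \Delta)$ implemented by right multiplication by $p_\Delta$. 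Take $\bbE_\Delta(x)$ to be the preimage of $p_\Delta x p_\Delta$ under this isomorphism.

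For the independence, given two admissible choices $\Delta_1, \Delta_2$, I would pick a third rectangle $\Delta_3 \supseteq \Delta_1 \cup \Delta_2$ with $\Lambda \Subset_s \Delta_3$ and $\partial \Lambda \cap \partial \Delta_3 = I$. Such a $\Delta_3$ exists because $\Delta_1$ and $\Delta_2$ surround $\Lambda$ on the same side of the boundary interval $I$, so their union can be enlarged away from $\Lambda$ without creating new boundary contact. Using $p_{\Delta_3} \leq p_{\Delta_i}$ together with the fact that $\bbE_{\Delta_i}(x)$ commutes with $p_{\Delta_3}$ (by the defining commutation condition of $\fB$, since $\partial \Lambda \cap \partial \Delta_3 = I = \partial \Lambda \cap \partial \Delta_i$), a short computation
\[
p_{\Delta_3} x p_{\Delta_3} = p_{\Delta_3} (p_{\Delta_i} x p_{\Delta_i}) p_{\Delta_3} = p_{\Delta_3} \bbE_{\Delta_i}(x) p_{\Delta_i} p_{\Delta_3} = \bbE_{\Delta_i}(x) p_{\Delta_3}
\]
shows that under the canonical injection $\fB(I) \hookrightarrow \fB(\Lambda \Subset_s \Delta_3)$ (multiplication by $p_{\Delta_3}$), both $\bbE_{\Delta_1}(x)$ and $\bbE_{\Delta_2}(x)$ map to the single element $p_{\Delta_3} x p_{\Delta_3}$.

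The injectivity of this embedding---exactly the content of \ref{LTO:Injective} as packaged by Lemma~\ref{lem:StabilityOfBoundaryAlgebras}---then forces $\bbE_{\Delta_1}(x) = \bbE_{\Delta_2}(x)$ in $\fB(I)$, completing the argument. The main obstacle I anticipate is the geometric bookkeeping in selecting $\Delta_3$ and verifying that the canonical isomorphisms $\fB(I) \to \fB(\Lambda \Subset_s \Delta_i) \to \fB(\Lambda \Subset_s \Delta_3)$ are mutually compatible; once this is arranged, the computation is a direct operator-valued analog of the scalar argument in Lemma~\ref{lem:StateIndependentOfDelta}.
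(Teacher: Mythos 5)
Your proposal is correct and follows essentially the same route as the paper: enlarge to a common rectangle $\Delta_3\supseteq\Delta_1\cup\Delta_2$ with $\partial\Lambda\cap\partial\Delta_3=I$, run the identical computation $p_{\Delta_3}xp_{\Delta_3}=\bbE_{\Delta_i}(x)p_{\Delta_3}$ using $p_{\Delta_3}\leq p_{\Delta_i}$ and the commutation built into $\fB(I)$, and conclude via the injectivity supplied by \ref{LTO:Injective}. The only difference is that you spell out the existence of $\bbE_\Delta(x)$ via \ref{LTO:Boundary} and Lemma~\ref{lem:StabilityOfBoundaryAlgebras}, which the paper leaves implicit.
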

\begin{proof}
Suppose $\Lambda\Subset_s \Delta_i$ with $I=\partial \Lambda \cap \partial \Delta_i \neq \emptyset$ for $i=1,2$,
and let $\bbE_i(x)\in \fB(I)$ such that $p_{\Delta_i}xp_{\Delta_i}=\bbE_i(x)p_{\Delta_i}$.
Picking $\Delta_3$ containing $\Delta_1\cup\Delta_2$ such that $I=\partial \Lambda\cap \partial \Delta_3$, since $p_{\Delta_3}\leq p_{\Delta_i}$ for both $i=1,2$, we have
\begin{equation*}
p_{\Delta_3} x p_{\Delta_3}
=
p_{\Delta_3}p_{\Delta_i} x p_{\Delta_i}p_{\Delta_3}
=
p_{\Delta_3}\bbE_i(x)p_{\Delta_i}p_{\Delta_3}
=
\bbE_i(x)p_{\Delta_3}
\qquad\qquad
\text{for }
i=1,2.
\end{equation*}
Hence 
$(\bbE_1(x)-\bbE_2(x))p_{\Delta_3}=0$, and so 
$\bbE_1(x)=\bbE_2(x)$.
Hence $\bbE_\Delta$ is independent of $\Delta$ as claimed.
\end{proof}

\begin{defn}
Identifying each boundary algebra $\fB(I)$ with its image in $\fB = \varinjlim \fB(I)$,
by Lemma \ref{lem:EIndependentOfDelta},
we get a well-defined map
$$
\bbE:\bigcup_{\substack{
\Lambda\subset \bbH
\\
\partial \Lambda \cap \cK \neq \emptyset
}}
\fA(\Lambda)\to \fB.
$$ 
satisfying the formula \eqref{eq:DefiningConditionOfE} above.
Observe that $\bbE$ is manifestly ucp, as it is defined by compressing by a projection.
Thus $\bbE$ uniquely extends to a ucp map $\bbE:\fA_\bbH\to \fB$ by Corollary \ref{cor:ExtendAF-UCP}.
\end{defn}

\begin{rem}
Although $\fB$ is not a unital subalgebra of $\fA$, so $\bbE$ is not technically a conditional expectation, we do have the property that 
$\bbE(x)=x$ for all $x\in \fB(I)$, which we show below.
This means $\bbE$ is like a conditional expectation, but onto a subalgebra with a different unit than the ambient algebra.
Indeed, since $[x,p_\Delta]=0$ whenever $\Lambda \Subset_s \Delta$ with $\partial \Lambda \cap \partial \Delta = I$,
we have 
$$
\bbE(x)p_\Delta
=
p_\Delta xp_\Delta 
= 
xp_\Delta.
$$
Thus $(\bbE(x)-x)p_\Delta = 0$, which implies $\bbE(x)=x$.
\end{rem}

\begin{rem}
Observe that if $\Lambda \subset \bbH$ is far enough from $\cK$, i.e., there is a $\Delta \subset \bbH$ with $\Lambda \ll_s \Delta$, then $\bbE|_{\fA(\Lambda)} = \psi$.
Indeed, pick $\Lambda'\Subset_s \Delta'$ with $\partial \Lambda' \cap \partial \Delta'\subset \cK$ and $\Lambda \subset \Lambda'$ and $\Delta\subset \Delta'$.
Then for $x\in \fA(\Lambda)$,
$$
\bbE(x)p_{\Delta'}
=
p_{\Delta'}xp_{\Delta'}
=
p_{\Delta'}p_\Delta xp_\Delta p_{\Delta'}
=
\psi(x)p_{\Delta'}.
$$
By \ref{LTO:Injective}, $\bbE(x)=\psi(x)$ as claimed.

This means that \emph{every} state on the boundary algebra $\fB$ canonically extends to a state on $\fA_\bbH$ which looks like the canonical state $\psi$ in the bulk.
Indeed, for an arbitrary state $\phi_\fB$ on $\fB$, we define $\phi: \fA_\bbH \to \bbC$ by $\phi:= \phi_\fB \circ \bbE$.
Thus the boundary algebra $\fB$ gives us a \emph{state-based approach} to boundary conditions.
We will study such states in more detail in \S\ref{sec:BoundaryStates}.
\end{rem}

\begin{ex}
\label{ex:BoundaryState}
Observe that $\psi$ gives a canonical translation-invariant state $\psi_\fB$  on the boundary net of algebras $\fB$
by $\psi_\fB(xp_{\Delta_I}):= \psi(xp_{\Delta_I}) = \psi(x)$ by Lemma \ref{lem:AFHCauchySchwarz}.
\end{ex}

At this time, we do  not know if \ref{LTO:Boundary}--\ref{LTO:Injective} imply $\psi_\fB$ is faithful on $\fB$.
In our examples, $\psi_\fB$ is a KMS state and $\fB$ is simple which implies $\psi_\fB$ is faithful by \cite[Cor.~5.3.9]{MR1441540}.
See \S\ref{sec:ToricCodeBoundaryStates}--\S\ref{sec:KMS} for more details.

\section{Example: Kitaev's Toric Code}
\label{sec:toric}
As a first example, we consider Kitaev's Toric Code~\cite{MR1951039}.
To follow conventions used in most of the literature on the Toric Code, here we will not exactly follow the definitions of~\S\ref{sec:LTO}; the spins/degrees of freedom now live on the edges of a $\bbZ^2$ lattice instead of the vertices, and we will still use rectangles along these edges for regions.
In Remark \ref{rem:DiagonalRegions} below, we make the connection to the exact setup of~\S\ref{sec:LTO}.

The Toric Code is defined on a square $\bbZ^2$ lattice with a copy of $\bbC^2$ placed at each edge of the lattice.
If $\Lambda$ is a finite subset of edges, we have $\fA(\Lambda) \cong \bigotimes_{\ell \in \Lambda} M_2(\mathbb{C})$.
A \emph{star} $s$ is the set of four edges incident to some vertex $v$, and a \emph{plaquette} $p$ consists of the four edges around a face/plaquette.
We define star operators $A_s$ and plaquette operators $B_p$ as
\[
    A_s := \bigotimes_{j \in s} \sigma_j^X, \quad\quad B_p := \bigotimes_{j \in p} \sigma_j^Z,
\]
where $\sigma_j^X$ and $\sigma_j^Z$ denote the Pauli matrices $\sigma^X$ and $\sigma^Z$ acting on the site $j$.
The local Hamiltonians are then defined as
\begin{equation}
    \label{eq:tcHamiltonian}
    H_\Lambda := \sum_{s \subset \Lambda} (1-A_s) + \sum_{p \subset \Lambda} (1-B_p).
\end{equation}
This model has been studied extensively in the literature.
Here we just mention that in the thermodynamic limit, the model has a unique frustration-free ground state (which is also translation invariant), as well as non-frustration-free (and non-invariant) ground states associated to the superselection sectors of the model~\cite{MR2345476,MR3764565}.

The Toric Code satisfies the local TQO conditions~\cite{Cui2020kitaevsquantum}, and we will exploit this to define a net of projections satisfying \ref{LTO:Hastings}--\ref{LTO:Injective}.
If $\Lambda \subset \cL$ is a rectangle, define the projections
\begin{equation}
    \label{eq:tcproj}
    p_\Lambda := \prod_{s \subset \Lambda} \left(\frac{1+A_s}{2}\right) \prod_{p \subset \Lambda} \left(\frac{1+B_p}{2}\right),
\end{equation}
which locally project onto the ground state of our Hamiltonian \eqref{eq:tcHamiltonian}.

We briefly discuss the intuition behind the algebras of the form $p_\Delta \fA(\Lambda) p_\Delta$ with $\Lambda \Subset_2 \Delta$, which goes back to methods employed in~\cite{MR2345476}.
Recall that (pairs of) excitations are created by path operators.
To a path $\xi$ on the lattice, we can associate an operator $F_\xi$ that acts with a $\sigma^Z$ on each of the edges in the path.
Similarly, to a path $\xi$ on the dual lattice, we can associate an operator $\widehat{F}_\xi$ acting with $\sigma^X$ on all edges that the dual path crosses.
It is easy to check that if the path is not closed, $F_\xi$ anti-commutes with $A_s$ at the start and endpoints of the path, and commutes with all other $A_s$.
The same is true for $\widehat{F}_\xi$ and the plaquette operators.
Moreover, path operators associated to closed paths (closed paths on the dual lattice) are a product of plaquette (star) operators.

The local algebras are linear spans of products of path operators on the lattice and dual lattice.
If $x\in \fA(\Lambda)$ is such a product, it either commutes or anti-commutes with any given star or plaquette operator.
Suppose it anti-commutes with $A_s$ for some $s \subset \Delta$ where $\Lambda \Subset_2 \Delta$.
Then we have
\[
    (1+A_s) x (1+A_s) = (1+A_s)(1-A_s) x = 0,
\]
and hence $p_\Delta x p_\Delta = 0$.
Note that this has a clear physical interpretation: $x$ creates an excitation at the star $s$, so it takes us out of the ground state space.
If there is no star or plaquette operator in $\fA(\Delta)$ that anti-commutes with $x$ and $\partial\Lambda \cap \partial \Delta = \emptyset$,
it follows that $x$ is a product of star and plaquette operators supported on $\Lambda$~\cite{MR2345476} (see also Algorithm \ref{alg:PauliReduction} below).
Thus $p_\Delta x p_\Delta = p_\Delta$, i.e., the unit of the compressed algebra $p_\Delta \fA(\Lambda)p_\Delta$.
Observe that this argument above did not depend on $\Delta$ beyond relying on the condition that $\Lambda \Subset_2 \Delta$ with $\partial \Lambda \cap \partial \Delta = \emptyset$.

However, this argument breaks down if $\partial\Lambda \cap \partial\Delta=:I\neq \emptyset$,
as in this case, the plaquette operators near the boundary $I$ no longer appear in $p_\Delta$.
In this case, $p_\Delta \fA(\Lambda) p_\Delta$ can be identified with the operators that create excitations at sites of $I$, but leave the bulk untouched.
We provide a proof of this in Algorithm \ref{alg:PauliReduction} below.

There are two cases for the 1D boundary $\cK$: rough (shown on the left below) and smooth (in the right picture) depending on the choice of $\bbZ$ hyperplane in $\bbZ^2$.
\begin{equation}
\label{eq:TC-Boundaries}
\tikzmath{
\draw[step=.75] (.05,.05) grid (3.7,3.7);
\foreach \y in {.75,1.5,2.25,3}{
    \draw[thick, red] (2.25,\y) -- (3,\y);
}
\node[red] at (2.675,0) {$\cK$};
}
\qquad\qquad
\tikzmath{
\draw[step=.75] (.05,.05) grid (3.7,3.7);
\draw[thick, red] (2.25,.05) -- (2.25,3.7);
\node[red] at (2.25,-.15) {$\cK$};
}
\end{equation}
While these two boundaries appear to break the translation symmetry assumed for our net of algebras $\fA$, there is an additional $\frac{1}{2}$-translation dualizing symmetry which shifts all edges $45^\circ$ to the northeast and swaps horizontal and vertical edges.
This $\frac{1}{2}$-translation dualizing symmetry maps between these two 1D boundaries.
$$
\tikzmath{
\draw[step=.75] (.8,.8) grid (3.7,3.7);
\draw[thick, red] (2.25,.8) -- (2.25,3.7);
}
=
\tikzmath{
\draw[step = .75] (.8,.8) grid (3.7,3.7);
\foreach \x in {1.125,1.875,2.625,3.375}{
\foreach \y in {1.5,2.25,3}{
\filldraw (\x,\y) circle (.05cm);
\filldraw (\y,\x) circle (.05cm);
}}
\draw[thick, red] (2.25,.8) -- (2.25,3.7);
\foreach \y in {1.125,1.875,2.625,3.375}{
\filldraw[red] (2.25,\y) circle (.05cm);
\draw[thick, blue, ->] (2.25,\y) -- ($ (2.25,\y) + (.375,.375) $);
}
}
\begin{tikzcd}
\mbox{}
\arrow[r,squiggly]
&
\mbox{}
\end{tikzcd}
\tikzmath{
\draw[step = .75] (.8,.8) grid (3.7,3.7);
\foreach \x in {1.125,1.875,2.625,3.375}{
\foreach \y in {1.5,2.25,3}{
\filldraw (\x,\y) circle (.05cm);
\filldraw (\y,\x) circle (.05cm);
}}
\foreach \y in {1.5,2.25,3}{
\draw[thick, red] (2.25,\y) -- (3,\y);
\filldraw[red] (2.625,\y) circle (.05cm);
}
}
=
\tikzmath{
\draw[step=.75] (.8,.8) grid (3.7,3.7);
\foreach \y in {1.5,2.25,3}{
    \draw[thick, red] (2.25,\y) -- (3,\y);
}
}
$$

In either case, we can fully describe the string operators which create excitations at the boundary.
These boundary operators are supported not only on $\cK$, but at sites in the bulk closest to $\cK$ as well.
For a rough boundary interval $I$ of $\Lambda$, we write $\widetilde{I}$ for $I$ union the next row or column of $\Lambda$ adjacent to $I$.
Similarly, for a smooth boundary interval $J$ of $\Lambda$, we write $\widetilde{J}$ for $J$ union the next row or column of $\Lambda$ adjacent to $J$.
In the diagrams below depicting the regions $\widetilde{I},\widetilde{J}$, we assume that the rectangle $\Lambda$ meeting $\cK$ at the boundary has interior on the left of $\cK$.
We also define corresponding $\rm{C}^*$-algebras $\fC(I)$ and $\fD(J)$ as follows:
\begin{equation}
\label{eq:TC-BoundaryExcitations}
\tikzmath{
\draw (1.5,.05) -- (1.5,3.7);
\foreach \y in {.75,1.5,2.25,3}{
    \draw (2.25,\y) -- (1.5,\y);
}
\draw[thick, red] (2.25,1.5) -- (1.5,1.5) -- (1.5,2.25) -- (2.25,2.25);
\node[red] at (1.3,1.875) {$\scriptstyle \sigma^Z$};
\node[red] at (1.875,1.3) {$\scriptstyle \sigma^Z$};
\node[red] at (1.875,2.45) {$\scriptstyle \sigma^Z$};
\node[red] at (2.5,1.875) {$D_p$};
\node[red] at (1.875,1.875) {$\scriptstyle p$};
\draw[thick, orange] (1.5,3) -- (2.25,3);
\node[orange] at (1.875,3.2) {$\scriptstyle \sigma^X$};
\node[orange] at (2.5,3) {$C_\ell$};
\node at (2,-.3) {$\fC(I):=
{\rm C^*}\set{C_\ell,D_p}{\ell\subset I,\,p \subset \widetilde{I}}$};
\draw[thick, blue, rounded corners=5pt] (1,3.7) -- (2.25,3.7) -- (2.25,.05) -- (1,.05);
\node[blue] at (1.25,3.5) {$\Lambda$}; 
}
\qquad\qquad
\tikzmath{
\draw (2.25,.05) -- (2.25,3.7);
\foreach \y in {.75,1.5,2.25,3}{
    \draw (2.25,\y) -- (1.5,\y);
}
\draw[thick, red] (2.25,.75) -- (2.25,1.5);
\node[red] at (2.05,1.175) {$\scriptstyle \sigma^Z$};
\node[red] at (2.6,1.175) {$D_\ell$};
\draw[thick, orange] (1.5,3) -- (2.25,3);
\draw[thick, orange] (2.25,2.25) -- (2.25,3.7);
\node[orange] at (1.5,3.1) {$\scriptstyle \sigma^X$};
\node[orange] at (2,2.675) {$\scriptstyle \sigma^X$};
\node[orange] at (2,3.375) {$\scriptstyle \sigma^X$};
\node[orange] at (2.6,3) {$C_s$};
\node at (2,-.3) {$\fD(J):=
{\rm C^*}\set{C_s,D_\ell}{\ell\subset J,\,s\subset \widetilde{J}}$};
\draw[thick, blue, rounded corners=5pt] (1.25,3.7) -- (2.35,3.7) -- (2.35,.05) -- (1.25,.05);
\node[blue] at (1.5,3.5) {$\Lambda$}; 
}.
\end{equation}

Here, the operators $C_\ell,D_p\in \fC(I)$ are the portions of the corresponding star and plaquette terms which are included in $\Lambda$, i.e., $C_\ell=\sigma^X_\ell$ for the edge $\ell$, and $D_p=\bigotimes_{j\in p} \sigma^Z_j$ where $p$ is a truncated plaquette.
The operators $C_s, D_\ell\in \fD(J)$ are defined similarly.

While they have different abstract descriptions, $\fC(I)$ and $\fD(J)$ are isomorphic when the intervals $I$ and $J$ contain the same number of sites (even if $\widetilde{I}$ and $\widetilde{J}$ contain a different number of sites).
We omit the proof of the following lemma, which is straightforward.

\begin{lem}
\label{lem:AbstractBoundaryAlgebra}
When a rough boundary interval $I$ has $n+1$ horizontal sites, $\fC(I)$ has the following abstract presentation as a $*$-algebra:
\begin{itemize}
    \item generators: $x_1,\dots, x_{n+1}, y_1,\dots, y_{n}$ 
    \item relations:
    \begin{enumerate}
        \item The $x_i, y_j$ are self-adjoint unitaries: $x_i=x_i^*$, $x_i^2=1$, $y_j=y_j^*$, and $y_j^2=1$,
        \item $[x_i, x_j]=0$,
        \item $[y_i, y_j]=0$,
        \item $\{x_{i\pm 1}, y_i\}=x_{i\pm 1} y_i + y_i x_{i\pm 1}=0$, and
        \item $[x_i, y_j]=0$ whenever $|i-j|\geq 2$.
    \end{enumerate}
\end{itemize}
The same presentation also holds for $\fD(J)$ when $J$ is a smooth boundary interval with $n+1$ vertical sites.

In either case, a canonical basis for this $*$-algebra is given by the monomials $x_1^{a_1}\cdots x_{n+1}^{a_{n+1}} y_1^{b_1}\cdots y_n^{b_n}$
with $a_i,b_j\in \{0,1\}$.
Thus this $*$-algebra has dimension $2^{2n+1}$ and is isomorphic to $M_{2^n}(\bbC)\oplus M_{2^n}(\bbC)$.
\end{lem}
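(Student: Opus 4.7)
The plan is to identify the abstract generators $x_i, y_j$ with the concrete boundary operators in $\fC(I)$ by direct Pauli computation, then reduce the statement to a dimension count combined with an analysis of the center.

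First I would choose labellings $\ell_1,\dots,\ell_{n+1}$ of the boundary edges and $p_1,\dots,p_n$ of the truncated plaquettes of $\widetilde{I}$ compatible with relations (4)--(5), and set $x_i := C_{\ell_i} = \sigma^X_{\ell_i}$ and $y_j := D_{p_j}$. Each relation is a one-line Pauli check: (1) is immediate since $\sigma^X$ and tensor products of $\sigma^Z$ are self-adjoint and square to $1$; (2) holds because distinct $x_i$ act on distinct edges; (3) holds because two truncated plaquettes share at most one edge, on which both restrict to $\sigma^Z$; and (4)--(5) are governed by the geometric incidence between the single edge of $x_i$ and the support of $y_j$, giving $\sigma^X \sigma^Z = -\sigma^Z \sigma^X$ on a shared edge and commutation otherwise. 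The key geometric fact for later is that each $y_j$ anti-commutes with exactly two of the $x_i$'s. This verification yields a surjective $*$-algebra homomorphism from the abstract algebra $A$ presented by (1)--(5) onto $\fC(I)$; the identical argument handles $\fD(J)$ after the obvious relabelling induced by the smooth/rough duality, so the remaining work concerns the abstract algebra.

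Next I would produce the normal form and count dimensions. Relations (4) and (5) let us move every $x$ to the left of every $y$ at the cost of signs, (2) and (3) let us sort within each block, and (1) reduces exponents modulo $2$; hence $A$ is spanned by the $2^{2n+1}$ monomials $x^a y^b$ with $a \in \{0,1\}^{n+1}$, $b \in \{0,1\}^n$. It remains to show these remain linearly independent in $\fC(I)$, and for this I would use the trace $\tr$ on $\fA(\widetilde{I})$: each monomial is, up to an overall sign, a single tensor product of Pauli matrices on the edges of $\widetilde{I}$, and the main bookkeeping point is that two distinct pairs $(a,b)$ yield distinct Pauli strings, so the monomials are orthogonal under $\langle u,v\rangle := \tr(u^* v)$. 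This gives $\dim \fC(I) = 2^{2n+1}$ and identifies $A \cong \fC(I)$.

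Finally, I would determine the Wedderburn decomposition. The element $z := x_1 x_2 \cdots x_{n+1}$ is self-adjoint with $z^2 = 1$, and it is central because each $y_j$ anti-commutes with exactly two of the $x_i$'s, producing a sign $(-1)^2 = 1$. Conversely, expanding a candidate central element in the monomial basis and imposing commutation with each $y_j$ in turn forces the $x$-multi-index $a$ to be constant; imposing commutation with each $x_i$ in turn then forces $b = 0$; so the center is exactly $\bbC 1 \oplus \bbC z$. Since $\fC(I)$ is semisimple, it splits as $M_{k_1}(\bbC) \oplus M_{k_2}(\bbC)$ with $k_1^2 + k_2^2 = 2^{2n+1}$. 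The $*$-automorphism of $A$ sending $x_1 \mapsto -x_1$ and fixing all other generators visibly preserves the relations and sends $z \mapsto -z$, hence swaps the two central idempotents $(1 \pm z)/2$ and so the two summands; this forces $k_1 = k_2 = 2^n$. The main obstacle is the edge-level Pauli bookkeeping in the linear-independence step and the parallel book-keeping in the center calculation.
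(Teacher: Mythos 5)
Your argument is correct, and it is worth noting that the paper itself omits the proof of Lemma \ref{lem:AbstractBoundaryAlgebra} entirely (declaring it ``straightforward''), so there is no official proof to compare against line by line. Your route --- map the presented $*$-algebra onto the concrete Pauli operators, establish the monomial spanning set from the (anti)commutation relations, get linear independence from trace-orthogonality of distinct Pauli strings (with $b$ read off the bulk $\sigma^Z$ edges and $a$ off the $\sigma^X$ content of the rough edges), and then pin down the Wedderburn type by computing the center to be $\bbC 1\oplus\bbC z$ with $z=x_1\cdots x_{n+1}$ and using the sign-flip automorphism $x_1\mapsto -x_1$ to force the two blocks to have equal size --- is complete and self-contained. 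The paper's implicit route is different: Remark \ref{rem:C(I)ParityPreservation} and Construction \ref{construction:ToricCodeInjective} identify $\fC(I)$ with the parity-preserving subalgebra of $M_{2^{n+1}}(\bbC)$ (equivalently with $\End_\cC(X^{n+1})$ for $\cC=\fdHilb(\bbZ/2)$, where $X^{n+1}\cong 2^n\cdot 1_\cC\oplus 2^n\cdot g$), which yields the $M_{2^n}(\bbC)\oplus M_{2^n}(\bbC)$ decomposition directly; your center-plus-symmetry argument replaces that sector analysis and has the advantage of working purely from the abstract presentation, which is exactly what the lemma asserts. One small caution: relations (4)--(5) as literally printed leave the pair $(x_i,y_i)$ unconstrained and would make $y_1$ anticommute with only one generator; your proof implicitly uses the geometrically correct incidence (each $y_j$ anticommutes with precisely the two rough edges it touches, e.g.\ $x_j$ and $x_{j+1}$ in the $\fC'(I)$ labelling), which is the intended reading and is the one for which the dimension count $2^{2n+1}$ and the center computation go through; it would be worth stating your indexing convention explicitly so the endpoint cases ($y_1$, $y_n$) in the center argument are unambiguous.
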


\begin{cor}
\label{cor:Rough==Smooth}
There is an isomorphism of the nets of algebras $\fC$ and $\fD$.
\end{cor}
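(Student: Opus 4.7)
The plan is to bootstrap the corollary directly from the abstract presentation in Lemma~\ref{lem:AbstractBoundaryAlgebra}. The key point is that the same list of generators and relations describes both $\fC(I)$ (for a rough interval $I$ with $n+1$ horizontal sites) and $\fD(J)$ (for a smooth interval $J$ with $n+1$ vertical sites), so at the level of a single interval we get $\fC(I) \cong \fD(J)$ by sending generators to generators. The only real work is to choose a correspondence $I \leftrightarrow J$ and to align these generator-by-generator isomorphisms so that they commute with the inclusion maps in the two nets.

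First, I would fix the correspondence $I \leftrightarrow J$ between intervals in the two boundary lattices via the $\tfrac{1}{2}$-translation dualizing symmetry already invoked in the text around~\eqref{eq:TC-Boundaries}: this bijection sends a rough interval $I$ with $n+1$ horizontal edges to a smooth interval $J$ with $n+1$ vertical edges, and it is manifestly order-preserving on intervals. Under this correspondence, the labelings of generators in Lemma~\ref{lem:AbstractBoundaryAlgebra} match in a canonical way: the $x_i$'s in $\fC(I)$ (corresponding to the $C_\ell = \sigma^X_\ell$ on edges of $I$) go to the $x_i$'s in $\fD(J)$ (the $C_s$ attached to stars along $\widetilde J$), and the $y_j$'s in $\fC(I)$ (the truncated plaquette operators $D_p$ on $\widetilde I$) go to the $y_j$'s in $\fD(J)$ (the $D_\ell = \sigma^Z_\ell$ on edges of $J$).

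Second, for each matched pair $I \leftrightarrow J$, the universal property of the presentation from Lemma~\ref{lem:AbstractBoundaryAlgebra} gives a unique unital $*$-algebra homomorphism $\varphi_{I,J} : \fC(I) \to \fD(J)$ with $x_i \mapsto x_i$ and $y_j \mapsto y_j$. Since the relations are symmetric in the two sides, there is also a homomorphism in the opposite direction defined by the same rule on generators, and the composition fixes each generator and is therefore the identity. Hence $\varphi_{I,J}$ is an isomorphism. (In any case, both algebras are $2^{2n+1}$-dimensional and isomorphic to $M_{2^n}(\bbC)\oplus M_{2^n}(\bbC)$, so surjectivity already follows from injectivity on generators and dimension counting.)

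Third, I would check compatibility with the net structure. If $I_1 \subset I_2$ is an inclusion of rough intervals, then the corresponding inclusion $\fC(I_1) \hookrightarrow \fC(I_2)$ sends each generator of $\fC(I_1)$ to a generator of $\fC(I_2)$ with its index shifted by the same offset that positions $I_1$ inside $I_2$. The dualizing symmetry transports this into exactly the analogous shift of generators under $\fD(J_1) \hookrightarrow \fD(J_2)$. Thus the square
\[
\begin{tikzcd}
\fC(I_1) \arrow[r, hook] \arrow[d, "\varphi_{I_1,J_1}"'] & \fC(I_2) \arrow[d, "\varphi_{I_2,J_2}"] \\
\fD(J_1) \arrow[r, hook] & \fD(J_2)
\end{tikzcd}
\]
commutes on generators, and hence on all of $\fC(I_1)$. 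Passing to the inductive limit gives an isomorphism of the quasi-local algebras that restricts to $\varphi_{I,J}$ on each local algebra, i.e., an isomorphism of nets. The only mildly subtle point is the bookkeeping of which generators at the endpoints of an interval get suppressed under an inclusion (for instance, an outermost $x_i$ on the rough side has no corresponding star operator present until a further site is added on the smooth side), but the dualizing symmetry was set up precisely so that these endpoint conventions match on both sides, so no obstruction arises.
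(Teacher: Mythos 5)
Your proposal follows essentially the same route as the paper: the corollary is intended as an immediate consequence of Lemma~\ref{lem:AbstractBoundaryAlgebra} (both local algebras share one abstract presentation, generators match generators, and the inclusions append generators in the same alternating pattern on both sides), and the paper's later Construction~\ref{construction:ToricCodeInjective} realizes exactly this identification via $\fF\cong\fC'\cong\fD'$. One small correction to your labeling: in $\fD(J)$ the $n+1$ generators $x_i$ must be the single-edge operators $D_\ell=\sigma^Z_\ell$ and the $n$ generators $y_j$ the truncated stars $C_s$ (the counting $n+1$ versus $n$ forces this), so your parenthetical identifications on the $\fD$ side are swapped; the abstract map $x_i\mapsto x_i$, $y_j\mapsto y_j$ and the inclusion-compatibility argument are unaffected.
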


\begin{rem}
\label{rem:C(I)ParityPreservation}
There is a nice description of the $*$-algebra $\fC(I)$ in terms of the operators from the transverse-field Ising model.
Identify $I$ with $n+1=\#I$ contiguous sites on a 1D lattice, where each site hosts $\bbC^2$-spins.
The $*$-algebra $\fC(I)$ is isomorphic to the algebra $\fC'(I)$ generated by the operators $\sigma^X_i$ acting at site $i\in \{1,\dots, n{+}1\}$ and $\sigma^Z_j \sigma^Z_{j+1}$ acting at sites $j$ and $j+1$ for $j\in \{1,\dots, n\}$.
Indeed, this just corresponds to forgetting the third $\sigma^Z$ operator for each $D_p$ which lives on $\widetilde{I}\setminus I$, which plays no role in the abstract characterization of $\fC(I)$.

Now consider the $|\pm \rangle = \frac{1}{\sqrt{2}}(|0\rangle \pm |1\rangle)$ ONB which diagonalizes $\sigma^X_i$ (so $\bbC^2 = \bbC|+\rangle \oplus \bbC|-\rangle$).
Working in this computational basis for $\bigotimes^{n+1} \bbC^2 \cong \bbC^{2^{n+1}}$, 
we see that every operator in $\fC'(I)$ preserves the subspaces with even numbers of $|-\rangle$ and odd numbers of $|-\rangle$, which exactly corresponds to the direct sum decomposition
$\fC(I) = M_{2^n}(\bbC) \oplus M_{2^n}(\bbC)\subset M_{2^{n+1}}(\bbC)$.

Similarly, we have an isomorphism $\fD(J)\cong \fD'(J)$ where the latter algebra is generated by operators $\sigma^Z_j$ at each site $j\in\{1,\dots, n{+}1\}$ and $\sigma^X_j\sigma^X_{j+1}$ at each site $j\in\{1,\dots, n\}$ when $\#J=n+1$.
One now works in the computational ONB $\{|0\rangle, |1\rangle\}$ for $\bbC^2$, observing these operators preserve parity as before.
\end{rem}

\begin{construction}
\label{construction:ToricCodeInjective}
We now construct an isomorphism of nets of algebras from 
the fusion categorical net $\fF$ for $\cC=\fdHilb(\bbZ/2)$ from 
Example~\ref{ex:TensorCat} with $X = 1 \oplus g$, where $1, g \in \mathbb{Z}_2$,
to either $\fC'$ or $\fD'$ from Remark \ref{rem:C(I)ParityPreservation} above.
This isomorphism is essentially a planar algebra embedding from the $\cC$ planar algebra with generator $X$ to the $\bbC^2$-spin model planar algebra from \cite[Ex.~2.8]{MR4374438}.

The `box space' $\fF(J)=\End_\cC(X^{\#J})$ where $\#J=n$, is spanned by Temperley-Lieb string diagrams with $n$ top boundary points and $n$ bottom boundary points
with three types of strands, $X,1_\cC,g$ subject to the following relations (in addition to $1_\cC$ being the empty strand):
$$
\tikzmath{
\draw (0,-.5) --node[right]{$\scriptstyle X$} (0,.5);
}
=\,\,
\tikzmath{
\draw[thick, dotted] (0,-.5) --node[right]{$\scriptstyle 1_\cC$} (0,.5);
}
\oplus
\tikzmath{
\draw[thick, red] (0,-.5) --node[right]{$\scriptstyle g$} (0,.5);
}
\qquad\qquad\qquad
\tikzmath{
\draw[thick, red] (-.2,-.5) -- (-.2,.5);
\draw[thick, red] (.2,-.5) -- (.2,.5);
}
\,\,=\,\,
\tikzmath{
\draw[thick, red] (-.2,-.5) arc (180:0:.2cm);
\draw[thick, red] (-.2,.5) arc (-180:0:.2cm);
}
\qquad\qquad\qquad
\tikzmath{
\draw[thick, red] (0,0) circle (.3cm);
}
=
1.
$$
The first diagram denotes an orthogonal direct sum, which suppresses distinguished isometries $\iota_1:1_\cC \to X$ and $\iota_g: g\to X$ satisfying $\iota_1\iota_1^\dag + \iota_g\iota_g^\dag = \id_X$.
The rotations of $\iota_1$ and $\iota_g$ are their adjoints.
We may always expand every diagram with $X$ strands, so we see that $\fF(J)$ is spanned by diagrams with only $1$ and $g$ strands; we only work with these string diagrams.
Multiplication is stacking of boxes, where we get zero if the string types of $1_\cC$ and $g$ do not match.

Observe that $X^{n+1} \cong 2^n \cdot 1_\cC \oplus 2^n\cdot g$, so $\End_\cC(X^{n+1}) \cong M_{2^n}(\bbC)\oplus M_{2^n}(\bbC)$.
Let $p_1,p_g\in \End_\cC(X)$ be the orthogonal projections onto the first and second copy of $M_1(\bbC) \cong \bbC$ for $n=0$.
The operator $u:=p_1 - p_g\in \End_\cC(X)$ is a self-adjoint unitary generating $\End_\cC(X)$.
Now consider the morphism
$$
\tikzmath{
\draw (-.2,-.7) -- (-.2,.7);
\draw (.2,-.7) -- (.2,.7);
\roundNbox{fill=white}{(0,0)}{.3}{.2}{.2}{$v$}
}
:=
\tikzmath{
\draw[thick, dotted] (-.3,-.5) arc (180:0:.3cm);
\draw[thick, red] (-.3,.5) arc (-180:0:.3cm);
}
+
\tikzmath{
\draw[thick, dotted] (-.3,.5) arc (-180:0:.3cm);
\draw[thick, red] (-.3,-.5) arc (180:0:.3cm);
}
+
\tikzmath{
\draw[thick, dotted] (-.3,-.5) -- (.3,.5);
\draw[thick, red] (.3,-.5) -- (-.3,.5);
}
+
\tikzmath{
\draw[thick, dotted] (.3,-.5) -- (-.3,.5);
\draw[thick, red] (-.3,-.5) -- (.3,.5);
}
\in \End_\cC(X^2).
$$
Here, the crossings mean mapping between the two orthogonal copies of $g$ in $X^2\cong 2\cdot 1_\cC \oplus 2\cdot g$.
It is easily verified that $v$ is a self-adjoint unitary which anticommutes with $u\otimes \id_X$ and $\id_X\otimes u$.
For the algebra $\fF(J)=\End_\cC(X^{\#J})$ where $\#J=n+1$, we write $u_i$ for the copy of $u$ on the $i$-th strand, and we write $v_j$ for the copy of $v$ on the $j$-th and $(j+1)$-th strands, where $i = 1, \dots, n+1$ and $j = 1, \dots, n$.
Observe that the $u_i, v_j$ give $2n+1$ self-adjoint unitaries satisfying the relations of Lemma \ref{lem:AbstractBoundaryAlgebra}.
This gives an abstract isomorphism $\fF\cong \fC'\cong \fD'$, where $\fC'$ and $\fD'$ are the concrete realizations in terms of Pauli matrices of the abstract presentations $\fC$ and $\fD$, as defined above.
We now give a concrete isomorphism.

The $\bbC^2$-spin model planar algebra is a diagrammatic representation of the 1D spin chain with $\bbC^2$-spins at each site of $\bbZ$, with a local $M_2(\bbC)$ acting at each site.
We represent a distinguished ONB of $\bbC^2$ by an unshaded and a red node 
$\{
\tikzmath{\filldraw[thick,fill=white]circle(.1cm);}
,
\tikzmath{\fill[red]circle(.1cm);}
\}$.
(For the isomorphism $\fF\to \fC'$, 
$\tikzmath{\filldraw[thick,fill=white]circle(.1cm);}=|+\rangle$
and
$\tikzmath{\fill[red]circle(.1cm);}=|-\rangle$,
while for the isomorphism $\fF\to \fD'$,
$\tikzmath{\filldraw[thick,fill=white]circle(.1cm);}=|0\rangle$
and
$\tikzmath{\fill[red]circle(.1cm);}=|1\rangle$
in the notation of Remark~\ref{rem:C(I)ParityPreservation}).
Product tensors in $\bigotimes^n \bbC^2$ are represented by drawing $n$ nodes on a line, for example:
$$
|
\tikzmath{\filldraw[thick,fill=white]circle(.1cm);}\,
\tikzmath{\filldraw[thick,fill=white]circle(.1cm);}\,
\tikzmath{\fill[red]circle(.1cm);}\,
\tikzmath{\fill[red]circle(.1cm);}
\rangle
=
\tikzmath{
\draw (-.5,0) -- (2,0);
\filldraw[thick,fill=white](0,0)circle(.1cm);
\filldraw[thick,fill=white](.5,0)circle(.1cm);
\fill[red](1,0)circle(.1cm);
\fill[red](1.5,0)circle(.1cm);
}
$$
 Matrix units for this computational basis $|\eta\rangle \langle \xi|$ are represented by rectangles where $|\eta\rangle$ is represented by nodes on the top of the rectangle and $|\xi\rangle$ is represented by nodes on the bottom of the rectangle, e.g.:
$$
|
\tikzmath{\filldraw[thick,fill=white]circle(.1cm);}\,
\tikzmath{\filldraw[thick,fill=white]circle(.1cm);}\,
\tikzmath{\fill[red]circle(.1cm);}\,
\tikzmath{\fill[red]circle(.1cm);}
\rangle
\langle
\tikzmath{\fill[red]circle(.1cm);}\,
\tikzmath{\filldraw[thick,fill=white]circle(.1cm);}\,
\tikzmath{\filldraw[thick,fill=white]circle(.1cm);}\,
\tikzmath{\fill[red]circle(.1cm);}
|
=
\tikzmath{
\draw[thick] (-.5,0) rectangle (2,.5);
\filldraw[thick,fill=white](0,.5)circle(.1cm);
\filldraw[thick,fill=white](.5,.5)circle(.1cm);
\fill[red](1,.5)circle(.1cm);
\fill[red](1.5,.5)circle(.1cm);
\fill[red](0,0)circle(.1cm);
\filldraw[thick,fill=white](.5,0)circle(.1cm);
\filldraw[thick,fill=white](1,0)circle(.1cm);
\fill[red](1.5,0)circle(.1cm);
}
$$
Composition is the bilinear extension of stacking boxes, where we get zero unless all nodes match along the middle.

When $\#J=n+1$, we get an injection $\fF(J)\hookrightarrow \bigotimes^{n+1} M_2(\bbC)=M_{2^{n+1}}(\bbC)$ by mapping a string diagram in the $1_\cC$ and $g$ strings to the matrix unit which only remembers the shadings at the end points, e.g.:
$$
\tikzmath{
\draw[thick] (-.5,0) rectangle (3,.6);
\draw[thick, red] (0,0) -- (1,.6);
\draw[thick, red] (1.5,.6) arc (-180:0:.25cm);
\draw[thick, red] (1.5,0) -- (2.5,.6);
\filldraw[thick,fill=white](0,.6)circle(.1cm);
\filldraw[thick,fill=white](.5,.6)circle(.1cm);
\fill[red](1,.6)circle(.1cm);
\fill[red](1.5,.6)circle(.1cm);
\fill[red](2,.6)circle(.1cm);
\fill[red](2.5,.6)circle(.1cm);
\fill[red](0,0)circle(.1cm);
\filldraw[thick,fill=white](.5,0)circle(.1cm);
\filldraw[thick,fill=white](1,0)circle(.1cm);
\fill[red](1.5,0)circle(.1cm);
\filldraw[thick,fill=white](2,0)circle(.1cm);
\filldraw[thick,fill=white](2.5,0)circle(.1cm);
}
\longmapsto
\tikzmath{
\draw[thick] (-.5,0) rectangle (3,.5);
\filldraw[thick,fill=white](0,.5)circle(.1cm);
\filldraw[thick,fill=white](.5,.5)circle(.1cm);
\fill[red](1,.5)circle(.1cm);
\fill[red](1.5,.5)circle(.1cm);
\fill[red](2,.5)circle(.1cm);
\fill[red](2.5,.5)circle(.1cm);
\fill[red](0,0)circle(.1cm);
\filldraw[thick,fill=white](.5,0)circle(.1cm);
\filldraw[thick,fill=white](1,0)circle(.1cm);
\fill[red](1.5,0)circle(.1cm);
\filldraw[thick,fill=white](2,0)circle(.1cm);
\filldraw[thick,fill=white](2.5,0)circle(.1cm);
}\,.
$$
This map is well-defined and injective as all relations in $\cC$ lie in its kernel.
Indeed, observe that recabling the red strands has no effect on the location of the endpoints.
Denote by $J+1$ the interval obtained from $J$ by including one site to the right, and denote by $1+J$ the interval obtained from $J$ by adding one site to the left.
The following squares commute:
$$
\begin{matrix}
x
&\mapsto&
1\otimes x
\\
M_{2^{n+1}}(\bbC)  &\hookrightarrow & M_2(\bbC)\otimes M_{2^{n+1}}(\bbC)
\\
\rotatebox{90}{$\hookrightarrow$}&&\rotatebox{90}{$\hookrightarrow$}
\\
\fF(J)   &\hookrightarrow & \fF(1+J)
\\
\tikzmath{
\draw (-.2,-.4) -- (-.2,.4);
\draw (.2,-.4) -- (.2,.4);
\roundNbox{fill=white}{(0,0)}{.2}{.2}{.2}{\scriptsize{$f$}}
\node at (.03,-.3) {$\scriptstyle\cdots$};
\node at (.03,.3) {$\scriptstyle\cdots$};
}
&\mapsto&
\tikzmath{
\draw (-.5,-.4) -- (-.5,.4);
\draw (-.2,-.4) -- (-.2,.4);
\draw (.2,-.4) -- (.2,.4);
\roundNbox{fill=white}{(0,0)}{.2}{.2}{.2}{\scriptsize{$f$}}
\node at (.03,-.3) {$\scriptstyle\cdots$};
\node at (.03,.3) {$\scriptstyle\cdots$};
}
\end{matrix}
\qquad\qquad
\begin{matrix}
x
&\mapsto&
x\otimes 1
\\
M_{2^{n+1}}(\bbC)  &\hookrightarrow &  M_{2^{n+1}}(\bbC)\otimes M_2(\bbC)
\\
\rotatebox{90}{$\hookrightarrow$}&&\rotatebox{90}{$\hookrightarrow$}
\\
\fF(J)   &\hookrightarrow & \fF(J+1)
\\
\tikzmath{
\draw (-.2,-.4) -- (-.2,.4);
\draw (.2,-.4) -- (.2,.4);
\roundNbox{fill=white}{(0,0)}{.2}{.2}{.2}{\scriptsize{$f$}}
\node at (.03,-.3) {$\scriptstyle\cdots$};
\node at (.03,.3) {$\scriptstyle\cdots$};
}
&\mapsto&
\tikzmath{
\draw (.5,-.4) -- (.5,.4);
\draw (-.2,-.4) -- (-.2,.4);
\draw (.2,-.4) -- (.2,.4);
\roundNbox{fill=white}{(0,0)}{.2}{.2}{.2}{\scriptsize{$f$}}
\node at (.03,-.3) {$\scriptstyle\cdots$};
\node at (.03,.3) {$\scriptstyle\cdots$};
}
\end{matrix}
$$
Moreover, these squares fit into a larger commutative cube with $\fF(1+J+1)$, as adding strings/tensoring on the left and right commute.

We have thus constructed an embedding of nets of algebras from $\fF$ into the 1D spin chain.
It remains to identify the image of $\fF$ under this map.
By inspection, the image is exactly spanned by those diagrams with an even number of $\tikzmath{\fill[red]circle(.1cm);}=|-\rangle$ boundary nodes, which is exactly the subalgebra of $M_{2^{n+1}}(\bbC)$ which preserves the subspaces spanned by product tensors in the $\{
\tikzmath{\filldraw[thick,fill=white]circle(.1cm);}
,
\tikzmath{\fill[red]circle(.1cm);}
\}$
computational ONB which 
have an even or odd number of $\tikzmath{\fill[red]circle(.1cm);}=|-\rangle$ nodes.
Hence if 
$\tikzmath{\filldraw[thick,fill=white]circle(.1cm);}=|+\rangle$
and
$\tikzmath{\fill[red]circle(.1cm);}=|-\rangle$,
the image is exactly $\fC'$,
where the image of $u_i$ is $\sigma^X_i$
and the image of $v_j$ is $\sigma^Z_j\sigma_{j+1}^Z$.
If
$\tikzmath{\filldraw[thick,fill=white]circle(.1cm);}=|0\rangle$
and
$\tikzmath{\fill[red]circle(.1cm);}=|1\rangle$,
the image is exactly $\fD'$, 
where the image of $u_i$ is $\sigma^Z_i$
and the image of $v_j$ is $\sigma_j^X \sigma_{j+1}^X$.
\end{construction}

\begin{prop}
\label{prop:ToricCodeInjective}
Suppose we have rectangles $\Lambda \Subset_2 \Delta$ with $I=\partial \Lambda \cap \partial \Delta \neq \emptyset$ and let $p_\Delta$ be as in Equation~\eqref{eq:tcproj}.
\begin{enumerate}[label=(\arabic*)]
\item 
If $I$ is rough as in the left hand side of \eqref{eq:TC-BoundaryExcitations},
then $x\in \fC(I)$ and $xp_\Delta =0$ implies $x=0$.
\item 
If $I$ is smooth as in the right hand side of \eqref{eq:TC-BoundaryExcitations},
then $x\in \fD(I)$ and $xp_\Delta =0$ implies $x=0$.
\end{enumerate}
\end{prop}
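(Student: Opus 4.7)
The plan is to focus on part (1), the rough boundary case; part (2) is entirely parallel, and can also be obtained from (1) by combining the $\tfrac12$-translation duality discussed after \eqref{eq:TC-Boundaries} with Corollary \ref{cor:Rough==Smooth}. The strategy has two steps: (a) show every element of $\fC(I)$ commutes with $p_\Delta$, so that $xp_\Delta = p_\Delta x p_\Delta$; and (b) show $\fC(I)$ acts faithfully on the code space $V := p_\Delta \cH_\Delta$.

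For (a), expand $x \in \fC(I)$ in the Pauli-monomial basis of Lemma \ref{lem:AbstractBoundaryAlgebra} as $x = \sum_{a,b} c_{a,b}\, C^a D^b$, and check that each generator commutes with every stabilizer appearing in $p_\Delta$. The operator $C_\ell = \sigma^X_\ell$ trivially commutes with all star operators, and commutes with every complete plaquette operator $B_{p'}$ with $p' \subset \Delta$ because the plaquettes of the ambient lattice containing the rough edge $\ell$ extend outside $\Delta$ (using $\partial \Lambda \cap \partial \Delta = I$ to see that $\Delta$ does not extend beyond $\cK$ in the region of $I$). Dually, $D_p$ commutes with all plaquette operators, and an even-overlap argument shows it commutes with every complete star $A_s \subset \Delta$: any such star meets the plaquette $p$ in $0$ or $2$ edges, and in the latter case both shared edges must lie among the three $\sigma^Z$-edges of $D_p$, since the omitted fourth edge of $p$ lies on $\cK$ and hence cannot belong to a star that is complete in $\Delta$.

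For (b), the problem reduces to showing that the Pauli monomials $\{C^a D^b\}$ are linearly independent as operators on $V$. One produces a ground vector $|g\rangle \in V$ (for example, a suitable projection of a preferred product state in the appropriate computational basis) such that the orbit $\{C^a D^b |g\rangle\}$ is linearly independent in $\cH_\Delta$: distinct $a$'s flip disjoint sets of rough edges (distinguishable in the $\sigma^Z$-basis) while distinct $b$'s produce $\sigma^Z$-strings whose supports on interior edges uniquely encode $b$, yielding $2^{2n+1} = \dim \fC(I)$ orthogonal vectors. The main obstacle is the rigorous execution of this faithfulness step: one must show that no non-trivial product $\prod_i C_{\ell_i} \prod_j D_{p_j}$ lies in the stabilizer group generated by $\{A_s, B_{p'}\}_{s, p' \subset \Delta}$. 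This is a topological fact --- star and plaquette products yield closed-loop $\sigma^X$ and $\sigma^Z$ strings on the primal and dual lattices, respectively, whereas the $\fC(I)$-monomials correspond to \emph{open} strings terminating on the boundary $\cK$ --- and I expect the cleanest implementation uses the Pauli-reduction algorithm referred to as Algorithm \ref{alg:PauliReduction} to canonicalize operators modulo the stabilizer group.
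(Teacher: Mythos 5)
Your reduction is sound up to a point: step (a) is fine, and the observation that injectivity of $x\mapsto xp_\Delta$ on $\fC(I)$ is equivalent to no nontrivial monomial $C^aD^b$ being proportional to an element of the stabilizer group of $\Delta$ is a correct reformulation. The problem is that this crucial step --- which you yourself flag as ``the main obstacle'' --- is never actually carried out, and the concrete mechanism you sketch for it does not work. If $|g\rangle = p_\Delta|v\rangle$ with $|v\rangle$ a product state in the $\sigma^Z$ (computational) basis, then every $D_p$ is a $\sigma^Z$-monomial commuting with $p_\Delta$, so $D^b|g\rangle = \pm|g\rangle$: the $D$-part of the orbit collapses to signs, and you cannot get $2^{2n+1}$ linearly independent (let alone orthogonal) vectors this way; the dual choice of a $\sigma^X$-basis product state fails symmetrically for the $C_\ell$'s. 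So the claim that ``distinct $b$'s produce $\sigma^Z$-strings whose supports on interior edges uniquely encode $b$'' is wrong in the basis you propose. Deferring to Algorithm \ref{alg:PauliReduction} does not close the gap either: that algorithm factors a monomial commuting with all stabilizers in $\Delta$ as (stabilizers in $\Lambda$)$\,\times\,$(element of $\fC(J)$), which is the surjectivity input for \ref{LTO:Boundary}/\ref{LTO:Surjective}; it does not by itself show that the only $\fC(I)$-monomial lying in $\bbC^\ast\cdot\langle A_s,B_{p'}\rangle$ is the identity, which is what injectivity needs. (That statement is provable --- e.g.\ any nontrivial product of the truncated plaquettes $D_p$ has $\sigma^Z$-support on some rough edge of $I$, while no full plaquette of $\Delta$ contains a rough edge, and a cut/support argument handles the $\sigma^X$-part against products of full stars --- but none of this appears in your proposal.)

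For comparison, the paper avoids the group-theoretic route entirely: after enlarging $\Delta$ so all of $\partial\Delta$ is rough, it identifies the ground space of $H_\Delta$ with the even-parity subspace of $|\pm\rangle$-configurations along $\partial\Delta$, and then uses the concrete description of $\fC(I)\cong M_{2^n}(\bbC)\oplus M_{2^n}(\bbC)$ from Remark \ref{rem:C(I)ParityPreservation}: the even-parity and odd-parity representations of $\fC(I)$ on $\bbC^{2^{n+1}}$ are embedded into the ground space by extending boundary configurations by $|+\rangle$'s (with one fixed $|-\rangle$ outside $I$ in the odd case). Since $\fC(I)$ is a direct sum of two simple summands, exhibiting these two invariant subspaces already gives faithfulness --- no separating vector or full linear-independence of all $2^{2n+1}$ monomials is needed. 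If you want to salvage your approach, either prove the stabilizer-intersection statement directly by the support/parity argument indicated above, or use the structure $M_{2^n}\oplus M_{2^n}$ and show merely that neither central projection is killed by compression with $p_\Delta$.
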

\begin{proof}
We prove the first case, and the second is similar and left to the reader.
Without loss of generality, we may assume that $\partial \Delta$ is rough on all sides of $\Delta$; if this is not the case, we can replace $\Delta$ with a larger region satisfying this property.
Observe that $xp_\Delta$ preserves the space of ground states for $H_\Delta$,
which is isomorphic to $\bbC^{2^{2i+2j-1}}$
where $\Delta$ has dimensions $i\times j$ (with $n+1 \leq i+4$ as $\Lambda \Subset_2\Delta$).
Indeed, the space of ground states for $H_\Delta$
can be identified with the space of states along $\partial\Delta$ spanned by the simple tensors with an even number of $|-\rangle$'s in the $|\pm\rangle$ computational basis.
Comparing with the faithful action of $\fC(I)$ on $\bbC^{2^{n+1}}$ from Remark \ref{rem:C(I)ParityPreservation} above,
we can view the even-parity subspace of $\bbC^{2^{n+1}}$ as a subspace of the $\partial\Delta$ subspace by extending by all $|+\rangle$ outside sites in $I$, and we can view the odd-parity subspace of $\bbC^{2^{n+1}}$ as a subspace of the $\partial\Delta$ subspace by extending by all $|+\rangle$ outside sites in $I$ except for a single site $j\in \partial \Delta\setminus I$ which is always $|-\rangle$.
These two subspaces witness a faithful action of $\fC(I)$ on the ground state subspace, and thus the map $x\mapsto xp_\Delta$ is injective.
\end{proof}

To demonstrate that the axioms~\ref{LTO:Hastings}--\ref{LTO:Injective} hold, we adapt the algorithm presented in \cite[p.~6]{MR4650344}, which in turn is based on work in \cite{MR2345476}.
We thank Shuqi Wei for a simplification in Step 1 below.

\begin{algorithm}
\label{alg:PauliReduction}
Suppose we have rectangles $\Lambda \ll_2 \Delta$ or $\Lambda \Subset_2 \Delta$, and set $J:= \partial\Lambda \cap \partial\Delta$.
We assume either $J=\emptyset$ or $J$ is rough, and the case when $J$ is smooth is entirely similar.
The following algorithm expresses a local operator $a\in \fA(\Lambda)$ which is a monomial in the Pauli operators that commutes with all $A_s,B_p$ for $s,p\subset \Delta$ as a product of the $A_s,B_p$ for $s,p\subset \Lambda$ times an operator in $\fC(J)$.

\begin{enumerate}[label=\underline{Step \arabic*:}]
\item
We only apply this step if $a$ is supported entirely on two adjacent columns or two adjacent rows of sites, e.g.,
\begin{equation}
\label{eq:TwoAdjacentSites}
\tikzmath{
\draw (1.5,.05) -- (1.5,3.7);
\foreach \y in {.75,1.5,2.25,3}{
    \draw (2.25,\y) -- (1.5,\y);
}
}
\qquad\qquad\text{or}\qquad\qquad
\tikzmath{
\draw (.05,1.5,.05) -- (3.7,1.5);
\foreach \x in {.75,1.5,2.25,3}{
    \draw (\x,2.25) -- (\x,1.5);
}
}
\end{equation}
We call these two rows or columns $H$.
If $a$ is supported on a larger region, go to Step 2.

First, if $H\cap J=\emptyset$, we claim $a=1$.
The Pauli operator for $a$ on the outermost edge $\ell$ commutes with the $A_s$ for the outermost vertex $s$, and thus must be either $1_\ell$ or $\sigma^X_\ell$, where $1_\ell$ is the unit of the algebra at edge $\ell$.
But it also commutes with the $B_p$ next to it, so it must be either $1_\ell$ or $\sigma^Z_\ell$. 
We conclude it is $1_\ell$.
Working from the outside in, the result follows.

Otherwise, $H\cap J\neq \emptyset$, and arguing as in the previous paragraph, we may assume $H\subset \widetilde{J}$.
We claim $a\in \fC(J)$.
We assume $\widetilde{J}$ is oriented similar to the left hand side of \eqref{eq:TwoAdjacentSites} as in the left hand side of \eqref{eq:TC-BoundaryExcitations}; the other cases are similar.
First, consider an extremal vertical edge $\ell$ of $\widetilde{J}$ beyond any rough horizontal edges as in \eqref{eq:TC-BoundaryExcitations}.
These extremal edges must always be $1_\ell$, as they commute with the $B_p$ to the left and the $A_s$ above or below.  

Now, consider a non-extremal vertical edge $\ell$ of $\widetilde J$.  
By considering the plaquette term on the left side of $\ell$, we know that $a$ must be $1_\ell$ or $\sigma^Z_\ell$ on this edge, as this plaquette only intersects $\widetilde J$ at $\ell$.  
If $a$ is $\sigma^Z_\ell$ here, we can multiply $a$ by a $D_p\in\fC(J)$ operator to the right of $\ell$ and thus assume that $a$ acts as the identity on this edge.  
Hence, we may assume that the support of $a$ is contained in $J$, the horizontal rough edges.  
Now, considering the star terms to the left of these edges, we know that $a$ must be $1_\ell$ or $\sigma^X_\ell=C_\ell\in\fC(J)$ for each rough edge $\ell$, as the star term to the right of $\ell$ only intersects $J$ at $\ell$.  
Thus $a \in \fC(J)$.

\item 
Now suppose $a$ is supported on a larger region.
We pick a distinguished side of $\partial \Lambda$ which is necessarily either rough or smooth.
If $J=\emptyset$, any side of $\partial \Lambda$ works.
If $J\neq \emptyset$, we pick the distinguished side of $\partial \Lambda$ which is opposite $J$.
If the distinguished side is rough, go to Step 3; if it is smooth, go to Step 4.

\item
Since the distinguished edge is rough, the Pauli operators on the rough edges must commute with $A_s$ terms for vertices $s$ on the outside of the rough edges, which are necessarily in $\Delta\setminus \Lambda$.
This means these Pauli operators must be either $1_\ell$ or $\sigma^X_\ell$.
For each $\sigma^X_\ell$ that appears, multiply by an $A_s$ for the star $s\subset \Lambda$ containing $\ell$ to `cancel' the $\sigma^X_\ell$.
This will work except possibly at the two edges of the distinguished edge, where the cancelling star $s$ may not be contained in $\Lambda$.
In this case, we see the edge in question also commutes with a plaquette operator in $\Delta$, forcing the Pauli operator to be $1_\ell$.

We may now view $a'$ as a monomial supported on a smaller rectangle $\Lambda' \subset \Lambda$ with a smooth edge opposite $J$.
Go back to Step 1 with $a'$ supported on $\Lambda'$.

\item
Since the distinguished edge is smooth, the Pauli operators on the smooth edges must commute with $B_p$ terms for plaquettes $p$ on the outside of the smooth edges, which are necessarily in $\Delta\setminus \Lambda$.
This means these Pauli operators must be either $1_\ell$ or $\sigma^Z_\ell$.
For each $\sigma^Z_\ell$ that appears, multiply by a $B_p$ for the plaquette $p\subset \Lambda$ containing $\ell$ to `cancel' the $\sigma^Z_\ell$.
This will work except possibly at the two edges of the distinguished edge, where the cancelling plaquette $p$ may not be contained in $\Lambda$.
In this case, we see the edge in question also commutes with a star operator in $\Delta$, forcing the Pauli operator to be $1_\ell$.

We may now view $a'$ as a monomial supported on a smaller rectangle $\Lambda' \subset \Lambda$ with a rough edge opposite $J$.
Go back to Step 1 with $a'$ supported on $\Lambda'$.

\end{enumerate}
Observe that the product obtained from this algorithm is independent of the choice of $\Delta$ beyond that $\Lambda \ll_2\Delta$ or $\Lambda \Subset_2 \Delta$ with $J=\partial \Lambda \cap \partial \Delta$.
\end{algorithm}

Construction \ref{construction:ToricCodeInjective}
and
Algorithm \ref{alg:PauliReduction}
immediately imply the following theorem.

\begin{thm}
\label{thm:TC-LTO}
The axioms \ref{LTO:Hastings}--\ref{LTO:Injective} hold for the Toric Code
where $\fB(I)$ is $\fC(I)$ or $\fD(I)$ depending on whether we choose $I$ to be rough or smooth.
\end{thm}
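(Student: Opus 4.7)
The plan is to verify the equivalent axiom set \ref{LTO:Hastings}, \ref{LTO':B(I)}, \ref{LTO':injective} supplied by the proposition just above Remark \ref{rem:StrategyForExamples}, and then identify the resulting boundary net with the fusion categorical net from Construction \ref{construction:ToricCodeInjective}.

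First, since both sides of the identities in \ref{LTO:Hastings} and \ref{LTO':B(I)} are linear in $a\in \fA(\Lambda)$ and $\fA(\Lambda)$ is spanned by Pauli monomials, I would test them one monomial at a time. For a Pauli monomial $a$, each stabilizer $A_s$ or $B_p$ in $\fA(\Delta)$ either commutes or anticommutes with $a$. If at least one anticommutes, then $p_\Delta a p_\Delta = 0$, which lies in both $\bbC p_\Delta$ and $\fC(I) p_\Delta$, and we are done. Otherwise, Algorithm \ref{alg:PauliReduction} expresses $a$ as a product of stabilizers supported in $\Lambda$ times an element $c \in \fC(J)$ with $J = \partial\Lambda \cap \partial\Delta$. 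Since the $\Lambda$-stabilizers are absorbed by $p_\Delta$, we obtain $p_\Delta a p_\Delta = c p_\Delta$. The specialization $J = \emptyset$ (where $\fC(\emptyset)=\bbC$) yields \ref{LTO:Hastings}; the specialization $J = I$ rough yields the inclusion $p_\Delta \fA(\Lambda) p_\Delta \subseteq \fC(I) p_\Delta$, and the smooth case is analogous with $\fD(I)$.

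For the reverse inclusion in \ref{LTO':B(I)}, I would check that every generator of $\fC(I)$ commutes with $p_\Delta$. The generator $C_\ell=\sigma^X_\ell$ for $\ell\subset I$ anticommutes with $B_{p'}$ only when $\ell$ is an edge of $p'$, but such $p'$ must have an edge on the opposite side of $\cK$, outside $\Delta$, so $B_{p'}\notin \fA(\Delta)$. The generator $D_p$ (a product of three $\sigma^Z$'s on a truncated plaquette $p$) fails to commute with a star $A_s$ only when $s$ is a vertex of the missing edge of $p$; those two vertices anchor stars that also reach across $\cK$ and hence are not contained in $\Delta$. Consequently $\fC(I)$ commutes with $p_\Delta$, giving $\fC(I) p_\Delta \subseteq p_\Delta \fA(\Lambda) p_\Delta$. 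The smooth case is identical with the roles of stars and plaquettes swapped. Axiom \ref{LTO':injective} is exactly Proposition \ref{prop:ToricCodeInjective}, and the identification of the boundary net with the fusion categorical net over $\cC = \fdHilb(\bbZ/2)$ (with generator $X = 1\oplus g$) follows by combining Lemma \ref{lem:AbstractBoundaryAlgebra}, Remark \ref{rem:C(I)ParityPreservation}, and Construction \ref{construction:ToricCodeInjective}.

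The main obstacle I anticipate is the geometric bookkeeping in the reverse inclusion: verifying that the very stabilizers in $\fA(\Delta)$ which would detect the truncation of $C_\ell$ or $D_p$ are exactly those cut off by $\partial\Delta$ in the definition of $p_\Delta$. Once this is settled, the theorem follows by a direct invocation of the Algorithm, the Proposition, and the Construction.
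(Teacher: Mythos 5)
Your proposal is correct and follows the same strategy the paper lays out in Remark~\ref{rem:StrategyForExamples}: reduce to Pauli monomials, run Algorithm~\ref{alg:PauliReduction} to show $p_\Delta\fA(\Lambda)p_\Delta\subseteq\fC(J)p_\Delta$ (with $J=\emptyset$ giving \ref{LTO:Hastings}), check the commutation $[\fC(I),p_\Delta]=0$ for the reverse inclusion, and invoke Proposition~\ref{prop:ToricCodeInjective} for the injectivity axiom.

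The paper's own proof of Theorem~\ref{thm:TC-LTO} is a one-liner citing only Construction~\ref{construction:ToricCodeInjective} and Algorithm~\ref{alg:PauliReduction}; your write-up is actually more complete than what appears in print, since it makes explicit the two steps the paper leaves implicit: (i) the commutation $[\fC(I),p_\Delta]=0$, which the paper uses tacitly both in the step $p_\Delta s c p_\Delta = cp_\Delta$ and inside the proof of Proposition~\ref{prop:ToricCodeInjective} (where ``$xp_\Delta$ preserves the ground space'' presupposes it), and (ii) the explicit citation of Proposition~\ref{prop:ToricCodeInjective}, which the paper's one-liner omits despite it being exactly what supplies \ref{LTO:Injective} (equivalently \ref{LTO':injective}). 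One small imprecision: in your argument for the commutation, the missing edge of a truncated plaquette is best described as lying \emph{on} the boundary line (the vertical edge at the rough boundary, or the star's outward edge at the smooth one) and hence outside $\Delta$, rather than ``on the opposite side of $\cK$.'' The content of the claim --- that the detecting stabilizers are exactly the ones $p_\Delta$ does not include, because $\Delta$ has the same boundary type as $\Lambda$ along $I$ --- is right, and this is precisely the geometric fact that makes \ref{LTO':B(I)} hold.
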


\begin{rem}
\label{rem:DiagonalRegions}
The above analysis did not exactly follow the conventions of \S\ref{sec:LTO}.
Indeed, our Hilbert spaces were placed on edges in our lattice, which is not strictly speaking  $\bbZ^2\subset \bbR^2$.
However, if we draw the edges as points and rotate our heads $45^\circ$, we again see a $\bbZ^2$ lattice, and we can consider rectangles on this lattice.
In this setup, both the star operators $A_s$ and the plaquette operators appear as plaquette operators.
\begin{equation}
\label{eq:DiagonalRectangles}
\tikzmath{
\draw[step = .75] (-.2,-.2) grid (3.2,3.2);
\draw[thick, red] (0,1.5) -- (1.5,1.5);
\draw[thick, red] (.75,.75) -- (.75,2.25);
\draw[thick, cyan] (1.5,.75) rectangle (2.25,1.5);
}
\begin{tikzcd}
\mbox{}
\arrow[r,squiggly]
&
\mbox{}
\end{tikzcd}
\tikzmath{
\draw[step = .75] (-.2,-.2) grid (3.2,3.2);
\foreach \x in {.375,1.125,1.875,2.625}{
\foreach \y in {0,.75,1.5,2.25,3}{
\filldraw (\x,\y) circle (.05cm);
\filldraw (\y,\x) circle (.05cm);
}}
\draw[thick, red] (0,1.5) -- (1.5,1.5);
\draw[thick, red] (.75,.75) -- (.75,2.25);
\filldraw[red] (.375,1.5) circle (.05cm);
\filldraw[red] (1.125,1.5) circle (.05cm);
\filldraw[red] (.75,1.125) circle (.05cm);
\filldraw[red] (.75,1.875) circle (.05cm);
\draw[thick, cyan] (1.5,.75) rectangle (2.25,1.5);
\filldraw[cyan] (1.875,.75) circle (.05cm);
\filldraw[cyan] (1.875,1.5) circle (.05cm);
\filldraw[cyan] (1.5,1.125) circle (.05cm);
\filldraw[cyan] (2.25,1.125) circle (.05cm);
}
\begin{tikzcd}
\mbox{}
\arrow[r,squiggly]
&
\mbox{}
\end{tikzcd}
\tikzmath{
\foreach \x in {.375,1.125,1.875,2.625}{
\foreach \y in {0,.75,1.5,2.25,3}{
\filldraw (\x,\y) circle (.05cm);
\filldraw (\y,\x) circle (.05cm);
}}
\draw[thick, blue, rounded corners=5pt] (2.25,.75) -- (3,1.5) -- (1.5,3) -- (0,1.5) -- (1.5,0) -- (2.25,.75);
\node[blue] at (1.5,1.5) {$\scriptstyle \Lambda$};
\filldraw[red] (.375,1.5) circle (.05cm);
\filldraw[red] (1.125,1.5) circle (.05cm);
\filldraw[red] (.75,1.125) circle (.05cm);
\filldraw[red] (.75,1.875) circle (.05cm);
\filldraw[cyan] (1.875,.75) circle (.05cm);
\filldraw[cyan] (1.875,1.5) circle (.05cm);
\filldraw[cyan] (1.5,1.125) circle (.05cm);
\filldraw[cyan] (2.25,1.125) circle (.05cm);
}
\end{equation}
We can define the $p_\Lambda$ as the product of the local commuting projectors, but we observe that our net of projections $(p_\Lambda)$ is only translation invariant by \emph{even} translations.
One must `coarse grain' in order to obtain a true translation invariant net of projections.

We can now pick a 1D hyperplane $\cK$ in this rotated $\bbZ^2$ lattice and consider rectangles whose boundaries intersect $\cK$.
By an algorithm analogous to Algorithm \ref{alg:PauliReduction}, 
for every sufficiently large interval $I$ in $\cK$ and sufficiently large $\Delta$ with $\partial \Delta\cap \cK = I$, $p_\Delta a p_\Delta = \gamma_a p_\Delta$ for a unique operator $\gamma_a$ (independent of $\Delta$) in an algebra $\fE(I)$ generated by certain monomials of Pauli operators.
Based on the parity of sites in $I$, these generating Pauli monomials can be taken to be of the form
$$
1\otimes \cdots \otimes 1 \otimes \sigma^Z \otimes \sigma^Z \otimes 1\otimes \cdots \otimes 1
\qquad\text{and}\qquad
1\otimes \cdots \otimes 1 \otimes \sigma^X \otimes \sigma^X \otimes 1\otimes \cdots \otimes 1,
$$
where the $\sigma^Z$ always occur on, say, sites $2i$ and $2i+1$ and the $\sigma^X$ always occur on sites $2i+1$ and $2i+2$, and we again write $1$ for the unit operator of the local algebra at an edge.
For example, when $I$ is the northeast edge of $\Lambda$ in \eqref{eq:DiagonalRectangles}, $\fE(I)$ is generated by
$$
\sigma^X\otimes \sigma^X \otimes 1 \otimes 1,
\qquad
1\otimes \sigma^Z \otimes \sigma^Z \otimes 1,
\qquad
\text{and}
\qquad
1 \otimes 1\otimes \sigma^X\otimes \sigma^X.
$$
It is clear that these operators satisfy the relations of Lemma \ref{lem:AbstractBoundaryAlgebra}, but observe that the algebras $\fE(I)$ grow at roughly half the rate of $\fC(I)$ or $\fD(I)$.
Thus the net of algebras $\fE$ is `coarse grained' bounded spread isomorphic to $\fC$ and $\fD$.
While this equivalence relation can be made rigorous, we leave it to a future paper as it would take us too far afield. We also note that boundary algebras of Toric Code have recently appeared in a slightly different form in \cite{2304.01277} as a host for measuremnt-based quantum cellular auotmata. We plan to expand on this connection in future work.
\end{rem}

\section{Example: Levin-Wen string nets}
\label{sec:LWStringNet}

In this section, we prove that the Levin-Wen string net model \cite{PhysRevB.71.045110,PhysRevB.103.195155} for a unitary fusion category (UFC) $\cC$
has a net of projections $\Lambda \mapsto p_\Lambda$
satisfying \ref{LTO:Hastings}--\ref{LTO:Injective}.
We first recall the definition of the model following 
\cite[\S2]{2305.14068} which was adapted from \cite{MR2942952,MR3204497} before we define the projections $p_\Lambda$.

Let $\cC$ denote a UFC, and denote its  quantum double (Drinfeld center) by $Z(\cC)$ \cite[\S7.13]{MR3242743}.
We write $\cC(a\to b)$ to denote the space of morphisms $a\to b$ in $\cC$.
For simplicity we will only consider the model on a square lattice in two dimensions.
Schematically, the Hilbert space can be visualized as follows, where the black edges carry labels from $\Irr(\cC)$.
$$
\tikzmath{
\draw[step=.5,black,thin] (0.25,0.25) grid (3.25,3.25);
\draw[thick, blue, rounded corners=5pt] (.75,.75) rectangle (2.75,2.75);
\node[blue] at (1.75,1.75) {$\scriptstyle \Lambda$};
}
$$
Here, we read from bottom left to top right.
The total Hilbert space is the tensor product of local Hilbert spaces over all sites:
$$
\tikzmath{
\draw[thick] (-.5,0) node[left]{$\scriptstyle a$} -- (.5,0) node[right]{$\scriptstyle d$};
\draw[thick] (0,-.5) node[below]{$\scriptstyle b$} -- (0,.5) node[above]{$\scriptstyle c$};
\filldraw (0,0) circle (.05cm);
\node at (-.2,-.2) {$\scriptstyle v$};
\draw[blue!50, very thin] (-.5,.5) -- (.5,-.5);
{\draw[blue!50, very thin, -stealth] ($ (-.3,.3) - (.1,.1)$) to ($ (-.3,.3) + (.1,.1)$);}
{\draw[blue!50, very thin, -stealth] ($ (.3,-.3) - (.1,.1)$) to ($ (.3,-.3) + (.1,.1)$);}
}
\qquad
\longleftrightarrow
\qquad
\cH_v 
:= 
\bigoplus_{
a,b,c,d\in \Irr(\cC)
}
\cC(a\otimes b \to c\otimes d)
$$
where the direct sum is orthogonal.
The space $\cH_v$ is equipped with the `skein-module' inner product
$$
\left\langle
\tikzmath{
\draw (-.5,0) node[left]{$\scriptstyle a$} -- (.5,0) node[right]{$\scriptstyle d$};
\draw (0,-.5) node[below]{$\scriptstyle b$} -- (0,.5) node[above]{$\scriptstyle c$};
\node at (.2,-.2) {$\scriptstyle \xi$};
}
\middle|
\tikzmath{
\draw (-.5,0) node[left]{$\scriptstyle a'$} -- (.5,0) node[right]{$\scriptstyle d'$};
\draw (0,-.5) node[below]{$\scriptstyle b'$} -- (0,.5) node[above]{$\scriptstyle c'$};
\node at (.2,-.2) {$\scriptstyle \xi'$};
}
\right\rangle
=
\delta_{a=a'}
\delta_{b=b'}
\delta_{c=c'}
\delta_{d=d'}
\frac{1}{\sqrt{d_ad_bd_cd_d}}
\cdot
\tr_\cC(\xi^\dag\circ \xi').
$$
Here, $\dag$ is the dagger structure on $\cC$ and $\tr_\cC$ is the categorical trace using the unique unitary spherical structure \cite{MR2091457,MR4133163}.

Consider now a rectangle $\Lambda$ in our lattice $\cL$.
We consider the canonical spin system from this setup as in Example \ref{ex:SpinSystem}, i.e.,
$\fA(\Lambda):= \bigotimes_{v\in \Lambda} B(\cH_v)$.
We set $\fA:= \varinjlim \fA(\Lambda) = \bigotimes_v B(\cH_v)$.

For a rectangle $\Lambda \subset \cL$, we say
\begin{itemize}
\item 
an edge/link $\ell\subset \Lambda$ if  the two vertices at the endpoints of $\ell$ are contained in $\Lambda$, and
\item
a face/plaquette $p\subset \Lambda$ if the four vertices at the corners of $\ell$ are contained in $\Lambda$.
\end{itemize}

For each edge $\ell\subset \Lambda$, we have an orthogonal projector $A_\ell\in \fA(\Lambda)$ which enforces that the edge labels on $\ell$ match from either side:
$$
A_\ell
\left(
\tikzmath{
\draw (-.5,0) node[left]{$\scriptstyle a$} -- (.5,0) node[right]{$\scriptstyle d$};
\draw (0,-.5) node[below]{$\scriptstyle b$} -- (0,.5) node[above]{$\scriptstyle c$};
\node at (.2,-.2) {$\scriptstyle \eta$};
}
\otimes
\tikzmath{
\draw (-.5,0) node[left]{$\scriptstyle e$} -- (.5,0) node[right]{$\scriptstyle h$};
\draw (0,-.5) node[below]{$\scriptstyle f$} -- (0,.5) node[above]{$\scriptstyle g$};
\node at (.2,-.2) {$\scriptstyle \xi$};
}
\right)
:=
\delta_{d=e}
\tikzmath{
\draw (-.5,0) node[left]{$\scriptstyle a$} -- (.5,0) node[right]{$\scriptstyle d$};
\draw (0,-.5) node[below]{$\scriptstyle b$} -- (0,.5) node[above]{$\scriptstyle c$};
\node at (.2,-.2) {$\scriptstyle \eta$};
}
\otimes
\tikzmath{
\draw (-.5,0) node[left]{$\scriptstyle d$} -- (.5,0) node[right]{$\scriptstyle h$};
\draw (0,-.5) node[below]{$\scriptstyle f$} -- (0,.5) node[above]{$\scriptstyle g$};
\node at (.2,-.2) {$\scriptstyle \xi$};
}
$$
We define $p_\Lambda^A := \prod_{\ell\in \Lambda} A_\ell$.

For each plaquette $p\subset \Lambda$, we have an orthogonal projector $B_p \in p^A_\Lambda\fA(\Lambda)p^A_\Lambda$ using the usual definition from the Levin-Wen local Hamiltonian \cite{PhysRevB.71.045110,MR3204497,PhysRevB.103.195155}:
\begin{align*}
\frac{1}{D_\cC}\sum_{s\in \Irr(\cC)}
d_s\cdot
\tikzmath{
\draw[step=1.0,black,thin] (0.5,0.5) grid (2.5,2.5);
\node at (2.3,.8) {$\scriptstyle \xi_{2,1}$};
\node at (.7,2.2) {$\scriptstyle \xi_{1,2}$};
\node at (.7,.8) {$\scriptstyle \xi_{1,1}$};
\node at (2.3,2.2) {$\scriptstyle \xi_{2,2}$};
\filldraw[knot, thick, blue, rounded corners=5pt, fill=gray!30] (1.15,1.15) rectangle (1.85,1.85);
\node[blue] at (1.3,1.5) {$\scriptstyle s$};
\node at (1.5,.85) {$\scriptstyle i$};
\node at (2.15,1.5) {$\scriptstyle j$};
\node at (1.5,2.15) {$\scriptstyle k$};
\node at (.85,1.5) {$\scriptstyle \ell$};
\node at (.3,1) {$\scriptstyle a$};
\node at (1,.3) {$\scriptstyle b$};
\node at (2,.3) {$\scriptstyle c$};
\node at (2.7,1) {$\scriptstyle d$};
\node at (2.7,2) {$\scriptstyle e$};
\node at (2,2.7) {$\scriptstyle f$};
\node at (1,2.7) {$\scriptstyle g$};
\node at (.3,2) {$\scriptstyle h$};
}
&=
\frac{1}{D_\cC}\sum_{m,n,q,r,s \in \Irr(\cX)}
\frac{\sqrt{d_md_nd_qd_r}}{d_s\sqrt{d_id_jd_kd_\ell}}
\tikzmath{
\draw[step=1.0,black,thin] (0.5,0.5) grid (2.5,2.5);
\draw[thick, blue] (1.3,1) -- (1,1.3);
\draw[thick, blue] (1.7,1) -- (2,1.3);
\draw[thick, blue] (1.3,2) -- (1,1.7);
\draw[thick, blue] (1.7,2) -- (2,1.7);
\fill[rounded corners=5pt, gray!30] (1.15,1.15) rectangle (1.85,1.85);
\fill[fill=green] (1.3,1) circle (.05cm);
\fill[fill=green] (1.7,1) circle (.05cm);
\fill[fill=red] (2,1.3) circle (.05cm);
\fill[fill=red] (2,1.7) circle (.05cm);
\fill[fill=yellow] (1.3,2) circle (.05cm);
\fill[fill=yellow] (1.7,2) circle (.05cm);
\fill[fill=orange] (1,1.3) circle (.05cm);
\fill[fill=orange] (1,1.7) circle (.05cm);
\node at (2.3,.8) {$\scriptstyle \xi_{2,1}$};
\node at (.7,2.2) {$\scriptstyle \xi_{1,2}$};
\node at (.7,.8) {$\scriptstyle \xi_{1,1}$};
\node at (2.3,2.2) {$\scriptstyle \xi_{2,2}$};
\node at (1.15,.85) {$\scriptstyle i$};
\node at (1.85,.85) {$\scriptstyle i$};
\node at (2.15,1.15) {$\scriptstyle j$};
\node at (2.15,1.85) {$\scriptstyle j$};
\node at (1.15,2.15) {$\scriptstyle k$};
\node at (1.85,2.15) {$\scriptstyle k$};
\node at (.85,1.15) {$\scriptstyle \ell$};
\node at (.85,1.85) {$\scriptstyle \ell$};
\node at (1.5,.85) {$\scriptstyle m$};
\node at (2.15,1.5) {$\scriptstyle n$};
\node at (1.5,2.15) {$\scriptstyle q$};
\node at (.85,1.5) {$\scriptstyle r$};
\node at (.3,1) {$\scriptstyle a$};
\node at (1,.3) {$\scriptstyle b$};
\node at (2,.3) {$\scriptstyle c$};
\node at (2.7,1) {$\scriptstyle d$};
\node at (2.7,2) {$\scriptstyle e$};
\node at (2,2.7) {$\scriptstyle f$};
\node at (1,2.7) {$\scriptstyle g$};
\node at (.3,2) {$\scriptstyle h$};
}
\\&=
\sum_{\eta}
C(\xi,\eta)
\tikzmath{
\draw[step=1.0,black,thin] (0.5,0.5) grid (2.5,2.5);
\fill[rounded corners=5pt, gray!30] (1.15,1.15) rectangle (1.85,1.85);
\node at (2.3,.8) {$\scriptstyle \eta_{2,1}$};
\node at (.7,2.2) {$\scriptstyle \eta_{1,2}$};
\node at (.7,.8) {$\scriptstyle \eta_{1,1}$};
\node at (2.3,2.2) {$\scriptstyle \eta_{2,2}$};
\node at (1.5,.85) {$\scriptstyle m$};
\node at (2.15,1.5) {$\scriptstyle n$};
\node at (1.5,2.15) {$\scriptstyle q$};
\node at (.85,1.5) {$\scriptstyle r$};
\node at (.3,1) {$\scriptstyle a$};
\node at (1,.3) {$\scriptstyle b$};
\node at (2,.3) {$\scriptstyle c$};
\node at (2.7,1) {$\scriptstyle d$};
\node at (2.7,2) {$\scriptstyle e$};
\node at (2,2.7) {$\scriptstyle f$};
\node at (1,2.7) {$\scriptstyle g$};
\node at (.3,2) {$\scriptstyle h$};
}\,.
\end{align*}
Here, $D_\cC:= \sum_{c\in \Irr(\cC)}d_c^2$ is the \emph{global dimension} of $\cC$, and we use the convention from \cite{MR3663592} 
writing a pair of shaded vertices to denote summing over an orthonormal basis for the trivalent skein module (see \eqref{eq:SkeinModuleInnerProduct} below) and its dual.
Using these conventions, the fusion relation in $\cC$ is given by
\begin{equation}
\label{eq:Fusion}
\sum_{c\in \Irr(\cX)}
\sqrt{d_c}
\begin{tikzpicture}[baseline=-.1cm]
	\draw (.2,-.6) -- (0,-.3) -- (-.2,-.6);
	\draw (.2,.6) -- (0,.3) -- (-.2,.6);
	\draw (0,-.3) -- (0,.3);
	\fill[fill=blue] (0,-.3) circle (.05cm);
	\fill[fill=blue] (0,.3) circle (.05cm);	
	\node at (-.2,-.8) {\scriptsize{$a$}};
	\node at (.2,-.8) {\scriptsize{$b$}};
	\node at (-.2,.8) {\scriptsize{$a$}};
	\node at (.2,.8) {\scriptsize{$b$}};
	\node at (.2,0) {\scriptsize{$c$}};
\end{tikzpicture}
\,=\,\sqrt{d_ad_b}\cdot
\begin{tikzpicture}[baseline=-.1cm]
	\draw (.2,-.6) -- (.2,.6);
	\draw (-.2,-.6) -- (-.2,.6);
	\node at (-.2,-.8) {\scriptsize{$a$}};
	\node at (.2,-.8) {\scriptsize{$b$}};
\end{tikzpicture}
\end{equation}
These conventions have also been used in other descriptions of the Levin-Wen model \cite{1502.06845,MR4640433,MR4642306}.

\begin{lem}[{\cite[Lem.~2.8]{2305.14068}, see also \cite{0907.2204}}]
For
$$
\xi=
\tikzmath{
\draw[step=1.0,black,thin] (0.5,0.5) grid (2.5,2.5);
\fill[rounded corners=5pt, gray!30] (1.15,1.15) rectangle (1.85,1.85);
\node at (2.3,.8) {$\scriptstyle \xi_{2,1}$};
\node at (.7,1.8) {$\scriptstyle \xi_{1,2}$};
\node at (.7,.8) {$\scriptstyle \xi_{1,1}$};
\node at (2.3,1.8) {$\scriptstyle \xi_{2,2}$};
\node at (.3,1) {$\scriptstyle a$};
\node at (1,.3) {$\scriptstyle b$};
\node at (2,.3) {$\scriptstyle c$};
\node at (2.7,1) {$\scriptstyle d$};
\node at (2.7,2) {$\scriptstyle e$};
\node at (2,2.7) {$\scriptstyle f$};
\node at (1,2.7) {$\scriptstyle g$};
\node at (.3,2) {$\scriptstyle h$};
}
\qquad\qquad
\xi'=
\tikzmath{
\draw[step=1.0,black,thin] (0.5,0.5) grid (2.5,2.5);
\fill[rounded corners=5pt, gray!30] (1.15,1.15) rectangle (1.85,1.85);
\node at (2.3,.8) {$\scriptstyle \xi_{2,1}'$};
\node at (.7,1.8) {$\scriptstyle \xi_{1,2}'$};
\node at (.7,.8) {$\scriptstyle \xi_{1,1}'$};
\node at (2.3,1.8) {$\scriptstyle \xi_{2,2}'$};
\node at (.3,1) {$\scriptstyle a'$};
\node at (1,.3) {$\scriptstyle b'$};
\node at (2,.3) {$\scriptstyle c'$};
\node at (2.7,1) {$\scriptstyle d'$};
\node at (2.7,2) {$\scriptstyle e'$};
\node at (2,2.7) {$\scriptstyle f'$};
\node at (1,2.7) {$\scriptstyle g'$};
\node at (.3,2) {$\scriptstyle h'$};
}
\,,
$$
the constant $C(\xi',\xi)$ above is given by
$$
C(\xi',\xi)
=
\delta_{a=a'}\cdots\delta_{h=h'}
\frac{1}{D_\cC\sqrt{d_a\cdots d_h}}
\cdot
\tikzmath{
\clip (-1.8,-1.8) rectangle (4.8,4.8);
\draw (0,0) rectangle (1,1);
\draw (2,2) rectangle (3,3);
\draw (1,1) to[out=0, in=-90] (2,2);
\draw (1,1) to[out=90, in=180] (2,2);
\draw (1,0) to[out=0, in=-90] (3,2);
\draw (1,0) .. controls ++(-90:2cm) and ++(0:2cm) .. (3,2);
\draw (0,1) to[out=90, in=180] (2,3);
\draw (0,1) .. controls ++(180:2cm) and ++(90:2cm) .. (2,3);
\draw (0,0) .. controls ++(180:5cm) and ++(90:5cm) .. (3,3);
\draw (0,0) .. controls ++(-90:5cm) and ++(0:5cm) .. (3,3);
\node at (.7,-.2) {$\scriptstyle \xi_{2,1}'$};
\node at (.3,.8) {$\scriptstyle \xi_{1,2}'$};
\node at (-.3,-.2) {$\scriptstyle \xi_{1,1}'$};
\node at (1.3,.8) {$\scriptstyle \xi_{2,2}'$};
\node at (3.3,2.2) {$\scriptstyle \xi_{2,1}^\dag$};
\node at (2.3,2.7) {$\scriptstyle \xi_{1,2}^\dag$};
\node at (3.3,2.7) {$\scriptstyle \xi_{1,1}^\dag$};
\node at (2.3,1.7) {$\scriptstyle \xi_{2,2}^\dag$};
\node at (-1,.3) {$\scriptstyle a$};
\node at (0,-1) {$\scriptstyle b$};
\node at (.9,-.5) {$\scriptstyle c$};
\node at (1.2,.2) {$\scriptstyle d$};
\node at (1.8,1.1) {$\scriptstyle e$};
\node at (1.3,1.3) {$\scriptstyle f$};
\node at (.2,1.2) {$\scriptstyle g$};
\node at (-.2,.8) {$\scriptstyle h$};
}\,.
$$
Hence the $B_p$ are orthogonal projections and $[B_p,B_q]=0$ for $p\neq q$.
\end{lem}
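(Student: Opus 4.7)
The plan is to compute $C(\xi',\xi)$ by unpacking the $F$-move/fusion expansion defining $B_p$, then recognize the resulting coefficient as the stated graphical evaluation, and finally derive the projection and commutation properties from this explicit formula.

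First, I would apply the fusion relation \eqref{eq:Fusion} (and its rotated versions) four times, once on each of the four sides of the plaquette, to replace the inserted loop labelled $s$ around the middle plaquette by a linear combination of standard basis trivalent diagrams with new middle-edge labels $(m,n,q,r)$ and new trivalent vertex tensors $\eta_{i,j}$. Tracking normalization factors, each fusion introduces a factor $\sqrt{d_{\text{new middle}}/(d_{\text{old middle}} d_s)}$ and a pair of trivalent projectors $\bigl(\tikzmath{\fill[fill=blue] (0,0) circle (.05cm);}\bigr)$; multiplying the four $\sqrt{\cdot}$ factors together yields the prefactor $\sqrt{d_md_nd_qd_r}/(d_s\sqrt{d_id_jd_kd_\ell})$ that already appears in the second displayed equation. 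The eight trivalent projectors sit on the outer links and account for the pairs of coloured dots in that display.

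Next, I would extract the coefficient of a fixed basis element $\eta$ by taking the inner product $\langle \eta, B_p\xi\rangle$ in the skein-module Hilbert space. The orthogonality built into this inner product immediately forces $a=a', b=b',\dots, h=h'$, producing the Kronecker deltas. The remaining pairing can then be reorganized: the eight outer legs of $\xi$ and $\xi'^\dagger$ are glued in pairs through the $\delta$-constrained labels $a,\dots,h$, and the normalization $1/\sqrt{d_a\cdots d_h}$ comes precisely from the eight inner-product factors of $1/\sqrt{d_{(\cdot)}}$ (four for each local vertex Hilbert space). The sum over $s$ with weight $d_s/D_\cC$, together with the inserted loop, is then exactly the closed diagrammatic evaluation shown on the right-hand side: the two rectangles represent $\xi'$ and $\xi^\dagger$, their outer legs close up via the gluing, and the central `figure-eight'-style closure encodes the original loop together with the coupon of trivalent vertices.

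Given this explicit formula, self-adjointness $B_p = B_p^*$ follows from the fact that $\Irr(\cC)$ is closed under duals with $d_{\bar s}=d_s$ and that replacing $s$ by $\bar s$ reverses the orientation of the inserted loop, which matches daggering in the unitary spherical structure on $\cC$. The relation $B_p^2 = B_p$ reduces to inserting two concentric loops labelled $s$ and $t$ inside the plaquette; the standard handle-slide/fusion identity in a unitary fusion category, together with $\sum_{s}d_s^2 = D_\cC$, collapses the double sum back to a single loop, so $B_p^2 = B_p$. Finally, $[B_p,B_q]=0$ is automatic when $p,q$ are disjoint (they act on disjoint tensor factors), and for adjacent plaquettes it follows by the usual argument that two loops in adjacent plaquettes can be isotoped past each other in the skein module after imposing the edge-matching constraints enforced by the $A_\ell$; this is a standard computation in the Levin--Wen literature \cite{PhysRevB.71.045110,MR3204497}.

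The main obstacle is the careful bookkeeping in the first step: keeping track of all the $F$-symbols, square-root normalizations, and trivalent projectors through four successive fusion moves, and then reorganizing them into a single closed graph. The conceptual content is routine for unitary fusion categorical skein theory, but matching the explicit prefactor $1/(D_\cC \sqrt{d_a\cdots d_h})$ on the nose requires attentive accounting, which is why it is convenient to appeal to the computation already carried out in \cite[Lem.~2.8]{2305.14068}.
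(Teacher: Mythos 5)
The paper itself gives no proof of this lemma: it is quoted verbatim from \cite[Lem.~2.8]{2305.14068} (see also \cite{0907.2204}), so the only comparison available is with that cited computation, and your sketch follows essentially the same route as it — resolve the $s$-labelled loop into the plaquette edges with four applications of \eqref{eq:Fusion}, read off $C(\xi',\xi)$ as a skein-module inner product (whence the deltas and the $1/\sqrt{d_a\cdots d_h}$), and deduce self-adjointness, idempotency and commutativity from the standard loop-fusion identities, with $\sum_s d_s^2=D_\cC$ collapsing the concentric double loop. Two small points of care: the four fusion factors by themselves give $\sqrt{d_md_nd_qd_r}/(d_s^2\sqrt{d_id_jd_kd_\ell})$, and it is only after absorbing the weight $d_s$ from the definition of $B_p$ that one lands on the $1/d_s$ in the second display, so your phrasing there is slightly off though harmless; and the step where the weighted sum over $s$ together with the encircling loop is converted into the single closed diagram of the statement (the vacuum-channel/killing-ring property of $\tfrac{1}{D_\cC}\sum_s d_s$ times the loop) is asserted rather than derived — that conversion is exactly the substantive content of the cited lemma, so as a self-contained argument your write-up would need that computation spelled out, but as an outline it is correct and is the same argument.
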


For a rectangle $\Lambda$, we define $p_\Lambda^B := \prod_{p\subset \Lambda} B_p$.
We now define our net of projections.

\begin{defn}[Net of projections for the Levin-Wen string net]
For each rectangle $\Lambda$ in $\cL$, define 
$$
p_\Lambda := p_\Lambda^A p_\Lambda^B = \prod_{p\subset \Lambda} B_p \prod_{\ell\subset\Lambda} A_\ell.
$$
Clearly $\Lambda \subset \Delta$ implies $p_\Delta \leq p_\Lambda$.
\end{defn}

There is a nice description of the image of $p^B_\Lambda$ on $p^A_\Lambda\bigotimes_{v\in \Lambda} \cH_v$ in terms of the \emph{skein module} $\cS_\cC(\#\partial\Lambda)$ for $\cC$ with $n:=\#\partial \Lambda$ boundary points, defined as the orthogonal direct sum
$$
\bigoplus_{x_1,\dots, x_n} \cC(1\to x_1\otimes\cdots \otimes x_n)
$$
with inner product given by
\begin{equation}
\label{eq:SkeinModuleInnerProduct}
\left\langle
\tikzmath{
\draw (-.4,.3) --node[left]{$\scriptstyle c_1$} (-.4,.7);
\node at (0,.5) {$\cdots$};
\draw (.4,.3) --node[right]{$\scriptstyle c_n$} (.4,.7);
\roundNbox{}{(0,0)}{.3}{.3}{.3}{$\xi$}
}
\middle|
\tikzmath{
\draw (-.4,.3) --node[left]{$\scriptstyle c_1'$} (-.4,.7);
\node at (0,.5) {$\cdots$};
\draw (.4,.3) --node[right]{$\scriptstyle c_n'$} (.4,.7);
\roundNbox{}{(0,0)}{.3}{.3}{.3}{$\xi'$}
}
\right\rangle
:=
\delta_{c_1=c_1'}\cdots \delta_{c_n=c_n'}
\frac{1}{\sqrt{d_{c_1}\cdots d_{c_n}}}
\xi^\dag \circ \xi'.
\end{equation}
There is an obvious linear map
$\eval : p^A_\Lambda\bigotimes_{v\in \Lambda} \cH_v \to \cS_\cC(\#\partial\Lambda)$ called \emph{evaluation} which writes a vector in $p^A_\Lambda\bigotimes_{v\in \Lambda} \cH_v$ as an element of $\cS_\cC(\#\partial \Lambda)$.

The proof of the following lemma from \cite{MR3204497} appears in \cite[Thm.~2.9]{2305.14068}.

\begin{lem}[\cite{MR3204497}]
\label{lem:KongSkeinModule}
For a rectangle $\Lambda$, 
on $p^A_\Lambda\bigotimes_{v\in \Lambda} \cH_v$,
$$
p^B_\Lambda = D^{-\#p\subset \Lambda}_{\cC} \eval^\dag\circ \eval.
$$
where $\#p \subset \Lambda$ is the number of plaquettes internal to $\Lambda$.
Hence $p_\Lambda\bigotimes_{v\in \Lambda} \cH_v$ is unitarily isomorphic to $\cS_\cC(\#\partial \Lambda)$ via the map $D^{-\#p/2}_{\cC}\eval $.
\end{lem}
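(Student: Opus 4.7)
The plan is to prove both claims together by identifying the operator $P := D_\cC^{-\#p}\,\eval^\dag\circ\eval$, acting on $p^A_\Lambda\bigotimes_{v\in\Lambda}\cH_v$, as an orthogonal projection with the same range as $p^B_\Lambda$. Once this is done, $D^{-\#p/2}_\cC\eval$ restricted to $p_\Lambda\bigotimes_v\cH_v$ is automatically an isometry onto $\cS_\cC(\#\partial\Lambda)$, which is the second assertion.

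First, I would check that $\eval$ surjects onto $\cS_\cC(\#\partial\Lambda)$ from $p^A_\Lambda\bigotimes_{v\in\Lambda}\cH_v$: on this subspace, the edge-matching constraints enforced by $p^A_\Lambda$ allow one to compose adjacent vertex morphisms into a single skein diagram with boundary on $\partial\Lambda$, and any element of $\cS_\cC(\#\partial\Lambda)$ is realised by decomposing it into trivalent $\cC$-basic morphisms and laying them on the lattice. The core technical step is then to prove
\[
\eval\circ\eval^\dag \;=\; D_\cC^{\#p}\cdot\id_{\cS_\cC(\#\partial\Lambda)}.
\]
I would argue this by induction on the number of internal plaquettes. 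The base case $\#p=1$ is essentially a rewriting of the explicit formula for $C(\xi',\xi)$ stated in the preceding lemma, which exhibits $B_p=D_\cC^{-1}\eval_p^\dag\circ\eval_p$ on a single plaquette. For the inductive step, since the $B_p$ commute, adjoining a new plaquette $q$ corresponds, on the skein side, to gluing a disk along an interval of the boundary; the inner product between glued skeins differs from that between ungluded ones by summing over the label of the newly internal edge, which produces the extra factor $D_\cC$ per plaquette via equation \eqref{eq:Fusion} together with the $\frac{1}{\sqrt{d_c}}$ weights in \eqref{eq:SkeinModuleInnerProduct}.

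Once $\eval\eval^\dag = D_\cC^{\#p}\id$, the operator $P$ is automatically a projection: $P^2 = D_\cC^{-2\#p}\eval^\dag(\eval\eval^\dag)\eval = D_\cC^{-\#p}\eval^\dag\eval = P$, and self-adjointness is immediate. Its range equals the range of $\eval^\dag$. To identify this with the range of $p^B_\Lambda$, I would show the two containments: for $\supset$, that $B_p\,\eval^\dag = \eval^\dag$ because $\eval^\dag$ produces a state already `bubble-invariant' at $p$, so $B_p P = P$; and for $\subset$, that any state in the common $+1$-eigenspace of the $B_p$ is reconstructed from its skein evaluation (this is essentially a consequence of the first step together with $B_p$-invariance). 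Hence $P = p^B_\Lambda$ on $p^A_\Lambda\bigotimes_v\cH_v$, which is the first claim, and combining with the first step gives the unitary isomorphism.

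The main obstacle is the inductive normalization bookkeeping in the key computation of $\eval\eval^\dag$: the factors $1/\sqrt{d_{c_i}}$ in the skein inner product \eqref{eq:SkeinModuleInnerProduct}, the $d_s$ appearing in the surgery sum defining $B_p$, and the square-root weights from the local Hilbert-space inner product on $\cH_v$ all need to combine to produce exactly $D_\cC$ per internal plaquette and no leftover factors on boundary edges. The cleanest approach is to separate contributions from internal versus boundary edges at each inductive step, and use equation \eqref{eq:Fusion} to collapse internal loops one at a time, rather than attempting to evaluate the whole skein-theoretic partition function in one blow.
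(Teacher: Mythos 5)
The paper does not actually prove this lemma---it defers to \cite{MR3204497} and to \cite[Thm.~2.9]{2305.14068}---so your proposal has to be judged on its own terms. Your overall route (relate each $B_p$ to a local evaluate-and-reexpand operator via the formula for $C(\xi',\xi)$, and induct over plaquettes using the fusion relation \eqref{eq:Fusion} to extract one factor of $D_\cC$ per internal plaquette) is indeed the standard computation behind the cited proof, and your derivation of the second claim from the first together with $\eval\circ\eval^\dag = D_\cC^{\#p}\id_{\cS_\cC(\#\partial\Lambda)}$ is correct.

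The gap is in your Step 3, in the containment $\operatorname{range}(p^B_\Lambda)\subseteq\operatorname{range}(\eval^\dag)$. You assert that any vector in the common $+1$-eigenspace of the $B_p$'s is reconstructed from its skein evaluation ``essentially as a consequence of the first step together with $B_p$-invariance,'' but the three facts you will have established at that point---surjectivity of $\eval$ on $p^A_\Lambda\bigotimes_{v\in\Lambda}\cH_v$, the identity $\eval\eval^\dag = D_\cC^{\#p}\id$, and $B_p\eval^\dag=\eval^\dag$---say nothing about how the $B_p$ act on $\ker\eval$. Abstractly they are compatible with the common fixed space being strictly larger than $\operatorname{range}(\eval^\dag)$; what you need is precisely that $x\in p^A_\Lambda\bigotimes_v\cH_v$ with $B_px=x$ for all $p$ and $\eval x=0$ forces $x=0$, i.e.\ that $\prod_p B_p$ annihilates $\ker\eval$, and this requires its own computation (already in the two-plaquette case, writing $x=B_{p_1}B_{p_2}x$ and trying to conclude runs into exactly the merging of local evaluations that you have not yet controlled). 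The clean repair is to strengthen your induction so that it proves the operator identity $\prod_{p\subset\Lambda}B_p = D_\cC^{-\#p}\,\eval^\dag\circ\eval$ directly, merging one plaquette at a time into the already-evaluated region: the base case is your rewriting of $C(\xi',\xi)$ as $B_p=D_\cC^{-1}\eval_p^\dag\eval_p$, and the inductive step is the same gluing/normalization bookkeeping you describe for computing $\eval\eval^\dag$. This subsumes both range containments, yields $\eval\eval^\dag = D_\cC^{\#p}\id$ as a corollary, and matches the argument in the cited reference.
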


The article \cite{MR4814692} studied the 1D net of algebras on $\bbZ$ given by $\fF(I):= \End_\cC(X^{\# I})$ where $X:= \bigoplus_{c\in \Irr(\cC)} c$.

\begin{defn}[Boundary action of $\fF(I)=\End_\cC(X^{\#I})$ on $\fA(\Lambda)$]
\label{defn:BoundaryAction}
Suppose $\Lambda \subset \bbZ^2$ is a rectangle, $I\subset \partial \Lambda$ is an interval, and $\varphi \in \End_\cC(X^{\# I})$.
Then $\varphi$ defines an operator $\Gamma_\varphi\in\fA(\Lambda)$ 
given by 
first applying $p_\Lambda$ and then gluing $\varphi$ into the $I$-boundary on the outer edge of $\Lambda$.
If $I$ is on the top or right side of $\Lambda$, then the map $\Gamma: \End_\cC(X^{\#I}) \to p_\Lambda\fA(\Lambda)p_\Lambda$ is a $*$-algebra map; if $I$ is on the bottom or left side of $\Lambda$, then $\Gamma$ is a $*$-anti-algebra map.
We must also multiply by a ratio of 4th roots of quantum dimensions in order to make this map a $*$-(anti-)algebra map.

We give a graphical example below with $\#I=3$, where we assume $I$ is at the top of $\Lambda$ and we suppress $\Lambda\setminus I$
from the picture. 
If $\varphi \in \cC(r\otimes s\otimes t\to x\otimes y\otimes z)\subset \End(X^3)$, then $\Gamma_\varphi$ is the map
\begin{align*}
\tikzmath{
\draw (-.5,0) node[left]{$\scriptstyle a$} -- (.5,0) node[right]{$\scriptstyle d$};
\draw (0,-.5) node[below]{$\scriptstyle b$} -- (0,.5) node[above]{$\scriptstyle c$};
\node at (.2,-.2) {$\scriptstyle \eta$};
}
\otimes
\tikzmath{
\draw (-.5,0) node[left]{$\scriptstyle e$} -- (.5,0) node[right]{$\scriptstyle h$};
\draw (0,-.5) node[below]{$\scriptstyle f$} -- (0,.5) node[above]{$\scriptstyle g$};
\node at (.2,-.2) {$\scriptstyle \xi$};
}
\otimes
\tikzmath{
\draw (-.5,0) node[left]{$\scriptstyle i$} -- (.5,0) node[right]{$\scriptstyle \ell$};
\draw (0,-.5) node[below]{$\scriptstyle j$} -- (0,.5) node[above]{$\scriptstyle k$};
\node at (.2,-.2) {$\scriptstyle \zeta$};
}
&\overset{p_\Lambda}{\longmapsto}
\delta_{d=e}\delta_{h=i}
\tikzmath{
\draw (-.5,0) node[left]{$\scriptstyle a$} -- (0,0) --node[below]{$\scriptstyle d$} (1,0) --node[below]{$\scriptstyle h$} (2,0) -- (2.5,0) node[right]{$\scriptstyle \ell$};
\draw (0,-.5) node[below]{$\scriptstyle b$} -- (0,0) --node[left]{$\scriptstyle c$} (0,.5);
\draw (1,-.5) node[below]{$\scriptstyle f$} -- (1,0) --node[left]{$\scriptstyle g$} (1,.5);
\draw (2,-.5) node[below]{$\scriptstyle j$} -- (2,0) --node[left]{$\scriptstyle k$} (2,.5);
\node at (.2,-.2) {$\scriptstyle \eta$};
\node at (1.2,-.2) {$\scriptstyle \xi$};
\node at (2.2,-.2) {$\scriptstyle \zeta$};
}
\\&\longmapsto
\delta_{d=e}\delta_{h=i}
\delta_{c=r}\delta_{g=s}\delta_{k=t}
\left(
\frac{d_xd_yd_z}{d_rd_sd_t}
\right)^{1/4}
\tikzmath{
\draw (-.5,0) node[left]{$\scriptstyle a$} -- (0,0) --node[below]{$\scriptstyle d$} (1,0) --node[below]{$\scriptstyle h$} (2,0) -- (2.5,0) node[right]{$\scriptstyle \ell$};
\draw (0,-.5) node[below]{$\scriptstyle b$} -- (0,0) --node[left]{$\scriptstyle c$} (0,.5) -- (0,1.5) node[above]{$\scriptstyle x$};
\draw (1,-.5) node[below]{$\scriptstyle f$} -- (1,0) --node[left]{$\scriptstyle g$} (1,.5) -- (1,1.5) node[above]{$\scriptstyle y$};
\draw (2,-.5) node[below]{$\scriptstyle j$} -- (2,0) --node[left]{$\scriptstyle k$} (2,.5) -- (2,1.5) node[above]{$\scriptstyle z$};
\node at (.2,-.2) {$\scriptstyle \eta$};
\node at (1.2,-.2) {$\scriptstyle \xi$};
\node at (2.2,-.2) {$\scriptstyle \zeta$};
\roundNbox{fill=white}{(1,.8)}{.3}{1}{1}{$\varphi$}
}
\end{align*}
This final picture must be interpreted as a vector in $\bigotimes_{v\in I}\cH_v$ by decomposing into simples in the usual way.
\end{defn}

\begin{lem}
\label{lem:GammaInjective}
Suppose $\Lambda$ is a rectangle and $I\subset \partial \Lambda$ is an interval which is on the top or right side of $\Lambda$.
Whenever $\Lambda \subseteq \Delta$ with $I\subset \partial \Lambda \cap \partial \Delta$ (compare with Remark \ref{rem:StrategyForExamples}),
$[\Gamma_\varphi,p_\Delta]=0$ and $\Gamma_\varphi p_\Delta=0$ implies $\varphi=0$.
In particular, the map $\Gamma: \fF(I)\to p_\Lambda \fA(\Lambda)p_\Lambda$ given by $\varphi\mapsto \Gamma_\varphi$ is injective.
\end{lem}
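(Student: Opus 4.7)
The plan is to exploit the skein-module description from Lemma~\ref{lem:KongSkeinModule}, viewing both $p_\Lambda\bigotimes_{v\in\Lambda}\cH_v$ and $p_\Delta\bigotimes_{v\in\Delta}\cH_v$ as skein modules on which gluing $\varphi$ along $I$ is a well-defined operation, and interpreting $\Gamma_\varphi$ diagrammatically as this gluing.

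For the commutation $[\Gamma_\varphi,p_\Delta]=0$, I would first observe that $\Gamma_\varphi$ commutes with $p_\Lambda$: by construction $\Gamma_\varphi=\Gamma_\varphi p_\Lambda$, and since gluing $\varphi$ to a skein diagram in $\Lambda$ yields another skein diagram in $\Lambda$ (only the $I$-boundary labels change, according to $\varphi$), the image of $\Gamma_\varphi$ lies in $\mathrm{range}(p_\Lambda)$, giving also $p_\Lambda\Gamma_\varphi=\Gamma_\varphi$. To upgrade to commutation with $p_\Delta$, I would show $\Gamma_\varphi$ preserves $\mathrm{range}(p_\Delta)$. Under the unitary $U_\Delta=D_\cC^{-\#p/2}\eval$, the range of $p_\Delta$ is identified with $\cS_\cC(\#\partial\Delta)$, and since $I\subset\partial\Lambda\cap\partial\Delta$ the $I$-boundary points sit on $\partial\Delta$ as well; gluing $\varphi$ along $I$ is therefore a well-defined endomorphism of $\cS_\cC(\#\partial\Delta)$ which adjusts only the $I$-labels. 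The extra $A_\ell$ and $B_p$ projectors lowering $p_\Lambda$ to $p_\Delta$ are localized on sides of $\partial\Lambda$ disjoint from $I$ (the $I$-side is already on $\partial\Delta$), so they are unaffected by changing $I$-labels; this locality is what identifies the two skein-module descriptions of the gluing. Consequently $\Gamma_\varphi\,\mathrm{range}(p_\Delta)\subseteq\mathrm{range}(p_\Delta)$, giving $p_\Delta\Gamma_\varphi p_\Delta=\Gamma_\varphi p_\Delta$. Since $\Gamma_\varphi^*=\Gamma_{\varphi^*}$ (as $\Gamma$ is a $*$-algebra map for $I$ on the top or right), applying the same argument to $\varphi^*$ yields $p_\Delta\Gamma_\varphi p_\Delta=p_\Delta\Gamma_\varphi$, and commutation follows.

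For the injectivity statement, suppose $\Gamma_\varphi p_\Delta=0$. By the preceding, $\Gamma_\varphi$ restricted to $\mathrm{range}(p_\Delta)\cong\cS_\cC(\#\partial\Delta)$ is the operator of gluing $\varphi$ along $I$, so the hypothesis says $\varphi\circ_I\tilde\psi=0$ for every $\tilde\psi\in\cS_\cC(\#\partial\Delta)$. Decompose $\varphi=\bigoplus_{\vec c,\vec d}\varphi_{\vec c,\vec d}$ with $\varphi_{\vec c,\vec d}\in\cC(c_1\otimes\cdots\otimes c_k\to d_1\otimes\cdots\otimes d_k)$ under $X=\bigoplus_{c\in\Irr(\cC)}c$. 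For each simple tuple $\vec c$, I would construct a non-zero skein element $\tilde\psi_{\vec c}\in\cS_\cC(\#\partial\Delta)$ with $I$-labels $\vec c$ and dual labels $c_i^*$ placed at chosen positions on $\partial\Delta\setminus I$, connected by a tensor product of coevaluations. Then $\varphi_{\vec c,\vec d}\circ_I\tilde\psi_{\vec c}$ corresponds, under the rigid-duality adjunction, to $\varphi_{\vec c,\vec d}$ itself, and hence vanishes iff $\varphi_{\vec c,\vec d}=0$. Ranging over $\vec c,\vec d$ forces $\varphi=0$. Injectivity of $\Gamma:\fF(I)\to p_\Lambda\fA(\Lambda)p_\Lambda$ is then the special case $\Delta=\Lambda$, since $\Gamma_\varphi=\Gamma_\varphi p_\Lambda$.

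The main obstacle I expect is rigorously justifying that $\Gamma_\varphi$ restricted to $\mathrm{range}(p_\Delta)$ really equals the gluing-along-$I$ operator computed in the $\Delta$-skein picture. This compatibility between the $U_\Lambda$- and $U_\Delta$-identifications is geometrically clear from locality of the gluing near $I$, but turning it into a clean statement requires a careful diagrammatic check, most naturally carried out by extending a $\Delta$-skein representative through $\Delta\setminus\Lambda$ and observing that the gluing only interacts with the $I$-portion.
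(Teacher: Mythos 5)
Your proposal is correct and follows essentially the same route as the paper: identify $p_\Delta\bigotimes_{v\in\Delta}\cH_v$ with the skein module $\cS_\cC(\#\partial\Delta)$ via Lemma~\ref{lem:KongSkeinModule}, observe that $\Gamma_\varphi$ acts there by gluing $\varphi$ along $I\subset\partial\Delta$ and hence (together with its adjoint $\Gamma_{\varphi^\dag}$) preserves the range of $p_\Delta$, giving $[\Gamma_\varphi,p_\Delta]=0$, and then deduce $\Gamma_\varphi p_\Delta=0\Rightarrow\varphi=0$ from the nondegeneracy of the skein-module inner product. Your explicit coevaluation test vectors simply unpack the paper's one-line appeal to positive definiteness, and your final reduction of injectivity of $\Gamma$ to the case $\Delta=\Lambda$ matches the paper's closing step.
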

\begin{proof}
By Lemma \ref{lem:KongSkeinModule},
$p_\Delta\bigotimes_{v\in \Delta} \cH_v$ is unitarily isomorphic to $\cS_\cC(\#\partial \Delta)$, and $\Gamma_\varphi$ acts by gluing $\varphi$ onto the sites in $I\subset \partial \Delta$.
Clearly $\Gamma_\varphi$ preserves $p_\Delta\bigotimes_{v\in \Delta} \cH_v$, so $[\Gamma_\varphi, p_\Delta]=0$.
That $\Gamma_\varphi p_\Delta=0$ implies $\varphi=0$ is easily verified using the positive definite skein module inner product.
Finally, if $\Gamma_\varphi p_\Lambda=0$, then $\Gamma_\varphi p_\Delta=0$, so $\varphi=0$.
\end{proof}

The first part of the next theorem shows that the Levin-Wen string net model satisfies \ref{LTO:Hastings}, which implies (TQO1) and (TQO2) of \cite{MR2742836} by Proposition \ref{prop:LTQO}.
We supply a short conceptual proof using (an algebra Morita equivalent to) the tube algebra \cite{MR1782145,MR1832764,MR1966525}.
The axioms (TQO1) and (TQO2) for the Levin-Wen model were originally proven in \cite{QIU2020168318}.\footnote{The article  \cite{QIU2020168318} proves the modified (TQO2) axiom from \cite{MR2842961}, which is implied by the (TQO2) condition from \cite{MR2742836} by \cite[Cor.~2.1]{MR2742836}.}

\begin{thm}
\label{thm:LTQO}
The Levin-Wen string net model satisfies \ref{LTO:Hastings}--\ref{LTO:Injective} with $s=1$. 
\end{thm}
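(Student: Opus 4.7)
The plan is to apply the strategy of Remark \ref{rem:StrategyForExamples}, with the candidate boundary algebra $\fC(I):=\Gamma_I(\fF(I))\subseteq p_{\Lambda_I}\fA(\Lambda_I)p_{\Lambda_I}$, where $\fF(I)=\End_\cC(X^{\#I})$ with $X=\bigoplus_{c\in\Irr(\cC)}c$ and $\Gamma_I$ is the boundary action of Definition \ref{defn:BoundaryAction}. Lemma \ref{lem:GammaInjective} already supplies the commutation $[\fC(I),p_\Lambda]=0$ for $\Lambda\supseteq\Lambda_I$ with $\partial\Lambda_I\cap\partial\Lambda=I$ and the injectivity $\Gamma_\varphi p_\Delta=0\Rightarrow\varphi=0$; together these deliver \ref{LTO:Injective} and the injective half of \ref{LTO:Surjective}. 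What remains is the pair of algebraic identities
\[
p_\Delta\fA(\Lambda)p_\Delta=\bbC p_\Delta\quad(\Lambda\ll_1\Delta),\qquad p_\Delta\fA(\Lambda)p_\Delta=\fC(I)p_\Delta\quad(\Lambda\Subset_1\Delta,\ \partial\Lambda\cap\partial\Delta=I),
\]
corresponding to \ref{LTO:Hastings} and \ref{LTO:Boundary}, respectively.

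The uniform tool for both identities is the skein-module isomorphism $p_\Delta\bigotimes_{v\in\Delta}\cH_v\cong\cS_\cC(\#\partial\Delta)$ of Lemma \ref{lem:KongSkeinModule}. Under this identification, $p_\Delta xp_\Delta$ for $x\in\fA(\Lambda)$ becomes an endomorphism of $\cS_\cC(\#\partial\Delta)$ computed by inserting $x$ into the interior of the skein disk and re-evaluating. The containment $\fC(I)p_\Delta\subseteq p_\Delta\fA(\Lambda)p_\Delta$ in the boundary case is immediate from $\fC(I)\subseteq\fA(\Lambda_I)\subseteq\fA(\Lambda)$ together with $[\Gamma_\varphi,p_\Delta]=0$. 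For the reverse inclusion it suffices to treat $x$ as a single matrix unit between product basis vectors of $\bigotimes_{v\in\Lambda}\cH_v$, reducing $p_\Delta xp_\Delta$ to a skein diagram supported in $\Lambda$ surrounded by the $B_p$ projectors for $p\subset\Delta\setminus\Lambda$. Using the explicit formula for $B_p$ together with the fusion relation \eqref{eq:Fusion}, the interior contribution collapses to either a scalar (in the bulk case $\Lambda\ll_1\Delta$, since the internal diagram closes in a disk with no external labels) or to a single morphism glued onto the boundary interval $I\subset\partial\Delta$ (in the surface case), manifestly of the form $\Gamma_\varphi p_\Delta$ for some $\varphi\in\fF(I)$.

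The main obstacle is the bookkeeping in the boundary case: one must show that the skein reduction produces a morphism $\varphi\in\End_\cC(X^{\#I})$ depending only on $I$, not on the ambient $\Delta$ or the choice of matrix-unit decomposition of $x$. The $s=1$ surrounding constant is essential here, as it ensures that every plaquette $p\subset\Lambda$ meeting the bulk of $\Delta$ is fully encircled by $B_{p'}$ factors appearing in $p_\Delta$, while only boundary-adjacent plaquettes contribute non-trivially to $\varphi$ after the fusion reduction. Once this is established, \ref{LTO:Hastings} and \ref{LTO:Boundary} follow directly, and then Lemma \ref{lem:EIndependentOfDelta} together with Remark \ref{rem:StrategyForExamples} yields \ref{LTO:Surjective} and the remaining content of \ref{LTO:Injective}.
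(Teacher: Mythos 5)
You correctly identify the right machinery — Lemma \ref{lem:KongSkeinModule} (the skein module isomorphism), Definition \ref{defn:BoundaryAction} (the boundary action $\Gamma$), and Lemma \ref{lem:GammaInjective} (injectivity) — and your claims about what follows from these for \ref{LTO:Injective}, \ref{LTO:Surjective}, and the easy inclusion $\fC(I)p_\Delta\subseteq p_\Delta\fA(\Lambda)p_\Delta$ are correct. However, there is a genuine gap in the two hard steps, which you file under ``bookkeeping'' but which are actually the conceptual core of the proof.

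The issue is that $x\in\fA(\Lambda)$ is an arbitrary operator (matrix unit) on $\bigotimes_{v\in\Lambda}\cH_v$, not a skein diagram. Under the identification $p_\Delta\bigotimes\cH_v\cong\cS_\cC(\#\partial\Delta)$, the compression $p_\Delta x p_\Delta$ becomes \emph{some} endomorphism of the skein module, but your plan to ``insert $x$ into the skein disk and re-evaluate using the $B_p$ formula and the fusion relation'' does not by itself show that this endomorphism is a scalar (bulk case) or of the form $\Gamma_\varphi$ (boundary case). The diagrammatic contraction you sketch shows, at best, that the compressed operator is supported near the boundary, but it does not exhibit the form of the operator, and there is no closed skein diagram evaluating to a number ``automatically'' — the map $x\mapsto\eval(p^A_\Delta xp^A_\Delta)\eval^\dag$ has a nontrivial image in $\End(\cS_\cC(\#\partial\Delta))$ that must be controlled. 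Similarly, your claim that ``only boundary-adjacent plaquettes contribute non-trivially to $\varphi$'' is the conclusion, not an observation that falls out of the $B_p$ formula; all plaquettes in $\Delta$ appear in $p_\Delta$.

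What the paper supplies at exactly this point is a pair of representation-theoretic facts that replace your direct computation. For \ref{LTO:Hastings}: the skein module $\cS_\cC(\#\partial\Delta)$ carries an \emph{irreducible} action of the tube algebra $\operatorname{Tube}_\cC(\partial\Delta)$, and since $x$ is supported away from $\partial\Delta$ the compressed operator commutes with this action — so Schur's lemma forces it to be a scalar. For \ref{LTO:Boundary}: $\cS_\cC(\#\partial\Delta)\cong\cC(X^{\#\partial\Delta\setminus I}\to X^{\#I})$ is an \emph{invertible} $\End_\cC(X^{\#I})$–$\End_\cC(X^{\#\partial\Delta\setminus I})$ bimodule (Yoneda), and the compressed operator commutes with the right factor (since $x$ acts trivially on the $\cH_v$ for $v\in\partial\Delta\setminus I$), hence lies in the left factor, i.e. is $\Gamma_\varphi$ for a unique $\varphi\in\End_\cC(X^{\#I})$. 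Without identifying irreducibility of the tube-algebra module and invertibility of the bimodule (or reproving them implicitly by an explicit, long calculation), the ``fusion-relation + $B_p$ formula'' reduction does not close the argument. You should make these commutant arguments explicit rather than relegating them to bookkeeping.
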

\begin{proof}
\item[\underline{\ref{LTO:Hastings}:}]
Suppose $\Lambda \ll_1 \Delta$.
We define the $\partial \Delta$-\emph{tube algebra}  $\operatorname{Tube}_\cC(\partial \Delta)$ with internal and external boundaries given by $\partial \Delta$:
$$
\tikzmath[scale=.5]{
\draw[step=1.0,black,thin] (0.5,0.5) grid (6.5,6.5);
\draw[thick, blue, rounded corners=5pt] (1.5,3.5) -- (1.5,5.5) -- (5.5,5.5) -- (5.5,1.5) -- (1.5,1.5) -- (1.5,3.5);
\node[blue] at (3.5,3.5) {$\scriptstyle \Delta$};
}
\qquad\qquad
\operatorname{Tube}_\cC(\partial \Delta)
=
\left\{\,
\tikzmath[scale=.5]{
\filldraw[thick, blue, fill=white, rounded corners=5pt] (1.5,3.5) -- (1.5,5.5) -- (5.5,5.5) -- (5.5,1.5) -- (1.5,1.5) -- (1.5,3.5);
\node[blue] at (3.5,3.5) {$\scriptstyle \Delta$};
\draw[thick, blue, rounded corners=5pt] (-.5,2.5) -- (-.5,7.5) -- (7.5,7.5) -- (7.5,-.5) -- (-.5, -.5) -- (-.5,2.5);
\draw[rounded corners=5pt] (.5,2.5) -- (.5,6.5) --  (6.5,6.5) -- (6.5,.5) -- (.5,.5) -- (.5,2.5);
\draw (1.5,2) -- (-.5,2);
\draw (1.5,3) -- (-.5,3);
\draw (1.5,4) -- (-.5,4);
\draw (1.5, 5) -- (-.5, 5);
\draw (2,5.5) -- (2,7.5);
\draw (3,5.5) -- (3,7.5);
\draw (4,5.5) -- (4,7.5);
\draw (5,5.5) -- (5,7.5);
\draw (5.5,2) -- (7.5,2);
\draw (5.5,3) -- (7.5,3);
\draw (5.5,4) -- (7.5,4);
\draw (5.5, 5) -- (7.5, 5);
\draw (2,1.5) -- (2,-.5);
\draw (3,1.5) -- (3,-.5);
\draw (4,1.5) -- (4,-.5);
\draw (5,1.5) -- (5,-.5);
} 
\,\right\}.
$$
Observe that $p^A_\Delta\bigotimes_{v\in \Delta} \cH_v$ carries a $\operatorname{Tube}_\cC(\partial \Delta)$-action where we resolve the $\cC$-morphism from the annulus into the outer-most vertex spaces.
(This action is similar to \cite[Eq.~(10)]{2305.14068}.)
Since $\Lambda \ll \Delta$, 
every $p_\Delta^A x p_\Delta^A$ commutes with this $\operatorname{Tube}_\cC(\partial \Delta)$-action for $x\in \fA(\Lambda)$.

Now consider the evaluation map into the skein module 
$\eval: p^A_\Delta\bigotimes_{v\in \Delta}\cH_v\to \cS(\#\partial \Delta)$.
Observe that $\cS(\#\partial \Delta)$ also carries a $\operatorname{Tube}_\cC(\partial \Delta)$-action, and the evaluation map clearly intertwines the actions.
We also observe that $\cS(\#\partial \Delta)$ is an irreducible $\operatorname{Tube}_\cC(\partial \Delta)$-module, as all rank-one operators on $\cS(\#\partial \Delta)$ can be realized by operators in an annulus with no $\cC$-strings going around the annulus.
By Lemma \ref{lem:KongSkeinModule}, for any $x\in \fA(\Lambda)$,
$$
p_\Delta
x
p_\Delta
=
p^A_\Delta p^B_\Delta  p^A_\Delta x p^A_\Delta p^B_\Delta p^A_\Delta
=
D_\cC^{-2\#p\in \Delta}
p^A_\Delta
\eval^\dag 
\underbrace{
\eval
(p^A_\Delta x p^A_\Delta)
\eval^\dag 
}_{\in \End_{\operatorname{Tube}_\cC(\partial \Delta)}(\cS(\#\partial \Delta))\cong \bbC}
\eval
p^A_\Delta.
$$
This means that the compression of $x$ above lies in
$$
\bbC 
(\eval^\dag \eval)
\underset{\text{(Lem.~\ref{lem:KongSkeinModule})}}{=}
\bbC
p^B_\Delta
$$
acting on $p^A_\Delta \bigotimes_{v\in \Delta} \cH_v$.
Thus $p_\Delta \fA(\Lambda)p_\Delta = \bbC p_\Delta$.

\item[\underline{\ref{LTO:Boundary}:}]
Suppose $\Lambda \Subset_1 \Delta$ with $\partial \Lambda \cap \partial \Delta =I\neq \emptyset$.
Without loss of generality, we may assume $I$ is at the top or right of $\Lambda$.
We claim that $p_\Delta \fA(\Lambda)p_\Delta =\fF(I)p_\Delta$.
This will prove \ref{LTO:Boundary} as
$$
\fF(I)p_\Delta 
\underset{\text{(Lem.~\ref{lem:GammaInjective})}}{\subseteq}
\fB(\Lambda\Subset_1 \Delta)
\subseteq
p_\Delta \fA(\Lambda)p_\Delta
=
\fF(I)p_\Delta.
$$

It remains to prove $p_\Delta \fA(\Lambda)p_\Delta \subseteq \fF(I)p_\Delta$.
By Lemma \ref{lem:KongSkeinModule},
$p_\Delta \bigotimes_{v\in \Delta} \cH_v \cong \cS(\#\partial \Delta)$, which can be identified with
$$
\bigoplus_{\vec{c}_{\partial\Delta\setminus I},\vec{c}_{ I}}\cC(\vec{c}_{\partial\Delta\setminus I} \to \vec{c}_I)
\cong
\cC(X^{\#\partial\Delta\setminus I} \to X^{\# I})
$$
with the skein module inner product,
where 
$\vec{c}_{\partial\Delta\setminus I}= c_{i_1}\cdots c_{i_m}$
is a tensor product of simples in $\cC$ over the sites of $\partial\Delta\setminus I$, and
$\vec{c}_I = c_{j_1}\cdots c_{j_n}$ 
is a tensor product of simples over the sites of $I$.
Observe that this space is canonically an invertible 
$\End_\cC(X^{\# I}) - \End_\cC(X^{\# \partial\Delta\setminus I})$
bimodule, i.e., these algebras are each other's commutants by the Yoneda Lemma.

Now given $x\in \fA(\Lambda)$,
$p_\Delta xp_\Delta$ acts as the identity on all tensorands $\cH_v$ with $v\in \partial \Delta \setminus I$.
Thus the action of $p_\Delta ap_\Delta$ commutes with the right action of $\End_\cC(X^{\# \partial\Delta\setminus I})$ transported back to $p_\Delta \bigotimes_{v\in \Delta} \cH_v$ from the skein module.
We conclude that there is a $\varphi \in \End_\cC(X^{\# I})$ such that $p_\Delta xp_\Delta = \Gamma_{\varphi}p_\Delta$.

\item[\underline{\ref{LTO:Surjective}:}]
Suppose $\Lambda_1 \subset \Lambda_2 \Subset_1 \Delta$ with $\partial \Lambda_1 \cap \partial \Delta = \partial \Lambda_2
\cap \partial \Delta =I\neq \emptyset$.
Without loss of generality, we may assume $I$ is at the top or right of $\Lambda_i$ for $i=1,2$.
By the proof of \ref{LTO:Boundary} above, 
$$
\fB(\Lambda_1 \Subset_1 \Delta) 
= 
\fF(I)p_{\Delta}
=
\fB(\Lambda_2 \Subset_1 \Delta).
$$

\item[\underline{\ref{LTO:Injective}:}]
Suppose $\Lambda \Subset_1 \Delta_1\subset \Delta_2$ with $\partial \Lambda \cap \partial \Delta_1 = \partial \Lambda
\cap \partial \Delta_2 =I\neq \emptyset$.
Without loss of generality, we may assume $I$ is at the top or right of $\Lambda$.
We saw in the proof of \ref{LTO:Boundary} above that
$\fB(\Lambda \Subset_1 \Delta_1) = 
\fF(I)p_{\Delta_1}$.
By Lemma \ref{lem:GammaInjective} applied to $\Lambda \Subset_1 \Delta_2$, 
$\Gamma_\varphi p_{\Delta_2}=0$ implies $\varphi=0$, so $\Gamma_\varphi p_{\Delta_1}=0$.
\end{proof}

\begin{rem}
\label{rem:IdentifyBoundaryLW}
It follows immediately from the proof of \ref{LTO:Boundary} above that
$\fB(I)$ is isomorphic to $\fF(I)=\End_\cC(X^{\# I})$ depending on if we chose $\bbH$ to be on the left or bottom of the boundary $\cK$; otherwise, we would obtain $\fF(I)^{\op}=\End_{\cC^{\op}}(X^{\# I})$.
(Here, $\bbH$ and $\cK$ are chosen as in \S\ref{sec:BoundaryNet}.)
\end{rem}

\begin{rem}
\label{rem:FiniteDepthQuantumCircuit}
The most common operational definition of an equivalence between topologically ordered states is a finite depth quantum circuit which takes one state to the other \cite{PhysRevB.82.155138}. 
In particular, the articles \cite{PhysRevB.80.155136,0907.3724} mapped between Kitaev's quantum double model for a finite group $G$ \cite{MR1951039}
and the Levin-Wen string net model for $\Rep(G)$.
The recent article \cite{PhysRevB.105.085130} constructs a finite depth quantum circuit mapping between the Levin-Wen string net models for arbitrary Morita equivalent UFCs.

However, it is important to note that even though finite depth quantum circuits should preserve topological order of the bulk (as characterized by, say, topological string operators), they will not necessarily preserve the boundary algebra $\fB$.
This is because finite depth circuits will generally not intertwine the local ground state spaces, and thus will not naturally map boundary algebras to boundary algebras without explicitly requiring this.
Indeed, we now give an explicit example of two Levin-Wen string nets with the same topological order but non-isomorphic boundary quasi-local algebras. 

The boundary quasi-local algebra for the Levin-Wen model constructed from $\fdHilb(S_{3})$ has a Bratteli diagram with $6$ vertices at each level, 
and in the first level, each vertex corresponds to a copy of $\bbC$.
Going from one level to the next corresponds to alternately tensoring on the left and right by the direct sum of all simple objects, which yields a complete graph between each consecutive layer. 
The resulting AF-algebra is isomorphic to the UHF algebra $M_{6^{\infty}}$, the infinite tensor product of $M_6(\bbC)$. 
The pairing with the unique trace gives an order isomorphic from the $K_{0}$ group to the subgroup of $\bbR$ of `$6$-adic' rationals, namely the additive group of the ring $\bbZ[\frac{1}{6}]$. 
In particular, the pairing of the unique trace with $K_{0}$ is injective.

The boundary quasi-local algebra for the Levin-Wen model constructed from $\Rep(S_{3})$ has a Bratteli diagram with $3$ vertices at each level (corresponding to the isomorphism classes) of irreducible representations. 
Let $\rho$ denote the 2-dimensional irrep. 
Note that $\rho^{2}$ is the sum of all simples, each with multiplicity $1$. 
Therefore, the Bratelli diagram is given by tensoring on the left and right (alternatively) by $\rho^{2}$, which is isomorphic to just repeatedly tensoring  on the right with $\rho^{2}$ since the category is symmetric. 
The resulting diagram is just a coarse-graining of the AF-algebra constructed by taking tensor powers of the simple object $\rho$ itself. 
By \cite[Ex.~4.1]{MR4143013}, this AF-algebra has $K_{0}$ group $\bbZ[t]/\langle 1- t-2t^{2} \rangle$ with positive cone given by $\set{[p(t)]}{ p(\frac{1}{2})>0}\cup \{0\}$. 
Since the Bratteli diagram is connected and stationary, there is a unique trace on $K_{0}$, which pairs with $K$-theory by evaluating a class of polynomial at $\frac{1}{2}$ (which is well-defined since $\frac{1}{2}$ is a root of $1- t-2t^{2}$). 
This map is not injective on $K_{0}$ (or in other words, this group has infinitesimals), since $[2t-1]$ is in the kernel of this map, but it is non-zero on the ring since $1-t-2t^{2}$ cannot divide $2t-1$. 
In particular, this implies this AF-algebra cannot be UHF, hence not isomorphic to the boundary quasi-local algebra for $\fdHilb(S_{3})$.
\end{rem}

\section{Boundary states and applications to cone algebras}
\label{sec:BoundaryStates}

Suppose $(\fA,p)$ is a translation-invariant net of algebras and net of projections satisfying \ref{LTO:Hastings}--\ref{LTO:Injective}.
In this section, we study states on the boundary net $\fB$ for our examples.
The states necessarily extend to states on the half-plane algebra $\fA_\bbH$ which look like the ground state $\psi$ locally on the bulk of $\bbH$ away from the boundary.
We begin by analyzing the canonical boundary state from Example \ref{ex:BoundaryState} for our examples.
We then show it is a KMS state for Levin-Wen models, and we use it to study the \emph{cone algebras} in the case that $\fA$ is a quantum spin system, making connections to \cite{MR2804555, MR3426207, MR4362722}.
Finally, we study other canonical boundary states on Levin-Wen models associated to Q-systems.

\subsection{Toric Code boundary states}
\label{sec:ToricCodeBoundaryStates}

Here, we give three states on the boundary net of the Toric Code.

\begin{ex}
It is straightforward to compute the canonical state $\psi_\fB$ on the boundary net $\fB$, which we recall is isomorphic to the nets $\fC$ and $\fD$.
By Lemma \ref{lem:AbstractBoundaryAlgebra}, we see that each $\fC(I)$ is linearly spanned by monomials in the Pauli operators supported on $\widetilde{I}$, and similarly for $\fD(J)$.
Observe further that every such monomial in this canonical basis which is not the identity monomial anti-commutes with some star $A_s$ or plaquette operator in a large enough region $\Delta$ containing $\widetilde{I}$.
Since $\psi(x)p_\Delta = p_\Delta xp_\Delta$ for  $x\in \fC(I)$ and $\Delta$ sufficiently large, arguing as in Algorithm \ref{alg:PauliReduction}, we see that $\psi_\fB(x)$ is exactly the coefficient of the identity monomial.
We have not only determined $\psi_\fB$, but we have also shown it is a normalized \emph{trace} on $\fB$ satisfying $\psi_\fB(ab)=\psi_\fB(ba)$ for all $a,b\in \fB(I)$.
Under the isomorphism $\fB\cong\fF$ to the fusion categorical net for $\fdHilb(\bbZ/2)$ from Example \ref{ex:TensorCat} with $X=1\oplus g$,
$\psi_\fB$ corresponds to the unique \emph{Markov trace} \cite[\S2.7]{MR999799} on $\fF$ given by
$\tr_\fF(x) = d_X^{-n} \tr_\cC(\varphi)$ for $\varphi\in \End_\cC(X^n)$.
\end{ex}

In the example below, we discuss two other canonical states on the Toric Code boundary net $\fB$.
These states translate directly to canonical boundary states on fusion categorical nets discussed in \S\ref{sec:AdditionalBoundaryStates} below.

\begin{ex}
We use the diagrammatic description of the boundary net $\fB\cong \fD'$ from Remark \ref{rem:C(I)ParityPreservation} and Construction \ref{construction:ToricCodeInjective}.
When $\#I=n+1$, the $*$-algebra $\fB(I)\cong \fD'(I)$ is generated by the operators $\sigma^Z_i$ acting at site $i\in \{1,\dots, n{+}1\}$ and $\sigma^X_j \sigma^X_{j+1}$ acting at sites $j$ and $j+1$ for $j\in \{1,\dots, n\}$.
We get a pure state $\phi_Z^I=\langle \xi_Z^I |\,\cdot\, |\xi_Z^I\rangle$ on $\fD'(I)$ by choosing the product state vector $|\xi^I_Z\rangle=\bigotimes_{i=1}^{n+1}|0\rangle \in \bigotimes_{i=1}^{n+1} \bbC^2$, which lives in the $+1$ eigenspace for each $\sigma^Z_i$.
When $I\subset J$, $\fD'(I)$ includes into $\fD'(J)$ by tensoring with $I$, so $\phi^J_Z|_{\fD'(I)}=\phi^I_Z$.
We define $\phi_Z:=\varinjlim \phi^I_Z$.

We will now show that the $e$ particle condenses at this boundary, so it corresponds to the `rough boundary' in the sense of \cite{MR2942952}.  
Note that $e$ particles correspond to violations of $A_s$ terms in the Hamiltonian.
Thus two $e$ particles are created by applying a single $\sigma^Z$ operator.
Since $|\xi_Z^I \rangle$ is a $+1$ eigenvector for all $\sigma^Z$ operators, this state absorbs $\sigma^Z$, so it condenses the $e$ particle.  

Similarly, we can define an inductive limit state $\phi_X:=\varinjlim \phi^I_X$ where $\phi_X^I=\langle \xi_X^I |\,\cdot\, |\xi_X^I\rangle$
and
$|\xi^I_X\rangle=\bigotimes_{i=1}^{n+1}|+\rangle \in \bigotimes_{i=1}^{n+1} \bbC^2$.
By similar analysis, $\phi_X$ condenses the $m$ particle, so it corresponds to the `smooth boundary' in the sense of \cite{MR2942952}.
\end{ex}

\begin{rem}
Under the isomorphism $\fF\cong \fD'$ from Construction \ref{construction:ToricCodeInjective}, $\phi_Z$ on $\fD'$ corresponds to $\phi_{1_\cC}$ on $\fF$ from Example \ref{ex:UnitState} below, as $\phi_Z$ takes the coefficient of the empty diagram in $\fF(I)=\End_\cC(X^{\#I})$ because $\bigotimes^{n+1}_{i=1}|0\rangle$ is the first standard basis vector in $\bbC^{2^{n+1}}$.
On the other hand, $\phi_X$ corresponds to $\phi_{Q}$ for $Q=\bbC[\bbZ/2]$ with normalized multiplication on $\fF$ 
from Example \ref{ex:QSystemBoundary} below, as $\phi_X$ maps every diagrammatic basis element in $\fF(I)$ to $2^{-\#I}$ because
$\bigotimes^{n+1}_{i=1}|+\rangle$ is  $2^{-(n+1)/2}$ times the all $1$s vector in $\bbC^{2^{n+1}}$.
\end{rem}

\subsection{The canonical Levin-Wen string net boundary state}
\label{sec:LWBoundaryState}

We now consider the Levin-Wen string net model for the UFC $\cC$ discussed in \S\ref{sec:LWStringNet}.

The \emph{interactions} $\Phi$ of the Levin-Wen string-net system are given by a map from the finite subsets $\Lambda$ of our infinite square lattice to self-adjoint operators of $\fA$ such that $\Phi(\Lambda) \in \fA(\Lambda)$.
In our setting, the interactions are easily defined:
$\Phi(\ell) = 1_\fA -A_\ell$ for each edge $\ell$ 
and
$\Phi(p) = 1_\fA-B_p$ for a plaquette $p$,
and $\Phi(F)=0$ for all other finite subsets.
On $\bigotimes_{v\in \Lambda} \cH_v$, we define the local Hamiltonian
$$
H_\Lambda 
:= 
\sum_{F\subset \Lambda} \Phi(F) 
=
\sum_{\ell \subset \Lambda} (1_\fA -A_\ell)
+
\sum_{p \subset \Lambda} (1_\fA -B_p),
$$
which is clearly a commuting projector local Hamiltonian.

\begin{rem}
We can construct the canonical state $\psi$ from a net of state vectors $|\Omega_\Lambda\rangle$ on our local algebras $\fA(\Lambda)$.
On $\bigotimes_{v\in \Lambda} \cH_v$, we normalize the empty state vector $|\emptyset\rangle \in p^A_\Lambda \bigotimes_{v\in \Lambda} \cH_v$ after applying $p^B_\Lambda$ to get the state vector
$$
|\Omega_\Lambda\rangle := D_\cC^{(\#p\subset\Lambda)/2} p^B_\Lambda |\emptyset\rangle \in \bigotimes_{v\in \Lambda} \cH_v.
$$
We set $\omega_\Lambda:= \langle \Omega_\Lambda | \,\cdot\,|\Omega_\Lambda\rangle:\fA(\Lambda)\to \bbC$ to be the corresponding vector state.
Clearly $\omega_\Lambda(p_\Lambda)=1$,
so by Lemma \ref{lem:AFHCauchySchwarz},
$\omega_\Lambda(a) = \omega_\Lambda(p_\Lambda a) = \omega_\Lambda(ap_\Lambda)$ for all $a\in \fA(\Lambda)$.
Moreover, if $a\in \fA(\Lambda)$ and $\Lambda \ll \Delta$,
then
$$
\omega_\Delta(a)
\underset{\text{(Lem.~\ref{lem:AFHCauchySchwarz})}}{=}
\omega_\Delta(p_\Delta ap_\Delta)
\underset{\text{(Thm.~\ref{thm:LTQO})}}{=}
\psi(a)\cdot\omega_\Delta(p_\Delta)
=
\psi(a).
$$
\end{rem}

We now compute the canonical state $\psi_\fB$ from Example \ref{ex:BoundaryState} on the boundary algebra $\fB$ of the Levin-Wen model.
By Theorem \ref{thm:LTQO}, 
operators in $\fB(I)$ where $\partial\Lambda\cap \cK=\cI$ are products of the form $\Gamma_\varphi p_\Delta$ where $\varphi \in \End_\cC(X^{\#I})$ and $\Lambda \Subset \Delta$ with $\partial \Lambda \cap \partial \Delta = I$.
(Without loss of generality, we have assumed $I$ is at the top or right of $\Lambda$.)
Thus to compute $\psi_\fB$ on $\fB$, it suffices to calculate $\psi(\Gamma_\varphi)=\psi(\Gamma_\varphi p_\Delta)$ for $\varphi\in \End(X^{\#I})$ along some boundary interval $I$.
This will also give us a formula for the canonical boundary state transported to the fusion categorical net $\fF$ for $\cC$ from Example \ref{ex:TensorCat}.

Fix $\varphi \in \cC(a_1\otimes\cdots\otimes a_n \to b_1\otimes\cdots\otimes b_n)$, an interval $I$ along $\cK$ in $\bbZ^2$ with $\#I=n$, and a large square $\Delta$ surrounding $I$ such that $\partial \Delta \cap I = \emptyset$.
We compute $\textcolor{blue}{p_\Delta}\Gamma_\varphi \textcolor{red}{p_\Delta} \xi$ where $\xi$ is a simple tensor in $p^A_\Delta\bigotimes_{v\in \Delta} \cH_v$.
Here, we have colored the first-acting $\textcolor{red}{p_\Delta}$ red and the second acting $\textcolor{blue}{p_\Delta}$ blue, which will be reflected in the color of the strings used in the plaquette operators in the diagrammatic proof below.
We will further make the simplification that $n=3$, and it will be clear what the formula will be when $n$ is arbitrary.

First, starting with $\xi$, we apply $\textcolor{red}{p_\Delta}$, screening all punctures corresponding to the plaquettes.
In order to apply the gluing operator $\Gamma_\varphi$, we must first use \eqref{eq:Fusion} to resolve all  plaquette operators which act on the sites in $I$.
In the diagrams below, to ease the notation, we suppress all unnecessary sums over simples and scalars, only keeping track of labels, sums, and scalars for sites in $I$, which are marked in cyan in the diagram on the left.
$$
\tikzmath{
\draw (-.2,-.2) grid (3.2,4.2);
\foreach \x in {0,1,2}{
\foreach \y in {0,1,2,3}{
\filldraw[red, thick, fill=gray!30, rounded corners=5pt] ($ (\x,\y) + (.2,.2) $) rectangle ($ (\x,\y) + (.8,.8) $);
}}
\draw[thick, cyan] (1,.5) -- (1,3.5);
\node[cyan, scale=.75] at (.9,1.8) {$\scriptstyle I$};
}
\begin{tikzcd}
\mbox{}
\arrow[r,squiggly]
&
\mbox{}
\end{tikzcd}
\sum_{\substack{
r_1,r_2,r_3,r_4,
\\
a_1,a_2,a_3
\\
y_1,y_2,y_3 
\\
\in \Irr(\cC)
}}
\frac{\sqrt{d_{r_1}d_{r_4}}\sqrt{d_{a_1}d_{a_2}d_{a_3}}}{D_\cX^{4}\sqrt{d_{x_1}d_{x_2}d_{x_3}}}
\tikzmath{
\draw (-.2,4) -- (5.2,4);
\foreach \x in {0,1,4,5}{
\draw (\x,-.2) -- (\x,4.2);
}
\foreach \y in {0,1,2,3}{
\draw (-.2,\y) -- (5.2,\y);
\filldraw[gray!30, rounded corners=5pt] (2.3,\y+.3) rectangle (2.7,\y+.7) ;
\foreach \x in {0,4}{
\filldraw[red, thick, fill=gray!30, rounded corners=5pt] ($ (\x,\y) + (.2,.2) $) rectangle ($ (\x,\y) + (.8,.8) $);
}
}
\foreach \y in {1,2,3}{
\node at ($ (1.2,4-\y) + (0,.1) $) {$\scriptstyle x_{\y}$};
\node at ($ (1.6,4-\y) + (0,.1) $) {$\scriptstyle y_{\y}$};
\node at ($ (2.5,4-\y) + (0,.1) $) {$\scriptstyle a_{\y}$};
\node at ($ (3.4,4-\y) + (0,.1) $) {$\scriptstyle y_{\y}$};
\node at ($ (3.8,4-\y) + (0,.1) $) {$\scriptstyle x_{\y}$};
}
\draw[red, thick] (1.4,1)  -- (1.4,.8) to[out=-90,in=-90] (3.6,.8) -- (3.6,1);
\draw[red, thick] (1.8,3)  -- (1.8,3.4) to[out=90,in=90] (3.2,3.4) -- (3.2,3);
\draw[thick, red] (1.4,3) to[out=-90,in=90] (1.8,2);
\draw[thick, red] (1.4,2) to[out=-90,in=90] (1.8,1);
\draw[thick, red] (3.6,2) to[out=-90,in=90] (3.2,1);
\draw[thick, red] (3.6,3) to[out=-90,in=90] (3.2,2);
\node[red] at (1.6,3.5) {$\scriptstyle r_1$};
\node[red] at (1.4,2.5) {$\scriptstyle r_2$};
\node[red] at (1.4,1.5) {$\scriptstyle r_3$};
\node[red] at (1.2,.7) {$\scriptstyle r_4$};
\filldraw[purple] (1.4,1) circle (.05cm);
\filldraw[purple] (3.6,1) circle (.05cm);
\filldraw[orange] (1.8,1) circle (.05cm);
\filldraw[orange] (3.2,1) circle (.05cm);
\filldraw[yellow] (1.4,2) circle (.05cm);
\filldraw[yellow] (3.6,2) circle (.05cm);
\filldraw[green] (1.8,2) circle (.05cm);
\filldraw[green] (3.2,2) circle (.05cm);
\filldraw[blue] (1.4,3) circle (.05cm);
\filldraw[blue] (3.6,3) circle (.05cm);
\filldraw[violet] (1.8,3) circle (.05cm);
\filldraw[violet] (3.2,3) circle (.05cm);
}
$$
We now apply $\Gamma_\varphi$ and then $\textcolor{blue}{p_\Delta}$ to obtain
$$
\left(
\prod_{i=1}^3
\delta_{a_i=b_i}
\right)
\left(
\frac{\sqrt{d_{b_1}d_{b_2}d_{b_3}}}{\sqrt{d_{a_1}d_{a_2}d_{a_3}}}
\right)^{1/4}
\frac{\sqrt{d_{a_1}d_{a_2}d_{a_3}}}{D_\cX^{4}\sqrt{d_{x_1}d_{x_2}d_{x_3}}}
\sum_{\substack{
r_1,r_2,r_3,r_4,
\\
y_1,y_2,y_3 
\\
\in \Irr(\cC)
}}
\sqrt{d_{r_1}d_{r_4}}
\tikzmath{
\draw (-.2,4) -- (5.2,4);
\foreach \x in {0,1,4,5}{
\draw (\x,-.2) -- (\x,4.2);
}
\foreach \y in {0,1,2,3}{
\draw (-.2,\y) -- (5.2,\y);
\foreach \x in {0,4}{
\filldraw[red, thick, fill=gray!30, rounded corners=5pt] ($ (\x,\y) + (.2,.2) $) rectangle ($ (\x,\y) + (.8,.8) $);
}
}
\foreach \y in {1,2,3}{
\node at ($ (1.2,4-\y) + (0,.1) $) {$\scriptstyle x_{\y}$};
\node at ($ (1.6,4-\y) + (0,.1) $) {$\scriptstyle y_{\y}$};
\node at ($ (2,4-\y) + (0,.1) $) {$\scriptstyle a_{\y}$};
\node at ($ (3,4-\y) + (0,.1) $) {$\scriptstyle a_{\y}$};
\node at ($ (3.4,4-\y) + (0,.1) $) {$\scriptstyle y_{\y}$};
\node at ($ (3.8,4-\y) + (0,.1) $) {$\scriptstyle x_{\y}$};
}
\draw[red, thick] (1.4,1)  -- (1.4,.8) to[out=-90,in=-90] (3.6,.8) -- (3.6,1);
\draw[red, thick] (1.8,3)  -- (1.8,3.4) to[out=90,in=90] (3.2,3.4) -- (3.2,3);
\draw[thick, red] (1.4,3) to[out=-90,in=90] (1.8,2);
\draw[thick, red] (1.4,2) to[out=-90,in=90] (1.8,1);
\draw[thick, red] (3.6,2) to[out=-90,in=90] (3.2,1);
\draw[thick, red] (3.6,3) to[out=-90,in=90] (3.2,2);
\filldraw[fill=white, rounded corners=5pt, very thick] (2.2,.7) rectangle (2.8,3.3);
\node at (2.5,2) {$\varphi$};
\filldraw[thick, blue, fill=gray!30, rounded corners=5pt] (2.9,.4) rectangle (3.3,.8);
\filldraw[thick, blue, fill=gray!30, rounded corners=5pt] (2.9,1.4) rectangle (3.3,1.8);
\filldraw[thick, blue, fill=gray!30, rounded corners=5pt] (2.9,2.4) rectangle (3.3,2.8);
\filldraw[thick, blue, fill=gray!30, rounded corners=5pt] (2.3,3.4) rectangle (2.7,3.75);
\node[red] at (1.6,3.5) {$\scriptstyle r_1$};
\node[red] at (1.4,2.5) {$\scriptstyle r_2$};
\node[red] at (1.4,1.5) {$\scriptstyle r_3$};
\node[red] at (1.2,.7) {$\scriptstyle r_4$};
\filldraw[purple] (1.4,1) circle (.05cm);
\filldraw[purple] (3.6,1) circle (.05cm);
\filldraw[orange] (1.8,1) circle (.05cm);
\filldraw[orange] (3.2,1) circle (.05cm);
\filldraw[yellow] (1.4,2) circle (.05cm);
\filldraw[yellow] (3.6,2) circle (.05cm);
\filldraw[green] (1.8,2) circle (.05cm);
\filldraw[green] (3.2,2) circle (.05cm);
\filldraw[blue] (1.4,3) circle (.05cm);
\filldraw[blue] (3.6,3) circle (.05cm);
\filldraw[violet] (1.8,3) circle (.05cm);
\filldraw[violet] (3.2,3) circle (.05cm);
}
$$
where we have used that $B_p^2 = B_p$ for all un-resolved plaquette operators.
We now assume $a_i=b_i$ for all $i$, and we 
use the screening property of the blue $\textcolor{blue}{B_p}$ operators with respect to the resolved red $\textcolor{red}{B_p}$ operators to obtain 
$$
\frac{\sqrt{d_{a_1}d_{a_2}d_{a_3}}}{D_\cX^{4}\sqrt{d_{x_1}d_{x_2}d_{x_3}}}
\sum_{\substack{
r_1,r_2,r_3,r_4,
\\
y_1,y_2,y_3 
\\
\in \Irr(\cC)
}}
\sqrt{d_{r_1}d_{r_4}}
\tikzmath{
\draw (-.2,4) -- (5.2,4);
\foreach \x in {0,1,4,5}{
\draw (\x,-.2) -- (\x,4.2);
}
\foreach \y in {0,1,2,3}{
\draw (-.2,\y) -- (5.2,\y);
\foreach \x in {0,4}{
\filldraw[red, thick, fill=gray!30, rounded corners=5pt] ($ (\x,\y) + (.2,.2) $) rectangle ($ (\x,\y) + (.8,.8) $);
}
}
\foreach \y in {1,2,3}{
\node at ($ (1.2,4-\y) + (0,.1) $) {$\scriptstyle x_{\y}$};
\node at ($ (1.6,4-\y) + (0,.1) $) {$\scriptstyle y_{\y}$};
\node at ($ (2,4-\y) + (0,.1) $) {$\scriptstyle a_{\y}$};
\node at ($ (3,4-\y) + (0,.1) $) {$\scriptstyle a_{\y}$};
\node at ($ (3.4,4-\y) + (0,.1) $) {$\scriptstyle y_{\y}$};
\node at ($ (3.8,4-\y) + (0,.1) $) {$\scriptstyle x_{\y}$};
}
\draw[red, thick] (1.4,1)  -- (1.4,.8) to[out=-90,in=-90] (3.6,.8) -- (3.6,1);
\draw[red, thick] (1.8,3)  -- (1.8,3.4) to[out=90,in=90] (3.2,3.4) -- (3.2,3);
\draw[thick, red] (1.4,3) to[out=-90,in=90] (1.8,2);
\draw[thick, red] (1.4,2) to[out=-90,in=90] (1.8,1);
\draw[thick, red] (3.6,2) to[out=-90,in=90] (3.2,1);
\draw[thick, red] (3.6,3) to[out=-90,in=90] (3.2,2);
\filldraw[fill=white, rounded corners=5pt, very thick] (2.2,.7) rectangle (2.8,3.3);
\node at (2.5,2) {$\varphi$};
\filldraw[thick, blue, fill=gray!30, rounded corners=5pt] (3.5,.1) rectangle (3.9,.5);
\filldraw[thick, blue, fill=gray!30, rounded corners=5pt] (3.5,1.25) rectangle (3.9,1.65);
\filldraw[thick, blue, fill=gray!30, rounded corners=5pt] (3.5,2.25) rectangle (3.9,2.65);
\filldraw[thick, blue, fill=gray!30, rounded corners=5pt] (3.5,3.3) rectangle (3.9,3.7);
\node[red] at (1.6,3.5) {$\scriptstyle r_1$};
\node[red] at (1.4,2.5) {$\scriptstyle r_2$};
\node[red] at (1.4,1.5) {$\scriptstyle r_3$};
\node[red] at (1.2,.7) {$\scriptstyle r_4$};
\filldraw[purple] (1.4,1) circle (.05cm);
\filldraw[purple] (3.6,1) circle (.05cm);
\filldraw[orange] (1.8,1) circle (.05cm);
\filldraw[orange] (3.2,1) circle (.05cm);
\filldraw[yellow] (1.4,2) circle (.05cm);
\filldraw[yellow] (3.6,2) circle (.05cm);
\filldraw[green] (1.8,2) circle (.05cm);
\filldraw[green] (3.2,2) circle (.05cm);
\filldraw[blue] (1.4,3) circle (.05cm);
\filldraw[blue] (3.6,3) circle (.05cm);
\filldraw[violet] (1.8,3) circle (.05cm);
\filldraw[violet] (3.2,3) circle (.05cm);
}
$$

We can thus reduce to the analysis of the sub-diagram
$$
\frac{\sqrt{d_{a_1}d_{a_2}d_{a_3}}}{D_\cX^{4}\sqrt{d_{x_1}d_{x_2}d_{x_3}}}
\sum_{\substack{
r_1,r_2,r_3,r_4,
\\
y_1,y_2,y_3 
\\
\in \Irr(\cC)
}}
\sqrt{d_{r_1}d_{r_4}}
\tikzmath{
\foreach \y in {1,2,3}{
\draw (.8,\y) -- (4.2,\y);
\node at ($ (1.2,4-\y) + (0,.1) $) {$\scriptstyle x_{\y}$};
\node at ($ (1.6,4-\y) + (0,.1) $) {$\scriptstyle y_{\y}$};
\node at ($ (2,4-\y) + (0,.1) $) {$\scriptstyle a_{\y}$};
\node at ($ (3,4-\y) + (0,.1) $) {$\scriptstyle a_{\y}$};
\node at ($ (3.4,4-\y) + (0,.1) $) {$\scriptstyle y_{\y}$};
\node at ($ (3.8,4-\y) + (0,.1) $) {$\scriptstyle x_{\y}$};
}
\draw[red, thick] (1.4,1)  -- (1.4,.8) to[out=-90,in=-90] (3.6,.8) -- (3.6,1);
\draw[red, thick] (1.8,3)  -- (1.8,3.4) to[out=90,in=90] (3.2,3.4) -- (3.2,3);
\draw[thick, red] (1.4,3) to[out=-90,in=90] (1.8,2);
\draw[thick, red] (1.4,2) to[out=-90,in=90] (1.8,1);
\draw[thick, red] (3.6,2) to[out=-90,in=90] (3.2,1);
\draw[thick, red] (3.6,3) to[out=-90,in=90] (3.2,2);
\filldraw[fill=white, rounded corners=5pt, very thick] (2.2,.7) rectangle (2.8,3.3);
\node at (2.5,2) {$\varphi$};
\node[red] at (1.6,3.5) {$\scriptstyle r_1$};
\node[red] at (1.4,2.5) {$\scriptstyle r_2$};
\node[red] at (1.4,1.5) {$\scriptstyle r_3$};
\node[red] at (1.2,.7) {$\scriptstyle r_4$};
\filldraw[purple] (1.4,1) circle (.05cm);
\filldraw[purple] (3.6,1) circle (.05cm);
\filldraw[orange] (1.8,1) circle (.05cm);
\filldraw[orange] (3.2,1) circle (.05cm);
\filldraw[yellow] (1.4,2) circle (.05cm);
\filldraw[yellow] (3.6,2) circle (.05cm);
\filldraw[green] (1.8,2) circle (.05cm);
\filldraw[green] (3.2,2) circle (.05cm);
\filldraw[blue] (1.4,3) circle (.05cm);
\filldraw[blue] (3.6,3) circle (.05cm);
\filldraw[violet] (1.8,3) circle (.05cm);
\filldraw[violet] (3.2,3) circle (.05cm);
}
=
\psi(\Gamma_\varphi)
\cdot
\tikzmath{
\foreach \y in {1,2,3}{
\draw (1,\y) -- (3,\y);
\node at ($ (2,4-\y) + (0,.1) $) {$\scriptstyle x_{\y}$};
}
}\,.
$$
Above, we claim that the morphism on the left is equal to a scalar times the identity $\id_{x_1x_2x_3}$; this scalar will necessarily be equal to $\psi(\Gamma_\varphi)$.
First, we use the fusion relation \eqref{eq:Fusion} to contract along the $r_1$ string to obtain
$$
d_{a_1}
\frac{\sqrt{d_{a_2}d_{a_3}}}{D_\cX^{4}\sqrt{d_{x_1}d_{x_2}d_{x_3}}}
\sum_{\substack{
r_2,r_3,r_4,
\\
y_1,y_2,y_3 
\\
\in \Irr(\cC)
}}
\sqrt{d_{y_1}d_{r_4}}
\tikzmath{
\draw (.8,3) --node[above,yshift=-.1cm]{$\scriptstyle x_1$} (1.4,3) to[out=0,in=180] (2.1,3.8) --node[above,yshift=-.1cm]{$\scriptstyle y_1$} (2.9,3.8) to[out=0,in=180] (3.6,3) --node[above,yshift=-.1cm]{$\scriptstyle x_1$} (4.2,3);
\draw (2.2,3) node[left,yshift=-.1cm]{$\scriptstyle a_1$} arc (270:90:.3cm) -- (2.8,3.6) arc (90:-90:.3cm) node[right,yshift=-.1cm]{$\scriptstyle a_1$};
\foreach \y in {1,2}{
\draw (.8,\y) -- (4.2,\y);
}
\foreach \y in {2,3}{
\node at ($ (1.2,4-\y) + (0,.1) $) {$\scriptstyle x_{\y}$};
\node at ($ (1.6,4-\y) + (0,.1) $) {$\scriptstyle y_{\y}$};
\node at ($ (2,4-\y) + (0,.1) $) {$\scriptstyle a_{\y}$};
\node at ($ (3,4-\y) + (0,.1) $) {$\scriptstyle a_{\y}$};
\node at ($ (3.4,4-\y) + (0,.1) $) {$\scriptstyle y_{\y}$};
\node at ($ (3.8,4-\y) + (0,.1) $) {$\scriptstyle x_{\y}$};
}
\draw[red, thick] (1.4,1)  -- (1.4,.8) to[out=-90,in=-90] (3.6,.8) -- (3.6,1);
\draw[thick, red] (1.4,3) to[out=-90,in=90] (1.8,2);
\draw[thick, red] (1.4,2) to[out=-90,in=90] (1.8,1);
\draw[thick, red] (3.6,2) to[out=-90,in=90] (3.2,1);
\draw[thick, red] (3.6,3) to[out=-90,in=90] (3.2,2);
\filldraw[fill=white, rounded corners=5pt, very thick] (2.2,.7) rectangle (2.8,3.3);
\node at (2.5,2) {$\varphi$};
\node[red] at (1.4,2.5) {$\scriptstyle r_2$};
\node[red] at (1.4,1.5) {$\scriptstyle r_3$};
\node[red] at (1.2,.7) {$\scriptstyle r_4$};
\filldraw[purple] (1.4,1) circle (.05cm);
\filldraw[purple] (3.6,1) circle (.05cm);
\filldraw[orange] (1.8,1) circle (.05cm);
\filldraw[orange] (3.2,1) circle (.05cm);
\filldraw[yellow] (1.4,2) circle (.05cm);
\filldraw[yellow] (3.6,2) circle (.05cm);
\filldraw[green] (1.8,2) circle (.05cm);
\filldraw[green] (3.2,2) circle (.05cm);
\filldraw[blue] (1.4,3) circle (.05cm);
\filldraw[blue] (3.6,3) circle (.05cm);
}\,.
$$
Next, we use the fusion relation \eqref{eq:Fusion} to contract along the $y_1$ string to obtain
$$
d_{a_1}
\frac{\sqrt{d_{a_2}d_{a_3}}}{D_\cX^{4}\sqrt{d_{x_2}d_{x_3}}}
\sum_{\substack{
r_2,r_3,r_4,
\\
y_2,y_3 
\\
\in \Irr(\cC)
}}
\sqrt{d_{r_2}d_{r_4}}
\tikzmath{
\draw (.8,3) -- (1,3) to[out=0,in=180] (1.7,4) --node[above,yshift=-.1cm]{$\scriptstyle x_1$} (3.3,4) to[out=0,in=180] (4,3) -- (4.2,3);
\draw (2.2,3) node[left,yshift=-.1cm]{$\scriptstyle a_1$} arc (270:90:.3cm) -- (2.8,3.6) arc (90:-90:.3cm) node[right,yshift=-.1cm]{$\scriptstyle a_1$};
\foreach \y in {1,2}{
\draw (.8,\y) -- (4.2,\y);
}
\foreach \y in {2,3}{
\node at ($ (1.2,4-\y) + (0,.1) $) {$\scriptstyle x_{\y}$};
\node at ($ (1.6,4-\y) + (0,.1) $) {$\scriptstyle y_{\y}$};
\node at ($ (2,4-\y) + (0,.1) $) {$\scriptstyle a_{\y}$};
\node at ($ (3,4-\y) + (0,.1) $) {$\scriptstyle a_{\y}$};
\node at ($ (3.4,4-\y) + (0,.1) $) {$\scriptstyle y_{\y}$};
\node at ($ (3.8,4-\y) + (0,.1) $) {$\scriptstyle x_{\y}$};
}
\draw[red, thick] (1.4,1)  -- (1.4,.8) to[out=-90,in=-90] (3.6,.8) -- (3.6,1);
\draw[thick, red] (1.4,2) to[out=-90,in=90] (1.8,1);
\draw[thick, red] (1.8,2) to[out=90,in=-90] (1.6,3.3) arc (180:90:.5cm) -- (2.9,3.8) arc (90:0:.5cm) to[out=-90,in=90] (3.2,2);
\draw[thick, red] (3.6,2) to[out=-90,in=90] (3.2,1);
\filldraw[fill=white, rounded corners=5pt, very thick] (2.2,.7) rectangle (2.8,3.3);
\node at (2.5,2) {$\varphi$};
\node[red] at (1.5,2.5) {$\scriptstyle r_2$};
\node[red] at (1.4,1.5) {$\scriptstyle r_3$};
\node[red] at (1.2,.7) {$\scriptstyle r_4$};
\filldraw[purple] (1.4,1) circle (.05cm);
\filldraw[purple] (3.6,1) circle (.05cm);
\filldraw[orange] (1.8,1) circle (.05cm);
\filldraw[orange] (3.2,1) circle (.05cm);
\filldraw[yellow] (1.4,2) circle (.05cm);
\filldraw[yellow] (3.6,2) circle (.05cm);
\filldraw[green] (1.8,2) circle (.05cm);
\filldraw[green] (3.2,2) circle (.05cm);
}
$$
We then use the fusion relation \eqref{eq:Fusion} to contract along the $r_2$ string to obtain
$$
d_{a_1}d_{a_2}
\frac{\sqrt{d_{a_3}}}{D_\cX^{4}\sqrt{d_{x_2}d_{x_3}}}
\sum_{\substack{
r_3,r_4,
\\
y_2,y_3 
\\
\in \Irr(\cC)
}}
\sqrt{d_{y_2}d_{r_4}}
\tikzmath{
\draw (.8,3) to[out=0,in=180] (1.7,4.2) --node[above,yshift=-.1cm]{$\scriptstyle x_1$} (3.3,4.2) to[out=0,in=180] (4.2,3);
\draw (2.2,3) node[left,yshift=-.1cm]{$\scriptstyle a_1$} arc (270:90:.3cm) -- (2.8,3.6) arc (90:-90:.3cm) node[right,yshift=-.1cm]{$\scriptstyle a_1$};
\draw (.8,2) --node[above,yshift=-.1cm]{$\scriptstyle x_2$} (1.2,2) to[out=0,in=-90] node[left]{$\scriptstyle y_2$} (1.3,3) to[out=90,in=180] (1.9,4) -- (3.1,4) to[out=0,in=90] (3.7,3) to[out=-90,in=180] (3.8,2) --node[above,yshift=-.1cm]{$\scriptstyle x_2$} (4.2,2);
\draw (2.2,2) node[left,yshift=-.1cm]{$\scriptstyle a_2$} to[out=180,in=-90] (1.6,3.3) arc (180:90:.5cm) -- (2.9,3.8) arc (90:0:.5cm) to[out=-90,in=0] (2.8,2) node[right,yshift=-.1cm]{$\scriptstyle a_2$};
\draw (.8,1) -- (4.2,1);
\node at (1.2,1.1) {$\scriptstyle x_3$};
\node at (1.6,1.1) {$\scriptstyle y_3$};
\node at (2,1.1) {$\scriptstyle a_3$};
\node at (3,1.1) {$\scriptstyle a_3$};
\node at (3.4,1.1) {$\scriptstyle y_3$};
\node at (3.8,1.1) {$\scriptstyle x_3$};
\draw[red, thick] (1.4,1)  -- (1.4,.8) to[out=-90,in=-90] (3.6,.8) -- (3.6,1);
\draw[thick, red] (1.2,2) to[out=-90,in=90] (1.8,1);
\draw[thick, red] (3.8,2) to[out=-90,in=90] (3.2,1);
\filldraw[fill=white, rounded corners=5pt, very thick] (2.2,.7) rectangle (2.8,3.3);
\node at (2.5,2) {$\varphi$};
\node[red] at (1.3,1.5) {$\scriptstyle r_3$};
\node[red] at (1.2,.7) {$\scriptstyle r_4$};
\filldraw[purple] (1.4,1) circle (.05cm);
\filldraw[purple] (3.6,1) circle (.05cm);
\filldraw[orange] (1.8,1) circle (.05cm);
\filldraw[orange] (3.2,1) circle (.05cm);
\filldraw[yellow] (1.2,2) circle (.05cm);
\filldraw[yellow] (3.8,2) circle (.05cm);
}\,.
$$
At this point, it is clear that we can then contract along the $y_2$ string using \eqref{eq:Fusion}, followed by the $r_3$ string afterward to obtain
\begin{align*}
\frac{d_{a_1}d_{a_2}d_{a_3}}{D_\cC^{4}\sqrt{d_{x_3}}}
\tr_\cC(\varphi)
\sum_{\substack{
r_4,y_3 
\\
\in \Irr(\cC)
}}
\sqrt{d_{y_3}d_{r_4}}
\cdot
\tikzmath{
\draw (1,1) --node[above,yshift=-.1cm]{$\scriptstyle x_3$} (1.5,1) --node[above,yshift=-.1cm]{$\scriptstyle y_3$} (2.5,1) --node[above,yshift=-.1cm]{$\scriptstyle x_3$} (3,1);
\draw (1,1.5) --node[above,yshift=-.1cm]{$\scriptstyle x_2$} (3,1.5);
\draw (1,2) --node[above,yshift=-.1cm]{$\scriptstyle x_1$} (3,2);
\draw[red, thick] (1.5,1) to[out=-90,in=-90] (2.5,1);
\node[red] at (1.4,.7) {$\scriptstyle r_4$};
\filldraw[purple] (1.5,1) circle (.05cm);
\filldraw[purple] (2.5,1) circle (.05cm);
}
&=
\frac{d_{a_1}d_{a_2}d_{a_3}}{D_\cC^{4}}
\tr_\cC(\varphi)
\sum_{\substack{
r_4
\in \Irr(\cC)
}}
d_{r_4}
\cdot
\tikzmath{
\draw (1,1) --node[above,yshift=-.1cm]{$\scriptstyle x_3$} (3,1); 
\draw (1,1.5) --node[above,yshift=-.1cm]{$\scriptstyle x_2$} (3,1.5);
\draw (1,2) --node[above,yshift=-.1cm]{$\scriptstyle x_1$} (3,2);
\draw[red, thick] (2,.5) circle (.3cm);
\node[red] at (1.4,.5) {$\scriptstyle r_4$};
}
\\&=
\frac{d_{a_1}d_{a_2}d_{a_3}}{D_\cC^{3}}
\tr_\cC(\varphi)
\cdot
\tikzmath{
\draw (1,1) --node[above,yshift=-.1cm]{$\scriptstyle x_3$} (3,1); 
\draw (1,1.5) --node[above,yshift=-.1cm]{$\scriptstyle x_2$} (3,1.5);
\draw (1,2) --node[above,yshift=-.1cm]{$\scriptstyle x_1$} (3,2);
}\,.
\end{align*}

We have just proven the following proposition.

\begin{prop}
\label{prop:CanonicalState}
The canonical state $\psi_\fF$ on the boundary net of algebras $\fF$ is given by
$$
\psi_\fF(\Gamma_\varphi)=
\frac{1}{D_\cX^n}
\sum_{\substack{
c_1,\dots, c_n
\\\in \Irr(\cC)
}}
d_{c_1}\cdots d_{c_n} \tr_\cC(\varphi \cdot p_{c_1\otimes\cdots\otimes c_n})
\qquad\qquad\qquad
\varphi\in \End_\cC(X^n)
$$
where $p_{c_1\otimes\cdots\otimes c_n}\in \End_\cC(X^n)$ is the orthogonal projection onto $c_1 \otimes \cdots \otimes c_n \subset X^n$.
\end{prop}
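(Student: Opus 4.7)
The plan is to reduce by linearity to the case where $\varphi$ is a morphism between specific simple summands of $X^n$, and then to evaluate $\psi(\Gamma_\varphi)$ by a direct skein-module computation. Using the decomposition $X^n \cong \bigoplus_{\vec c \in \Irr(\cC)^n} c_1 \otimes \cdots \otimes c_n$, with corresponding isometric inclusions $\iota_{\vec c}$ and coisometric projections $\pi_{\vec c}$ so that $p_{\vec c} = \iota_{\vec c} \pi_{\vec c}$, I decompose $\varphi = \sum_{\vec a, \vec b} \iota_{\vec b} \circ \varphi_{\vec a, \vec b} \circ \pi_{\vec a}$ with $\varphi_{\vec a, \vec b} \in \cC(a_1 \otimes \cdots \otimes a_n \to b_1 \otimes \cdots \otimes b_n)$. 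Since $\Gamma$ and $\psi_\fF$ are both linear, it suffices to compute $\psi(\Gamma_{\iota_{\vec b} \varphi_{\vec a, \vec b} \pi_{\vec a}})$ for each summand.

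Next, I choose a rectangle $\Lambda$ containing the support of $\Gamma_\varphi$ with $\partial \Lambda \cap \cK = I$ of cardinality $n$, and a larger $\Delta$ with $\Lambda \Subset_1 \Delta$ and $\partial \Lambda \cap \partial \Delta = I$. By the definition of $\psi$, the scalar $\psi(\Gamma_\varphi)$ is determined by $p_\Delta \Gamma_\varphi p_\Delta = \psi(\Gamma_\varphi)\, p_\Delta$. Under the isomorphism $p_\Delta \bigotimes_{v \in \Delta} \cH_v \cong \cS_\cC(\#\partial \Delta)$ from Lemma \ref{lem:KongSkeinModule}, this scalar can be extracted diagrammatically by evaluating $p_\Delta \Gamma_\varphi p_\Delta$ on a chosen simple-tensor basis vector and comparing with $p_\Delta$ applied to the same vector.

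The computation proceeds in three diagrammatic steps. First, the initial application of $p_\Delta$ resolves the plaquettes adjacent to $I$ via the fusion relation \eqref{eq:Fusion}, introducing summation labels (boundary simples $\vec c, \vec y$ at $I$ together with internal strands $r_1, \dots, r_{n+1}$) and normalization factors $D_\cC^{-(n+1)}$ and $\sqrt{d_{c_i}}/\sqrt{d_{x_i}}$ at each site. Second, gluing in $\varphi$ according to Definition \ref{defn:BoundaryAction} forces $\vec c = \vec a = \vec b$ (producing the Kronecker delta $\delta_{\vec a = \vec b}$), contributes the normalization $(d_{\vec b}/d_{\vec a})^{1/4}$, and after the second application of $p_\Delta$ (absorbed using $B_p^2 = B_p$ and the screening property of the already-resolved plaquettes) reduces the task to evaluating a closed bubble surrounding the box $\varphi$. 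Third, iteratively contract the bubble strands $r_1, y_1, r_2, y_2, \dots, r_n, y_n$ using \eqref{eq:Fusion}, concluding with the loop sum $\sum_{r_{n+1} \in \Irr(\cC)} d_{r_{n+1}} = D_\cC$. The surviving scalar equals $\delta_{\vec a = \vec b}\, d_{a_1} \cdots d_{a_n}\, D_\cC^{-n}\, \tr_\cC(\varphi_{\vec a, \vec a})$.

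The main obstacle is the bookkeeping of normalization constants coming from four distinct sources: the $D_\cC^{-\#p/2}$ factor in Lemma \ref{lem:KongSkeinModule}, the $1/\sqrt{d_\bullet}$ factors in the skein-module inner product, the $(d_{\vec b}/d_{\vec a})^{1/4}$ factor in $\Gamma_\varphi$, and the $\sqrt{d_c}$ factor in the fusion relation \eqref{eq:Fusion}. Once these combine cleanly into $d_{c_1} \cdots d_{c_n} D_\cC^{-n}$, summing the diagonal contributions over $\vec c = \vec a = \vec b$ and using the identity $\tr_\cC(\iota_{\vec c} \varphi_{\vec c, \vec c} \pi_{\vec c}) = \tr_\cC(\varphi \cdot p_{\vec c})$ (by cyclicity and idempotence of $p_{\vec c}$) yields the displayed formula.
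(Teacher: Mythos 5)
Your proposal is correct and follows essentially the same route as the paper: compute $p_\Delta\Gamma_\varphi p_\Delta$ on simple tensors, resolve the plaquettes along $I$ with the fusion relation \eqref{eq:Fusion}, glue in $\varphi$, absorb the second $p_\Delta$ via idempotence and screening, and contract the resulting bubble strand-by-strand until the final loop sum $\sum_r d_r = D_\cC$ leaves the scalar $d_{a_1}\cdots d_{a_n}D_\cC^{-n}\tr_\cC(\varphi)$. The only cosmetic difference is that you make the linear decomposition of $\varphi$ into homogeneous components $\varphi_{\vec a,\vec b}$ explicit, whereas the paper simply fixes $\varphi\in\cC(a_1\otimes\cdots\otimes a_n\to b_1\otimes\cdots\otimes b_n)$ from the start, which amounts to the same reduction.
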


It is clear that the above formula gives a well-defined inductive limit state $\psi_\fF$ on $\fF$.

\begin{rem}
\label{rem:MarkovTrace}
Surprisingly, $\psi$ is a trace on $\fB\cong \fF$ if and only if $\cC$ is pointed, i.e., $d_c=1$ for all $c\in \Irr(\cC)$.
In this case, $\fF$ has a unique Markov trace, namely $\tr :=\varinjlim d_X^{-n}\tr_\cC: \varinjlim\End_\cC(X^n)\to \bbC$, satisfying
$$
\tr\left(
\tikzmath{
\draw (-.3,.3) -- (-.3,.7);
\draw (.3,.3) -- (.3,.7);
\draw (.5,.3) -- (.5,.7);
\node at (.05,.5) {$\cdots$};
\draw (-.3,-.3) -- (-.3,-1.2);
\draw (.3,-.3) -- (.3,-1.2);
\node at (.05,-.5) {$\cdots$};
\draw (.5,-.3) arc (-180:0:.3cm) -- (1.1,.7);
\draw (.5,-1.2) arc (180:0:.3cm);
\roundNbox{}{(0,0)}{.3}{.2}{.4}{$\varphi$}
}\,\,\,
\right)
=
\tr(\varphi)
=
\tr\left(\,\,\,
\tikzmath{
\draw (-.3,.3) -- (-.3,.7);
\draw (.3,.3) -- (.3,.7);
\draw (-.5,.3) -- (-.5,.7);
\node at (.05,.5) {$\cdots$};
\draw (-.3,-.3) -- (-.3,-1.2);
\draw (.3,-.3) -- (.3,-1.2);
\node at (.05,-.5) {$\cdots$};
\draw (-.5,-.3) arc (0:-180:.3cm) -- (-1.1,.7);
\draw (-.5,-1.2) arc (0:180:.3cm);
\roundNbox{}{(0,0)}{.3}{.4}{.2}{$\varphi$}
}\,\,\,
\right)
\qquad\qquad\qquad
\forall\, \varphi\in \End_\cC(X^n),
$$
as the centers $Z(\fF(I))$ have dimension globally bounded by $\#\Irr(\cC)$ \cite{MR999799}.
Thus $\psi_\fF$ is the unique Markov trace, and $\fF''$ in the GNS representation of $\psi_\fF$ is a $\rm II_1$ factor.
Above, since $X$ is self-dual, 
$$
\tikzmath{
\draw (-.3,.5) arc (0:-180:.3cm);
\draw (-.3,-.5) arc (0:180:.3cm);
}
=
\coev_{X}\circ \coev_X^\dag 
= 
\ev_X^\dag \circ \ev_X
$$
where $(X,\ev_X,\coev_X)$ is any standard solution of the conjugate equations for $X$.
(Observe $$\bigoplus_{a,b,c,d} \cC(ab\to 1)\otimes \cC(1\to cd)$$ is of the form $\cK\otimes \cK^*$ for the Hilbert space $\cK=\cC(XX\to 1)$, and summing over an ONB and its adjoint is independent of the choice of ONB.)
\end{rem}

\subsection{The KMS condition for the Levin-Wen string net boundary state}
\label{sec:KMS}

We now show that the canonical state $\psi_\fF$ on the fusion categorical net $\fF$ is the unique KMS $\beta=1$ state for a dynamics coming from a certain unbounded operator.

First, we use $\psi_\fF$ to complete $\fF$ to a Hilbert space on which $\fF$ acts by left translation by bounded operators.
The operator $H_\partial=-\sum_{v\in\bbZ} C_v$  with 
$C_v := \sum_{c\in \Irr(\cC)}\log(d_c)\id_c \in \End_\cC(X)$
acting locally at site $v$
is an unbounded operator acting in $L^2(\fF,\psi_\fF)$, containing each $\varphi \in \fF(I)=\End_\cC(X^n)$ in its domain for each bounded interval $I$ (we write $n=\#I$ here). 
Observe that $\exp(-itC_v) = \bigoplus_{c\in \Irr(\cC)} d_c^{-it}\id_c$ acting at site $v$,
and thus $\exp(itH_\partial) = \prod_{v} \exp(-itC_v)$ acts locally at each site $v$.
For $t\in \bbR$ and 
$\varphi\in \cC(a_1\otimes\cdots \otimes a_n \to b_1\otimes\cdots\otimes b_n)\subset \fF(I)$,
we have
\begin{equation}
\label{eq:ModularAutomorphism}
\sigma_t(\varphi)
:=
\exp(it H_\partial)\varphi \exp(-itH_\partial) 
= 
\prod_{u,v\in I} \exp(-itC_u) \varphi \exp(itC_v)
=
\left(
\frac
{d_{a_1}^{it}\cdots d_{a_n}^{it}}
{d_{b_1}^{it}\cdots d_{b_n}^{it}}
\right)
\varphi.
\end{equation}
Clearly on such $\varphi$, $t\mapsto \sigma_t(\varphi)$ can be analytically continued to an entire function, and we observe
$$
\sigma_{it}(\varphi)
=
\left(
\frac
{d_{a_1}^{-t}\cdots d_{a_n}^{-t}}
{d_{b_1}^{-t}\cdots d_{b_n}^{-t}}
\right)
\varphi
=
\left(
\frac
{d_{b_1}^{t}\cdots d_{b_n}^{t}}
{d_{a_1}^{t}\cdots d_{a_n}^{t}}
\right)
\varphi.
$$

Recall that $\psi_\fF$ is a KMS-$\beta$ state for $(\fF,\sigma)$ and $\beta>0$ if for all $x,y\in \fF$ with $y$ entire ($t\mapsto \sigma_t(y)$ extends to an entire function), 
$\psi_\fF(x \sigma_{i\beta}(y))= \psi_\fF(yx)$.
When $\varphi:a_1\otimes\cdots\otimes a_n \to b_1\otimes\cdots\otimes b_n$ and $\phi: b_1\otimes\cdots\otimes b_n \to a_1\otimes\cdots\otimes a_n$ are in $\fF(I)$, 
we calculate
\begin{align*}
\psi_\fF(\varphi \sigma_{i\beta}(\phi))
&=
\left(
\frac
{d_{a_1}^{\beta}\cdots d_{a_n}^{\beta}}
{d_{b_1}^{\beta}\cdots d_{b_n}^{\beta}}
\right)
\psi_\fF(\varphi\phi)
=
\frac{1}{\cD_X^n}
\left(
\frac
{d_{a_1}^{\beta}\cdots d_{a_n}^{\beta}}
{d_{b_1}^{\beta}\cdots d_{b_n}^{\beta}}
\right)
(d_{b_1}\cdots d_{b_n})
\tr_\cC(\varphi\phi)
\\&\underset{(\beta=1)}{=}
\frac{1}{\cD_X^n}
(d_{a_1}\cdots d_{a_n})
\tr_\cC(\phi\varphi)
=
\psi_\fF(\phi\varphi).
\end{align*}
We have thus shown $\psi_\fF$ is KMS-1 for $(\fF,\sigma)$.

Moreover, $\psi_\fF$ is the unique KMS-1 state on $\fF$ by \cite[Prop.~4.1]{MR1772604} and uniqueness of the Frobenius-Perron eigenvector (up to scaling), as the Bratteli diagram
of $\fF=\varinjlim \fF([-n,n])$ is connected and stationary.
Hence letting $\fF''$ be the von Neumann algebra generated by $\fF$ in the GNS representation $L^2(\fF,\psi_\fF)$, we have that $\fF''$ is a factor.
Recall that $\psi_\fF$ and $\sigma$ both extend to $\fF''$, and this extension is still a KMS-1 state~\cite[Cor. 5.3.4]{MR1441540}, so the modular automorphism group is given by $t\mapsto \sigma_t$.
We write $(\psi_\fF, \sigma)$ for this extension again.

\begin{lem}
\label{lem:CanonicalStateFaithful}
The canoncial state $\psi_\fF$ is faithful and normal on $\fF''$.
Since $\fF$ is simple, $\psi_\fF$ is faithful on $\fF$.
\end{lem}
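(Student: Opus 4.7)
The plan is to combine standard Tomita--Takesaki/KMS machinery with the simplicity of $\fF$.

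For normality of $\psi_\fF$ on $\fF''$: the canonical extension realizes $\psi_\fF$ as the vector state $x \mapsto \langle \Omega, x\Omega\rangle$ associated to the cyclic GNS vector $\Omega \in L^2(\fF,\psi_\fF)$, and vector states on a von Neumann algebra are automatically normal by definition.

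For faithfulness on $\fF''$: the KMS-1 property of $\psi_\fF$ on $(\fF,\sigma)$ verified above lifts to the weak closure, so the extended state on $\fF''$ is KMS-1 with respect to the extended dynamics, which then coincides with the modular automorphism group. Standard KMS theory (see e.g.\ \cite[\S5.3]{MR1441540}) now yields that $\Omega$ is cyclic and separating for $\fF''$; being separating is precisely faithfulness of the vector state $\langle \Omega,\,\cdot\,\Omega\rangle$ on $\fF''$. Note the KMS dynamics $\sigma_t$ is implemented by the unitary group $e^{itH_\partial}$ on $L^2(\fF,\psi_\fF)$ by construction, so extension to $\fF''$ is automatic and there is no regularity obstacle.

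For the second claim, simplicity of $\fF$ implies that the GNS representation $\pi_{\psi_\fF}:\fF\to \fF''$ is injective, since its kernel is a closed two-sided ideal and must vanish. If $x\in\fF$ satisfies $\psi_\fF(x^*x)=0$, then $\pi_{\psi_\fF}(x)\Omega=0$, hence $\pi_{\psi_\fF}(x)=0$ by the separating property established above, and finally $x=0$ by injectivity. Alternatively, this second step is exactly the content of \cite[Cor.~5.3.9]{MR1441540}, already invoked in the discussion preceding the lemma. The main (minor) subtlety is really on the input side -- verifying that $\fF$ is simple in the first place; but given the stationary, connected Bratteli diagram coming from a UFC $\cC$ with strongly generating object $X$, this is classical AF-algebra theory and not part of the present lemma.
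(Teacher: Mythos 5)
Your proof is correct and follows essentially the same route as the paper: the KMS-$1$ property extends to $\fF''$, so by \cite[Cor.~5.3.9]{MR1441540} the GNS vector is separating, giving normality and faithfulness of the vector state on $\fF''$, and simplicity of $\fF$ (injectivity of the GNS representation) then gives faithfulness on $\fF$. The only nitpick is your closing attribution: \cite[Cor.~5.3.9]{MR1441540} is the separating-vector statement, not the deduction of faithfulness on the $\rm C^*$-algebra, which still needs the simplicity argument you already gave.
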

\begin{proof}
By \cite[Cor.~5.3.9]{MR1441540}, the canonical cyclic vector $\Omega_{\psi_\fF}$ in the GNS representation $L^2(\fF,\psi_\fF)$ is separating for $\fF''$, and thus $\psi_\fF = \langle \Omega_{\psi_\fF}| \,\cdot\,|\Omega_{\psi_\fF}\rangle$ is normal and faithful.
The last claim is then immediate.
\end{proof}

Our next task is to prove the following theorem.

\begin{thm}
\label{thm:InnerIffPointed}
If there is a $c\in \Irr(\cC)$ with $d_c\neq 1$,
then for all but countably many $t\in \bbR$, $\sigma_t$ from \eqref{eq:ModularAutomorphism} is outer. In particular, $\fF''$ is a type $\rm III$ factor.
\end{thm}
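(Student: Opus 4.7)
The plan is to work in two parts: first reduce ``outer for all but countably many $t$'' to non-semifiniteness of $\fF''$, then prove non-semifiniteness via Connes's spectral theory.

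First I would handle the reduction. The set $T(\fF'') := \{t \in \bbR : \sigma_t \in \operatorname{Inn}(\fF'')\}$ is a subgroup of $\bbR$, and by Connes's classification of the $T$-invariant for factors with separable predual, $T(M)$ is either all of $\bbR$ (iff $M$ is semifinite) or an explicit countable subgroup (depending on the III${}_\lambda$ classification). Since $\fF''$ is a factor by the KMS-1 uniqueness discussion preceding the theorem, it follows that showing $\fF''$ is not semifinite simultaneously yields both assertions of the theorem.

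For non-semifiniteness, I would compute Connes's spectrum $\Gamma(\sigma) \subset \bbR$ and show it is non-trivial; this forces $S(\fF'') \setminus \{0\} = \exp(\Gamma(\sigma)) \supsetneq \{1\}$, so $\fF''$ is type III. From~\eqref{eq:ModularAutomorphism}, each non-zero block $\cC(a_1\otimes\cdots\otimes a_n \to b_1\otimes\cdots\otimes b_n) \subset \fF$ is an eigenspace of $\sigma$ at the frequency
$$
s(\vec a, \vec b) := \sum_{i=1}^n \log d_{a_i} - \sum_{i=1}^n \log d_{b_i},
$$
so the Arveson spectrum $\operatorname{Sp}(\sigma)$ is the closure of $\Sigma := \{s(\vec a, \vec b) : \cC(\vec a \to \vec b) \neq 0\}$. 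Under the hypothesis, pick $c \in \Irr(\cC)$ with $d_c > 1$; since $d_{c\otimes c^*} = d_c^2 > 1$ and $1$ appears in $c\otimes c^*$ by rigidity, some other simple $d \neq 1$ also appears. An isometry $\iota \in \cC(d \to c\otimes c^*)$, viewed in $\cC(d\otimes 1 \to c\otimes c^*) \subset \End(X^2) = \fF(I)$ for a two-site interval $I$, witnesses $\log(d_d/d_c^2) \in \Sigma \setminus \{0\}$.

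To upgrade the Arveson spectrum to the Connes spectrum, I would use the translation invariance of $\fF$ to place independent copies of $\iota$ at arbitrarily far-apart two-site windows. Combined with local approximants to any non-zero projection $e$ in the centralizer $(\fF'')^\sigma$, these should produce non-vanishing spectral elements in each corner $e\fF''e$, showing $\log(d_d/d_c^2) \in \Gamma(\sigma)$; faithfulness of $\psi_\fF$ (Lemma~\ref{lem:CanonicalStateFaithful}) ensures the relevant products do not vanish. The hardest step will be justifying this Arveson-to-Connes upgrade rigorously---in particular, controlling how far from the effective support of $e$ the copies of $\iota$ must be placed to guarantee persistence of the frequency. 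The finer identification of $\Gamma(\sigma)$ as the additive subgroup of $\bbR$ generated by $\{\log d_c : c \in \Irr(\cC)\}$, yielding the precise III${}_\lambda$ type, is presumably the content of the appendix by Izumi.
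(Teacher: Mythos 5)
Your overall architecture (exhibit a nonzero point of the Connes spectrum of $\sigma$, then appeal to the classification) could be made to work, but as written there are two genuine gaps, and the decisive one is the step you yourself flag. Passing from the Arveson spectrum to the Connes spectrum --- showing the frequency $\log(d_d/d_c^{2})$ survives in every corner $e\fF''e$ cut by a nonzero projection $e$ in the centralizer --- is precisely the mathematical content of the theorem, and your proposal leaves it at the level of ``should produce non-vanishing spectral elements.'' The paper's proof is a rigorous execution of exactly this idea, packaged without spectral machinery: it takes the translated eigenoperators $x_n$ (copies of $\ev_c$ at sites $n,n+1$, which satisfy $\sigma_t(x_n)=d_c^{2it}x_n$ by \eqref{eq:ModularAutomorphism}), proves they form a $\psi_\fF$-central sequence --- using the KMS condition to show right multiplication by $x_n$ is $\|\cdot\|_{\psi_\fF}$-bounded uniformly in $n$, and Lemma \ref{lem:ApproximateCentralizer} to approximate centralizer elements by local elements that eventually commute with $x_n$ --- and then invokes Lemma \ref{lem:PhiCentralInner}: an inner $\psi_\fF$-preserving automorphism must asymptotically fix any $\psi_\fF$-central sequence, whereas $\|\sigma_t(x_n)-x_n\|_{\psi_\fF}=|d_c^{2it}-1|\cdot K$ with $K>0$ independent of $n$. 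This gives outerness directly for every $t\notin\frac{\pi}{\log d_c}\bbZ$, and type $\rm III$ then follows from Takesaki's theorem (semifinite implies the modular group is inner) --- the reverse of your logical order. If you carry out your ``hardest step'' you will essentially be reproving this central-sequence lemma, so you may as well argue that way from the start; note also that you do not need an auxiliary simple $d\neq 1$ in $c\otimes\bar{c}$: the unit summand (i.e.\ $\ev_c$ itself) already gives the nonzero frequency $-2\log d_c$.

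The second gap is the reduction ``not semifinite $\Rightarrow$ $\sigma_t$ outer for all but countably many $t$.'' Countability of Connes' $T$-invariant does not follow from non-semifiniteness alone: for type ${\rm III}_0$ factors, $T(M)$ is the point spectrum of the flow of weights, which for nonsingular ergodic flows need not be countable, and in any case the ``explicit countable subgroup'' you cite only covers the ${\rm III}_\lambda$ ($0<\lambda\le 1$) cases. Your plan only recovers countability after the Connes spectrum has been shown nontrivial (which rules out ${\rm III}_0$) --- that is, after the very step that is missing. The paper sidesteps this entirely because its central-sequence argument identifies an explicit countable set of possibly-inner parameters, namely $\frac{\pi}{\log d_c}\bbZ$.
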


To prove this theorem, we make the following definition.

\begin{defn}
Given a von Neumann algebra $M$ and a faithful state $\phi$, a \emph{$\phi$-central sequence} is a norm bounded sequence $(x_n)\subset M$ such that $\|yx_n-x_ny\|_\phi\to 0$ for all $y\in M^\phi$, where $M^\phi$ is the centralizer of $\phi$.
\end{defn}

\begin{lem} 
\label{lem:PhiCentralInner}
Let $\alpha$ be a $\phi$-preserving automorphism of $M$ and $(x_n)$ a $\phi$-central sequence. 
If $\alpha$ is inner, then $||\alpha(x_{n})-x_{n}||_{\phi}\rightarrow 0$.
\end{lem}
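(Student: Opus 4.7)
The plan is to reduce everything to the well-known fact that an inner automorphism preserving a faithful normal state must be implemented by a unitary in the centralizer of that state. Once that is established, the hypothesis of $\phi$-centrality is directly applicable, and only a short isometry argument remains.

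Concretely, since $\alpha$ is inner, I would write $\alpha = \operatorname{Ad}(u)$ for some unitary $u\in M$. The first step is to argue that $u\in M^{\phi}$. Because $\alpha$ preserves $\phi$, so does $\alpha^{-1}=\operatorname{Ad}(u^{*})$, which gives $\phi(u^{*}yu)=\phi(y)$ for every $y\in M$. Substituting $y=uz$ yields $\phi(zu)=\phi(uz)$ for all $z\in M$, which is precisely the characterisation of elements of the centraliser $M^{\phi}$. Hence $u\in M^{\phi}$, as desired.

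With $u\in M^{\phi}$ in hand, the definition of $\phi$-central sequence immediately gives
\[
\|ux_{n}-x_{n}u\|_{\phi}\longrightarrow 0.
\]
To connect this with the quantity we actually care about, I write $\alpha(x_{n})-x_{n}=(ux_{n}-x_{n}u)u^{*}$ and observe that right multiplication by $u^{*}$ is $\|\cdot\|_{\phi}$-isometric on $M$: for $a\in M$, the centraliser property $\phi(uc)=\phi(cu)$ (applied with $c=a^{*}au^{*}$) gives
\[
\|au^{*}\|_{\phi}^{2}
=\phi(ua^{*}au^{*})
=\phi(a^{*}au^{*}u)
=\phi(a^{*}a)
=\|a\|_{\phi}^{2}.
\]
Applying this with $a=ux_{n}-x_{n}u$ yields $\|\alpha(x_{n})-x_{n}\|_{\phi}=\|ux_{n}-x_{n}u\|_{\phi}\to 0$.

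I do not foresee any real obstacle; the only conceptual input is the identification $u\in M^{\phi}$, after which everything reduces to one routine computation with $\|\cdot\|_{\phi}$. If the paper uses the symmetrised seminorm $\phi(x^{*}x+xx^{*})^{1/2}$ instead of $\phi(x^{*}x)^{1/2}$, the isometry step becomes even easier (right multiplication by any unitary is then automatically isometric), and only the centraliser step is needed.
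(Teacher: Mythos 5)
Your proof is correct and follows essentially the same route as the paper: write $\alpha=\Ad(u)$, observe that $u$ lies in the centralizer $M^{\phi}$, and combine $\phi$-centrality with the fact that right multiplication by the unitary $u^{*}$ is a $\|\cdot\|_{\phi}$-isometry. The only difference is that you spell out the verification that $\phi(uz)=\phi(zu)$ for all $z$ (hence $u\in M^{\phi}$) and the isometry computation, both of which the paper simply asserts.
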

\begin{proof}
    If $\alpha=\Ad(u)$ is $\phi$-preserving, then $u\in M^{\phi}$, and $\phi(yu)=\phi(uy)$ for all $y\in M$. 
    Thus right multiplication by $u$ (and $u^{*}$) is a $|| \cdot ||_{\phi}$-isometry. 
    We conclude
\[
    0=\lim ||ux_{n}-x_{n}u||_{\phi}=\lim ||ux_{n}u^{*}-x||_{\phi}=\lim ||\alpha(x_{n})-x_{n}||_{\phi}.
\qedhere    
\]    
\end{proof}

We also need an observation about centralizers.
Recall that there is a canonical $\phi$-preserving conditional expectation $E: M\to M^\phi$ defined as follows.
First, consider the canonical injection $\iota: L^2(M^\phi,\phi)\hookrightarrow L^2(M,\phi)$.
Then for $x\in M$, $E(x):=\iota^* x \iota \in B(L^2(M^\phi, \phi))$ and commutes with the bounded right $M^\phi$-action, and thus defines an element of $M^\phi$. 

Now suppose $M=\overline{\bigcup M_n}^{SOT}$ is a hyperfinite von Neumann algebra with each $M_n$ finite dimensional.
We then get a conditional expectation $E_n: M_n\to M_n^\phi$ as above.

\begin{lem}
\label{lem:ApproximateCentralizer}
Let $t \mapsto \sigma_t^\phi$ be the modular automorphism group of $M$ for the state $\phi$.
Suppose $\sigma_t^\phi$ preserves $M_n$ for all $n$ and all $t\in\bbR$.
Then $E|_{M_n}=E_n$.
In particular, $M^\phi = \overline{\bigcup M_n^\phi}^{SOT}$.
\end{lem}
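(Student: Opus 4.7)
The plan is to realize both $E$ and $E_n$ as orthogonal projections on $L^2(M,\phi)$ and exploit commutation of the projection onto $L^2(M_n,\phi)$ with spectral projections of the modular operator $\Delta$. First, since $\sigma_t^\phi$ globally preserves the finite-dimensional subalgebra $M_n$, its restriction is a one-parameter automorphism group of $M_n$ satisfying the KMS condition for the faithful state $\phi|_{M_n}$; by uniqueness of the modular flow this is the modular group of $\phi|_{M_n}$, so $M_n^\phi = M_n \cap M^\phi$.

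Next, I would invoke the standard Tomita--Takesaki fact that $L^2(M^\phi,\phi)$ equals the fixed-point subspace of $\Delta^{it}$ in $L^2(M,\phi)$, so the orthogonal projection $P^\phi$ onto it is a spectral projection of $\Delta$. Since $\Delta^{it}\hat y = \widehat{\sigma_t^\phi(y)}$ and $\sigma_t^\phi(M_n)=M_n$, the unitary $\Delta^{it}$ preserves the finite-dimensional subspace $L^2(M_n,\phi) = \widehat{M_n}$, so the projection $P_n$ onto $L^2(M_n,\phi)$ commutes with each $\Delta^{it}$ and therefore with $P^\phi$. The commuting product $P^\phi P_n = P_n P^\phi$ then projects onto $L^2(M_n,\phi)\cap L^2(M^\phi,\phi)$, which equals $L^2(M_n^\phi,\phi)$: any vector in this intersection is of the form $\hat y$ with $y\in M_n$ by finite-dimensionality, and $\Delta^{it}$-fixedness of $\hat y$ forces $y\in M^\phi\cap M_n = M_n^\phi$. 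For $x \in M_n$, the definitions of $E$ and $E_n$ then yield
\[
\widehat{E(x)} \;=\; P^\phi\hat x \;=\; P^\phi P_n \hat x \;=\; P_n P^\phi \hat x \;=\; \widehat{E_n(x)},
\]
the last equality being the finite-dimensional definition of $E_n$ as compression by $\iota_n$. Separating property of $\Omega_\phi$ then gives $E(x) = E_n(x)$.

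For the final statement, $\overline{\bigcup M_n^\phi}^{SOT} \subseteq M^\phi$ is clear since $M^\phi$ is SOT-closed. Conversely, given $y \in M^\phi$, Kaplansky density applied to the SOT-dense $*$-subalgebra $\bigcup M_n \subset M$ produces a uniformly bounded net $x_\alpha \in \bigcup M_n$ with $x_\alpha \to y$ in SOT, hence $\|x_\alpha - y\|_\phi \to 0$. Applying the normal, $L^2$- and norm-contractive conditional expectation $E$ gives $E_n(x_\alpha) = E(x_\alpha) \to y$ in $\|\cdot\|_\phi$ with $E_n(x_\alpha) \in \bigcup_k M_k^\phi$ uniformly bounded, which yields SOT convergence. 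The only nontrivial ingredient is the identification $L^2(M^\phi,\phi) = \ker(\Delta-1)$; after that, the argument reduces to bookkeeping about commuting projections and a finite-dimensional intersection.
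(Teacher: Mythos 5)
Your proof is correct and follows essentially the same route as the paper: both arguments hinge on the fact that invariance of $M_n$ under the modular flow makes the projection onto $L^2(M_n,\phi)$ commute with $\Delta^{it}$, hence with the projection onto the fixed-point subspace $L^2(M^\phi,\phi)$, which identifies $E|_{M_n}$ with $E_n$, and then deduce the density statement from SOT-density of $\bigcup M_n$ together with continuity of $E$. Your additions (identifying $M_n^{\phi}$ with $M_n\cap M^\phi$ via uniqueness of the modular group, and invoking Kaplansky density to control norms in the limiting step) are sound and merely make explicit points the paper leaves implicit.
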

\begin{proof}
Since each $\sigma^\phi_t$ preserves each $M_n$, $L^2(M_n,\phi)$ is an invariant subspace for each unitary $\Delta^{it}$ on $L^2(M,\phi)$, where $\Delta$ is the modular operator.
This means that $\Delta^{it}$ commutes with the orthogonal projection $p_n$ onto each $L^2(M_n, \phi)$ for all $n$, so the von Neumann algebra generated by all $\Delta^{it}$ and $p_n$ is abelian.  
Hence for each $t \in \bbR$, the spectral projection $e_t$ onto the $\lambda=1$ eigenspace for $\Delta^{it}$ commutes with each $p_n$.  
The orthogonal projection $e$ onto the intersection of these eigenspaces is given by $e = \prod_{t \in \bbR} e_t$, where the limit is taken in SOT.  
Hence $e$ is in the von Neumann algebra generated by the $\Delta^{it}$ and $p_n$, so it commutes with each $p_n$.  
Since each $e_t$ projects onto the invariant subspace for $\Delta^{it}$, it follows that $e$ projects onto $L^2(M^\phi, \phi)$ embedded in $L^2(M, \phi)$, so $E(x) \Omega = e x \Omega$ for $x \in M$.
Thus for $x\in M_n$,
\[
E(x)\Omega = e x \Omega = ep_nx\Omega = p_n ex\Omega = p_nE(x)\Omega,
\]
and thus $E(x)\Omega \in L^2(M_n,\phi)$.
We conclude that $E_n(x)=E(x)$.

To prove the final claim, suppose $x\in M^\phi$ and $x_n\in M_n$ with $x_n\to x$ SOT.
Then $E_n(x_n) = E(x_n)\to E(x)=x$ SOT. 
Since $E_n(x_n)\in M_n^\phi$, we are finished.
\end{proof}

\begin{proof}[Proof of Theorem \ref{thm:InnerIffPointed}]
Suppose there is a $c\in \Irr(\cC)$ with $d_c\neq 1$.
Let $t\in \bbR$ such that $d_c^{2it}\neq 1$.
We define a $\psi_\fF$-central sequence using the map $\ev_c : \overline{c}\otimes c\to 1_\cC$.
To begin, $x_0 = \ev_c$ localized at sites $0,1$ in our $\bbZ$-lattice $\cK$.
We then set $x_n:= \tau_n(\ev_c)$, where $\tau_n$ is translation by $n\in\bbN$, i.e.:
$$
x_n
=
\tikzmath{
\node at (-.9,0) {$\cdots$};
\draw (-.1,-.4) --node[left]{$\scriptstyle X$} (-.1,.4)  node[above]{$\scriptstyle n{-}1$};
\draw (.3,-.4) node[below]{$\scriptstyle n$} --node[left]{$\scriptstyle \overline{c}$} (.3,-.2) arc (180:0:.3cm) --node[right]{$\scriptstyle c$} (.9,-.4) node[below]{$\scriptstyle n{+}1$};
\draw (1.3,-.4) --node[right]{$\scriptstyle X$} (1.3,.4) node[above]{$\scriptstyle n{+}2$};
\node at (2.1,0) {$\cdots$};
}
$$
Note that $\sigma_i(x_n) = d_c^{-2} x_n$ for all $n$.
We claim that $(x_n)$ is $\psi_\fF$-central.
Indeed, for $y\in (\fF'')^{\psi_\fF}$, 
\begin{align}
\|yx_n\|_{\psi_\fF}^2
&=
\psi_\fF(x_n^*y^*yx_n) 
= 
\psi_\fF(yx_n\sigma_{i}(x_n^*)y^*)
=
d_c^{-2}\psi_\fF(yx_nx_n^*y^*)
\notag
\\&\leq
d_c^{-2}\|x_n\|^2\psi_\fF(y^*y)
=
d_c^{-2}\|x_n\|^2 \|y\|_{\psi_\fF}^2.
\label{eq:RightMultBounded}
\end{align}
Let $\varepsilon>0$, and choose a finite interval $I\subset \bbZ$ and $z\in \fF(I)^{\psi_\fF}$ such that 
$\|y-z\|_{\psi_\fF}<\varepsilon$ (which exists as $\bigcup \fF(I)$ is SOT-dense in $\fF''$ together with Lemma \ref{lem:ApproximateCentralizer}).
Pick $N\in\bbN$ large so that $n>N$ implies $x_nz-zx_n=0$.
We now calculate that when $n>N$,
\begin{align*}
\|yx_n - x_n y\|_{\psi_\fF}
&=
\|yx_n -zx_n + x_nz - x_n y \|_{\psi_\fF} 
\\&\leq
\|(y-z)x_n\|_{\psi_\fF} + \|x_n(z-y)\|_{\psi_\fF} 
\\&\leq 
(d_c^{-2}+1)\|x_n\| \|y-z\|_{\psi_\fF} 
&&\text{by \eqref{eq:RightMultBounded}}
\\&\leq
(d_c^{-2}+1)C \cdot \varepsilon
\end{align*}
where $C=\|x_n\|$ is independent of $n$ by translation invariance of $\psi_\fF$.
We conclude that $(x_n)$ is $\psi_\fF$-central.
However, 
$$
\|\sigma_t(x_n)-x_n\|_{\psi_\fF}
=
\|(d_c^{2it}-1)x_n\|_{\psi_\fF}
=
|d_c^{2it}-1| \cdot \|x_n\|_{\psi_\fF}
=
|d_c^{2it}-1| \cdot K
$$
where $K = \|x_n\|_{\psi_\fF}$ is independent of $n$.
By Lemma \ref{lem:PhiCentralInner}, $\sigma_t$ is not inner.

The last claim follows immediately by \cite[Thm.~VIII.3.14]{MR1943006}.
\end{proof}

\begin{rem}
\label{rem:LWBoundaryType}
Recall that type~\rm{III} factors can be further classified by a parameter $\lambda \in [0,1]$.
In  Appendix~\ref{app:typeiii} authored by Masaki Izumi,
he shows there are two cases, based on the fusion rules for $\cC$.
If the set
\[
    \set{ 
    \frac{d_a d_b}{d_c}
    }{
    a,b,c \in \Irr(\cC) \textrm{ and } N_{ab}^c \geq 1 
    }
\]
generates $\bbR_{>0}$ as a closed subgroup, then $\fF''$ is of type $\rm{III}_1$.
We call a triple $(a,b,c)$ such that $N_{ab}^c \geq 1$ \emph{admissable}.
If the generated subgroup is not dense, there is a $0 < \lambda < 1$ such that for all admissable triples $(a,b,c)$, we have
\[
    \frac{d_a d_b}{d_c} = \lambda^{Z^c_{ab}}
    \qquad\qquad\qquad\text{for some }Z^c_{ab}\in \bbZ,
\]
and the integers $Z^c_{ab}$ together generate $\bbZ$.
In this case, $\fF''$ is of type $\rm{III}_\lambda$.
\end{rem}

\begin{ex}
Let $\cC$ be the Fibonacci category with simple objects $1,\tau$ satisfying $\tau\otimes \tau \cong 1\oplus \tau$ and $d_\tau = \phi$, the golden ratio.
Since
$$
d_\tau^{it}=\phi^{it}=\exp(\log(\phi^{it}))=\exp(it\log(\phi)),
$$
whenever $t\in \frac{2\pi}{\log(\phi)}\bbZ$, $d_\tau^{it}=1$.
Since $1^{it}$ is always 1, this says that $\sigma_t$ is inner for these $t$.
We conclude that $\fF''$ is not type $\rm III_1$ \cite[Thm.~3.4.1]{MR0341115}.
In fact, using the results in Appendix~\ref{app:typeiii}, we can determine the type.
Note that the only non-trivial admissible triples are $(\tau, \tau, 1)$ and $(\tau, \tau, \tau)$, leading to the ratios $d_\tau^2$ and $d_\tau$.
Since these generate the non-zero part of $S(\fF'')$, it follows that $\lambda = d_\tau^{-1}$ and $\fF''$ is of type ${\rm{III}}_{d_\tau^{-1}}$.


The above example can be modified to any near-group or Tambara-Yamagami UFC.
\end{ex}

\begin{rem}
The classification program for topologically ordered phases of matter is about \emph{gapped} phases.
That is, one considers equivalence classes of (local) Hamiltonians which have a spectral gap in the thermodynamic limit.
This means that in the GNS repersentation, the (unbounded) Hamiltonian implementing the dynamics in the GNS representation of the ground state has a spectral gap between its lowest eigenvalue and the rest of the spectrum.
Our framework does not make reference to the spectral gap directly (although in our examples, the nets of projections come from gapped Hamiltonians), but a natural question to ask is if the canonical state associated to $(\fA, p)$ can be realized as the ground state of a gapped Hamiltonian.
Because we want the dynamics to be local, this is a non-trival question.

In the \emph{projected entangled pair state} (PEPS) setting, a particular class of tensor network states, it turns out that there is a relation between the spectral gap of the 2D bulk state, and properties of the 1D boundary state.%
\footnote{Many 2D states with topological order have a PEPS representation (including the states we consider here), but it is still an open problem if \emph{every} gapped quantum phase contains at least one PEPS representative~\cite{MR3995422}.}
It is known that every PEPS is the ground state of a local Hamiltonian, called the \emph{parent Hamiltonian}, but showing if this is gapped is generally very difficult.
However, important progress has been made.
For a 2D PEPS, there is a canonical way to define a 1D boundary state, and a corresponding boundary Hamiltonian~\cite{PhysRevB.83.245134}.
Based on numerical evidence, it was conjectured that the bulk Hamiltonian is gapped if and only if the boundary Hamiltonian is short-ranged.
Later, it was shown that if the boundary state is ``approximately factorizable'', then the bulk state indeed is gapped, and that 1D thermal states of finite range Hamiltonians satisfy this property~\cite{MR3614053,MR3927082}.

In the example above, we have shown that our boundary state $\psi_\fF$ is a thermal state for a local Hamiltonian.
Moreover, the bulk state is the Levin-Wen model, which has a spectral gap.
Hence this is consistent with the result in the PEPS setting, even if our framework does not make reference to Hamiltonians.
On the other hand, it is known that for the boundary states of \emph{topologically ordered} PEPS (such as the $G$-injective PEPS for $G$ a finite group), the local density operators do not have full rank, in contrast to KMS states.
One can, however, project down to the support of the local density operators and define a boundary Hamiltonian on this subspace (cf.~\S 5.6 of~\cite{MR3927082}).
But this is essentially what happens already on the algebra level in the construction of the boundary algebras.\footnote{As we remark at the end of \S\ref{sec:BoundaryNet}, we do not know if the boundary state is faithful in general, but it is for all our examples.}

Hence it is an interesting question how our results relate to the PEPS setting, also because we do not assume that the canonical state on the bulk has a PEPS representation.
For example, one could ask if our canonical boundary state coincides with that defined in~\cite{PhysRevB.83.245134}. We hope to return to this question at a later point.
\end{rem}

\subsection{Cone algebras}

In this section, we specialize to the case when our net $(\fA,p)$ is a quantum spin system as in Example \ref{ex:SpinSystem} with a translation-invariant net of projections.
We now discuss the connection between the canonical state on the boundary algebra $\fB$ and the analysis of the \emph{cone algebras} $\fA(\Lambda)''$ of the Toric Code and Kitaev quantum double models from \cite{MR4721705}.
A \emph{cone} $\Lambda \subset \bbZ^2$ is a region of the form
$$
\tikzmath{
\draw[step=.5] (-1.7,-1.7) grid (1.7,1.7);
\draw[very thick] (-1,-1) -- (.3,1.7);
\draw[very thick] (-1,-1) -- (1.7,.3);
\node at (.75,.75) {$\Lambda$};
}
$$
and the algebra $\fA(\Lambda)$ is the $\rm C^*$-subalgebra of $\fA$ supported on sites in $\Lambda$.
The \emph{cone algebra} is then $\fA(\Lambda)''$, where the von Neumann completion is taken in the GNS representation on $L^2(\fA, \psi)$.  
These algebras are of interest as these cone regions are used to describe the excitations of topologically ordered spin systems using superselection theory \cite{MR2804555, MR3426207,MR4362722}.
In particular, the excitations for Toric Code \cite{MR2804555} and the abelian quantum double model \cite{MR3426207} have been described by localized and transportable endomorphisms of the quasi-local algebra $\fA$ (superselection sectors), where the localizing regions used are precisely these cones.  
Since the intertwining morphisms between two sectors live in the cone algebras, and since these endomorphisms can be uniquely extended to the cone algebras in a WOT-continuous fashion, the cone algebras become of interest when studying these models.  

When we have a net of projections $(p_\Lambda)$ satisfying \ref{LTO:Hastings}, we can use the canonical state $\psi$ to employ the argument in \cite{MR2804555} to show the cone algebras are factors.

\begin{prop}
\label{prop:ConeAlgebrasFactors}
Let $\Lambda$ be any subset of the lattice. 
Then $\fA(\Lambda)''$ in the GNS representation on $L^2(\fA,\psi)$ is a factor.
When $\Lambda$ is a cone, $\fA(\Lambda)''$ is infinite.
\end{prop}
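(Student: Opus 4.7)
The plan is to handle the two claims separately: factoriality (for any $\Lambda$) follows cleanly from purity of the canonical state $\psi$ together with locality, while infiniteness (for cones) requires translation invariance and the specific geometry of a cone.

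For factoriality, the main input is that $\psi$ is pure (Corollary~\ref{cor:UniqueStateTakingOneOnNet}), so $\pi_\psi$ is irreducible and $\pi_\psi(\fA)'=\bbC\cdot 1$. Let $z\in Z(\fA(\Lambda)'')=\fA(\Lambda)'\cap\fA(\Lambda)''$. Being in $\fA(\Lambda)'$, $z$ commutes with $\fA(\Lambda)$. Locality~\ref{N:local} gives $\fA(\Lambda^c)\subseteq \fA(\Lambda)'$, so taking commutants, $\fA(\Lambda)''\subseteq \fA(\Lambda^c)'$; hence $z$ also commutes with $\fA(\Lambda^c)$. Because $\fA$ is a quantum spin system (Example~\ref{ex:SpinSystem}), every finite rectangle $\Delta$ factors as $\fA(\Delta)=\fA(\Delta\cap \Lambda)\otimes \fA(\Delta\cap \Lambda^c)\subseteq \fA(\Lambda)\vee \fA(\Lambda^c)$. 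Thus $z$ commutes with the norm-dense union $\bigcup_\Delta \fA(\Delta)$, and so with $\fA$. Irreducibility forces $z\in\bbC\cdot 1$.

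For infiniteness, I would use translation invariance of $\psi$ (Corollary~\ref{cor:TranslationInvariant}), which provides a unitary $U_g$ on $L^2(\fA,\psi)$ implementing each $\alpha_g$. The key geometric fact about a cone $\Lambda$ is twofold: (i) there exists a translation $g$ along the cone's axis with $g+\Lambda\subsetneq \Lambda$, and (ii) $\Lambda$ contains infinitely many pairwise disjoint translates $\{g_n+\Lambda_0\}$ of any finite rectangle $\Lambda_0\Subset\Lambda$. Fact (i) gives $U_g\fA(\Lambda)''U_g^*=\fA(g+\Lambda)''\subsetneq\fA(\Lambda)''$, exhibiting $M:=\fA(\Lambda)''$ as a proper unital subfactor of itself via an inner automorphism of $B(L^2(\fA,\psi))$. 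Fact (ii) embeds a UHF algebra $\fR\cong\bigotimes_{n\ge 0}M_k(\bbC)$ into $\fA(\Lambda)$ (where $k=d^{|\Lambda_0|}$), on which the restriction of $\alpha_g$ acts as a genuine one-sided shift. The plan is then to combine these, following the approach of \cite{MR2804555} (and the general scheme for translation-invariant pure states on spin systems going back to Powers): if $M$ were finite, the shift-endomorphism $\Ad(U_g)|_M$ together with the shift-invariant state $\psi|_\fR$ would contradict the purity/irreducibility of $\pi_\psi$ on the bulk.

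The main obstacle is the infiniteness part, specifically ruling out the type $\rm II_1$ case: type $\rm I_n$ is immediately excluded because $M$ contains unital copies of $M_{d^k}(\bbC)$ for arbitrarily large $k$ (via fact (ii)), forcing $\dim M=\infty$. Ruling out $\rm II_1$, however, requires genuine use of translation invariance plus purity of $\psi$, and is where the Powers-type shift argument must be carried out carefully, rather than a purely abstract argument from the LTO axioms. Factoriality, by contrast, is essentially a one-line consequence of purity plus spin-system locality.
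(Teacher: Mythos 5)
Your factoriality argument is correct and is essentially the paper's own proof: purity of $\psi$ gives $\fA(\Lambda)''\vee\fA(\Lambda^c)''=B(L^2(\fA,\psi))$, and any central element of $\fA(\Lambda)''$ commutes with both $\fA(\Lambda)$ and $\fA(\Lambda^c)$, hence with all of $\fA$, hence is a scalar. The paper phrases this as a one-line commutant computation and you phrase it elementwise, but the inputs (purity plus the tensor factorization of local algebras over $\Lambda$ and $\Lambda^c$) are the same.

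The infiniteness half is where there is a genuine gap. The paper disposes of it by invoking the argument of \cite[Thm.~5.1]{MR2804555}; your sketch does not reproduce that argument, and the mechanism you propose cannot work as stated. Purity of $\psi$ places no obstruction on $\fA(\Lambda)''$ being of type $\rm II_1$: a $\rm II_1$ factor $M$ in standard form satisfies $M\vee M'=B(H)$, so irreducibility of $\pi_\psi$ is perfectly compatible with a finite cone algebra. Likewise, both of your facts (i) and (ii) are enjoyed by the hyperfinite $\rm II_1$ factor --- it admits proper unital endomorphisms implemented by ambient unitaries and it contains unital UHF subalgebras on which such an endomorphism acts as a one-sided shift preserving the trace --- so no ``Powers-type'' contradiction can follow from translation covariance and purity alone; the crucial step you flag as the obstacle is not just delicate, it is unavailable at that level of generality. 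The standard route, and the essence of the argument the paper cites, is model-specific: if $\fA(\Lambda)''$ were finite, its trace would restrict to the unique tracial state $\tr$ on the UHF algebra $\fA(\Lambda)$ and would be normal in $\pi_\psi$, so $\psi|_{\fA(\Lambda)}$ and $\tr$ would be quasi-equivalent factor states and hence would become norm-close on observables localized far out in the cone; but $\psi(p_\Delta)=1$ (for the Toric Code, $\psi(A_s)=\psi(B_p)=1$) for rectangles $\Delta$ arbitrarily deep inside $\Lambda$, while the trace of these projections stays bounded away from $1$ --- a contradiction. Some such use of the net of projections (or of string operators producing a non-unitary isometry in the cone algebra) is the missing idea; without it, your step ``finiteness contradicts purity'' fails.
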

\begin{proof}
This is a standard argument, but we repeat it here for convenience.
First, since $\psi : \fA \to \bbC$ is a pure state, we have that 
\[
\fA(\Lambda)'' \vee \fA(\Lambda^c)''
=
\fA''
=
B(L^2(\fA,\psi)).
\]
This immediately implies
\[
\cZ(\fA(\Lambda)'')'
=
(\fA(\Lambda)'' \cap \fA(\Lambda)')'
\supseteq
\fA(\Lambda)' \vee \fA(\Lambda)''
\supseteq
\fA(\Lambda^c)'' \vee \fA(\Lambda)''
=
B(L^2(\fA,\psi)),
\]
and thus $\fA(\Lambda)''$ is a factor.

When $\Lambda$ is a cone, the argument from \cite[Thm.~5.1]{MR2804555} shows that $\fA(\Lambda)''$ is infinite.
\end{proof}

In \cite{MR4721705}, Ogata proves that the cone algebras $\fA(\Lambda)''\subset B(L^2(\fA,\psi))$ with rough edges for the Kitaev quantum double model are type $\rm II_\infty$ factors by essentially\footnote{Really, Ogata proves this for a region $\Lambda'$ differing from the original cone by a finite number of edges.
She does this by intersecting $\Lambda'$ by larger and larger rectangles and showing that when you cut down by the support projection for the state $\psi$ on these finite regions, the obtained state is a trace.} 
showing there is a projection $p_\Lambda\in \fA(\Lambda)''$ such that the ground state restricted to the corner $p_\Lambda \fA(\Lambda)'' p_\Lambda$ is a trace.
Here, we use the suggestive notation $p_\Lambda$, as the chosen projection in \cite{MR4721705} is essentially $\prod_{s\subset \Lambda} A_s \prod_{p\subset \Lambda}B_p$, which exists as an infimum of projections in $\fA(\Lambda)''$.
(In Kitaev's quantum double model, $A_s,B_p$ are orthogonal projections, in contrast to $A_s,B_p$ from Kitaev's Toric Code model.)

Kitaev's quantum double model is simultaneously a model for $\fdHilb(G)$ and $\Rep(G)$; really, it is a model for the quantum double $\cD(G)$, which is the center of both $\fdHilb(G)$ and $\Rep(G)$.
(See also Remark \ref{rem:FiniteDepthQuantumCircuit}.)
For the $\cC=\fdHilb(G)$ Levin-Wen model, the compression of the cone algebra $p_\Lambda \fA(\Lambda)''p_\Lambda$ is almost\footnote{We get the boundary algebra on the nose if the sites in $\Lambda$ are connected by edges and plaquettes contained entirely in $\Lambda$. 
However, if sites are disconnected, we will get an amplification of a boundary algebra by a certain finite dimensional algebra.
} 
exactly the
von Neumann algebra $\fF''$ in the GNS representation of the canonical state $\psi_\fF$.
Indeed, consider $\Lambda$ to be the third quadrant
(including the sites on the axes).
$$
\tikzmath{
\draw[step=.5] (.4,.4) grid (-3.4,-3.4);
\draw[very thick] (-3.4,0) -- (.4,0);
\draw[very thick] (0,-3.4) -- (0,.4);
\draw[thick, blue, rounded corners = 5pt] (.2,.2) rectangle (-2.2,-2.2);
\node[blue] at (-1.25,-1.25) {$\Delta_4$};
}
$$
Let $\Delta_n$ be the $n^2$-rectangle in $\Lambda$ with northeast corner at the origin.
Given a local operator $x\in \fA(\Delta_n)\subset \fA(\Lambda)$, by an argument similar to the proof of \ref{LTO:Boundary} for Theorem \ref{thm:LTQO}, there is a unique $\varphi \in \End_\cC(X^{2n})$ such that for every $k>n$, $p_{\Delta_{k}} x p_{\Delta_k} = \Gamma_\varphi p_{\Delta_k}$, where $\Gamma_\varphi$ now glues $\varphi$ onto the northeast boundary of $\Delta_n$.

Observe now that
$\varinjlim p_{\Delta_{n+1}} \fA(\Delta_n)p_{\Delta_{n+1}} \cong \fF$;
on this algebra, the ground state is exactly the canonical state $\psi_\fF$ from Proposition \ref{prop:CanonicalState} above.
Since 
$\fA(\Lambda)=\varinjlim \fA(\Delta_n)$
and
$p_{\Delta_{k}}\to p_\Lambda$ SOT,
we conclude that $p_\Lambda \fA(\Lambda)''p_\Lambda$ is exactly the von Neumann algebra $\fF''$ in the GNS representation of the canonical state $\psi_\fF$.
In the case $\cC=\fdHilb(G)$, which is analogous to Kitaev's quantum double model, $\psi_\fF = \tr$, the unique Markov trace as discussed in Remark \ref{rem:MarkovTrace}.

The recent article \cite{MR4814524} computes the boundary algebras for Kitaev's quantum double model and indeed verifies that the canonical state is again a trace.  
This gives more direct confirmation of the result in \cite{MR4721705} using the argument we just outlined above.  

However, we remark that when $\cC$ is not pointed, the cone algebras are no longer type $\rm II_\infty$, but rather type $\rm III$!

\begin{cor}[Cor.~\ref{cor:ConeAlgebraTypes}]
The cone algebra $\fA(\Lambda)''$ is type $\rm II_\infty$ if $\cC$ is pointed; otherwise $\fA(\Lambda)''$ is type $\rm III$.
\end{cor}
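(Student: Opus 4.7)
The plan is to leverage the identification established in the paragraph just above the corollary: $p_\Lambda \fA(\Lambda)''p_\Lambda \cong \fF''$, where $\fF''$ denotes the von Neumann algebra generated by the boundary net in the GNS representation of the canonical boundary state $\psi_\fF$. This reduces a question about the type of the cone algebra to a question about the type of a corner of that algebra, which has essentially been determined already.

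First I would check that $p_\Lambda$ is a nonzero projection in $\fA(\Lambda)''$. Because the net of projections $(p_{\Delta_k})$ with $\Delta_k$ exhausting $\Lambda$ is decreasing and each satisfies $\psi(p_{\Delta_k})=1$, the SOT limit $p_\Lambda=\bigwedge p_{\Delta_k}$ satisfies $p_\Lambda \Omega_\psi = \Omega_\psi$, hence $p_\Lambda \neq 0$. By Proposition~\ref{prop:ConeAlgebrasFactors}, $\fA(\Lambda)''$ is an infinite factor, and it is a standard fact that a nonzero compression of a factor is again a factor of ``the same type'' in the following precise sense: $\fA(\Lambda)''$ is type $\rm III$ if and only if $p_\Lambda \fA(\Lambda)'' p_\Lambda$ is type $\rm III$, and when both are semifinite the finiteness of $p_\Lambda$ in $\fA(\Lambda)''$ determines whether the compression is $\rm II_1$ versus $\rm II_\infty$.

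Next I would split into cases. If $\cC$ is pointed, Remark~\ref{rem:MarkovTrace} identifies $\psi_\fF$ with the unique Markov trace and shows $\fF''$ is a type $\rm II_1$ factor. Combined with infiniteness of $\fA(\Lambda)''$ from Proposition~\ref{prop:ConeAlgebrasFactors}, having a $\rm II_1$ corner forces $\fA(\Lambda)''$ to be type $\rm II_\infty$ (a type $\rm III$ factor has no finite corners, ruling that case out). If $\cC$ is not pointed, Theorem~\ref{thm:InnerIffPointed} shows that $\fF''$ is type $\rm III$, and by the invariance of type $\rm III$ under compression, $\fA(\Lambda)''$ is type $\rm III$ as well.

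The main obstacle is ensuring the identification $p_\Lambda \fA(\Lambda)'' p_\Lambda \cong \fF''$ is truly a von Neumann algebra isomorphism that transports the canonical cyclic vector $p_\Lambda \Omega_\psi$ to $\Omega_{\psi_\fF}$, rather than merely a $*$-algebra isomorphism on the norm-dense subalgebras $\bigcup_n p_{\Delta_{n+1}}\fA(\Delta_n)p_{\Delta_{n+1}} \cong \bigcup_n \fF([-n,n])$. This boils down to checking that the state $\phi(x):=\langle p_\Lambda \Omega_\psi \mid x \mid p_\Lambda \Omega_\psi\rangle/\|p_\Lambda\Omega_\psi\|^2$ on the corner agrees with $\psi_\fF$ under the identification, which is immediate from $\psi = \psi_\fF \circ \bbE$ on the half-space algebra $\fA_\bbH$ (Example~\ref{ex:BoundaryState}) together with faithfulness of $\psi_\fF$ on $\fF''$ (Lemma~\ref{lem:CanonicalStateFaithful}). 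Once the GNS representations match, uniqueness of the GNS construction yields the required spatial isomorphism of von Neumann algebras, and the type classification follows.
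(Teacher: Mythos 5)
Your proof is correct and follows essentially the same route as the paper: invoke Proposition~\ref{prop:ConeAlgebrasFactors} for factoriality and infiniteness, identify the corner $p_\Lambda\fA(\Lambda)''p_\Lambda$ with $\fF''$, and conclude via Remark~\ref{rem:MarkovTrace} (pointed, $\rm II_1$ corner, hence $\rm II_\infty$) versus Theorem~\ref{thm:InnerIffPointed} (non-pointed, type $\rm III$ corner, hence type $\rm III$). Your extra verification that the corner identification is a genuine spatial von Neumann isomorphism matching $\psi_\fF$ is a careful elaboration of what the paper asserts in the paragraph preceding the corollary, not a different argument.
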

\begin{proof}
By Proposition \ref{prop:ConeAlgebrasFactors}, $\fA(\Lambda)''$ is an infinite factor.
Since $\fF'' \cong p_\Lambda \fA(\Lambda)''p_\Lambda$ is either a type $\rm II_1$ factor when $\cC$ is pointed by Remark \ref{rem:MarkovTrace}
or a type $\rm III$ factor when $\cC$ is not pointed by Theorem \ref{thm:InnerIffPointed}, the result follows. 
\end{proof}

\subsection{Additional boundary states on the Levin-Wen boundary net}
\label{sec:AdditionalBoundaryStates}
In this section, we define some additional states on the Levin-Wen boundary net of algebras $\fF$ in terms of the UFC AF approximation $\End_\cC(X^n)$ where $X=\bigoplus_{c\in\Irr(\cC)} c$.
In future work, we will analyze the superselection theory for the boundary nets for these states.

\begin{ex}
\label{ex:UnitState}
Consider the inclusion isometry $\iota: 1_\cC\to X$ and its adjoint $\iota^\dag$.
We denote $\iota,\iota^\dag$ by a univalent vertex on the $X$-string.
We define a state $\phi_1$ on $\fF$ by 
$$
\phi_{1}(\varphi)
:=
\tikzmath{
\draw (-.3,-.7) -- (-.3,.7);
\filldraw (-.3,-.7) circle (.05cm);
\filldraw (-.3,.7) circle (.05cm);
\draw (.3,-.7) -- (.3,.7);
\filldraw (.3,-.7) circle (.05cm);
\filldraw (.3,.7) circle (.05cm);
\roundNbox{fill=white}{(0,0)}{.3}{.2}{.2}{$\varphi$}
\node at (.05,-.5) {$\cdots$};
\node at (.05,.5) {$\cdots$};
}
\qquad\qquad\qquad\qquad
\varphi\in \End_\cC(X^n).
$$
Observe that $\phi_1(\varphi) = \phi_1(\varphi \otimes \id_X) = \phi_1(\id_X\otimes \varphi)$ as $\iota^\dag\circ \iota = \id_{1}$, so we get a well-defined inductive limit state.
\end{ex}

The previous example can be generalized substantially.
Recall that that a \emph{Q-system} in $\cC$ is an algebra object $(Q,m,i)\in \cC$ satisfying the following axioms:
$$
\underset{\text{(Associative)}}{
\tikzmath{
\draw[thick, orange] (0,0) -- (.6,.6);
\draw[thick, orange] (.4,0) -- (.2,.2);
\draw[thick, orange] (.8,0) -- (.4,.4);
\filldraw[orange] (.2,.2) circle (.05cm);
\filldraw[orange] (.4,.4) circle (.05cm);
}
=
\tikzmath{
\draw[thick, orange] (0,0) -- (-.6,.6);
\draw[thick, orange] (-.4,0) -- (-.2,.2);
\draw[thick, orange] (-.8,0) -- (-.4,.4);
\filldraw[orange] (-.2,.2) circle (.05cm);
\filldraw[orange] (-.4,.4) circle (.05cm);
}
}
\quad\qquad
\underset{\text{(Unital)}}{
\tikzmath{
\draw[thick, orange] (0,0) -- (0,.6);
\draw[thick, orange] (.2,.2) -- (0,.4);
\filldraw[orange] (0,.4) circle (.05cm);
\filldraw[orange] (.2,.2) circle (.05cm);
}
=
\tikzmath{
\draw[thick, orange] (0,0) -- (0,.6);
}
=
\tikzmath{
\draw[thick, orange] (0,0) -- (0,.6);
\draw[thick, orange] (-.2,.2) -- (0,.4);
\filldraw[orange] (0,.4) circle (.05cm);
\filldraw[orange] (-.2,.2) circle (.05cm);
}
}
\quad\qquad
\underset{\text{(Frobenius)}}{
\tikzmath{
\draw[thick, orange] (0,0) -- (.6,.6);
\draw[thick, orange] (0,.6) -- (.2,.2);
\draw[thick, orange] (.6,0) -- (.4,.4);
\filldraw[orange] (.2,.2) circle (.05cm);
\filldraw[orange] (.4,.4) circle (.05cm);
}
=
\tikzmath{
\draw[thick, orange] (0,0) -- (.2,.2) -- (.4,0);
\draw[thick, orange] (0,.6) -- (.2,.4) -- (.4,.6);
\draw[thick, orange] (.2,.2) -- (.2,.4);
\filldraw[orange] (.2,.2) circle (.05cm);
\filldraw[orange] (.2,.4) circle (.05cm);
}
=
\tikzmath{
\draw[thick, orange] (0,0) -- (-.6,.6);
\draw[thick, orange] (0,.6) -- (-.2,.2);
\draw[thick, orange] (-.6,0) -- (-.4,.4);
\filldraw[orange] (-.2,.2) circle (.05cm);
\filldraw[orange] (-.4,.4) circle (.05cm);
}
}
\quad\qquad
\underset{\text{(Separable)}}{
\tikzmath{
\draw[thick, orange] (0,0) -- (0,.15);
\draw[thick, orange] (0,.45) -- (0,.6);
\draw[thick, orange] (0,.3) circle (.15cm);
\filldraw[orange] (0,.15) circle (.05cm);
\filldraw[orange] (0,.45) circle (.05cm);
}
=
\tikzmath{
\draw[thick, orange] (0,0) -- (0,.6);
}
}\,.
$$
Here, we denote $Q$ by an orange strand, the multiplication $m$ by a trivialent vertex, and the unit $i$ by a univalent vertex.
We denote adjoints by vertical reflections.
In the example below, we use a \emph{standard} Q-system which satisfies that $i^\dag\circ m \in \cC(Q\otimes Q\to 1)$ and $m^\dag\circ i \in \cC(1\to Q\otimes Q)$ is a standard (minimal) solution to the conjugate equations \cite{MR1444286}.
We refer the reader to \cite{MR4419534} for the basics of Q-systems in unitary tensor categories.

\begin{ex}
\label{ex:QSystemBoundary}
Pick a standard Q-system $Q\in \cC$, and for each $c\in \Irr(\cC)$, choose an ONB  $\{\alpha_c\}\subset \cC(c\to Q)$ using the isometry inner product, i.e., $\alpha_c^\dag \circ \alpha_c' = \delta_{\alpha_c = \alpha_c'}$ and $\sum_{\alpha_c} \alpha_c\circ \alpha_c^\dag$ is the projection onto the isotypic component of $c$ in $Q$.
Note that $\sum_{c\in \Irr(\cC)}\sum_{\alpha_c} \alpha_c\circ \alpha_c^\dag = \id_Q$ and $\alpha_c^\dag \circ \alpha_d' = 0$ when $c\neq d$.

We define $\phi_Q$ on $\fF$ by
$$
\phi_Q(\varphi)
:=
\frac{1}{d_Q}
\sum_{\substack{c_1,\dots, c_n\\
\in \Irr(\cC)}}
\sum_{\alpha_{c_1},\dots, \alpha_{c_n}}
\tikzmath{
\filldraw[orange] (0,1.1) circle (.05cm);
\filldraw[orange] (0,-1.1) circle (.05cm);
\draw[orange, thick] (0,1.1) -- (-.3,.7);
\draw[orange, thick] (0,-1.1) -- (-.3,-.7);
\draw[orange, thick] (0,1.1) -- (.3,.7);
\draw[orange, thick] (0,-1.1) -- (.3,-.7);
\draw[thick, orange] (0,1.1) arc (180:0:.5cm) -- (1,-1.1) arc (0:-180:.5cm);
\draw (-.3,-.7) -- (-.3,.7);
\filldraw (-.3,-.7) node[left]{$\scriptstyle\alpha_{c_1}^\dag$} circle (.05cm);
\filldraw (-.3,.7) node[left]{$\scriptstyle\alpha_{c_1}$} circle (.05cm);
\draw (.3,-.7) -- (.3,.7);
\filldraw (.3,-.7) node[right]{$\scriptstyle\alpha_{c_n}^\dag$} circle (.05cm);
\filldraw (.3,.7) node[right]{$\scriptstyle\alpha_{c_n}$} circle (.05cm);
\roundNbox{fill=white}{(0,0)}{.3}{.2}{.2}{$\varphi$}
\node at (.05,-.5) {$\cdots$};
\node at (.05,.5) {$\cdots$};
}
\qquad\qquad\qquad\qquad
\varphi\in \End_\cC(X^n),
\quad
\tikzmath{
\draw (0,-.3) -- (0,.3);}
=
X,
\quad
\tikzmath{
\draw[thick, orange] (0,-.3) -- (0,.3);}
=
Q.
$$
Here, we write a single multi-valent orange vertex to denote the product of $n$ copies of $Q$, and the orange cup and cap are the standard solution to the conjugate equations built from $i,m$ and their adjoints.
One checks that $\phi_Q(\varphi) = \phi_Q(\varphi \otimes \id_X) = \phi_Q(\id_X\otimes \varphi)$
using 
associativity and separability of $Q$ and sphericality of $\cC$.
We thus get a well-defined inductive limit state.

In \cite{MR2942952}, the authors study topological boundaries of Levin-Wen models in terms of module categories. 
A $Q$-system $Q\in \cC$ gives the module category $\cC_{Q}$ of right $Q$-modules in $\cC$. 
The states above should correspond to the ground states of the corresponding commuting projector Hamiltonians. 
In future work, we plan to analyze these states in more detail and rigorously study their superselection sectors, making this connection more explicit.
\end{ex}

\begin{rem}
Even though the canonical state $\psi_\fF$ is not always tracial, we may consider the canonical inductive limit trace $\tr_\fF$ as a state on the boundary algebra. This represents the unique infinite temperature boundary equilibrium state.
\end{rem}

\section{Bulk topological order from the boundary algebra net in 2+1 D}
\label{sec:BulkBoundaryCorrespondence}

The boundary algebras in Construction \ref{const:LTO-BoundaryNet} can yield highly non-trivial nets. 
Even starting with an ordinary spin system such as the Toric Code we already obtain a non-trivial net. 
In fact, Examples~\ref{ex:TCBulkBoundary} and~\ref{rem:LWBulkBoundary} below indicate that the algebraic structure of boundary nets contain information about the \emph{bulk} topological order.
We conjecture that this is a general phenomenon, and that the algebraic structure of the net of boundary algebras completely captures the topological order of the bulk Hamiltonian alone, without reference to a Hamiltonian, just the net of projections.
From a mathematical perspective, this observation is somewhat surprising.
In this section we provide the key points behind this idea.

The idea is to consider the \emph{category of DHR bimodules} associated to the boundary net of $\rm C^*$-algebras introduced in \cite[\S3]{MR4814692}, which was shown to be a braided unitary tensor category for lattices. 
The definitions utilize the theory of \textit{correspondences} over $\rm C^*$-algebras, for which we refer the reader to \cite[\S3.1]{MR4814692} and references therein.

\begin{defn} 
\label{defn:LocalizableCorrespodence}
Consider a rectangle $\Lambda\subseteq \cL$. 
A right finite correspondence $X$ over the quasi-local algebra $\fA$ is \emph{localizable in} $\Lambda$ if there exists a finite projective basis (PP-basis) 
$\{b_{i}\}\subseteq X$ such that $ab_{i}=b_{i}a$ for all $a\in \fA(\Lambda^{c}):={\rm C^*}\left(\bigcup_{\Delta\subset \Lambda^c} \fA(\Delta)\right)$.
Here $\{b_i \}$ being a PP-basis (after Pimsner and Popa) means that $\sum_{i} b_i \langle b_i \vert x \rangle = x$ for all $x \in X$, where $\langle \cdot | \cdot \rangle$ is the $\fA$-valued inner product on $X$.
\end{defn}

\begin{defn} 
A right finite correspondence $X$ is called \textit{localizable} if 
$X$ is localizable in all rectangles sufficiently large relative to $X$, i.e., 
there exists an $r>0$ (depending on $X$) such that for any $\Lambda\subseteq \cL$ containing an $r^n$-cube, 
$X$ is localizable in $\Lambda$. 
We denote by $\DHR(\fA)$ the full $\rm C^*$-tensor subcategory of right finite correspondences consisting of localizable bimodules.
\end{defn}

\begin{defn}
A net satisfies \textit{weak algebraic Haag duality} if 
there is a global $t>0$
such that for all sufficiently large rectangles $\Lambda$,
$\fA(\Lambda^{c})'\subset \fA(\Lambda^{+t})$, where the prime denotes the commutant in $\fA$ and $\Lambda^{+t}$ is as in Definition~\ref{defn:BoundedSpread}.
(This definition of weak algebraic Haag duality is equivalent to the one given in \cite{MR4814692}.)
\end{defn}

By \cite[Prop.~2.11]{MR4814692}, weak algebraic Haag duality is preserved by bounded spread isomorphism.
More importantly, it allows one to prove the following result, which is central to the bulk-boundary correspondence.
It tells us that the DHR bimodules can be endowed with a braiding, which is expected if they are to describe the topological excitations in the bulk.
Moreover, invariance under bounded spread isomorphism (up to equivalence) allows us to relate the boundary nets to fusion categorical nets $\fF$, for which $\DHR(\fF)$ can be found explicitly.

\begin{thm*}[{\cite[Thm.~B]{MR4814692}}] If a net $\fA$ satisfies weak algebraic Haag duality, then $\DHR(\fA)$ admits a canonical braiding. 
If $\fA$ is isomorphic to $\fB$ by a bounded spread isomorphism, then $\DHR(\fA)\cong\DHR(\fB)$.
\end{thm*}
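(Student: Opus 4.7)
The plan is to adapt the classical Doplicher–Haag–Roberts construction of the braiding on the category of superselection sectors to the setting of correspondences. For a pair of localizable right-finite correspondences $X, Y$ over $\fA$, first I would choose rectangles $\Lambda_X, \Lambda_Y$ of localization which are spatially separated and linearly ordered along a fixed direction, say with $\Lambda_X$ to the ``left'' of $\Lambda_Y$. Picking PP-bases $\{b_i\}\subset X$ and $\{c_j\}\subset Y$ with $ab_i = b_i a$ for all $a \in \fA(\Lambda_X^c)$ and $a'c_j = c_j a'$ for all $a' \in \fA(\Lambda_Y^c)$, the inclusions $\fA(\Lambda_Y)\subseteq \fA(\Lambda_X^c)$ and $\fA(\Lambda_X)\subseteq \fA(\Lambda_Y^c)$ allow me to define a candidate bimodule unitary
\[
c_{X,Y}^+: X\otimes_\fA Y \longrightarrow Y\otimes_\fA X, \qquad b_i\otimes c_j \longmapsto c_j\otimes b_i,
\]
which is well-defined and extends $\fA$-bilinearly precisely because of this commutation. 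The opposite ordering of $\Lambda_X,\Lambda_Y$ produces a second braiding $c^-$, and standard manipulations identify $c^-_{Y,X}$ with the inverse of $c^+_{X,Y}$.

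Next I would establish independence of the choices made. If $\{b_i'\}$ is another PP-basis localized in some $\Lambda_X'$ also lying to the left of $\Lambda_Y$, then the change of basis is implemented by a unitary $u \in \End_\fA(X)$. Via weak algebraic Haag duality, such a unitary localizes in a bounded neighborhood: one shows $u$ lies in a local algebra $\fA(\Lambda)$ with $\Lambda \subseteq (\Lambda_X\cup\Lambda_X')^{+t}$, which by suitable choice of separation can be kept disjoint from $\Lambda_Y$. Consequently $u$ commutes with the PP-basis of $Y$, and $c^+_{X,Y}$ is independent of the region and basis chosen. Naturality in $X$ and $Y$ follows formally, and the hexagon identities come from ordering triples $\Lambda_X,\Lambda_Y,\Lambda_Z$ of pairwise disjoint rectangles and reading off the two factorizations of the composite swap. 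For the second statement, a bounded-spread isomorphism $\Psi:\fA\to\fB$ with spread $s$ induces a pushforward functor $\Psi_*$ sending a right-finite $\fA$-correspondence $X$ to the $\fB$-correspondence with the same underlying space and actions transported across $\Psi^{-1}$; if $X$ is localizable in $\Lambda$, then the identical PP-basis witnesses localizability of $\Psi_*(X)$ in $\Lambda^{+s}$, because $\Psi^{-1}(\fB((\Lambda^{+s})^c))\subseteq \fA(\Lambda^c)$ by the bounded spread condition. The functor $\Psi_*$ is a $\rm C^*$-tensor equivalence with quasi-inverse $(\Psi^{-1})_*$, and assuming weak algebraic Haag duality on both sides (which is itself preserved under bounded spread), it manifestly preserves the braiding, since the spatial ordering of localization regions is preserved up to the global constant $s$.

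The main obstacle is the independence-of-region step. Without honest Haag duality, a change-of-basis unitary is not \emph{a priori} local in a controlled way, and one must carefully propagate the weak algebraic Haag duality constant $t$ across every step of the construction: the localization radii of $X$ and $Y$, the constant $t$ appearing in $\fA(\Lambda^c)'\subseteq \fA(\Lambda^{+t})$, and the additional enlargements required to re-localize PP-bases must all combine to a finite quantity that can still be comfortably accommodated inside the chosen separation between $\Lambda_X$ and $\Lambda_Y$. The uniformity of $t$ in the definition of weak algebraic Haag duality is what makes this possible; keeping the bookkeeping global rather than $\Lambda$-dependent is the precise technical reason that a canonical braiding exists in this level of generality and that bounded-spread isomorphisms preserve the braided structure.
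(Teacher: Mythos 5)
This is essentially the paper's own approach: the paper does not reprove the theorem (it is quoted from [Thm.~B, MR4814692]) but sketches precisely this construction — the swap $\sum_{i,j} b_i\boxtimes c_j a_{ij}\mapsto \sum_{i,j} c_j\boxtimes b_i a_{ij}$ on PP-bases localized in sufficiently large, far-apart regions, with weak algebraic Haag duality supplying well-definedness and independence of the regions, and bounded-spread isomorphisms acting by pushforward with localization regions enlarged by the spread constant. One small correction to your independence step: the change of PP-basis is controlled not by a unitary in $\End_\fA(X)$ that ``lies in'' some $\fA(\Lambda)$, but by the $\fA$-valued inner products $\langle b'_k \mid b_i\rangle$, which commute with $\fA\bigl((\Lambda_X\cup\Lambda_X')^c\bigr)$ and hence lie in $\fA\bigl((\Lambda_X\cup\Lambda_X')^{+t}\bigr)$ by weak algebraic Haag duality; also, the left/right ordering of the regions is only needed for 1D lattices (the relevant case here), as the paper notes parenthetically.
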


We can define the braiding in fairly simple terms, and the technicalities arise in showing it is a well-defined bimodule intertwiner. 
(This definition and strategy is very similar to that used in \cite{MR4753059}.)
Let $X, Y$ be DHR bimodules, $\{b_{i}\}, \{c_{j}\}$ be projective bases localized in sufficiently large balls $F$ and $G$ which are sufficiently far apart (see \cite{MR4814692} for details). 
Define
$$
u^{F,G}_{X,Y}: X\boxtimes Y\rightarrow Y\boxtimes X
\qquad\qquad\text{by}\qquad\qquad
u^{F,G}_{X,Y}
\left(
\sum_{i,j}b_{i}\boxtimes c_{j}a_{ij}
\right)
:=
\sum_{i,j}c_{j}\boxtimes b_{i} a_{ij}.
$$
One can show that this is a well-defined, unitary bimodule intertwiner which does not depend on the balls $F,G$ as long as they are sufficiently large and sufficiently far apart. 
(In the case $n=1$, they must also have the same ordering, i.e., we require $f<g$ for all $f\in F$ and $g\in G$.)

\begin{conj}[Bulk-Boundary Correspondence] 
\label{conj:BulkBoundary}
For 2D Hamiltonians on a spin system satisfying \ref{LTO:Hastings}--\ref{LTO:Injective}, the bulk topological order is the braided DHR category of bimodules of the net of boundary algebras.
\end{conj}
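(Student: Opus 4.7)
The natural strategy is to construct a braided unitary tensor equivalence
\[
F \colon \operatorname{Sec}(\fA) \longrightarrow \DHR(\fB),
\]
where $\operatorname{Sec}(\fA)$ denotes the braided unitary tensor category of cone-localized transportable endomorphisms of the quasi-local algebra $\fA$ in the sense of \cite{MR2804555,MR3426207}, taken with respect to the canonical bulk ground state $\psi$ from Definition \ref{defn:CanonicalState}. The bridge between the two sides is the quantum channel $\bbE : \fA_\bbH \to \fB$ from Construction \ref{const:LTO-BoundaryNet}: it already packages the fact that information localized near a half-space boundary in the bulk descends to the boundary net, and it should likewise restrict intertwiner spaces of bulk sectors to bimodule intertwiners of boundary DHR bimodules.

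The plan breaks into three steps. First, I would verify that $\fB$ satisfies weak algebraic Haag duality, so that $\DHR(\fB)$ inherits a canonical braiding by \cite[Thm.~B]{MR4814692}. In the fusion-categorical situation of Theorem \ref{thm:LTOExamples} this is \cite[Prop.~4.3]{MR4814692}; more generally one expects it to follow from \ref{LTO:Surjective}--\ref{LTO:Injective} via Lemma \ref{lem:StabilityOfBoundaryAlgebras}, by showing that boundary algebras on disjoint intervals are each other's commutants up to a bounded thickening. Second, given a cone-localized sector $\rho$, I would use transportability to choose a representative supported in a half-strip near $\cK$, and then define $F(\rho)$ by taking the cone-localized intertwiner bimodule $\Hom(\id,\rho)$, compressing with $p_\Delta$ for larger and larger $\Delta$ meeting $\cK$, and passing to the inductive limit inside the multiplier algebra of $\fB$. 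A finite PP-basis as required by Definition \ref{defn:LocalizableCorrespodence} should arise from the finite-dimensional compressions of the bulk ground space: axiom \ref{LTO:Injective} ensures this compression is faithful, while \ref{LTO:Boundary} and \ref{LTO:Surjective} land it in $\fB$. Third, monoidality of $F$ comes from composition of endomorphisms, and the braiding is preserved because the DHR braiding of $\DHR(\fB)$ is defined by spatially separated PP-bases, while the Doplicher--Haag--Roberts braiding of $\rho,\sigma$ is computed using spatially separated cone representatives; both reduce to the trivial swap when supports are disjoint and of the correct relative position.

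The principal obstacle will be essential surjectivity: showing that every DHR bimodule of $\fB$ is of the form $F(\rho)$ for some cone-localized bulk sector $\rho$. In the known examples --- Toric Code, Levin--Wen, and Kitaev's Quantum Double --- this succeeds only because both sides are independently identified with the Drinfeld center $Z(\cC)$, via Corollary \ref{cor:BulkBoundary} on the boundary side and via direct analysis of cone-localized endomorphisms in \cite{MR2804555,MR3426207,MR4721705,MR4814524} on the bulk side. A conceptual, model-independent inverse to $F$ appears to require a ``folding'' device that extends a boundary DHR bimodule to a bulk excitation invariant under translations parallel to $\cK$, in the spirit of the tube-algebra argument used in the proof of \ref{LTO:Hastings} in Theorem \ref{thm:LTQO}. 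Making this precise for an abstract LTO, and then verifying that the resulting bulk endomorphism is genuinely cone-localized rather than merely quasi-localized --- so that it defines an object of $\operatorname{Sec}(\fA)$ and not just of a larger completion --- is the main technical difficulty and is presumably why the statement is phrased only as a conjecture.
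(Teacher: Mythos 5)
Be careful about the status of this statement: it is a \emph{conjecture}, and the paper does not prove it. What the paper offers is (a) verification in two families of examples --- the Toric Code (Example \ref{ex:TCBulkBoundary}) and Levin--Wen models (Example \ref{rem:LWBulkBoundary}) --- by showing the boundary nets are bounded-spread isomorphic to fusion categorical nets and invoking \cite[Thms.~B and C]{MR4814692} to get $\DHR(\fB)\cong Z(\cC)$, which matches the known bulk order; and (b) a heuristic plus an explicit Levin--Wen implementation (\S\ref{sec:DHR-Implementation}) that builds a boundary DHR bimodule $\cY^z$ from each bulk string operator $z\in Z(\cC)$. Your proposal is a genuinely different thing: a general program to construct a braided equivalence between a bulk superselection category and $\DHR(\fB)$ for an arbitrary LTO. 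That is a reasonable research direction --- and your functor-on-objects is morally what \S\ref{sec:DHR-Implementation} does in the Levin--Wen case --- but it is not what the paper does, and you should not present the examples as if they were instances of your functor being an equivalence; in the paper both sides are computed independently and simply observed to agree.

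Within your program there are concrete gaps beyond the essential-surjectivity issue you flag. First, your Step 1 asserts that weak algebraic Haag duality of $\fB$ ``should follow'' from \ref{LTO:Surjective}--\ref{LTO:Injective}; the paper explicitly states the opposite expectation (``this happens in practice \dots but appears to not be automatic''), and nothing in Lemma \ref{lem:StabilityOfBoundaryAlgebras} controls commutants of $\fB(I^c)$ inside $\fB$. Second, the bulk side of your equivalence is not well defined from the LTO data alone: to make cone-localized transportable endomorphisms with respect to $\psi$ into a braided tensor category one needs (approximate) Haag duality for cones in the GNS representation, which is known for Toric Code and quantum doubles \cite{MR2804555,MR3426207,MR4721705,MR4814524} but is not a consequence of \ref{LTO:Hastings}--\ref{LTO:Injective}; indeed the conjecture is phrased so as to \emph{define} the bulk order via the boundary precisely to avoid this. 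Third, the claim that compressing $\Hom(\id,\rho)$ by $p_\Delta$ produces a right-finite $\fB$-correspondence with a finite PP-basis in the sense of Definition \ref{defn:LocalizableCorrespodence} is unsupported: \ref{LTO:Injective} gives faithfulness of the compression, but finiteness and projectivity of the resulting module require additional input (in the examples it comes from fusion-category rigidity, not from the axioms). So the proposal should be read as a plausible strategy whose missing ingredients are exactly why the statement remains a conjecture, not as a proof sketch that parallels an argument in the paper.
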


Of course, for the category of DHR bimodules on the boundary to be braided, we need the boundary net to satisfy weak algebraic Haag duality. 
This happens in practice (see Examples~\ref{ex:TCBulkBoundary} and~\ref{rem:LWBulkBoundary}) but appears to not be automatic. 
Our justification for Conjecture~\ref{conj:BulkBoundary} is that it gives the correct result for the Toric Code and for Levin-Wen string nets, as we show in the following two examples, using the following theorem.

\begin{thm*}[{\cite[Thm.~C]{MR4814692}}]
If $\fF$ is the 1D net constructed from a unitary fusion category $\cC$ from Example \ref{ex:TensorCat}, and $X\in \cC$ strongly tensor generates $\cC$,
then $\DHR(\fF)\cong Z(\cC)$.
\end{thm*}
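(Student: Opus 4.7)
The plan is to build an explicit braided unitary tensor functor $F : Z(\cC) \to \DHR(\fF)$ and verify it is an equivalence. As a preliminary step, I would note that strong tensor generation by $X$ forces weak algebraic Haag duality for $\fF$ (cf.~\cite[Prop.~4.3]{MR4814692}), so $\DHR(\fF)$ is braided and the target makes sense.

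On objects, given a half-braided object $(Z, \sigma_{Z,-}) \in Z(\cC)$, I would define $M_Z$ as the completion of the inductive limit over intervals $I$ with $n := \#I$ of $\Hom_\cC(X^n, Z \otimes X^n)$, with right $\fF(I)$-action by post-composition with $\id_Z \otimes (-)$ and $\End_\cC(X^n)$-valued inner product obtained by tracing $Z$ against itself using its standard solution. The left $\fF$-action is encoded by the half-braiding: for $\varphi \in \fF(J)$, one slides $Z$ past $X^{\#J}$ via $\sigma_{Z,X^{\#J}}$, applies $\varphi$, and slides back via $\sigma_{Z,X^{\#J}}^{-1}$. Naturality of $\sigma$ makes this right $\fF$-linear and adjointable; localizability of $M_Z$ follows because, for $\varphi$ supported outside a bounded neighborhood of a fixed basis of $\Hom_\cC(X^{n_0}, Z \otimes X^{n_0})$, the two sliding maps cancel, so the basis commutes with such $\varphi$. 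On morphisms, $F(f) := (f \otimes \id_{X^n}) \circ (-)$, and naturality with respect to half-braidings yields left $\fF$-linearity.

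I would then verify that $F$ is a braided tensor functor. Monoidality $M_Z \boxtimes_\fF M_{Z'} \cong M_{Z \otimes Z'}$ follows from stacking insertions: the relative tensor product over $\fF$ collapses to a tensor in $\cC$ thanks to strong generation (every simple of $\cC$ is seen by some $\End_\cC(X^n)$). For compatibility of braidings, given $M_Z$ localized in a ball $F$ and $M_{Z'}$ in a ball $G$ with $F$ far to the left of $G$, the DHR braiding $u^{F,G}_{M_Z,M_{Z'}}$ merely swaps the insertion sites of $Z$ and $Z'$, which unpacks via $\sigma$ to the standard braiding of $Z(\cC)$.

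The main obstacle is essential surjectivity, and with it the full faithfulness of $F$ at the categorical level. Given an abstract $M \in \DHR(\fF)$, the plan is to use weak algebraic Haag duality together with the fact that $\fF(I) \cong \End_\cC(X^n)$ is a finite multimatrix algebra to show that appropriate compressions of $M$ stabilize to a coherent sequence of $\End_\cC(X^n)$-bimodules. By Morita theory and strong generation, such a sequence is classified by a single object $Z \in \cC$, namely the one whose multiplicities match the compressed fibers. The half-braiding $\sigma_{Z,c}$ for each $c \in \Irr(\cC)$ is then read off from the DHR braiding $u^{F,G}_{M, F(c, \mathrm{triv})}$ against the bimodule associated to $c$ with its identity half-braiding localized at a single site; the hexagon axioms fall out of the braid relations for DHR bimodules, and a fiber-wise comparison gives $M \cong F(Z, \sigma)$. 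Fullness and faithfulness of $F$ then reduce to a Schur-type matching on each fiber. I expect this reconstruction, which must extract the whole of $\cC$ and its half-braidings from purely operator-algebraic data on $\fF$, to be the genuinely nontrivial part of the argument, very much in the spirit of the classical identification of tube algebra representations with $Z(\cC)$ \cite{MR1782145}.
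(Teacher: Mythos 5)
You should first note that the paper you are writing into does not prove this statement at all: it is quoted verbatim from \cite[Thm.~C]{MR4814692} and used as a black box (the closest internal material is \S\ref{sec:DHR-Implementation}, where, for each $z\in Z(\cC)$, the bimodule $\cY^z$ is constructed and the paper explicitly appeals to the cited theorem for the assertion that these exhaust $\DHR(\fF)$). So the comparison is with the cited source, not with an argument in this paper. Your first half --- the functor $Z(\cC)\to\DHR(\fF)$ sending $(Z,\sigma)$ to the inductive limit of $\Hom_\cC(X^n\to Z\otimes X^n)$ with right action by composition and left action implemented by the half-braiding --- is essentially the same construction as $\cY^z$ in \S\ref{sec:DHR-Implementation} (there the half-braiding appears through the hopping operators), and your observation that weak algebraic Haag duality, hence the braiding on $\DHR(\fF)$, comes from strong generation via \cite[Prop.~4.3]{MR4814692} is correct.

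The genuine gap is that the part you yourself identify as ``the genuinely nontrivial part'' is not actually argued, and it is precisely where the theorem lives. Saying that ``appropriate compressions of $M$ stabilize to a coherent sequence of $\End_\cC(X^n)$-bimodules'' and that ``by Morita theory and strong generation such a sequence is classified by a single object $Z\in\cC$'' skips every step that requires work: one must use localizability (existence of a PP-basis commuting with $\fF(\Lambda^c)$) together with weak algebraic Haag duality to show the basis generates a finitely generated projective module over the finite-dimensional relative commutant $\fF(\Lambda^c)'\cap\fF(\Lambda^{+t})$, show the resulting identifications are coherent under enlarging intervals, show the half-braiding extracted from the DHR unitaries $u^{F,G}$ is natural in $c$ and satisfies the hexagon (not merely that braid relations hold), and then prove the comparison map to $F(Z,\sigma)$ is an isomorphism of correspondences rather than an embedding --- this is where full faithfulness is established, and ``a Schur-type matching on each fiber'' is a statement of the goal, not an argument. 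Similarly, monoidality of your functor ($M_Z\boxtimes_\fF M_{Z'}\cong M_{Z\otimes Z'}$) is asserted via ``the relative tensor product collapses,'' but computing a Connes-type relative tensor product over the AF algebra $\fF$ and matching it with $\otimes$ in $\cC$ is a nontrivial verification in the cited proof. As it stands your proposal is a correct plan whose easy direction is carried out and whose hard direction (essential surjectivity plus fullness, the analogue of identifying tube-algebra representations with $Z(\cC)$) is only named; it would not compile into a proof without supplying exactly the analysis that \cite{MR4814692} performs.
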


\begin{ex}
\label{ex:TCBulkBoundary}
Consider the Toric Code model from \S\ref{sec:toric}.
The boundary net $\fB$ is isomorphic to $\fC$ and $\fD$ by Corollary \ref{cor:Rough==Smooth} and Theorem \ref{thm:TC-LTO},
which is also isomorphic to the fusion category net $\fF$ from Example \ref{ex:TensorCat} for $\cC=\fdHilb(\bbZ/2)$ and $X=1\oplus g$.
Because the object $X=1\oplus g$ strongly tensor generates $\fdHilb(\bbZ/2)$, the net $\fF$ satisfies weak algebraic Haag duality by~\cite[Thm.~C]{MR4814692}.
Since $\fB$ and $\fD$ are isomorphic to $\fF$ by bounded spread isomorphism, they also satisfy weak algebraic Haag duality~\cite[Prop.~2.11]{MR4814692}.
Furthermore, by \cite[Thm.~B]{MR4814692}, the category of DHR bimodules is preserved under bounded spread isomorphism, and thus 
$$
\DHR(\fB)
\underset{\text{\scriptsize\cite[Thm.~B]{MR4814692}}}{\cong}
\DHR(\fD)
\underset{\text{\scriptsize\cite[Thm.~B]{MR4814692}}}{\cong}
\DHR(\fF)
\underset{\text{\scriptsize\cite[Thm.~C]{MR4814692}}}{\cong}
Z(\fdHilb(\bbZ/2)).
$$
This recovers the well-known bulk topological order of the Toric Code, supporting Conjecture~\ref{conj:BulkBoundary}.
\end{ex}

\begin{ex}
\label{rem:LWBulkBoundary}
Just as in the Toric Code case, the boundary net $\fB$ for the Levin-Wen model for $\cC$ is isomorphic to the fusion categorical net $\fF$ for the UFC $\cC$ by Remark \ref{rem:IdentifyBoundaryLW}.
Thus
$$
\DHR(\fB)
\cong
\DHR(\fF)
\underset{\text{\scriptsize\cite[Thm.~C]{MR4814692}}}{\cong}
Z(\cC).
$$
This recovers the well-known bulk topological order of the Levin-Wen string net model, providing further support for Conjecture~\ref{conj:BulkBoundary}. 
\end{ex}

\subsection{Heuristic for DHR bimodules}

Physically, we can think of the category of DHR bimodules as existing in the emergent `time' direction. 
This is compatible with the viewpoint of \cite{2205.05565}, and allows us to think of DHR bimodules as topologically Wick-rotated point defects, which gives some justification for this correspondence. 

Here, we give a heuristic in terms of string operators that is somewhat model independent. 
Suppose we have a topological string operator, terminating in an excitation in the boundary $\cK\subset \cL$.
$$
\tikzmath{
\foreach \y in {0,1,2,3}{
\foreach \x in {0,1,2,4}{
\filldraw (\x,\y) circle (.05cm);
}
\filldraw[red] (3,\y) circle (.05cm);
}
\node[red] at (3,-.5) {$\cK$};
\filldraw[orange] (3,2) circle (.1cm);
\draw[orange, thick, decorate, decoration={snake, segment length=1mm, amplitude=.2mm}] (3,2) to[out=-135,in=0] (2.5,1.5) -- (0,1.5);
}
$$
Given a string operator $S$ with endpoint localized at a vertex $v\in \cK\subset \cL$, we define a DHR bimodule $\cY$ for the boundary algebra $\fB$ as follows.
We define a defect Hilbert space localized at $v$, and we define $\cY=\varinjlim \cY_n$ where each $\cY_n$ is defined in a certain rectangle $\Delta_n$ (see below for details).
Without loss of generality, we assume that $\cK$ is oriented vertically and we have chosen a distinguished half-plane $\bbH$ to the left of $\cK$.

First, we define $\Lambda_n$ to be the smallest sufficiently large rectangle whose right edge has length $2n+1$ and is centered at $v$, and $\Delta_n$ is the smallest rectangle such that $\Lambda_n \Subset_s \Delta_n$ with $\partial \Lambda_n \cap \partial \Delta_n \subset \cK$.
Here is a cartoon of $\Lambda_1\Subset_s \Delta_1$ with $r=3$ and $s=1$:
$$
\tikzmath{
\foreach \y in {0,1,2,3,4}{
\foreach \x in {1,2,3}{
\filldraw (\x,\y) circle (.05cm);
}
\filldraw[red] (4,\y) circle (.05cm);
}
\draw[thick, cyan, rounded corners=5pt] (.8,-.2) rectangle (4.3,4.2);
\node[cyan] at (1.5,2.5) {$\scriptstyle \Delta_1$};
\draw[thick, blue, rounded corners=5pt] (1.8,.8) rectangle (4.2,3.2);
\node[blue] at (2.5,2.5) {$\scriptstyle \Lambda_1$};
\node[red] at (3.8,2) {$\scriptstyle v$};
}
$$
We define $\cY_n$ to be the space of homs between the tensor product Hilbert spaces localized in $\Lambda_n$ from sites in our original lattice to sites in our lattice with our new defect Hilbert space.
However, we cut down on both sides by projectors; we precompose with $p_{\Delta_n}$, and we post-compose with the projector $q_{\Delta_n}$ corresponding to the new tensor product space which carries the defect space at $v\in \Delta_n$.
For example, 
$$
\cY_1:=
q_{\Delta_1}
\Hom\left(\quad
\tikzmath{
\foreach \y in {1,2,3}{
\foreach \x in {2,3}{
\filldraw (\x,\y) circle (.05cm);
}
\filldraw[red] (4,\y) circle (.05cm);
}
\draw[thick, blue, rounded corners=5pt] (1.8,.8) rectangle (4.2,3.2);
\node[blue] at (2.5,2.5) {$\scriptstyle \Lambda_1$};
\node[red] at (3.8,2) {$\scriptstyle v$};
}
\qquad
\longrightarrow
\qquad
\tikzmath{
\foreach \y in {1,2,3}{
\foreach \x in {2,3}{
\filldraw (\x,\y) circle (.05cm);
}
\filldraw[red] (4,\y) circle (.05cm);
}
\draw[thick, blue, rounded corners=5pt] (1.8,.8) rectangle (4.2,3.2);
\node[blue] at (2.5,2.5) {$\scriptstyle \Lambda_1$};
\filldraw[orange] (4,2) node[left]{$\scriptstyle v$} circle (.1cm);
\draw[orange, thick, decorate, decoration={snake, segment length=1mm, amplitude=.2mm}] (4,2) to[out=-135,in=0] (3.5,1.5) -- (1.6,1.5);
}
\quad\right)
p_{\Delta_1}.
$$

The right action of $\fB(I):=\fB(\Lambda_I \Subset \Delta_I)$ where $\partial\Lambda\cap \cK=I$
is the standard composition of bounded operators, where we implicitly embed $\fB(I)\hookrightarrow \fB(I)p_{\Delta_n}$ using \ref{LTO:Injective}. 
That is, the right action of $a\in \fB(I)$ is given by
$\cY_n \ni f \mapsto f\circ a \in \cY_n$.
Using the same arguments as in Construction \ref{const:LTO-BoundaryNet}, this action should stabilize for large $n$.

The left action of $\fB(I)$ is more interesting.
First, we use a unitary hopping operator $h$ to move our excitation completely out of $\Delta_n$ where $n$ is large so that $\Delta_I\subset \Delta_n$.
We then post-compose with $a$, and then re-apply the adjoint $h^\dag$ of the hopping operator $h$ to bring the excitation back where it started, i.e.,
$\cY_n \ni f \mapsto h^\dag\circ a \circ h\circ f\in \cY_n$. 
Again, using the arguments of Construction \ref{const:LTO-BoundaryNet}, this action should stabilize for large $n$.

\subsection{Implementation of the DHR bimodules for Levin-Wen}
\label{sec:DHR-Implementation}

We now give a brief construction of a DHR bimodule for the fusion categorical net $\fF$ of the UFC $\cC$ associated to a string operator for $z\in Z(\cC)$ following the above heuristic.
These give all DHR bimodules for $\fF$ by \cite[Thm.~C]{MR4814692}.
First, we define a \emph{defect Hilbert space}
$$
\tikzmath{
\draw[thick] (-.5,0) node[left]{$\scriptstyle a$} -- (.5,0) node[right]{$\scriptstyle d$};
\draw[thick] (0,-.5) node[below]{$\scriptstyle b$} -- (0,.5) node[above]{$\scriptstyle c$};
\filldraw (0,0) circle (.05cm);
\draw[thick, orange, decorate, decoration={snake, segment length=1mm, amplitude=.2mm}] (0,0) -- (-.3,.5) node[left]{$\scriptstyle z$}; 
\node at (-.2,-.2) {$\scriptstyle v$};
\draw[blue!50, very thin] (-.5,.25) -- (.5,-.25);
{\draw[blue!50, very thin, -stealth] ($ (-.3,.15) - (.06,.12)$) to ($ (-.3,.15) + (.06,.12)$);}
{\draw[blue!50, very thin, -stealth] ($ (.3,-.15) - (.06,.12)$) to ($ (.3,-.15) + (.06,.12)$);}
}
\leftrightarrow
\cD_v:=
\bigoplus_{\substack{
a,b,c,d\in \cC
\\
z\in \Irr(Z(\cC))
}}
\cC(a\otimes b \to F(z)\otimes c\otimes d)
$$
where $F:Z(\cC)\to \cC$ is the forgetful functor.
Choosing a distinguished vertex $v\in \bbZ^2$, we can modify the total Hilbert space by replacing $\cH_v$ with $\cD_v$ at $v$.
For a rectangle $\Lambda$ containing $v$, 
whereas we defined $\cH(\Lambda):= \bigotimes_{u\in \Lambda} \cH_u$, we now define $\widehat{\cH}(\Lambda,v):=\cD_v\otimes\bigotimes_{u\in \Lambda\setminus\{v\}} \cH_u$.
We define the projector $C_v$ at $v$ to be the operator which selects the copy of $1\in Z(\cC)$ on the orange wavy string.
We also modify the plaquette operator in the northwest plaquette to $v$ to incorporate the half-braiding for $z$ as follows:
\begin{align*}
\frac{1}{D_\cC}\sum_{r\in \Irr(\cC)}
d_r\cdot
\tikzmath{
\draw[step=1.0,black,thin] (0.5,0.5) grid (2.5,2.5);
\filldraw[thick, blue, rounded corners=5pt, fill=gray!30] (1.15,1.15) rectangle (1.85,1.85);
\draw[thick, orange, knot, decorate, decoration={snake, segment length=1mm, amplitude=.2mm}] (2,1) -- ($ (2,1) + (-.4,.4)$);
\node at (2.3,1.2) {$\scriptstyle \xi_{2,1}$};
\node at (.7,2.2) {$\scriptstyle \xi_{1,2}$};
\node at (.7,1.2) {$\scriptstyle \xi_{1,1}$};
\node at (2.3,2.2) {$\scriptstyle \xi_{2,2}$};
\node[blue] at (1.3,1.5) {$\scriptstyle r$};
\node at (1.5,.85) {$\scriptstyle g$};
\node at (2.15,1.5) {$\scriptstyle h$};
\node at (1.5,2.15) {$\scriptstyle i$};
\node at (.85,1.5) {$\scriptstyle j$};
}
&=
\frac{1}{D_\cC}\sum_{r,k,\ell,m,n \in \Irr(\cC)}
\frac{\sqrt{d_kd_\ell d_md_n}}{d_r\sqrt{d_gd_hd_id_j}}
\cdot
\tikzmath{
\draw[step=1.0,black,thin] (0.5,0.5) grid (2.5,2.5);
\fill[gray!30, rounded corners=5pt] (1.15,1.15) rectangle (1.85,1.85);
\draw[thick, blue] (1.3,1) -- (1,1.3);
\draw[thick, blue] (1.7,1) -- (2,1.3);
\draw[thick, blue] (1.3,2) -- (1,1.7);
\draw[thick, blue] (1.7,2) -- (2,1.7);
\draw[thick, orange, knot, decorate, decoration={snake, segment length=1mm, amplitude=.2mm}] (2,1) -- ($ (2,1) + (-.4,.4)$);
\fill[fill=red] (1.3,1) circle (.05cm);
\fill[fill=red] (1.7,1) circle (.05cm);
\fill[fill=purple] (2,1.3) circle (.05cm);
\fill[fill=purple] (2,1.7) circle (.05cm);
\fill[fill=yellow] (1.3,2) circle (.05cm);
\fill[fill=yellow] (1.7,2) circle (.05cm);
\fill[fill=green] (1,1.3) circle (.05cm);
\fill[fill=green] (1,1.7) circle (.05cm);
\node at (2.3,.8) {$\scriptstyle \xi_{2,1}$};
\node at (.7,2.2) {$\scriptstyle \xi_{1,2}$};
\node at (.7,.8) {$\scriptstyle \xi_{1,1}$};
\node at (2.3,2.2) {$\scriptstyle \xi_{2,2}$};
\node at (1.15,.85) {$\scriptstyle g$};
\node at (1.85,.85) {$\scriptstyle g$};
\node at (2.15,1.15) {$\scriptstyle h$};
\node at (2.15,1.85) {$\scriptstyle h$};
\node at (1.15,2.15) {$\scriptstyle i$};
\node at (1.85,2.15) {$\scriptstyle i$};
\node at (.85,1.15) {$\scriptstyle j$};
\node at (.85,1.85) {$\scriptstyle j$};
\node at (1.5,.85) {$\scriptstyle k$};
\node at (2.15,1.5) {$\scriptstyle \ell$};
\node at (1.5,2.15) {$\scriptstyle m$};
\node at (.85,1.5) {$\scriptstyle n$};
}
\\&=
\sum_{\eta}
C(\xi,\eta)
\tikzmath{
\draw[step=1.0,black,thin] (0.5,0.5) grid (2.5,2.5);
\fill[gray!30, rounded corners=5pt] (1.1,1.1) rectangle (1.9,1.9);
\draw[thick, orange, knot, decorate, decoration={snake, segment length=1mm, amplitude=.2mm}] (2,1) -- ($ (2,1) + (-.4,.4)$);
\node at (2.3,1.2) {$\scriptstyle \eta_{2,1}$};
\node at (.7,2.2) {$\scriptstyle \eta_{1,2}$};
\node at (.7,1.2) {$\scriptstyle \eta_{1,1}$};
\node at (2.3,2.2) {$\scriptstyle \eta_{2,2}$};
}\,.
\end{align*}
By an argument similar to \cite[Prop.~3.2]{2305.14068}, the coefficient 
$$
C(\xi,\xi')
=
\delta_{z=z'}\delta_{a=a'}\cdots\delta_{h=h'}
\frac{1}{D_\cC\sqrt{d_a\cdots d_h}}
\cdot
\tikzmath{
\clip (-1.8,-1.8) rectangle (4.8,4.8);
\draw (0,0) rectangle (1,1);
\draw (2,2) rectangle (3,3);
\draw (1,1) to[out=0, in=-90] (2,2);
\draw (1,1) to[out=90, in=180] (2,2);
\draw (1,0) to[out=0, in=-90] (3,2);
\draw (1,0) .. controls ++(-90:2cm) and ++(0:2cm) .. (3,2);
\draw (0,1) to[out=90, in=180] (2,3);
\draw (0,1) .. controls ++(180:2cm) and ++(90:2cm) .. (2,3);
\draw (0,0) .. controls ++(180:5cm) and ++(90:5cm) .. (3,3);
\draw (0,0) .. controls ++(-90:5cm) and ++(0:5cm) .. (3,3);
\draw[thick, orange, knot, decorate, decoration={snake, segment length=1mm, amplitude=.2mm}] (1,0) to[out=135, in=-135] (1,2) node[above, yshift=.05cm]{$\scriptstyle F(z)$} to[out=45,in=135] (3,2);
\node at (.7,-.2) {$\scriptstyle \xi_{2,1}'$};
\node at (.3,.8) {$\scriptstyle \xi_{1,2}'$};
\node at (-.3,-.2) {$\scriptstyle \xi_{1,1}'$};
\node at (1.3,.8) {$\scriptstyle \xi_{2,2}'$};
\node at (3.3,2.2) {$\scriptstyle \xi_{2,1}^\dag$};
\node at (2.3,2.7) {$\scriptstyle \xi_{1,2}^\dag$};
\node at (3.3,2.7) {$\scriptstyle \xi_{1,1}^\dag$};
\node at (2.3,1.7) {$\scriptstyle \xi_{2,2}^\dag$};
\node at (-1,.3) {$\scriptstyle a$};
\node at (0,-1) {$\scriptstyle b$};
\node at (.9,-.5) {$\scriptstyle c$};
\node at (1.2,.2) {$\scriptstyle d$};
\node at (1.8,1.1) {$\scriptstyle e$};
\node at (1.3,1.3) {$\scriptstyle f$};
\node at (.2,1.2) {$\scriptstyle g$};
\node at (-.2,.8) {$\scriptstyle h$};
}\,,
$$
and thus this modified plaquette operator is a self-adjoint projector.
For rectangles $\Lambda$ containing the distinguished vertex $v$, we define the projector $q_\Lambda:= \prod_{\ell\subset \Lambda} A_\ell \prod_{p\subset \Lambda} B_p$, where the plaquette operator $B_p$ for the plaquette northwest to $v$ has been modified as above.

For any two $u,v\in \cL$, rectangle a $\Lambda$ containing $u,v$, and a localized excitation $z\in \Irr(\cC)$ at $u$, we have a unitary \emph{hopping operator} $H^z_{v,u}:\widehat{\cH}(\Lambda,u)\to \widehat{\cH}(\Lambda,v)$ given by 
$$
\tikzmath{
\draw[step=.75] (-.2,-.2) grid (3.2,3.2);
\draw[thick, orange, knot, decorate, decoration={snake, segment length=1mm, amplitude=.2mm}] (1.5,.75) -- (1.25,1) node[left]{$\scriptstyle z$};
\node at (1.7,.55) {$\scriptstyle u$};
\node at (2.45,2.05) {$\scriptstyle v$};
}
\quad
\overset{H^z_{v,u}}{\longmapsto}
\quad
\tikzmath{
\draw[step=.75] (-.2,-.2) grid (3.2,3.2);
\draw[thick, orange, knot, decorate, decoration={snake, segment length=1mm, amplitude=.2mm}] (1.5,.75) to[out=135, in=-90] (1.4,1) -- (1.4,1.6) -- (2.19,1.6) -- (2.19,2.34) -- (2,2.5) node[left]{$\scriptstyle z$};
\node at (1.7,.55) {$\scriptstyle u$};
\node at (2.45,2.05) {$\scriptstyle v$};
}\,
$$
where we implicitly use the fusion relation to resolve the $z$-excitation into each edge along the path.

We now focus on the case that the distinguished vertex lies in our codimension 1 hyperplane $\cK\subset \bbZ^2$.
The DHR-bimodule discussed in the heuristic above is given by 
\begin{equation}
\label{eq:DHRbimoduleXn}
\cY^z
=
\varinjlim
\cY^z_n 
=
\varinjlim
q_{\Delta_n} 
\Hom\left(
\cH(\Lambda_n)
\to
\widehat{\cH}(\Lambda_n,v)
\right)
p_{\Delta_n}.
\end{equation}
We now show $\cY_n^z$ carries commuting left and right actions of $\fF(I)$ with $I=\partial \Lambda_n \cap \cK$.
Below, we write $\Delta=\Delta_n$ to ease the notation.
By Lemma \ref{lem:KongSkeinModule},
$p_\Delta \cH(\Delta)=p_\Delta \bigotimes_{u\in \Delta} \cH_u \cong \cS(\#\partial \Delta)$, which can be identified with
$$
\bigoplus_{\vec{c}_{\partial\Delta\setminus I},\vec{c}_{ I}}\cC(\vec{c}_{\partial\Delta\setminus I} \to \vec{c}_I)
\cong
\cC(X^{\#\partial\Delta\setminus I} \to X^{\# I})
\qquad\qquad\qquad
X:=\bigoplus_{c\in \Irr(\cC)} c
$$
with the skein module inner product,
where 
$\vec{c}_{\partial\Delta\setminus I}= c_{i_1}\cdots c_{i_m}$
is a tensor product of simples in $\cC$ over the sites of $\partial\Delta\setminus I$, and
$\vec{c}_I = c_{j_1}\otimes \cdots\otimes c_{j_n}$ 
is a tensor product of simples over the sites of $I$.
Similarly, by \cite[Thm.~3.4]{2305.14068}, $q_\Delta\widehat{\cH}(\Delta) =q_\Delta\left(\cD_v\otimes\bigotimes_{u\in \Delta\setminus\{v\}} \cH_u\right)$ is isomorphic to an `enriched' skein module of the form
$$
\bigoplus_{\vec{c}_{\partial\Delta\setminus I},\vec{c}_{ I}}\cC(\vec{c}_{\partial\Delta\setminus I} \to \vec{c}_I \otimes F(z))
\cong
\cC(X^{\#\partial\Delta\setminus I} \to  X^{\# I}\otimes F(z))
$$
where $F:Z(\cC)\to \cC$ is the forgetful functor.
Hence operators in $\cY_n^z$ from \eqref{eq:DHRbimoduleXn} can be viewed as operators 
$$
\cC(X^{\#\partial\Delta\setminus I} \to X^{\# I})
\to
\cC(X^{\#\partial\Delta\setminus I} \to X^{\# I}\otimes F(z))
$$
which commute with the left $\End_\cC(X^{\#\partial\Delta\setminus I})$-action.
By the Yoneda Lemma, we can identify 
$$
\cY^z_n=\cC(X^{\#I} \to X^{\# I}\otimes F(z)).
$$
The right $\fF(I)$-action on $\cY_n^z$ is exactly precomposition, and the left $\fF(I)$-action which uses the hopping operator $H^z_{v,u}$ to move the $z$-excitation out of $I$ before acting and then move it back is exactly postcomposition on the $X^{\#I}$ tensorand.
In diagrams:
$$
a\rhd \xi\lhd b
:=
\tikzmath{
\draw[thick, orange, knot, decorate, decoration={snake, segment length=1mm, amplitude=.2mm}] (.5,0) -- (1,0) node[right]{$\scriptstyle F(z)$} ;
\draw (-.3,.3) --node[left]{$\scriptstyle X$} (-.3,.7) -- (-.3,1.7) node[above]{$\scriptstyle X$};
\draw (.3,.3) --node[right]{$\scriptstyle X$} (.3,.7) -- (.3,1.7) node[above]{$\scriptstyle X$};
\node at (.05,.5) {$\cdots$};
\node at (.05,1.5) {$\cdots$};
\draw (-.3,-.3) --node[left]{$\scriptstyle X$} (-.3,-.7) -- (-.3,-1.7) node[below]{$\scriptstyle X$};
\draw (.3,-.3) --node[right]{$\scriptstyle X$} (.3,-.7) -- (.3,-1.7) node[below]{$\scriptstyle X$};
\node at (.05,-.5) {$\cdots$};
\node at (.05,-1.5) {$\cdots$};
\roundNbox{fill=white}{(0,0)}{.3}{.2}{.2}{$\xi$}
\roundNbox{fill=white}{(0,1)}{.3}{.2}{.2}{$a$}
\roundNbox{fill=white}{(0,-1)}{.3}{.2}{.2}{$b$}
}
\qquad\qquad \xi\in \cY^z_n,
\,a,b\in \End_\cC(X^{\#I}),
$$
where the orange $F(z)$-string should be viewed as in the target of the above morphism.

Now suppose $J$ is obtained from $I$ by adding $k$ boundary points below $I$ and $k$ boundary points above $I$, where we view $I$ as the right hand side of $\partial \Lambda_n$.
Then as we have 
\[
\begin{tikzcd}[column sep=0, row sep =0]
\Lambda_{n+2k} & \Subset_s & \Delta_{n+2k}
\\
\cup &&\cup
\\
\Lambda_n & \Subset_s & \Delta_n
\end{tikzcd}
\qquad\qquad
\text{such that} 
\qquad
\partial \Lambda_n \cap \partial \Delta_n =I= \partial \Lambda_{n}\cap \partial \Delta_{n+2k},
\]
we get an inclusion $\cY^z_n \hookrightarrow \cY^z_{n+2k}$ by adding $2k$ through strings as follows:
$$
\tikzmath{
\draw (-1.3,-.7) node[below]{$\scriptstyle X$} -- (-1.3,.7) node[above]{$\scriptstyle X$};
\draw (-.7,-.7) node[below]{$\scriptstyle X$} -- (-.7,.7) node[above]{$\scriptstyle X$};
\node at (1.05,.5) {$\cdots$};
\draw (1.3,-.7) node[below]{$\scriptstyle X$} -- (1.3,.7) node[above]{$\scriptstyle X$};
\draw (.7,-.7) node[below]{$\scriptstyle X$} -- (.7,.7) node[above]{$\scriptstyle X$};
\node at (-.95,0) {$\cdots$};
\draw[thick, orange, knot, decorate, decoration={snake, segment length=1mm, amplitude=.2mm}] (.5,0) -- (1.5,0) node[right]{$\scriptstyle F(z)$} ;
\draw (-.3,.3) -- (-.3,.7) node[above]{$\scriptstyle X$};
\draw (.3,.3) -- (.3,.7) node[above]{$\scriptstyle X$};
\node at (.05,.5) {$\cdots$};
\draw (-.3,-.3) -- (-.3,-.7) node[below]{$\scriptstyle X$};
\draw (.3,-.3) -- (.3,-.7)node[below]{$\scriptstyle X$};
\node at (.05,-.5) {$\cdots$};
\roundNbox{fill=white}{(0,0)}{.3}{.2}{.2}{$\xi$}
}\,.
$$
Observe that the inclusion $\cY^z_{n}\hookrightarrow \cY^z_{n+2k}$ is compatible with the inlcusions $\fF(I)\hookrightarrow \fF(J)$ from Lemma \ref{lem:StabilityOfBoundaryAlgebras} under both the left and right actions.
We thus get an inductive limit $\fF-\fF$ bimodule $\cY^z$, and this bimodule is exactly the one constructed for $z\in Z(\cC)$ in \cite{MR4814692}.
It follows that this bimodule is localizable.

\subsection{Boundary states and \texorpdfstring{$\rm W^*$}{W*}-algebras in \texorpdfstring{$\DHR(\fB)$}{DHR(B)}}

In this section, we focus on the case of a translation invariant 2D lattice model $(\fA,p)$ satifying \ref{LTO:Hastings}--\ref{LTO:Injective},
and let $\fB$ be the 1D boundary net. 
For a separable $\rm C^*$-algebra $A$, we write $\Rep(A)$ for the $\rm W^*$-category of separable Hilbert space representations~\cite{MR808930}.
If $H,K\in \Rep(A)$, we say $H$ is \emph{quasi-contained in} $K$, denoted $H\preceq K$, if $H$ is isomorphic to a summand of $K^{n}$ for some $n\in \bbN\cup \{\infty\}$. The reader should compare the following definition to ~\cite{MR1234107,MR1463825}.

\begin{defn}
\label{defn:BoundarySuperSelection}
    Let $\fB$ be a net of finite dimensional $\rm C^*$-algebras on the lattice $\mathbbm{Z}$ as in Definition~\ref{defn:LocalNet}, and let $\phi$ be a state on $\fB$. 
    A \textit{superselection sector} of $\phi$ is a Hilbert space representation $H$ of $\fB$ satisfying the following property:
    \begin{itemize}
        \item There exists an $r>0$ such that for any interval $I$ of length at least $r$, 
    $H|_{\fB(I^{c})}\preceq L^{2}(\fB, \phi)|_{\fB(I^{c})}$.
    \end{itemize}
    Here, $\fB(I^c)$ is the unital $\rm C^*$-subalgebra of $\fB$ generated by the $\fB(J)$ for all $J\subset I^c$.
    We denote by $\Rep_{\phi}(\fB)$ the full $\rm W^*$-subcategory of $\Rep(\fB)$ of superselection sectors of $\phi$.
\end{defn}

Note that by definition, $\Rep_{\phi}(\fB)$ is unitarily Cauchy complete (closed under orthogonal direct sums and orthogonal summands). 
We now assume that $\DHR(\fB)$ is a unitary tensor category, so that each $X\in \DHR(\fB)$ is dualizable and $\fB$ has trivial center, which is satisfied in all of our examples.
We refer the reader to \cite{MR1624182,MR2085108,MR4419534} for dualizable criteria for Hilbert $\rmC^*$ $\fB-\fB$ correspondences.
 
\begin{prop}$\Rep_{\phi}(\fB)$ is a $\DHR(\fB)$-module $\rm W^*$-category with action given by $X\triangleright H:=X\boxtimes_{\fB} H$. 
\end{prop}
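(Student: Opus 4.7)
My plan is to verify the three ingredients needed: (a) that $X \triangleright H := X \boxtimes_\fB H$ again lies in $\Rep_\phi(\fB)$, (b) that $\boxtimes_\fB$ gives a coherent unital associative action, and (c) that everything is compatible with the $\rm W^*$-structure. The main content is (a), where the localizability of DHR bimodules must be combined with the quasi-containment condition defining superselection sectors.

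For (a), fix $X\in\DHR(\fB)$ with localization constant $r_X$, and $H\in\Rep_\phi(\fB)$ with constant $r_H$. Given an interval $I\subset\bbZ$ with $|I|\geq\max\{r_X,r_H\}$, localizability of $X$ in $I$ gives a finite PP-basis $\{b_1,\dots,b_n\}\subseteq X$ such that $ab_i=b_ia$ for all $a\in\fB(I^c)$. The PP-basis property implies that the map $X\to\fB^{\oplus n}$ sending $x\mapsto(\langle b_i\mid x\rangle)_i$ is an isometric embedding of Hilbert $\fB$-modules whose adjoint $(a_i)\mapsto\sum_i b_ia_i$ is a right $\fB$-linear retraction; since $b_i$ commutes with $\fB(I^c)$, this embedding and its retraction are also left $\fB(I^c)$-linear. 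Applying $-\boxtimes_\fB H$, which is an exact $\rm C^*$-functor on representations, yields a $\fB(I^c)$-equivariant decomposition of $X\boxtimes_\fB H$ as an orthogonal summand of $H^{\oplus n}$. Since quasi-containment is preserved under finite direct sums and orthogonal summands, we conclude
\[
(X\boxtimes_\fB H)|_{\fB(I^c)}\preceq H^{\oplus n}|_{\fB(I^c)}\preceq L^2(\fB,\phi)|_{\fB(I^c)},
\]
so $X\boxtimes_\fB H\in\Rep_\phi(\fB)$ with constant $\max\{r_X,r_H\}$.

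For (b), the associator $(X\boxtimes_\fB Y)\boxtimes_\fB H\cong X\boxtimes_\fB(Y\boxtimes_\fB H)$ and unitor $\fB\boxtimes_\fB H\cong H$ are the standard ones for the relative tensor product of correspondences; pentagon and triangle identities are inherited from those of the monoidal structure on $\DHR(\fB)$, which was already established in \cite{MR4814692}. Functoriality in morphisms and in intertwiners is likewise automatic from the bifunctoriality of $\boxtimes_\fB$, as is naturality of the coherence isomorphisms. The new content is just that all intermediate objects belong to $\Rep_\phi(\fB)$, which is immediate from (a) applied iteratively (e.g.\ since $Y\boxtimes_\fB H\in\Rep_\phi(\fB)$, we may apply (a) again with $X$).

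For (c), the $\rm W^*$-structure on $\Rep_\phi(\fB)$ is inherited from $\Rep(\fB)$ as a full $\rm W^*$-subcategory (it is clearly closed under orthogonal summands and finite direct sums; closure under arbitrary separable direct sums follows since quasi-containment is preserved under countable sums). The functor $X\boxtimes_\fB-$ is a $\rm W^*$-functor because $X$ is a right finite (hence dualizable) correspondence: its right adjoint is $\bar X\boxtimes_\fB-$, and on intertwiner spaces the induced map is bounded and normal. The main step I expect to require care is (a)---specifically, ensuring that the PP-basis decomposition of $X$ transports cleanly through $\boxtimes_\fB$ to give a $\fB(I^c)$-equivariant retract, and that the constants in the localization of $X$ and the quasi-containment of $H$ can be combined uniformly in $I$; once this is pinned down, the rest is formal.
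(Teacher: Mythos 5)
Your proof is correct and takes essentially the same route as the paper: the localized PP-basis exhibits $X\boxtimes_\fB H$ as a $\fB(I^c)$-equivariant orthogonal summand of $H^{\oplus n}$ (your isometry $x\mapsto(\langle b_i\mid x\rangle)_i$ and its retraction are exactly the paper's projection $P=(\langle b_i\mid b_j\rangle)_{i,j}$ and unitary onto $PH^{\oplus n}$), after which transitivity of $\preceq$ gives the localization constant $\max\{r_X,r_H\}$. The coherence and $\rm W^*$-functoriality points you spell out in (b) and (c) are left implicit in the paper, and your treatment of them is consistent with the standing assumption there that $\DHR(\fB)$ is a unitary tensor category with dualizable objects.
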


\begin{proof}
Suppose $H\in \Rep_{\phi}(\fB) $. Then there exists some $r>0$ so that for any interval $I$ with at least $r$ sites, $H|_{\fB(I^{c})}\preceq L^{2}(\fB, \phi)|_{\fB(I^{c})}$. 
For $X\in \DHR(\fB)$, there exists some $s>0$ such that $X$ is localizable in any interval with at least $s$ sites. 
We claim $X\boxtimes_{\fB} H\in \Rep_{\phi}(\fB)$ with localization constant $t=\max\{r,s\}$. 
 
Indeed, let $I$ be an interval with at least $t$ sites, and let $\{b_{i}\}^{n}_{i=1}$ be a PP-basis of $X$ such that $xb_i=b_ix$ for all $x\in \fB(I^c)$, as in Definition~\ref{defn:LocalizableCorrespodence}.
Observe that $\langle b_{i} | b_{j}\rangle\in \fB(I^{c})' \cap \fB$.
Since the action of $\fB(I^{c})' \cap \fB$ commutes with $\fB(I^{c})$ in any representation, we have $M_{n}(\fB(I^{c})' \cap \fB)\subset \End_{\fB(I^{c})}(H^{\oplus n})$.

Now consider the orthogonal projection $P=(\langle b_{i} | b_{j} \rangle)_{i,j}\in M_{n}(\fB(I^{c})' \cap \fB)$, so 
$K:=PH^{\oplus n}$ is a summand of $(H|_{\fB(I^{c})})^{\oplus n}$. We claim $\left(X\boxtimes_{\fB} H\right)|_{\fB(I^{c})}$ is unitarily isomorphic to $K$.
Consider the map $v:X\otimes H\rightarrow K$ given by 
$$
v\left(
\sum_{i=1}^n b_{i}\otimes \xi_{i}
\right)
:=
P
\begin{bmatrix} 
\xi_{1} \\ \vdots \\ \xi_n 
\end{bmatrix}.
$$
By definition of $P$, $v$ extends to a unitary isomorphism
$\tilde{v}:X\boxtimes_{\fB} H \rightarrow K$. 
Observe $\tilde{v}$
intertwines the $\fB(I^{c})$-actions as each $b_{i}$ centralizes $\fB(I^{c})$.

We have thus shown $X\boxtimes_{\fB} H|_{\fB(I^{c})} \preceq H |_{\fB(I^{c})}$. 
Since $H |_{\fB(I^{c})}\preceq L^{2}(\fB, \phi)|_{\fB(I^{c})}$ and $\preceq$ is transitive, the claim follows.
\end{proof}

Note there is a distinguished object $L^{2}(\fB, \phi)\in \Rep_{\phi}(\fB)$, which gives this module $\rm W^*$-category a canonical pointing (more specifically, it gives the full module subcategory generated by $\DHR(\fB)$ and $L^{2}(\fB, \phi)$ a pointing).
By \cite{MR3687214}, we get a canonical $\rm W^*$-algebra object $A_\phi\in \Vect(\DHR(\fB)):=\Fun(\DHR(\fB)^{\rm op}\to \Vect)$ 
associated to $(\Rep_\phi(\fB), L^2(\fB,\phi))$ given by
$$
A_\phi(\cY)
:=
\Hom_{\Rep_\phi(\fB)}(\cY\boxtimes_{\fB} L^2(\fB,\phi) \to L^2(\fB,\phi)).
$$
Here, the $\rm W^*$-algebra object $A_\phi=\underline{\End}_{\DHR(\fB)}(L^2(\fB,\phi))$ is in general too large to live in $\DHR(\fB)$, but rather it lies in the \emph{ind-completion} $\Vect(\DHR(\fB))$.
We refer the reader to \cite{MR3687214} for more details.

\begin{defn}
For a boundary state $\phi$ on $\fB$, we define the \textit{boundary order} to be the isomorphism class of the $\rm W^*$-algebra object $A_{\phi}\in \Vect(\DHR(\fB))$.

We call the state $\phi$ a \textit{topological boundary} if $A_{\phi}\in \Vect(\DHR(\fB))$ is Lagrangian.
(Recall that a $\rm W^*$-algebra object in a braided tensor category is called \emph{Lagrangian} if it is commutative and its category of local modules is trivial.)
\end{defn}

\begin{ex}
When $\cC$ is a fusion category, the \emph{canonical Lagrangian} in $Z(\cC)$ is $I(1_\cC)$ where $I: \cC\to Z(\cC)$ is adjoint to the forgetful functor $Z(\cC)\to \cC$, as $I(1_\cC)=\underline{\End}_{Z(\cC)}(1_\cC)$ for the module action of $Z(\cC)$ on $\cC$ given by the forgetful functor.
\end{ex}

\begin{rem}
In future work we will study fusion of superselection sectors for topological boundaries.
One could actually use the above definition of topological boundary to define fusion of superselection sectors, as $A_\phi$-modules in $\Vect(\DHR(\fB))$ again form a tensor category.
\end{rem}

We now specialize to the case that $\fB=\fF$, the fusion categorical net from our UFC $\cC$.
All the states constructed from Q-systems in Example \ref{ex:QSystemBoundary} above give topological boundaries. 
Indeed, every Q-system in the fusion category $\cC$ gives a canonical Lagrangian algebra in the center $Z(\cC)\cong \DHR(\fF)$. 
We explicitly illustrate this for the case $\phi=\phi_1$ for the trivial Q-system $1\in \cC$, which corresponds to the canonical Lagrangian algebra in $Z(\cC)$.
The argument for the other Q-systems in $\cC$ is analogous, and we plan to carry out a more systematic analysis of these states in future work.

Let $\cM_\phi$ be the full subcategory of $\Rep_\phi(\fF)$ generated by $L^2(\fF,\phi_1)$ under the $\DHR(\fF)$-action under taking orthogonal direct sums and orthogonal subobjects.

\begin{construction}
\label{const:Z(C)ModuleFunctor}
We now build a left $Z(\cC)$-module functor $\cC\to \cM_\phi$.

First, for each non-empty interval $I\subset \bbZ$, we have a fully faithful functor 
$H_{-}(I): \cC^{\rm op}\to \Rep(\fF(I))$ given by
$$
H_a(I):=\cC(a\to X^{\#I})
\qquad\qquad\qquad
\langle\eta|\xi\rangle
:=
\tr_\cC(\eta^\dag\circ \xi).
$$
Observe that $H_a(I)$ is a left
$\fF(I)$-module where the action is given by postcomposition.
That this $\fF(I)$-action is compatible with $\dag$ in $\fF$ follows by the unitary Yoneda embedding \cite[Rem.~2.28]{MR3687214} (see also \cite[Rem.~3.61]{MR4598730}).
Moreover, precomposition with $\varphi \in \cC(a\to b)$ gives a bounded $\fF(I)$-linear map $\varphi_*:H_b(I)\to H_a(I)$, and $(\varphi_*)^\dag= (\varphi^\dag)_*$ again by the unitary Yoneda embedding.
We thus have a unitary functor $H_{-}(I):\cC^{\rm op}\to \Rep(\fF(I))$ which is fully faithful by the Yoneda Lemma as every simple object is a subobject of $X^{\#I}$.

For $I\subset J$, we have an isometry $H_a(I)\hookrightarrow H_a(J)$ given by tensoring with $J\setminus I$ copies of the unit isometry $\iota: 1_\cC\to X$.
$$
H_a(I)
\ni
\tikzmath{
\draw (-.3,.3) -- (-.3,.7);
\draw (.3,.3) -- (.3,.7);
\draw[thick, blue] (0,-.7) -- node[right]{$\scriptstyle a$} (0,-.3);
\roundNbox{fill=white}{(0,0)}{.3}{.2}{.2}{$\varphi$}
\node at (.05,.5) {$\cdots$};
}
\longmapsto
\tikzmath{
\draw (-.7,.3) -- (-.7,.7);
\draw (-1.3,.3) -- (-1.3,.7);
\draw (.7,.3) -- (.7,.7);
\draw (1.3,.3) -- (1.3,.7);
\filldraw (-.7,.3) circle (.05cm);
\filldraw (-1.3,.3) circle (.05cm);
\filldraw (.7,.3) circle (.05cm);
\filldraw (1.3,.3) circle (.05cm);
\draw (-.3,.3) -- (-.3,.7);
\draw (.3,.3) -- (.3,.7);
\draw[thick, blue] (0,-.7) -- node[right]{$\scriptstyle a$} (0,-.3);
\roundNbox{fill=white}{(0,0)}{.3}{.2}{.2}{$\varphi$}
\node at (.05,.5) {$\cdots$};
\node at (1.05,.5) {$\cdots$};
\node at (-.95,.5) {$\cdots$};
}
\in H_a(J).
$$
Moreover, the inclusion isometry $H_a(I)\hookrightarrow H_a(J)$ is compatible with the left actions of $\fF(I)\hookrightarrow \fF(J)$, so the inductive limit $H_a:=\varinjlim H_a(I)$ has a left $\fF$-action.
We thus get a unitary functor $H: \cC^{\rm op}\to \Rep(\fF)$.
Finally, we precompose with the canonical unitary duality $\cC\to \cC^{\rm mop}$,\footnote{Here, $\rm mop$ means taking both the monoidal and arrow opposite of $\cC$.} noting that $\cC^{\rm mop}\cong \cC^{\rm op}$ as categories where we have forgotten the monoidal structure.
This gives us a functor $\check{H}: \cC\to \Rep(\fF)$ given by $\check{H}_a:= H_{\overline{a}}$.

We can identify $\check{H}_{1_\cC}\cong L^{2}(\fF, \phi)$ as follows.
We have a unitary isomorphism
$L^2(\fF(I),\phi)\cong \check{H}_{1_\cC}(I)=\cC(1_\cC\to X^{\#I})$ 
given by
$$
\varphi \Omega
=
\tikzmath{
\draw (-.3,-1) -- (-.3,-1.4);
\draw (-.3,-.7) -- (-.3,.7);
\filldraw (-.3,-.7) circle (.05cm);
\filldraw (-.3,-1) circle (.05cm);
\draw (.3,-1) -- (.3,-1.4);
\draw (.3,-.7) -- (.3,.7);
\filldraw (.3,-.7) circle (.05cm);
\filldraw (.3,-1) circle (.05cm);
\roundNbox{fill=white}{(0,0)}{.3}{.2}{.2}{$\varphi$}
\node at (.05,-.5) {$\cdots$};
\node at (.05,-1.2) {$\cdots$};
\node at (.05,.5) {$\cdots$};
}
\,\,
\mapsto 
\,\,
\tikzmath{
\draw (-.3,-.7) -- (-.3,.7);
\filldraw (-.3,-.7) circle (.05cm);
\draw (.3,-.7) -- (.3,.7);
\filldraw (.3,-.7) circle (.05cm);
\roundNbox{fill=white}{(0,0)}{.3}{.2}{.2}{$\varphi$}
\node at (.05,-.5) {$\cdots$};
\node at (.05,.5) {$\cdots$};
}\,,
$$
and this map clearly intertwines the left $\fF(I)$-action.
Taking inductive limits, we get an $\fF$-linear unitary $L^2(\fF,\phi)\cong \check{H}_{1_\cC}$.

We now construct a 
natural unitary isomorphism $\mu_{z,c}:\cY^{z}\boxtimes_\fF \check{H}_c
\to \check{H}_{z\rhd c}$,
where $\cY^z$ is the DHR bimodule associated to $z\in Z(\cC)$ from \S\ref{sec:DHR-Implementation} above.
This isomorphism will clearly satisfy unital and associative coherences, which will endow $\check{H}$ with the structure of a 
$Z(\cC)$-module functor.
Observe that $Z(\cC)\cong Z(\cC)^{\rm rev}\cong Z(\cC^{\rm op})$ as tensor categories where we have forgotten the braiding, and the forgetful functor $Z(\cC^{\rm op})\to \cC^{\rm op}$ is dominant.
Since $\check{H}_1\cong L^2(\fF,\phi)\in \Rep_\phi(\fF)$, this will show that the image of $\check{H}$ lies in $\cM_\phi$.

Now for every interval $I\subset \bbZ$ containing the defect point for $\cY^z$, we have a unitary isomorphism 
$$
\mu_{z,c}(I):
\cY^{z}(I)\boxtimes_{\fF(I)} \check{H}_c(I) 
= 
\cY^z(I)\boxtimes_\fF H_{\overline{c}}(I)
\to 
\check{H}_{z\rhd c}(I)
=
H_{\overline{c}\otimes F(\overline{z})}(I)
$$ 
by gluing diagrams and bending the $z$-string down and to the right of $\overline{c}$:
$$
\tikzmath{
\draw[thick, orange, knot, decorate, decoration={snake, segment length=1mm, amplitude=.2mm}] (.5,0) -- (1,0) node[right]{$\scriptstyle F(z)$} ;
\draw (-.3,.3) -- (-.3,.7);
\draw (.3,.3) -- (.3,.7);
\node at (.05,.5) {$\cdots$};
\draw (-.3,-.3) -- (-.3,-.7);
\draw (.3,-.3) -- (.3,-.7);
\node at (.05,-.5) {$\cdots$};
\roundNbox{fill=white}{(0,0)}{.3}{.2}{.2}{$\xi$}
}
\boxtimes
\tikzmath{
\draw (-.3,.3) -- (-.3,.7);
\draw (.3,.3) -- (.3,.7);
\draw[thick, blue] (0,-.7) -- node[left]{$\scriptstyle \overline{c}$} (0,-.3);
\roundNbox{fill=white}{(0,0)}{.3}{.2}{.2}{$\varphi$}
\node at (.05,.5) {$\cdots$};
}
\overset{\mu_{z,c}(I)}{\longmapsto}
\tikzmath{
\draw[thick, blue] (0,-1.3) -- (0,-1.7) node[below]{$\scriptstyle \overline{c}$};
\draw[thick, orange, knot, decorate, decoration={snake, segment length=1mm, amplitude=.2mm}] (.5,0) to[out=0,in=90] (1,-1) to[out=-90,in=90] (1,-1.7) node[below]{$\scriptstyle F(\overline{z})$};
\draw (-.3,.3) -- (-.3,.7);
\draw (.3,.3) -- (.3,.7);
\node at (.05,.5) {$\cdots$};
\draw (-.3,-.3) -- (-.3,-.7);
\draw (.3,-.3) -- (.3,-.7);
\node at (.05,-.5) {$\cdots$};
\roundNbox{fill=white}{(0,0)}{.3}{.2}{.2}{$\xi$}
\roundNbox{fill=white}{(0,-1)}{.3}{.2}{.2}{$\varphi$}
}
$$
The isomorphism $\mu_{z,c}(I)$ clearly satisfies the unit and associativity axioms for a modulator.
Moreover, if $I\subset J$, we get a commutative square
$$
\begin{matrix}
\cY^{z}(J)\boxtimes_{\fF(J)} \check{H}_c(J)
&
\underset{\phantom{A}}{\xrightarrow{\mu_{z,c}(J)}}
&
\check{H}_{z\rhd c}(J)
\\
\rotatebox{90}{$\hookrightarrow$} && \rotatebox{90}{$\hookrightarrow$}
\\
\cY^{z}(I)\boxtimes_{\fF(I)} \check{H}_c(I) 
&
\xrightarrow{\mu_{z,c}(I)}
&
\check{H}_{z\rhd c}(I).
\end{matrix}
$$
We thus get a well-defined unitary $\mu_{z,c}:=\varinjlim \mu_{z,c}(I): \cY^z\boxtimes_\fF \check{H}_c\to \check{H}_{z\rhd c}$, which endow $\check{H}$ with the structure of a $Z(\cC)$-module functor.
\end{construction}

With these considerations, we can prove the following theorem.

\begin{thm} 
Let $\cC$ be a UFC, let $\fF$ be the associated fusion categorical net, and let $\phi=\phi_1$ be the boundary state associatied to the trivial Q-system from Example \ref{ex:UnitState} above.
The functor $\check{H}: \cC\to \Rep_\phi(\fF)$ from Construction \ref{const:Z(C)ModuleFunctor} above is a $Z(\cC)$-module equivalence onto $\cM_\phi$.
In particular, $\phi$ is a topological boundary.
\end{thm}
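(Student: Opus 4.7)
The plan is to verify that $\check{H}$ is fully faithful and essentially surjective onto $\cM_\phi$, hence a $Z(\cC)$-module equivalence, and then read off the Lagrangian property from the pointing $\check{H}_{1_\cC} \cong L^2(\fF,\phi_1)$. Construction \ref{const:Z(C)ModuleFunctor} already produced the modulator $\mu_{z,c}$, so the module structure is in hand; what remains is genuinely to check the equivalence and identify $A_\phi$.

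\textbf{Essential surjectivity.} By definition, $\cM_\phi$ is the completion of $\{L^2(\fF,\phi_1)\} \cong \{\check H_{1_\cC}\}$ under the $\DHR(\fF) \cong Z(\cC)$ action and orthogonal direct sums/summands. The modulator gives $\cY^z \boxtimes_\fF \check{H}_{1_\cC} \cong \check{H}_{F(\overline z) \otimes 1_\cC} = \check{H}_{F(\overline z)}$. Since the forgetful functor $Z(\cC) \to \cC$ is dominant (every simple of $\cC$ is a summand of $F(z)$ for some $z \in Z(\cC)$ by, e.g., \cite[\S8.5]{MR3242743}) and $\cC$ is Cauchy complete, every object $c \in \cC$ is obtained from some $\cY^z \boxtimes_\fF \check{H}_{1_\cC}$ by taking an orthogonal summand. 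Hence the image of $\check{H}$ generates $\cM_\phi$ under direct sums/summands, and since $\cC$ is already Cauchy complete, $\check{H}: \cC \to \cM_\phi$ is essentially surjective.

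\textbf{Fully faithfulness.} For each sufficiently large interval $I$ (large enough that every simple $c \in \cC$ embeds in $X^{\#I}$), the unitary Yoneda embedding gives
\[
\Hom_{\fF(I)}\!\bigl(\check H_a(I),\check H_b(I)\bigr) \;\cong\; \cC(\overline a,\overline b)^{\op} \;\cong\; \cC(a,b),
\]
with the identification natural in $I \subset J$. Passing to the inductive limit, any $T \in \Hom_\fF(\check H_a, \check H_b)$ commutes in particular with every $\fF(I)$, so its restrictions to the finite-level subspaces $\check H_\bullet(I)$ land in this Yoneda image and must be compatible under the inclusions; density of $\bigcup_I \check H_\bullet(I)$ in the inductive limit, together with boundedness of $T$, then forces $T$ to come from a unique element of $\cC(a,b)$. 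I expect this step to be the main obstacle: one must argue that no new intertwiners appear in the Hilbert-space completion, i.e.\ that a bounded $\fF$-linear map between inductive-limit Hilbert spaces is determined by (and stabilizes on) the dense local subspaces. A clean way is to use that $\check H_a(I)$ is a finite-dimensional summand of $\check H_a(J)$ cut out by a projection $p_a(I,J) \in \fF(J)$, so $T$ commutes with $p_a(I,J)$ and $p_b(I,J)$ and thus restricts to the finite-level hom space, which is already identified with $\cC(a,b)$.

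\textbf{Conclusion: $\phi_1$ is a topological boundary.} Once $\check H$ is a $Z(\cC)$-module equivalence sending the distinguished object $1_\cC \in \cC$ to $L^2(\fF,\phi_1)$, the internal end is transported along the equivalence:
\[
A_{\phi_1} \;=\; \underline{\End}_{\DHR(\fF)}\bigl(L^2(\fF,\phi_1)\bigr) \;\cong\; \underline{\End}_{Z(\cC)}(1_\cC) \;=\; I(1_\cC),
\]
where $I : \cC \to Z(\cC)$ is the right adjoint to the forgetful functor. As noted in the example preceding the theorem, $I(1_\cC)$ is the canonical Lagrangian algebra in $Z(\cC)$, so $A_{\phi_1}$ is Lagrangian in $\DHR(\fF) \cong Z(\cC)$ and $\phi_1$ is a topological boundary by definition.
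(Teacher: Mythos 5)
Your proposal is correct and takes essentially the same route as the paper: full faithfulness via the finite-level Yoneda identification together with the projections built from $\iota\iota^\dag$ on the outer sites (the paper's $q_n(I)$) forcing an $\fF$-intertwiner to preserve each $\check{H}_a(I)$, essential surjectivity from dominance of the forgetful functor $Z(\cC)\to\cC$ plus Cauchy completeness of $\cC$ and $\cM_\phi$, and the Lagrangian conclusion by transporting the internal end to $\underline{\End}_{Z(\cC)}(1_\cC)=I(1_\cC)$. The one point to tighten is the step you flagged: commuting with a single cut-down projection is not enough; you need the whole family $q_n(I)$ for all $n$ together with the orthogonal level-difference decomposition of $\check{H}_b$ to see that a vector fixed by every $q_n(I)$ already lies in $\check{H}_b(I)$, which is exactly how the paper closes this gap.
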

\begin{proof} 
We first show $\check{H}$ is fully faithful.
To show $\Rep(\fF)(\check{H}_a\to \check{H}_b)\cong \cC(\overline{b}\to \overline{a})\cong \cC(a\to b)$ for $a,b\in\cC$, 
we show that every $\Rep(\fF)$-intertwiner 
$T: \check{H}_a\to \check{H}_b$
maps $\check{H}_a(I)\to \check{H}_b(I)$ for every interval $I\subset \bbZ$.
Since every $\check{H}_{-}(I)$ is fully faithful, and since $\check{H}=\varinjlim \check{H}_{-}(I)$, this will prove the result.

For $I\subset \bbZ$ and $n\geq \#I$, we define an orthogonal projection $q_n(I)$ which is $\id_X$ on sites in $I$ and $\iota\iota^\dag$ on the $n$ sites to the left and on the $n$ sites to the right of $I$, where $\iota: 1_\cC\to X$ is the inclusion.
$$
q_n(I):=
\tikzmath{
\draw (-.3,-.5) -- (-.3,.5);
\draw (.3,-.5) -- (.3,.5);
\node at (.05,0) {$\cdots$};
\draw (-.7,.2) -- (-.7,.5);
\draw (-.7,-.2) -- (-.7,-.5);
\draw (-1.3,.2) -- (-1.3,.5);
\draw (-1.3,-.2) -- (-1.3,-.5);
\draw (.7,.2) -- (.7,.5);
\draw (.7,-.2) -- (.7,-.5);
\draw (1.3,.2) -- (1.3,.5);
\draw (1.3,-.2) -- (1.3,-.5);
\filldraw (-.7,.2) circle (.05cm);
\filldraw (-.7,-.2) circle (.05cm);
\filldraw (-1.3,.2) circle (.05cm);
\filldraw (-1.3,-.2) circle (.05cm);
\filldraw (.7,.2) circle (.05cm);
\filldraw (.7,-.2) circle (.05cm);
\filldraw (1.3,.2) circle (.05cm);
\filldraw (1.3,-.2) circle (.05cm);
\node at (1.05,.35) {$\cdots$};
\node at (1.05,-.35) {$\cdots$};
\node at (-.95,.35) {$\cdots$};
\node at (-.95,-.35) {$\cdots$};
}
\in
\fF(n+I+n).
$$
Since $T$ intertwines the $\fF$-actions, we have that $Tq_n(I)=q_n(I)T$.
Thus if $x\in \check{H}_a(I)$, $x=q_n(I)x$ for all $n\in \bbN$, and thus
$Tx=Tq_n(I)x=q_n(I)Tx$ for all $n\in \bbN$.
We claim this means $Tx\in \check{H}_b(I)$.
Indeed, $\check{H}_b$ is filtered by the finite dimensional subspaces $\check{H}_b(I)$, and thus we can write $\check{H}_b$ as an orthogonal direct sum
$$
\check{H}_b
=
\check{H}_b(I)
\oplus \bigoplus_{n\in \bbN}  [\check{H}_b(n+I+n) \ominus \check{H}_b((n-1)+I+(n-1))].
$$
Observe that the subspace $\check{H}_b(n+I+n) \ominus \check{H}_b((n-1)+I+(n-1))$ on the right is given by adding a $\id_X-\iota\iota^\dag$ on both outer-most strands.
Now writing $Tx = (Tx)_I+ \sum_{n\in \bbN} (Tx)_n$ in this decomposition, we see that $Tx=q_n(I)Tx$ for all $n$, and $q_n(I)(Tx)_k=0$ for all $k=1,\dots, n-1$.
We conclude that $(Tx)_n=0$ for all $n$, and thus $Tx = (Tx)_I\in \check{H}_b(I)$ as claimed.

Now since $\cM_\phi$ was defined as the full $\rm W^*$-subcategory of $\Rep_\phi(\fF)$ generated by $L^2(\fF,\phi_1)$ under the $\DHR(\fF)\cong Z(\cC)$-action, $\check{H}$ is dominant.
Since both $\cC$ and $\cM_\phi$ are unitarily Cauchy complete, $\check{H}$ is a unitary equivalence.
\end{proof}

\appendix
\section{Path algebras and type \texorpdfstring{$\rm III$}{III} factors
\texorpdfstring{\\}{} 
by Masaki Izumi}
\label{app:typeiii}
In this appendix authored by Masaki Izumi,
we further classify the type of the factor $\fF''$ for the boundary states $\psi_\fF$ (see \S\ref{sec:LWBoundaryState}) of Levin-Wen models (defined in \S\ref{sec:LWStringNet}) in the type $\rm{III}$ case. 
We first prove a general result on the type of the von Neumann algebra $M=B''$ generated by a certain KMS state $\phi$ on an AF-algebra $B$ defined in terms of a finite oriented graph $\cG$.
The key point in the proof is that we can identify Krieger's ratio set~\cite{MR0414823}, which gives us Connes' invariant $S(M)$~\cite{MR0341115}.
See~\cite[\S 2]{MR0578730} or~\cite[\S XIII.2]{MR1943007} for the equality between the two sets.
This result can then be applied to the setting of the Levin--Wen model, leading to a proof of the claim in Remark~\ref{rem:LWBoundaryType}.

\subsection{Path \texorpdfstring{$\rm C^*$}{C*}-algebras}
\label{sec:PathAlgebra}
Throughout this appendix, let $\cG$ be a finite oriented graph.
We write $\cG^{(0)}$ and $\cG^{(1)}$ for the set of vertices and edges respectively.
With $s,r: \cG^{(1)} \to \cG^{(0)}$ we denote the source and range maps.
We furthermore assume the choice of a distinguished vertex $\ast \in \cG^{(0)}$.
For simplicity, we make the following assumption:
\begin{itemize}
\item 
For any pair of vertices $v_1, v_2$, there is an edge $e\in\cG^{(1)}$ with $s(e) = v_1$ and $r(e) = v_2$.
\end{itemize}
We define
\[
    \Path^n_v(\cG) :=  \set{\xi_1 \xi_2 \cdots \xi_n}{r(\xi_i) = s(\xi_{i+1}),\,\, s(\xi_1) = v}
\]
as the set of all paths of length $n$ starting at the vertex $v$.
For any finite path $\xi$, we will write $|\xi|$ for its length, and $r(\xi)$ for the range of the last edge in the path.

The paths starting in the distinguished vertex $*$ can be used to define a tower of finite dimensional $\rm C^*$-algebras $(B_n)$ as in \cite{MR996454,MR999799,MR1642584,MR1473221,MR2812459}.
For each pair $\xi, \eta \in \Path^n_{\ast}(\cG)$, define the formal symbol $\pbk{\xi}{\eta}$ and let
\[
    B_n := \operatorname{span}_\bbC \set{\pbk{\xi}{\eta}}{\xi,\eta \in \Path_{\ast}^n(\cG),\,\, r(\xi) = r(\eta)},
\]
with multiplication and $*$-structure given by
\[
    \pbk{\xi_1}{\xi_2} \cdot \pbk{\eta_1}{\eta_2} := \delta_{\xi_2= \eta_1} \pbk{\xi_1}{\eta_2}
    \qquad\qquad\text{and}\qquad\qquad
    \pbk{\xi}{\eta}^* := \pbk{\eta}{\xi}.
\]
Since $\cG$ is finite, $B_n$ is a $\rm C^*$-algebra
which is the direct sum of $\left| \cG^{(0)} \right|$ full matrix algebras.
For each $v \in \cG^{(0)}$, the corresponding summand is an $n_v \times n_v$ matrix algebra, where $n_v$ is the number of paths from $\ast$ to $v$ of length $n$.
One can think of $\pbk{\xi}{\eta}$ as a loop of length $2n$ on $\cG$ starting at $*$ where you first travel $\xi$ to $r(\xi)=r(\eta)$, and then travel $\eta$ in the reverse direction.
This gives the loop algebra convention of \cite{MR2812459}.

The inclusion $\iota_{n}: B_n \hookrightarrow B_{n+1}$ is given by
\begin{equation}
\label{eq:BnInclusion}
    \iota_{n}\left(\pbk{\xi_1}{\xi_2}\right) := \sum_{ \substack{ r(\xi_1) = s(\eta) \\ |\eta| = 1}}  \pbk{\xi_1 \eta}{\xi_2 \eta}.
\end{equation}
That is, we extend each pair of paths by one edge in all possible ways satisfying the loop condition, and sum the result.
By our assumptions on $\cG$, $\iota_n:B_n\hookrightarrow B_{n+1}$ is injective and unital, and the Bratteli diagram can be identified with $\cG$.
We write
\[
        B := \varinjlim_{n \to \infty} B_n
\]
for the corresponding AF-algebra.

\subsection{Edge and vertex weighting and a KMS state}
\label{sec:GraphKMS}
To define a KMS state on $B$ we need to specify a weight on our graph $\cG$.

\begin{defn}
    Consider a \emph{weight} $w: \cG^{(0)} \cup \cG^{(1)} \to (0, \infty)$ on the vertices and edges of $\cG$.
    Then for $\xi \in \Path^n_v(\cG)$, we set $w(\xi) := w(\xi_1) w(\xi_2) \cdots w(\xi_n)$.
\end{defn}

We now assume that our weight $w$ on $\cG$ satisfies the following condition:
\begin{itemize}
\item 
there is a $\delta >0$ satisfying that for all $v \in \cG^{(0)}$, we have
\begin{equation}
\label{eq:WeightCondition}
\sum_{\substack{ |\eta| = 1 \\ s(\eta) = v }} w(\eta) w(r(\eta)) = \delta w(v).
\end{equation}
\end{itemize}
In the final subsection of this appendix we will give an example of such a weight.

\begin{rem}
Observe the condition above is
a system of linear equations, with one equation and variable for each vertex $v\in\cG^{(0)}$.
So any solution of this system is a formula for the weights of the vertices as a function of the weights of the edges, with possibly some free parameters.
If all edges have weight 1, then $w$ is a Frobenius-Perron eigenvector associated to $\delta$.
If all vertices are weighted 1, this is roughly the $\delta$-fairness condition of \cite{MR3420332} (without the balancing and even number of loops conditions).
\end{rem}

Let us now introduce dynamics on $B$.
For $t \in \bbR$ and $\pbk{\xi}{\eta} \in B_n$, define
\begin{equation}
\label{eq:IzumiDynamics}
    \sigma_t( \pbk{\xi}{\eta} ) := 
    \frac{w(\xi)^{it}}{w(\eta)^{it}} \pbk{\xi}{\eta}.
\end{equation}
This induces a one-parameter group of $*$-automorphism on $B_n$ which is compatible with the inclusion $B_n \hookrightarrow B_{n+1}$ defined above.
Hence we can extend it to a one-parameter group $\sigma_t \in \Aut(B)$.

Next define a state $\phi$ on $B_n$ by
\begin{equation}
\label{eq:PathAlgebraState}
    \phi( \pbk{\xi}{\eta}) 
    := 
    \delta_{\xi=\eta} \frac{1}{\delta^n} w(\xi) w(r(\xi)), \quad\quad \xi, \eta \in \Path^n_{\ast}(\cG).
\end{equation}
Using~\ref{eq:WeightCondition}, 
it is straightforward to see that
$\phi$ is a positive linear functional on $B_n$ with $\phi(1) = 1$ which is compatible with the inclusion $B_n\hookrightarrow B_{n+1}$. 
Hence $\phi$ extends uniquely to a state on $B$ by Corollary~\ref{cor:ExtendAF-UCP}.

Furthermore, observe that every $\pbk{\xi}{\eta}$ is entire with respect to $\sigma_t$, and when $\beta=-1$,
\begin{align*}
\phi(\pbk{\xi_1}{\eta_1} \cdot \sigma_{-i}(\pbk{\xi_2}{\eta_2}))
&=
\frac{w(\xi_2)}{w(\eta_2)}
\phi(\pbk{\xi_1}{\eta_1} \cdot \pbk{\xi_2}{\eta_2})
\\&=
\delta_{\eta_1=\xi_2}
\delta_{\xi_1=\eta_2}
\frac{w(\xi_2)}{\delta^n w(\eta_2)}w(\xi_1)w(r(\xi_1))
\\&=
\delta_{\eta_1=\xi_2}
\delta_{\xi_1=\eta_2}
\frac{1}{\delta^n}
w(\xi_2)w(r(\eta_2))
\\&=
\delta_{\eta_1=\xi_2}
\delta_{\xi_1=\eta_2}
\frac{1}{\delta^n}
w(\xi_2)w(r(\xi_2))
\\&=
\phi(\pbk{\xi_2}{\eta_2} \cdot \pbk{\xi_1}{\eta_1})
\end{align*}
for all $\pbk{\xi_1}{\eta_1},\pbk{\xi_2}{\eta_2}\in B_n$
(recall here that $r(\xi_2)=r(\eta_2)$ as $\pbk{\xi_2}{\eta_2}\in B_n$).

Since the Bratteli diagram for $B$ is connected and stationary, by \cite[Prop.~4.1]{MR1772604} and uniqueness of the Frobenius-Perron eigenvector (up to scaling), $\phi$ is the unique $\beta=-1$ KMS state for the dynamics $\sigma$.

We have just proved the following proposition.

\begin{prop}
The state $\phi$ is the unique $\beta=-1$ KMS state on $B$ for the dynamics $\sigma$.
Thus $M:=B''$ acting on $L^2(B,\phi)$ is a factor.
\end{prop}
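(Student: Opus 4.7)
The plan is to verify the KMS condition, then invoke connectedness and stationarity of the Bratteli diagram to get uniqueness, and finally deduce factoriality from uniqueness by a standard argument.

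First, I would promote the KMS-$(-1)$ identity, already verified in the excerpt on matrix units $\pbk{\xi}{\eta}$ of $B_n$, to all of $B$. Since the entire elements $\bigcup_n B_n$ form a norm-dense $*$-subalgebra and the analytic extensions $\sigma_z(\pbk{\xi}{\eta})$ were computed explicitly as scalar multiples, bilinearity and norm continuity of $(\,x,y\,)\mapsto \phi(x\,\sigma_{-i}(y))$ on the domain of $\sigma_{-i}$ give the full KMS-$(-1)$ condition for $(B,\sigma)$; this part is essentially bookkeeping.

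For uniqueness, I would appeal to \cite[Prop.~4.1]{MR1772604}, which describes $\beta$-KMS states of a product-type dynamics on an AF-algebra built from a Bratteli diagram in terms of (positive) Frobenius--Perron data for the associated inclusion matrix. Under our assumption that every pair of vertices of $\cG$ is joined by an edge, the Bratteli diagram of $B=\varinjlim B_n$ is connected and stationary with inclusion matrix encoded by $\cG$. The weight $w$ restricted to $\cG^{(0)}$ is, up to the normalization \eqref{eq:WeightCondition}, precisely the Frobenius--Perron eigenvector of this stationary diagram with eigenvalue $\delta$; by the classical Perron--Frobenius theorem applied to the (strictly positive) weighted adjacency matrix, such an eigenvector is unique up to scaling. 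Thus the $\beta=-1$ KMS state for $\sigma$ is unique, and equals $\phi$.

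Finally, factoriality of $M=\pi_\phi(B)''$ is the standard consequence of uniqueness of the KMS state at a fixed temperature: if $Z(M)$ were non-trivial, then any non-scalar projection $z\in Z(M)$ would yield a convex decomposition $\phi=\phi(z)\phi_{z}+\phi(1-z)\phi_{1-z}$ into distinct $\sigma$-KMS states via $\phi_z(x):=\phi(xz)/\phi(z)$ (using that $\sigma$ extends to $M$ and fixes $Z(M)$ pointwise, cf.~\cite[Thm.~5.3.30]{MR1441540}), contradicting uniqueness. Hence $M$ is a factor, which is the remaining assertion of the proposition. The only step with any substance is the invocation of \cite[Prop.~4.1]{MR1772604}; the main subtlety to double-check there is that the inverse temperature and the normalization of $w$ match the conventions used in that reference, but since we already have an explicit KMS state of the desired form, this is just a matter of aligning notation.
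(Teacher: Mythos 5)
Your proposal is correct and follows essentially the same route as the paper: the KMS-$(-1)$ identity is checked on the matrix units (which are analytic for $\sigma$), uniqueness comes from \cite[Prop.~4.1]{MR1772604} together with Perron--Frobenius uniqueness for the connected, stationary Bratteli diagram, and factoriality is the standard consequence of uniqueness of the KMS state. The only difference is cosmetic: the paper leaves the last step implicit, while you spell out the central-projection decomposition argument (cf.\ \cite[Thm.~5.3.30]{MR1441540}), which is exactly the intended justification.
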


Note that $B$ is simple because $\cG$ is connected.
Hence as before (see \S\ref{sec:KMS}), $\phi$ extends to a faithful normal state on $B''$ acting on $L^2(B, \phi)$, and the extension of $\sigma$ to $B''$ corresponds to the modular automorphism group with respect to the canonical cyclic vector in $L^2(B, \phi)$.

\subsection{Type of the von Neumann completion using Krieger's ratio set}

Recall that a maximal abelian von Neumann subalgebra $A$ of a von Neumann algebra $M$ is called a \emph{Cartan subalgebra} if
there exists a faithful normal conditional expectation $E: M\to A$
and its normalizer generates $M$, i.e., $\cN_M(A)''=M$ where
\[
    \cN_{M}(A) := \set{u \in \cU(M)} { u A u^* = A}.
\]
This latter condition, called \emph{regularity} of $A\subset M$, is equivalent to 
$M = A \vee \cN_{M}(A)$.

We now consider the von Neumann algebra $M:=B''$ in the GNS representation $L^2(B,\phi)$
and the abelian von Neumann subalgebra 
$$
A := \set{ \pbk{\xi}{\xi} }{\xi \in \bigcup_n \Path_{\ast}^n (\mathcal{G})}''.
$$

\begin{lem}
\label{lem:Cartan}
$A$ is a Cartan subalgebra of $M$.
\end{lem}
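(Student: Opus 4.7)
The plan is to verify the three defining conditions of a Cartan subalgebra in turn: that $A$ is maximal abelian in $M$, that there exists a faithful normal conditional expectation $E : M \to A$, and that $A$ is regular in $M$.

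The modular dynamics \eqref{eq:IzumiDynamics} fixes $A$ pointwise since $\sigma_t(\pbk{\xi}{\xi}) = \pbk{\xi}{\xi}$, so $A$ is globally invariant under $\sigma_t$. By Takesaki's theorem there is a unique faithful normal $\phi$-preserving conditional expectation $E : M \to A$; concretely $E$ is the normal extension of the compatible tower of diagonal maps $E_n : B_n \to A_n$, $\pbk{\xi}{\eta} \mapsto \delta_{\xi=\eta}\pbk{\xi}{\xi}$, where compatibility with \eqref{eq:BnInclusion} is a direct verification.

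For the maximal abelian property, $A$ is abelian as the closure of the commutative system $A_n$. At the finite level, $A_n$ is the diagonal of $B_n \cong \bigoplus_{v \in \cG^{(0)}}M_{n_v}(\bbC)$ and hence MASA in $B_n$; a short computation using \eqref{eq:BnInclusion} shows more strongly that any $x \in B_n$ which commutes with $A_m$ for some $m > n$ already lies in $A_n$. To lift this to the von Neumann closure, given $x \in A' \cap M$ I set $y := x - E(x) \in A' \cap M \cap \ker E$; since $E$ corresponds to the orthogonal projection onto $\overline{A\Omega} \subseteq L^2(M,\phi)$, the vector $y\Omega$ is perpendicular to $\overline{A\Omega}$, while $A$-centrality of $y$ together with a density argument (using that $\bigcup_n B_n \Omega$ is dense in $L^2(M,\phi)$ and the finite-level identity $A_m' \cap B_n = A_n$) forces $y\Omega = 0$ and hence $y = 0$. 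Equivalently one may recognize $(B,A)$ as the C*-diagonal of the AF tail-equivalence groupoid on the path space and invoke Renault's theorem on Cartan subalgebras of groupoid C*-algebras.

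For regularity, for each pair of distinct paths $\xi, \eta \in \Path^n_*(\cG)$ with $r(\xi) = r(\eta)$ the self-adjoint unitary
\[
u_{\xi,\eta} := \pbk{\xi}{\eta} + \pbk{\eta}{\xi} + \sum_{\substack{\zeta \in \Path^n_*(\cG) \\ \zeta \neq \xi,\eta}} \pbk{\zeta}{\zeta} \in B_n
\]
acts on $A$ by conjugation by transposing the minimal projections $\pbk{\xi}{\xi} \leftrightarrow \pbk{\eta}{\eta}$ in $A_n$ (and, by a direct computation with \eqref{eq:BnInclusion}, transposing $\pbk{\xi\mu}{\xi\mu} \leftrightarrow \pbk{\eta\mu}{\eta\mu}$ in $A_m$ for each $m>n$ and tail $\mu$) while fixing the remaining diagonal projections; hence $u_{\xi,\eta} \in \cN_M(A)$. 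Since $u_{\xi,\eta}\pbk{\eta}{\eta} = \pbk{\xi}{\eta}$, the algebra $\cN_M(A)''$ contains every matrix unit of every $B_n$, so it contains $\bigcup_n B_n$, which is weakly dense in $M$. The main obstacle is the MASA property: the finite-level calculation and the existence of $E$ are essentially formal, but upgrading to $A' \cap M = A$ requires the density/$L^2$-orthogonality argument above, which is the point where the structure of the path algebra (equivalently, freeness of the underlying AF groupoid) is genuinely used.
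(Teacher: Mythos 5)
Your treatment of the conditional expectation and of regularity is essentially the paper's: Takesaki's theorem applied to the $\sigma$-invariant algebra $A$ gives the faithful normal $\phi$-preserving $E$, and your unitaries are literally the same as the paper's, since $\sum_{\zeta\neq\xi,\eta}\pbk{\zeta}{\zeta}=1-\pbk{\xi}{\xi}-\pbk{\eta}{\eta}$, with the same generation argument $u_{\xi,\eta}\pbk{\eta}{\eta}=\pbk{\xi}{\eta}$. Where you diverge is the maximal abelian step. The paper applies Takesaki to each globally $\sigma$-invariant $B_n$ to get $\phi$-preserving normal expectations $E_n:M\to B_n$, shows $E_n(x)\to x$ strong-$*$, and observes that $x\in A'\cap M$ forces $E_n(x)\in A_n'\cap B_n=A_n$ (writing $A_n:=A\cap B_n$); that is precisely where the finite-level identity you quote does its work.

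Your alternative via $y:=x-E(x)$ and $L^2$-orthogonality has a gap as written: density of $\bigcup_n B_n\Omega$ (with $\Omega$ the GNS vector) only reduces $y\Omega=0$ to showing $\phi\bigl(\pbk{\eta}{\xi}\,y\bigr)=0$ for every matrix unit, and for off-diagonal units ($\xi\neq\eta$) neither $y\in A'$ nor the identity $A_m'\cap B_n=A_n$ yields this; the missing ingredient is that the diagonal projections lie in the centralizer of $\phi$ (they are fixed by $\sigma_t$, so $\phi(z\,\pbk{\xi}{\xi})=\phi(\pbk{\xi}{\xi}\,z)$ for all $z\in M$). With that input the step closes:
\[
\phi\bigl(\pbk{\eta}{\xi}\,y\bigr)
=\phi\bigl(\pbk{\eta}{\xi}\,\pbk{\xi}{\xi}\,y\bigr)
=\phi\bigl(\pbk{\eta}{\xi}\,y\,\pbk{\xi}{\xi}\bigr)
=\phi\bigl(\pbk{\xi}{\xi}\,\pbk{\eta}{\xi}\,y\bigr)=0,
\]
since $\pbk{\xi}{\xi}\pbk{\eta}{\xi}=0$ when $\xi\neq\eta$; the diagonal units are handled by $y\Omega\perp\overline{A\Omega}$, and $y=0$ then follows because $\Omega$ is separating ($\phi$ is faithful and normal on $M$). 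So your route is salvageable, but the decisive mechanism (the KMS/centralizer property of the diagonal, together with $y\in A'$) is exactly what is asserted rather than proved, and the ingredients you cite do not supply it; the paper's $E_n$ argument avoids the issue entirely. Finally, the appeal to Renault's theorem for the C*-diagonal of the tail-equivalence groupoid is a reasonable alternative in principle, but within this paper the measured groupoid picture $(\Omega,\mu,G)$ and the identification of $M$ with the von Neumann algebra of the orbit equivalence relation are established \emph{after}, and using, this lemma; invoking it here would require an independent verification that the GNS representation of $\phi$ realizes that groupoid von Neumann algebra with $A$ as its diagonal, which is essentially the content being proved.
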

\begin{proof}
Clearly $A$ is abelian.
Note that for each $n \in \bbN$, the algebra $B_n$ is globally fixed by all $\sigma_t$, that is $\sigma_t(B_n) = B_n$ for all $t \in \bbR$.
Since $\sigma$ is the modular automorphism group for $\phi$, by a theorem of Takesaki~\cite[Thm. IX.4.2]{MR1943006} this implies the existence of a sequence of normal conditional expectations $E_n : M \to B_n$ such that $\phi \circ E_n = \phi$.
Then for all $x \in M$, we have that $E_n(x) \to x$ in the strong-$*$ topology.
Indeed, to see this, recall that on the unit ball of $M$, the topology induced by $\| \cdot \|_\phi$ coincides with the $\sigma$-strong topology~\cite[Prop. III.5.4]{MR1873025}.
It follows that for $x \in M$, we can choose $(x_i) \in \bigcup_n B_n$ with $\| x_i - x \|_\phi \to 0$.
Then
\begin{align*}
    \lim_n \| E_n(x) - x \|_\phi 
    &=
    \lim_n \left\| E_n(x-x_i) - (x-x_i)  + E_n(x_i) - x_i \right\|_\phi \\
    &\leq 
    2 \|x - x_i \|_\phi + \underbrace{\lim_{n \to \infty} \| E_n(x_i) - x_i \|_\phi}_{= 0}
\end{align*}
for all $i$,  and similarly for $x^*$.
Hence $\| E_n(x) -x \|_\phi + \| E_n(x^*) -x^* \|_\phi \to 0$ for all $x \in M$.
Since $(E_n(x) - x)_n$ and $(E_n(x^*) - x^*)_n$ are both norm-bounded sequences, the claim follows.

Now consider $a' \in M \cap A'$, so that $E_n(a') \in (A \cap B_n)' \cap B_n$.
Because $A \cap B_n$ is maximal abelian in $B_n$, it follows that $E_n(a') \in A \cap B_n$, and from the strong-$*$ convergence of $E_n(a')$ to $a'$, it follows that $a' \in A$, and hence $A$ is a MASA.

Since $A$ is also globally fixed by all $\sigma_t$, as above there exists a normal conditional expectation $E : M \to A$ such that $\phi \circ E = \phi$.
Since $\phi$ is faithful, it follows that $E$ is faithful as well.

It remains to check that $A \subset M$ is regular.
For each pair of paths $\xi, \eta \in \Path_{*}^{n}(\cG)$ with $r(\xi) = r(\eta)$ but $\xi \neq \eta$, define the operator
\begin{equation}
    \label{eq:PathGroup}
    u_{\xi,\eta} := \pbk{\xi}{\eta} + \pbk{\eta}{\xi} + 1 - \pbk{\xi}{\xi} - \pbk{\eta}{\eta}.
\end{equation}
It is straightforward to check that these are self-adjoint unitaries, and that $u_{\xi,\eta} A u_{\xi,\eta}^* = A$, and therefore $u_{\xi,\eta} \in \cN_M(A)$.
Finally, note that $u_{\xi,\eta} \pbk{\eta}{\eta} = \pbk{\xi}{\eta}$, from which it is clear that $A$ together with the $u_{\xi,\eta}$ generate $M = B''$.
\end{proof}

By the Feldman--Moore Theorem~\cite[Thm.~1]{MR0578730} $M\cong L\cR$ where $\cR$ is a countable standard relation on a Borel measure space $\Omega$
and
$A \cong L^\infty(\Omega, \mu)$.
We will show this explicitly for $M$ by identifying a measure space $(\Omega, \mu)$ such that $A = L^\infty(\Omega, \mu)$, together with an action $G \curvearrowright \Omega$, where $G$ is the group generated by the unitaries of \eqref{eq:PathGroup} in Lemma~\ref{lem:Cartan}.
The equivalence relation $\cR$ is then the orbit equivalence relation.
This explicit description makes it possible to compute Krieger's ratio sets and determine the type of $M$.

We first claim that $A = L^\infty(\Omega, \mu)$, where we define the measure space $\Omega$ as follows.
Let $\Omega$ be the Cantor set $\Path_*^\infty(\cG)$ of infinite paths on $\cG$.
For $\xi \in \Path_*^n(\cG)$, define the cylinder set
\[
    E_\xi  := \set{ \eta \in \Omega\,\,}{\,\,\xi_i = \eta_i \text{ for all } i \leq |\xi| }.
\]
For two finite paths $\xi, \eta$ starting at $*$ and $|\xi| \leq |\eta|$ we have $E_\xi \cap E_\eta = \emptyset$ if the first $|\xi|$ edges of $\eta$ do not agree with those of $\xi$, or $E_\eta \subset E_\xi$ otherwise.
Thus the cylinder sets form a basis for a topology on $\Omega$.

Let $\chi_{E_\xi}$ be the indicator function for a cylinder set.
Then we can map $\chi_{E_\xi} \mapsto \pbk{\xi}{\xi} \in A$, and this is compatible with multiplication on both sides:
 $\chi_{E_{\xi}} \cdot \chi_{E_{\eta}} = \chi_{E_{\xi} \cap E_\eta} \mapsto \pbk{\xi}{\xi} \cdot \pbk{\eta}{\eta}$.
For $\xi \in \Path^n_*(\cG)$, define
\begin{equation}
    \label{eq:CylinderMeasure}
    \mu(E_\xi) := \phi(\pbk{\xi}{\xi}) = \frac{1}{\delta^n} w(\xi) w(r(\xi)).
\end{equation}
Since $\phi$ is a normal state, it follows from the discussion above that $\mu$ extends to a regular Borel measure on $\Omega$.
The map $\pi:= \chi_{E_\xi} \mapsto \pbk{\xi}{\xi}$ mentioned earlier extends to a $*$-homomorphism of a (SOT) dense subalgebra of $L^\infty(\Omega, \mu)$ onto a dense subalgebra of $A$.
Write $\mu$ again for the state on $L^\infty(\Omega,\mu)$ induced by the probability measure.
Since $\mu(f) = \phi(\pi(f))$ on this dense subalgebra and both states are normal and faithful, it follows that $\pi$ extends to a spatial isomorphism of von Neumann algebras, and we may write $A = L^\infty(\Omega, \mu)$.

Now let $G$ be the (countable) group generated by our normalizing unitaries $u_{\xi,\eta}$ as in \eqref{eq:PathGroup}
which together with $A$ generate $M$.
We now identify the action of $G$ on $A$ under the identification $A=L^\infty(\Omega,\mu)$.
Since $M=L\cR$ acts on $L^2(\Omega,\mu)$,
we first identify the action of the ket-bra operators 
$\pbk{\xi}{\eta}$ for $\xi, \eta \in \Path_*^n(\mathcal{G})$ such that $r(\xi) = r(\eta)$.
Given $\omega \in \Omega$, we can write $\omega = \zeta \zeta'$, where $\zeta$ consists of the first $n$ edges in the path, so $\omega\in E_\zeta$, and $\zeta'$ the rest of the infinite path.
We have
$$
\pbk{\xi}{\eta} \cdot \omega =
\begin{cases}
\xi \zeta' & \text{if }\eta=\zeta, \text{ equivalently } \omega\in E_\eta=E_\zeta\\
0 &\text{otherwise,}
\end{cases}
$$
and thus
the action of (the generators of) $G$ on $\Omega$ is given as follows:
\begin{equation}
    \label{eq:GAction}
    u_{\xi,\eta} \cdot \omega := 
    \begin{cases}
        \eta\zeta'  & \textrm{if } \zeta = \xi, \text{ equivalently } \omega\in E_\xi=E_\zeta
        \\
        \xi\zeta'  & \textrm{if } \zeta = \eta, \text{ equivalently } \omega\in E_\eta=E_\zeta
        \\
        \omega & \textrm{otherwise.}
    \end{cases}
\end{equation}

In particular, $u_{\xi,\eta}$ sends single points in $\Omega$ to single points, since $r(\xi) = r(\eta)$.
Note that $u_{\xi,\eta}$ replaces the first part of the path $\omega$ with the path $\eta$ (resp.~$\xi$) if $\omega$ starts with the path $\xi$ (resp.~$\eta)$.
In all other cases it acts trivially.
Thus \eqref{eq:GAction} is indeed a $G$-action on $\Omega$.
Since the cylinder sets generate the topology on $\Omega$, $G$ acts by Borel automorphisms on $\Omega$,
and we have the standard Borel orbit equivalence relation
\[
    \cR := \set{(\omega, g\cdot \omega)\,}{\,g \in G,\, \omega\in\Omega},
\]
which is a Borel subset of $\Omega\times\Omega$.
Note that two elements in $\Omega$ are equivalent if and only if the corresponding infinite paths differ in finitely many places only.

For every $u_{\xi,\eta} \in G$, we define a new measure via $u_{\xi,\eta} \cdot \mu(E) := \mu(u_{\xi,\eta}^{-1} \cdot E)$ as usual.
The Radon--Nikodym derivatives are given by
\begin{equation}
\label{eq:RadonNikodym}
\frac{d (u_{\xi,\eta} \cdot \mu)}{d\mu} (\omega)
=
\begin{cases}
    \frac{w(\eta)}{w(\xi)} & \omega \in E_\xi
    \\
    \frac{w(\xi)}{w(\eta)} & \omega \in E_\eta
    \\
    1 & \text{otherwise,}
\end{cases}
\end{equation}
which can be directly verified using~\eqref{eq:CylinderMeasure} and~\eqref{eq:GAction}.
In particular, it follows that $\mu$ is quasi-invariant with respect to the action of $G$.

To compute Connes' invariant $S(M)$ it suffices to determine Krieger's ratio set $r(\Omega, G, \mu)$ \cite{MR0259624}.
We recall the definition here  (adapted slightly to the case at hand).
\begin{defn}[\cite{MR0259624}]
\label{def:KriegerRatio}
The \emph{ratio set} $r(\Omega, G, \mu)$ is the set of all $r \in [0,\infty)$ such that for all measurable $A \subset \Omega$ with $\mu(A) > 0$ and $\epsilon > 0$, there is a measurable $B \subset A$ with $\mu(B) > 0$ together with a $g \in G$ such that $g \cdot B \subset A$ satisfying that
\begin{equation}
    \label{eq:RatioSet}
    \left| \frac{d (g \cdot \mu)}{d\mu} (\omega) -r \right| < \epsilon
\end{equation}
for almost every $\omega \in B$.
\end{defn}

We claim that the ratio set is generated by quotients of the form $w(\xi)/w(\eta)$, where $\xi$ and $\eta$ are finite paths starting in $*$ and ending in the same vertex.
To prove this we will use the following lemma.

\begin{lem}
\label{lem:GRadonNikodym}
For all $\xi, \eta \in \Path_*^n(\cG)$ with $r(\xi) = r(\eta)$, there is an $\ell > 0$ such that for all cylinder sets $E_\zeta$, the following holds:
There are $A,B \subset E_\zeta$, together with a $g \in G$ with $g^2 = 1$ such that $g \cdot A = B$ and the following bounds are satisfied:
\[
\mu(A) \geq \ell \mu(E_\zeta) \qquad \text{and} \qquad \mu(B) \geq \ell \mu(E_\zeta).
\]
Moreover, $A,B$ and $g$ can be chosen such that we have
\[
\frac{d (g \cdot \mu)}{d \mu} (\omega) = 
\begin{cases}
    \frac{w(\xi)}{w(\eta)} & \text{for } \omega \in A
    \\
    \frac{w(\eta)}{w(\xi)} & \text{for } \omega \in B,
\end{cases}
\]
and $g$ acts trivially on $E_{\zeta'}$ if $E_\zeta \cap E_{\zeta'} = \emptyset$.
\end{lem}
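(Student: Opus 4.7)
The strategy is to construct $A$, $B$, and $g$ explicitly by extending $\zeta$ to a path that contains $\xi$ (respectively $\eta$) as its final segment, so that the generator $u_{\zeta\kappa\eta,\,\zeta\kappa\xi}$ of $G$ swaps the two cylinders.

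First, I would use the standing assumption on $\cG$ (any ordered pair of vertices is joined by an edge) to select, once and for all, for each vertex $v\in\cG^{(0)}$, a distinguished edge $\kappa_v\in\cG^{(1)}$ with $s(\kappa_v)=v$ and $r(\kappa_v)=\ast$. For a cylinder $E_\zeta$ with $\zeta\in\Path_\ast^{\,m}(\cG)$ and $v:=r(\zeta)$, the concatenation $\zeta\kappa_v$ is then a path from $\ast$ back to $\ast$, so both $\zeta\kappa_v\xi$ and $\zeta\kappa_v\eta$ lie in $\Path_\ast^{\,m+1+n}(\cG)$ and share the common range $r(\xi)=r(\eta)$.

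Then I would set
\[
A:=E_{\zeta\kappa_v\eta},\qquad B:=E_{\zeta\kappa_v\xi},\qquad g:=u_{\zeta\kappa_v\eta,\,\zeta\kappa_v\xi}\in G,
\]
so that $g^2=1$ and $g\cdot A=B$ by~\eqref{eq:GAction}, while $A$ and $B$ are disjoint subsets of $E_\zeta$ since $\xi\neq\eta$. The Radon--Nikodym derivative formula~\eqref{eq:RadonNikodym} gives
\[
\frac{d(g\cdot\mu)}{d\mu}(\omega)=\frac{w(\zeta\kappa_v\xi)}{w(\zeta\kappa_v\eta)}=\frac{w(\xi)}{w(\eta)}\qquad\text{for }\omega\in A,
\]
and symmetrically $w(\eta)/w(\xi)$ for $\omega\in B$, matching the statement exactly. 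The triviality of $g$ on $E_{\zeta'}$ whenever $E_\zeta\cap E_{\zeta'}=\emptyset$ is automatic from~\eqref{eq:GAction} because $u_{\alpha,\beta}$ only moves points starting with $\alpha$ or $\beta$, and both $\zeta\kappa_v\xi$ and $\zeta\kappa_v\eta$ begin with $\zeta$.

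The measure estimate is where the finiteness of $\cG$ pays off a second time. A direct computation from~\eqref{eq:CylinderMeasure}, telescoping the weight of the concatenated path, yields
\[
\frac{\mu(A)}{\mu(E_\zeta)}=\frac{w(\kappa_v)\,w(\eta)\,w(r(\eta))}{\delta^{\,n+1}\,w(v)},
\]
and analogously for $\mu(B)/\mu(E_\zeta)$ with $\xi$ in place of $\eta$. The crucial observation is that the right-hand side depends on $\zeta$ only through the finite datum $v=r(\zeta)\in\cG^{(0)}$; since $\cG^{(0)}$ is finite and all weights are strictly positive, taking the minimum of both expressions over $v\in\cG^{(0)}$ produces a single positive constant $\ell>0$ that serves uniformly for all cylinders $E_\zeta$. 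The only subtle point — really the only place one could slip up — is precisely this uniformity in $\zeta$; it is handled cleanly by fixing the auxiliary edges $\kappa_v$ in advance of the construction, rather than letting them depend on $\zeta$ or its length.
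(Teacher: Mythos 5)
Your construction is correct and is essentially the paper's own proof: the same auxiliary edges $v\to\ast$, the same cylinders $E_{\zeta\kappa_v\eta}$, $E_{\zeta\kappa_v\xi}$, the same involution $u_{\zeta\kappa_v\eta,\,\zeta\kappa_v\xi}$, and the same measure computation. The only (cosmetic) difference is how the uniform constant $\ell$ is packaged — you minimize the exact ratios over the finite vertex set, while the paper bounds them by $\frac{1}{\delta^{n+1}}\frac{m}{M}w(r(\eta))\min\{w(\xi),w(\eta)\}$ with $m$ the minimal weight of the chosen edges into $\ast$ and $M$ the maximal vertex weight.
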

\begin{proof}
For each vertex $v \in \cG^{(0)}$, choose an edge $e_{v,*} \in \cG^{(1)}$ going from $v$ to $*$.
Define
\[
\ell := \frac{1}{\delta^{n+1}} \frac{m}{M} w(r(\eta)) \cdot \min \{ w(\xi), w(\eta) \},
\]
where $m := \min_{v \in \cG^{(0)}} w(e_{v,*})$ is the minimum weight of the edges ending in the distinguished point  and $M := \max_{v \in \cG^{(0)}} w(v)$ the maximum vertex weight.
Define the two sets $A := E_{\zeta e_{r(\zeta),*} \eta}$ and $B:= E_{\zeta e_{r(\zeta),*} \xi}$.
Then clearly $A,B \subset E_\zeta$.
Using~\eqref{eq:CylinderMeasure} it follows that
\[
\frac{\mu(A)}{\mu(E_\zeta)} 
=
\frac{1}{\delta^{n+1}} \frac{w(e_{r(\zeta),*}) w(\eta) w(r(\eta))}{w(r(\zeta))} \geq \ell.
\]
Since $r(\xi) = r(\eta)$, we obtain $\mu(B) \geq \ell \mu(E_\zeta)$ similarly.

Now set $g := u_{\zeta e_{r(\zeta),*} \xi, \zeta e_{r(\zeta),*} \eta}$.
Then $g \cdot E_\zeta \subset E_\zeta$, $g \cdot A = B$, and $g^2 = 1$.
The claim on the Radon--Nikodym derivative follows from \eqref{eq:RadonNikodym}.
Finally, if $E_\zeta \cap E_{\zeta'} = \emptyset$, 
our choice of $g$ acts trivially on $E_{\zeta'}$
by \eqref{eq:GAction}.
\end{proof}

\begin{thm}
\label{thm:RatioInKriegerSet}
     For all $n \in \mathbb{N}$ and $\xi, \eta \in \Path_{v}^n(\cG)$ with $r(\xi) = r(\eta)$, we have $\frac{w(\xi)}{w(\eta)} \in r(\Omega, G, \mu)$.
\end{thm}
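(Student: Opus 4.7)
The plan is to verify Krieger's condition (Definition~\ref{def:KriegerRatio}) for $r := w(\xi)/w(\eta)$ by combining Lemma~\ref{lem:GRadonNikodym} with a cylinder-density argument. Given measurable $A\subset\Omega$ with $\mu(A)>0$ and $\epsilon > 0$, the goal is to produce $B\subset A$ of positive measure and $g\in G$ with $g\cdot B\subset A$ on which the Radon--Nikodym derivative of $g\cdot\mu$ with respect to $\mu$ is close to $r$. In fact, I aim to produce such a $B$ on which the derivative equals $r$ exactly, so the $\epsilon$-tolerance will be automatic.

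The key preparatory step will be a density statement: for any $\delta' > 0$, there exists a finite path $\zeta$ starting at $*$ with
\[
\mu(A\cap E_\zeta) > (1-\delta')\mu(E_\zeta).
\]
I would establish this either by applying the reverse martingale theorem to the conditional expectations of $\chi_A$ onto the finite-dimensional subalgebras generated by cylinder indicators of length $n$ (these converge to $\chi_A$ almost surely, forcing some cylinder to have density close to $1$), or equivalently by approximating $A$ from outside by countable disjoint unions of cylinder sets using regularity of $\mu$ together with a pigeonhole argument on densities.

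Given such a $\zeta$, I would apply Lemma~\ref{lem:GRadonNikodym} to obtain subsets $A', B'\subset E_\zeta$ and an involution $g\in G$ with $g\cdot A' = B'$, with $\mu(A'), \mu(B') \geq \ell\mu(E_\zeta)$ for the $\zeta$-independent constant $\ell>0$ of the lemma, and Radon--Nikodym derivative equal to $r$ on $A'$. Set
\[
B := A' \cap A \cap g^{-1}(A \cap B'),
\]
so that $B \subset A$, $g\cdot B \subset B' \cap A \subset A$, and the derivative equals $r$ throughout $B$. Since $g$ rescales $\mu$ by the factor $r$ on $A'$, one has $\mu(A'\setminus g^{-1}(A)) = r^{-1}\mu(B'\setminus A) \leq (\delta'/r)\mu(E_\zeta)$, while $\mu(A'\setminus A) \leq \delta'\mu(E_\zeta)$, so
\[
\mu(B) \geq \mu(A') - \mu(A'\setminus A) - \mu(A'\setminus g^{-1}(A)) \geq \left(\ell - \delta' - \tfrac{\delta'}{r}\right)\mu(E_\zeta).
\]
Choosing $\delta' < \ell r/(r+1)$ at the outset makes this strictly positive and completes the verification.

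The main obstacle is the density statement; the rest is just bookkeeping enabled by the precision of Lemma~\ref{lem:GRadonNikodym}. I expect the cleanest route is the reverse martingale argument, since the filtration by cylinders is naturally adapted both to the AF structure of $B$ and to the definition of $\mu$ via the formula~\eqref{eq:CylinderMeasure}.
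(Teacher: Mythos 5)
Your proposal is correct and takes a slightly different route from the paper's proof: rather than approximating the target set by a finite disjoint union of cylinders and composing a finite product $g = g_1\cdots g_m$ of involutions (one per cylinder), you isolate a single cylinder $E_\zeta$ in which $A$ has density exceeding $1-\delta'$ and apply Lemma~\ref{lem:GRadonNikodym} once. The density claim is valid, and you do not even need the (forward, not ``reverse'') martingale convergence theorem: the same outer-regularity decomposition $U=\bigsqcup_k E_{\zeta_k}\supset A$ that the paper already invokes, with $\mu(U\setminus A)<\tfrac{\delta'}{1-\delta'}\mu(A)$, forces by a pigeonhole/averaging argument that some $E_{\zeta_k}$ has $\mu(A\cap E_{\zeta_k})>(1-\delta')\mu(E_{\zeta_k})$. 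Your bookkeeping on $B=A'\cap A\cap g^{-1}(A\cap B')$ is then correct: $\mu(A'\setminus g^{-1}(A))=r^{-1}\mu(B'\setminus A)$ holds because $g$ is an involution whose restriction to $B'$ scales $\mu$ by $1/r$, and the choice $\delta'<\ell r/(r+1)$, available since $\ell$ is $\zeta$-independent, makes $\mu(B)>0$. The single-cylinder route buys you a cleaner estimate (no sums over a finite family of cylinders, no need to verify that the $g_k$ commute via disjoint supports); its cost is the extra density step, which is cheap.

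One small gap: your argument silently assumes $v=*$. Lemma~\ref{lem:GRadonNibodym} as stated applies only to $\xi,\eta\in\Path_*^n(\cG)$, and its construction genuinely uses this, since it prepends an edge $e_{r(\zeta),*}$ back to the base point before attaching $\eta$ and $\xi$. For general $v$ you must, as the paper does at the end of its proof, first concatenate with a finite path $\zeta_0$ from $*$ to $v$ and note that $w(\zeta_0\xi)/w(\zeta_0\eta)=w(\xi)/w(\eta)$, which reduces the claim to the base-point case. This is a one-line fix but needs to be said; as written, your appeal to the lemma is not directly licensed when $v\neq *$.
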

\begin{proof}
We first consider the case where $v = *$.
Let $E \subset \Omega$ be a Borel set with $\mu(E) > 0$.
Since $\mu$ is regular, for all $\epsilon > 0$, there exists $U \subset \Omega$ open such that $E \subset U$ and $\mu(U \setminus E) < \epsilon \mu(E)$.
The cylinder sets generate the topology of $\Omega$, and since two cylinder sets are either disjoint or one is contained in the other, it follows that there is a sequence $(\xi_k)\subset \bigcup_n\Path_{*}^n(\cG)$ such that
\[
    U = \bigcup_{k=1}^\infty  E_{\xi_k}, \qquad\qquad E_{\xi_i} \cap E_{\xi_j} = \emptyset \text{ whenever } i \neq j,
\]
where we allow $E_{\xi_k}$ to be the empty set.
Choose $m \in \mathbb{N}$ such that $\mu(U) - \sum_{k=1}^m \mu(E_{\xi_k}) < \epsilon \mu(E)$ and set $E' := \bigcup_{k=1}^m E_{\xi_k}$.
Then
\[
\mu(E \Delta E')
= 
\mu(E \setminus E') + \mu(E' \setminus E)
\leq
\mu(U\setminus E') + \mu(U\setminus E) 
<
2 \epsilon \mu(E).
\]
For each $k=1,\dots,m$, choose sets $A_k, B_k \subset E_{\xi_k}$ and a group element $g_k$ as in the statement of Lemma~\ref{lem:GRadonNikodym}.
Set $A := \bigcup_k A_k$ and $B := \bigcup_k B_k$.
Since the sets $E_{\xi_k}$ are disjoint it follows that
\[
    \mu(A) \geq \ell \mu(E') \qquad \text{and} \qquad \mu(B) \geq \ell \mu(E'),
\]
where $\ell$ is the constant from Lemma~\ref{lem:GRadonNikodym} (note that it only depends on the choice of $\xi, \eta$).
Let $g= g_1 g_2 \cdots g_m$
(since the $g_k$ act non-trivially only on mutually disjoint sets,
they all commute). 
Again by Lemma~\ref{lem:GRadonNikodym},
\[
\frac{d (g \cdot \mu)}{d \mu} (\omega) = 
\begin{cases}
    r  & \text{for } \omega \in A
    \\
    r^{-1} & \text{for } \omega \in B,
\end{cases}
\]
where we set $r:= w(\xi)/w(\eta)$.

Using the results so far, we obtain the following estimates:
\begin{align*}
&\mu ((E \cap A) \Delta A) 
\leq
\mu(E \setminus E')
<
2 \epsilon \mu(E)
\\
&\mu( g (E \cap B) \Delta A)
=
\mu(g ((E \cap B) \Delta B))
=
\frac{1}{r} \mu((E \cap B) \Delta B)
<
\frac{2 \epsilon}{r} \mu(E)
\\
& \mu(A) 
\geq 
\ell \mu(E') 
> 
\ell(\mu(U) - \epsilon \mu(E))
\geq
\ell(1-\epsilon)\mu(E). 
\end{align*}
In the second line, we used that $gA = B$ and $g^2 = 1$ in the first equality, and that the Radon--Nikodym derivative is equal to $r^{-1}$ on $B$ in the second equality.
The last inequality in the last line follows because $E \subset U$.

Define the following sets:
\[
A_1 := E \cap A \qquad \text{ and } \qquad A_2 := g(E \cap B) \cap A. 
\]
Then clearly $\mu(A) \geq \mu(A_1) + \mu(A_2) - \mu(A_1 \cap A_2)$.
Without loss of generality (since $\ell$ is independent of $\epsilon$) we may assume that $\epsilon < \frac{\ell}{ (\ell + 2 + 2 r^{-1})}$
We claim that in this case, we have $\mu( A_1 \cap A_2) > 0$.
Indeed, we calculate using the estimates above
\begin{align*}
\mu(A_1 \cap A_2)
&\geq
\mu(A_1) + \mu(A_2) - \mu(A)
\\
&\geq
\mu(A) - 2 \epsilon \mu(E) + \mu(A) - \frac{2 \epsilon}{r} \mu(E) - \mu(A)
\\
&>
\left( \ell(1-\epsilon) - 2 \epsilon - \frac{2 \epsilon}{r} \right) \mu(E)
\\
&=
\left( \ell - \left(\ell + 2 + \frac{2}{r}\right) \epsilon \right) \mu(E)
\\
&> 0.
\end{align*}
Finally, define
\[
    A' = (E \cap A) \cap g(E \cap B) \subset E.
\]
Then by construction we have $g \cdot A' \subset E$, 
$\mu(A') \geq \mu(A_1 \cap A_2) > 0$,
and moreover $\frac{d (g\cdot \mu)}{d \mu}(\omega) = r$ for all $\omega \in A'$.
Hence $r \in r(\Omega,G,\mu)$ as claimed.

To prove the general case, pick an arbitrary finite path $\zeta$ with $s(\zeta) = *$ and $r(\zeta) = v$.
Then $\zeta\xi, \zeta\eta \in \Path_*^{n+|\zeta|}(\cG)$, so $w(\zeta \xi)/w(\zeta \eta) \in r(\Omega, G, \mu)$.
But since $w(\zeta\xi) = w(\zeta)w(\xi)$ and similarly for $w(\zeta\eta)$, the claim follows.
\end{proof}

We note that the Krieger ratio set can more generally be described in terms of the ``asymptotic range'' of a Radon--Nikodym cocycle (see~\cite[Defn. 8.2]{MR0578656}).
But the asymptotic range is equal to the Connes invariant $S$~\cite{MR0341115} of the Krieger factor by~\cite[Prop. 2.11]{MR0578730}.
Thus we arrive at the following corollary.

\begin{cor}
\label{cor:SInvariant}
Let $G_R$ be the subgroup of $\bbR_{>0}$ generated by the set
\[
\set{ \frac{w(\xi)}{w(\eta)}}{ n \in \bbN,\,\, \xi,\eta \in \Path_*^n(\cG),\, r(\xi) = r(\eta) }.
\]
Then the Connes invariant $S(M)$ is given by the closure of $G_R$.
\end{cor}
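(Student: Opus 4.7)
The plan is to establish the two inclusions $\overline{G_R}\subseteq r(\Omega,G,\mu)$ and $r(\Omega,G,\mu)\cap\bbR_{>0}\subseteq\overline{G_R}$, and then invoke the identification $S(M)\setminus\{0\}=r(\Omega,G,\mu)\setminus\{0\}$ from~\cite{MR0578730,MR1943007}, which was already noted at the start of the appendix. The case of the point $0$ is then automatic: when $G_R=\{1\}$ the corollary yields $S(M)=\{1\}$ (so $M$ is not type $\rm III$), while if $G_R\neq\{1\}$ then $G_R$ is a nontrivial subgroup of $\bbR_{>0}$, hence has elements accumulating at $0$, consistently with $M$ being type $\rm III$ and $0\in S(M)$.

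The first inclusion will be a soft consequence of Theorem~\ref{thm:RatioInKriegerSet} together with Connes' theorem. By that theorem each generator $w(\xi)/w(\eta)$ of $G_R$ lies in $r(\Omega,G,\mu)$. Since $M$ is a factor, $S(M)\cap\bbR_{>0}=r(\Omega,G,\mu)\setminus\{0\}$ is a closed multiplicative subgroup of $\bbR_{>0}$ \cite{MR0341115}, and so contains all finite products and inverses of generators, i.e., all of $G_R$. Taking closures then gives $\overline{G_R}\subseteq r(\Omega,G,\mu)$.

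For the reverse inclusion I would track the values of the Radon--Nikodym cocycle $c(g,\omega):=(d(g\cdot\mu)/d\mu)(\omega)$ directly. From~\eqref{eq:RadonNikodym}, for a generator $u_{\xi,\eta}$ of $G$ the value $c(u_{\xi,\eta},\omega)$ is one of $w(\eta)/w(\xi)$, $w(\xi)/w(\eta)$, or $1$, all of which lie in $G_R$. A straightforward induction on word length using the cocycle identity then shows $c(g,\omega)\in G_R$ for every $g\in G$ and every $\omega\in\Omega$. Given $r\in r(\Omega,G,\mu)\cap\bbR_{>0}$ and $\varepsilon>0$, Definition~\ref{def:KriegerRatio} supplies a set $B$ of positive measure and a $g\in G$ with $|c(g,\omega)-r|<\varepsilon$ for a.e.\ $\omega\in B$; picking any such $\omega$ exhibits an element of $G_R$ within $\varepsilon$ of $r$, so $r\in\overline{G_R}$.

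The only real technical obstacle is purely bookkeeping: one must fix a consistent sign convention for the cocycle and check that compositions of the $u_{\xi,\eta}$ send cylinder sets into cylinder sets in a way that makes $c(g,\omega)$ transparently a product of generator ratios. Once this is verified at the level of the generators (where it is immediate from~\eqref{eq:RadonNikodym} and the multiplicativity $w(\xi\eta)=w(\xi)w(\eta)$ of the weight on concatenated paths), the induction is automatic and no further analysis of the group structure of $G$ is required.
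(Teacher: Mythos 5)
Your proposal is correct and follows essentially the same route as the paper: the forward inclusion via Theorem~\ref{thm:RatioInKriegerSet} together with the fact that $S(M)$ is closed and $S(M)\cap\bbR_{>0}$ is a multiplicative group, and the reverse inclusion by noting that every $g\in G$ is a word in the $u_{\xi,\eta}$, so by \eqref{eq:RadonNikodym} and the chain rule all values of the Radon--Nikodym cocycle are products of weight ratios, hence lie in $G_R$. Your extra step of invoking Definition~\ref{def:KriegerRatio} to pass from cocycle values to the ratio set just makes explicit what the paper leaves implicit.
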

\begin{proof}
From the proof of Theorem~\ref{thm:RatioInKriegerSet} it is enough to consider paths starting in $*$.
By the same theorem,
each ratio $\frac{w(\xi)}{w(\eta)}$ is in $S(M)$.
Since $S(M)$ is a closed subset of $[0,\infty)$, and $S(M) \cap \bbR_{>0}$ is a multiplicative subgroup of $\bbR_{>0}$, the closure of $G_R$ is contained in $S(M)$.

Now each $g \in G$ is a finite product of operators of the form $u_{\xi_k,\eta_k}$ from \eqref{eq:PathGroup}.
One can then see (for example, using \eqref{eq:RadonNikodym} and the chain rule for Radon--Nikodym derivatives) that all values of $\frac{d (g \cdot \mu)}{d\mu}$ are products of ratios of the weights of a pair of paths as above.
Hence $S(M)$ cannot be any larger.
\end{proof}

\subsection{Application: type of the boundary algebra in the Levin-Wen model}

We associate a graph $\cG$ to a unitary fusion category $\cC$ as follows.
The set of vertices is $\cG^{(0)} = \Irr(\cC)$, with distinguished vertex $* =1_\cC$ the tensor unit.
For $c_1,c_2\in \cG^{(0)}$,
the set of edges $\cG^{(1)}$ is a disjoint union of orthonormal bases for $\cC(c_2 \to  a \otimes c_1 \otimes b)$ over $a,b\in\Irr(\cC)$, with the isometry inner product determined by 
$\langle f | g\rangle \id_{c_2} = f^\dag\circ g$.
Since $\cC$ is rigid, there is an edge between every pair of vertices.
Thus we can define algebras $B_n$ again as in~\S\ref{sec:PathAlgebra}.

To illustrate this, first consider the case $n=1$.
Recall that $X = \bigoplus_{c \in \Irr(\cC)} c$.
We can identify an edge $\xi$ from $*$ to $c$ with a morphism $\xi \in \cC(c \to a \otimes b)$, which we can identify with $\xi \in \cC(c \to X \otimes X)$.
If $\eta$ is another edge from $*$ to $c$, we have $\xi \circ \eta^\dag \in \End(X^{\otimes 2})$, which can be identified with $\pbk{\xi}{\eta}$.
By considering all $c \in \Irr(\cC)$, we see that $B_1 = \End_\cC(X^{\otimes 2})$.

If $\xi_1 \in \cC(c_2 \to a_1 \otimes c_1 \otimes b_1)$ and $\xi_2 \in \cC(c_3 \to a_2 \otimes c_2 \otimes b_2)$ are two edges, the path $\xi := \xi_1 \xi_2$ is defined by $(\id_{a_2} \otimes \xi_1 \otimes \id_{b_2}) \circ \xi_2$.
As before, we can again see $\xi$ as an element of $\cC(c_3 \to X \otimes X \otimes c_1 \otimes X \otimes X)$.
Generalising this to paths of length $n$, it follows that each algebra $B_n\cong\End_\cC(X^{\otimes 2n})$.
Moreover, the inclusion $\iota_n : B_n \hookrightarrow B_{n+1}$ as defined in~\eqref{eq:BnInclusion} is given by $\xi \mapsto \id_X \otimes \xi \otimes \id_X$.

Next we define a weight $w$ on $\cG$.
For vertices, set $w(c) := d_c$, where $d_c$ is the quantum dimension of $c \in \Irr(\cC)$.
Since an edge from $c_1$ to $c_2$ is an element $\eta \in \cC(c_2 \to a \otimes c_1 \otimes b)$ for some irreducible $a$ and $b$, we can set $w(\eta) := d_a d_b$.
We see \eqref{eq:WeightCondition} is satisfied with $\delta = \left( \sum_{c \in \Irr(\cC)} d_c^2 \right)^2$: 
for fixed $c_1 \in \Irr(\cC)$,
\begin{align*}
\sum_{a,b,c_2\in\Irr(\cC)}\sum_{\eta\in \operatorname{ONB}(c_2 \to a \otimes c_1 \otimes b)} w(\eta)w(r(\eta))
&=
\sum_{a,b,c_2,e\in\Irr(\cC)} N_{ae}^{c_2}N_{c_1b}^{e}d_ad_bd_{c_2}
\\&=
\sum_{a,b,e\in\Irr(\cC)} N_{c_1b}^{e}d_a^2d_bd_{e}
=
\left(\sum_{a,b\in\Irr(\cC)} d_a^2d_b^2\right)d_{c_1}.
\end{align*}

Now suppose that $c\prec a \otimes b$ for $a,b,c \in \Irr(\cC)$.
That is, $c$ appears in the direct sum composition of $a \otimes b$.
We call such a triple $(a,b,c)$ \emph{admissable}.

\begin{lem}
\label{lem:SGenerator}
The following set
\[
    \set{ \frac{d_a d_b}{d_c}}{\,(a,b,c) \text{ is admissable} } \subset \bbR_{>0}
\]
generates a dense subgroup of $S(M)$.
\end{lem}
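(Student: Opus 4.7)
The plan is to invoke Corollary~\ref{cor:SInvariant}, which identifies $S(M)$ with the closure of the multiplicative subgroup $G_R \subseteq \bbR_{>0}$ generated by weight ratios $w(\xi)/w(\eta)$ for pairs of paths $\xi,\eta \in \Path_*^n(\cG)$ with $r(\xi) = r(\eta)$. Writing $H$ for the subgroup generated by $\{d_a d_b/d_c : (a,b,c) \text{ admissable}\}$, it suffices to prove that $H = G_R$, so that taking closures gives $\overline{H} = S(M)$.

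For the inclusion $H \subseteq G_R$, I would fix an admissable triple $(a,b,c)$ and exhibit two length-one paths from $*=1_\cC$ to $c$ whose weight ratio is $d_a d_b/d_c$. Any element of an orthonormal basis of $\cC(c \to a \otimes 1_\cC \otimes b) = \cC(c \to a \otimes b)$ defines an edge from $*$ to $c$ of weight $d_a d_b$, and this space is nonzero precisely because $(a,b,c)$ is admissable. On the other hand, the identity morphism $\id_c \in \cC(c \to c \otimes 1_\cC)$ is a valid edge from $*$ to $c$ of weight $d_c \cdot d_{1_\cC} = d_c$. Their ratio lies in $G_R$ by definition.

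For the reverse inclusion $G_R \subseteq H$, I would decompose every edge weight into admissable ratios up to a telescoping factor. Given $\xi = \xi_1 \cdots \xi_n \in \Path_*^n(\cG)$ with intermediate vertices $c_i := r(\xi_1 \cdots \xi_i)$ (so $c_0 = *$, $c_n = v$) and $\xi_i$ in the basis of $\cC(c_i \to a_i \otimes c_{i-1} \otimes b_i)$, the non-vanishing of this hom-space guarantees a simple $e_i \in \Irr(\cC)$ with $e_i \prec a_i \otimes c_{i-1}$ and $c_i \prec e_i \otimes b_i$, so both $(a_i, c_{i-1}, e_i)$ and $(e_i, b_i, c_i)$ are admissable. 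The algebraic identity
\[
d_{a_i} d_{b_i} \;=\; \frac{d_{a_i} d_{c_{i-1}}}{d_{e_i}} \cdot \frac{d_{e_i} d_{b_i}}{d_{c_i}} \cdot \frac{d_{c_i}}{d_{c_{i-1}}}
\]
expresses each edge weight as a product of two admissable ratios times a telescoping correction. Taking the product over $i = 1, \dots, n$, the telescoping factors collapse to $d_{c_n}/d_{c_0} = d_v$, giving $w(\xi) = d_v \cdot h_\xi$ with $h_\xi \in H$. Applying the same decomposition to $\eta$ produces $w(\eta) = d_v \cdot h_\eta$ with $h_\eta \in H$, since $r(\eta) = v$ as well. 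The $d_v$ factors then cancel in the quotient, yielding $w(\xi)/w(\eta) = h_\xi h_\eta^{-1} \in H$.

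Combining the two inclusions gives $H = G_R$, hence $\overline{H} = \overline{G_R} = S(M)$, which is the density claim. The argument is essentially bookkeeping; the only step demanding care is the existence of the intermediate simple $e_i$, which follows immediately from associativity of the fusion rules (factoring $a_i \otimes c_{i-1} \otimes b_i$ as $(a_i \otimes c_{i-1}) \otimes b_i$), but should be stated explicitly since it is the bridge between the multi-strand edge weights of $\cG$ and the three-term admissable ratios.
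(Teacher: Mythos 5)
Your proposal is correct and follows essentially the same route as the paper: realize each admissable ratio $d_a d_b/d_c$ as a quotient of weights of two length-one paths from $1_\cC$ to $c$, and conversely split each edge weight via an intermediate simple coming from factoring the triple tensor product, so that the intermediate vertex dimensions telescope and $w(\xi)/w(\eta)$ lands in the admissable-ratio subgroup; Corollary~\ref{cor:SInvariant} then finishes the argument. The only (immaterial) difference is that you factor $(a_i\otimes c_{i-1})\otimes b_i$ while the paper factors $a_i\otimes(c_i\otimes b_i)$.
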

\begin{proof}
Let $(a,b,c)$ be an admissable triple.
Then there exist non-zero
\[
    \xi \in \cC(c \to a \otimes 1 \otimes b) 
    \qquad \text{ and }\qquad 
    \eta \in \cC(c \to c \otimes 1 \otimes 1).
\]
We can choose both of them to be edges from $1$ to $c$, and identify $\xi$ and $\eta$ with paths of length one.
Note that $w(\xi) = d_a d_b$ and $w(\eta) = d_c$.
Hence by Corollary~\ref{cor:SInvariant} it follows that $\frac{d_a d_b}{d_c} \in S(M)$.

Now let $\xi = \xi_1 \cdots \xi_n$ be a path from $*$ to $c$, with $\xi_i \in \cC(c_{i+1} \to a_i \otimes c_i \otimes b_i)$, and thus $c_1 = *$ and $c_{n+1} = c$.
Using the fusion rules and semisimplicity, it follows that such a (non-zero) $\xi_i$ only exists if there is some $k$ such that $(c_i, b_i, k)$ and $(a_i,k,c_{i+1})$ are both admissable.
Using the first part, this means that
\[
\left(\frac{d_{c_i} d_{b_i}}{d_k} \right) \cdot \left( \frac{d_{a_i} d_k}{d_{c_{i+1}}}\right)
=
\frac{d_{a_i} d_{b_i} d_{c_i}}{d_{c_{i+1}}}
\]
is in the generated group.
Taking the product over all $i = 1, \dots, n$ gives $w(\xi)/d_c$.
If $\eta$ is another path from $*$ to $c$ of the same length, the same argument gives that $w(\eta)/d_c$ is in the generated group.
Thus $w(\xi)/w(\eta)$ is as well.
The claim then follows from Corollary~\ref{cor:SInvariant}.
\end{proof}

From the characterization of the net of algebras $(B_n)$, it follows that the limit algebra $B$ is the same as that of the fusion categorical net $I \mapsto \fF(I)$ for the Levin--Wen model studied in \S\ref{sec:LWBoundaryState}.
Using the proof of Lemma~\ref{lem:SGenerator} and the observations made earlier, we see that the state $\phi$ defined on $B$ via \eqref{eq:PathAlgebraState} is exactly the canonical boundary state $\psi_{\fF}$, which is described explicitly in Proposition~\ref{prop:CanonicalState}.
(Alternatively, since the dynamics~\eqref{eq:IzumiDynamics} on $B$ is inverse to the dynamics~\eqref{eq:ModularAutomorphism} on $\fF$, the $\beta=-1$ KMS state $\phi$ on $B$ agrees with the $\beta=1$ KMS state $\psi_\fF$ on $\fF$.)
Thus we have obtained the following lemma.

\begin{lem}
The von Neumann algebra $M$ is isomorphic to the boundary algebra $\fF''$ for the Levin--Wen model for $\cC$.
\end{lem}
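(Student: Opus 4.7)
The plan is to exhibit an isomorphism $B \cong \fF$ of $\rm C^*$-algebras intertwining the states $\phi$ and $\psi_\fF$; by uniqueness of the GNS representation of a faithful state, this will promote to the desired isomorphism $M = B'' \cong \fF''$.

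First I would make the identification $B_n \cong \fF(I_n)$ explicit, where $I_n$ is a choice of interval of length $2n$. As observed in the paragraph preceding the lemma, a path $\xi = \xi_1\cdots\xi_n \in \Path_*^n(\cG)$ ending at $c$, with $\xi_i \in \cC(c_{i+1} \to a_i \otimes c_i \otimes b_i)$, gives rise to an isometry $c \to a_n \otimes \cdots \otimes a_1 \otimes b_1 \otimes \cdots \otimes b_n \hookrightarrow X^{\otimes 2n}$, and the rank-one operator $\pbk{\xi}{\eta}$ corresponds to $\xi \circ \eta^\dag \in \End_\cC(X^{\otimes 2n}) = \fF(I_n)$. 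The next step is to verify that this intertwines the two inclusion systems: the structural inclusion~\eqref{eq:BnInclusion} for $B$ sums over all edges appended to $\xi_1,\xi_2$ at the common endpoint $c$, and a direct computation using completeness of the chosen ONBs in each $\cC(c' \to a \otimes c \otimes b)$ gives $\sum_\eta \eta \eta^\dag = \id_{X \otimes c \otimes X}$, which yields $\sum_\eta \pbk{\xi_1\eta}{\xi_2\eta} = \id_X \otimes \pbk{\xi_1}{\xi_2} \otimes \id_X$. This agrees with the inclusion of $\fF(I_n)$ into $\fF(I_{n+1})$ by tensoring with $\id_X$ on each side. Passing to the inductive limit yields a unital $*$-isomorphism $B \cong \fF$.

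Under this identification I would then show $\phi$ corresponds to $\psi_\fF$. This is cleanest by direct evaluation: for $\xi \in \Path_*^n(\cG)$ with target $c$, the image $\pbk{\xi}{\xi}$ is supported on a specific simple summand of $X^{\otimes 2n}$, and Proposition~\ref{prop:CanonicalState} evaluates
\[
\psi_\fF(\pbk{\xi}{\xi}) = \frac{1}{D_\cC^{2n}} (d_{a_1}\cdots d_{a_n})(d_{b_1}\cdots d_{b_n}) \tr_\cC(\xi^\dag\xi) = \frac{1}{\delta^n} w(\xi)\, d_c,
\]
since $\xi$ is an isometry so $\tr_\cC(\xi^\dag \xi) = d_c$ and $\delta = D_\cC^2$. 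For $\xi \neq \eta$, orthogonality of the ONB yields $\tr_\cC(\xi\eta^\dag) = 0$, matching $\phi$ on off-diagonal matrix units. Alternatively (and this may be the cleaner presentation), one can observe that the dynamics~\eqref{eq:IzumiDynamics} on $B$ corresponds under the isomorphism to $t \mapsto \sigma_{-t}$ of the modular automorphism group~\eqref{eq:ModularAutomorphism} on $\fF$: $w(\xi)$ records precisely the quantum dimensions of the labels appearing in the target of $\xi$ as a morphism into $X^{\otimes 2n}$, so the ratio $w(\xi)/w(\eta)$ is inverse to the ratio in~\eqref{eq:ModularAutomorphism}. Since $\phi$ is the unique $\beta = -1$ KMS state for the $B$-dynamics and $\psi_\fF$ is the unique $\beta = 1$ KMS state for the $\fF$-dynamics (both by the connected stationary Bratteli argument via~\cite[Prop.~4.1]{MR1772604}), the two must coincide.

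The main obstacle is a bookkeeping issue rather than a conceptual one: keeping the two-per-edge labeling of $X$-strands consistent when translating between the ``loop algebra'' convention $\pbk{\xi}{\eta}$ and the graphical calculus used to define $\psi_\fF$, and correctly aligning the orientation of the dynamics. Once the dictionary between paths and factorizations of $X^{\otimes 2n}$ is pinned down, the matching of states and the subsequent conclusion $B'' \cong \fF''$ via spatial implementation of the GNS representation are routine.
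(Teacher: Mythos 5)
Your proposal is correct and follows essentially the same route as the paper: the identification $B_n \cong \End_\cC(X^{\otimes 2n})$ with inclusions $\xi \mapsto \id_X \otimes \xi \otimes \id_X$ is exactly the paper's setup preceding the lemma, and the paper likewise matches $\phi$ with $\psi_\fF$ either by the explicit evaluation via Proposition~\ref{prop:CanonicalState} or by noting that the dynamics~\eqref{eq:IzumiDynamics} is inverse to~\eqref{eq:ModularAutomorphism} and invoking uniqueness of the KMS state. Your extra bookkeeping (orthonormality of path composites, the partition-of-unity check for the inclusions, and the computation $\psi_\fF(\xi\xi^\dag)=\delta^{-n}w(\xi)d_c$) is accurate and simply spells out what the paper leaves implicit.
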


We are now in a position to determine the type of the boundary algebra.
Since $S(M)$ is closed in $[0,\infty)$ and $S(M) \cap \bbR_{>0}$ is a subgroup of $\bbR_{>0}$ it follows that $S(M)$ must be one of the following four cases~\cite{MR0341115}:
\begin{itemize}
\item $S(M) = \{ 1 \}$, in which case $M$ is semi-finite.
From Lemma~\ref{lem:SGenerator} this is true if and only if all $d_a = 1$, i.e., $\cC$ is pointed.
In this case, it is also easy to see directly that the state $\phi$ defined in \S\ref{sec:GraphKMS} is a trace.
\item $S(M) = \{0,1\}$, in which case $M$ is type $\rm{III}_0$.
Note that $g \cdot \mu$ and $\mu$ are mutually absolutely continuous.
Hence the Radon--Nikodym derivative $\frac{d(g\cdot \mu)}{d\mu}$ cannot be zero on a non-neglible set.
By Corollary~\ref{cor:SInvariant}, if $0 \in S(M)$, it must be the limit of non-zero points in $S(M)$.
This is in contradiction with $1$ being the only other point in $S(M)$, so the type $\rm{III}_0$ case cannot occur.
\item
$S(M) = \{0\} \cup \set{\lambda^n}{n \in \bbZ}$ for some $\lambda \in (0,1)$, in which case $M$ is of type $\rm{III}_\lambda$.
Note that Lemma~\ref{lem:SGenerator} implies that for each admissable triple $(a,b,c)$, there is an integer $Z_{ab}^c$ such that
\[
\frac{d_a d_b}{d_c} = \lambda^{Z_{ab}^c},
\]
and the set $\{ Z_{ab}^c \}$ generates $\bbZ$ as a group.
\item $S(M) = [0,\infty)$, in which case $M$ is of type $\rm{III}_1$.
This is equivalent to the set
\begin{equation}
\label{eq:LogGroup}
    \set{ \frac{d_a d_b}{d_c}}{\,(a,b,c) \text{ admissable}}
\end{equation}
generating a dense subgroup of $\bbR_{>0}$.
\end{itemize}
This result refines Theorem~\ref{thm:InnerIffPointed}.

\bibliographystyle{alpha}
\bibliography{bibliography}

\newcommand{\etalchar}[1]{$^{#1}$}
\begin{thebibliography}{LVDCSV22}

\bibitem[AE20]{MR4143013}
Andreas~N\ae{}s Aaserud and David~Emrys Evans.
\newblock K-theory of {AF}-algebras from braided {C*}-tensor categories.
\newblock {\em Rev. Math. Phys.}, 32(08):2030005, 2020.
\newblock \mathscinet{MR4143013} \doi{10.1142/S0129055X20300058}
  \arxiv{1803.03227}.

\bibitem[AFH07]{MR2345476}
R.~Alicki, M.~Fannes, and M.~Horodecki.
\newblock A statistical mechanics view on {K}itaev's proposal for quantum
  memories.
\newblock {\em J. Phys. A}, 40(24):6451--6467, 2007.
\newblock \mathscinet{MR2345476} \doi{10.1088/1751-8113/40/24/012}
  \arXiv{quant-ph/0702102}.

\bibitem[AHLM23]{2304.01277}
David Aasen, Jeongwan Haah, Zhi Li, and Roger S.~K. Mong.
\newblock Measurement quantum cellular automata and anomalies in {F}loquet
  codes.
\newblock \arxiv{2304.01277}, 2023.

\bibitem[BA09]{PhysRevB.80.155136}
Oliver Buerschaper and Miguel Aguado.
\newblock Mapping {K}itaev's quantum double lattice models to {L}evin and
  {W}en's string-net models.
\newblock {\em Phys. Rev. B}, 80:155136, Oct 2009.
\newblock \arxiv{0907.2670} \doi{10.1103/PhysRevB.80.155136}.

\bibitem[BG17]{MR3719546}
Matthew Buican and Andrey Gromov.
\newblock Anyonic chains, topological defects, and conformal field theory.
\newblock {\em Comm. Math. Phys.}, 356(3):1017--1056, 2017.
\newblock \mathscinet{MR3719546} \doi{10.1007/s00220-017-2995-6}
  \arxiv{1701.02800}.

\bibitem[BH11]{MR2842961}
Sergey Bravyi and Matthew~B. Hastings.
\newblock A short proof of stability of topological order under local
  perturbations.
\newblock {\em Comm. Math. Phys.}, 307(3):609--627, 2011.
\newblock \mathscinet{MR2842961} \doi{10.1007/s00220-011-1346-2}
  \arXiv{1001.4363}.

\bibitem[BHM10]{MR2742836}
Sergey Bravyi, Matthew~B. Hastings, and Spyridon Michalakis.
\newblock Topological quantum order: stability under local perturbations.
\newblock {\em J. Math. Phys.}, 51(9):093512, 33, 2010.
\newblock \mathscinet{MR2742836} \doi{10.1063/1.3490195} \arxiv{1001.0344}.

\bibitem[BR97]{MR1441540}
Ola Bratteli and Derek~W. Robinson.
\newblock {\em Operator algebras and quantum statistical mechanics. 2}.
\newblock Texts and Monographs in Physics. Springer-Verlag, Berlin, second
  edition, 1997.
\newblock Equilibrium states. Models in quantum statistical mechanics.
  \mathscinet{MR1441540} \doi{10.1007/978-3-662-03444-6}.

\bibitem[CDH{\etalchar{+}}20]{Cui2020kitaevsquantum}
Shawn~X. Cui, Dawei Ding, Xizhi Han, Geoffrey Penington, Daniel Ranard,
  Brandon~C. Rayhaun, and Zhou Shangnan.
\newblock Kitaev's quantum double model as an error correcting code.
\newblock {\em {Quantum}}, 4:331, September 2020.
\newblock \doi{10.22331/q-2020-09-24-331} \arxiv{1908.02829}.

\bibitem[CGHP23]{MR4642306}
Jessica Christian, David Green, Peter Huston, and David Penneys.
\newblock A lattice model for condensation in {L}evin-{W}en systems.
\newblock {\em J. High Energy Phys.}, (9):Paper No. 55, 55, 2023.
\newblock \mathscinet{MR4642306} \doi{10.1007/jhep09(2023)055}
  \arxiv{2303.04711}.

\bibitem[CGR24]{2411.08675}
Shawn X.~Cui Cui, C\'{e}sar Galindo, and Diego Romero.
\newblock Twisted {K}itaev quantum double model as local topological order.
\newblock \arxiv{2411.08675}, 2024.

\bibitem[CGRPG19]{MR3995422}
Juan~Ignacio Cirac, Jos\'{e} Garre-Rubio, and David P\'{e}rez-Garc\'{\i}a.
\newblock Mathematical open problems in projected entangled pair states.
\newblock {\em Rev. Mat. Complut.}, 32(3):579--599, 2019.
\newblock \arxiv{1903.09439} \mathscinet{MR3995422}
  \doi{10.1007/s13163-019-00318-x}.

\bibitem[CGW10]{PhysRevB.82.155138}
Xie Chen, Zheng-Cheng Gu, and Xiao-Gang Wen.
\newblock Local unitary transformation, long-range quantum entanglement, wave
  function renormalization, and topological order.
\newblock {\em Phys. Rev. B}, 82:155138, Oct 2010.
\newblock \doi{10.1103/PhysRevB.82.155138} \arxiv{1004.3835}.

\bibitem[Che14]{1502.06845}
Joshua Chen.
\newblock The {T}emperley-{L}ieb categories and skein modules, 2014.
\newblock Undergraduate honors thesis, \arxiv{1502.06845}.

\bibitem[CHK{\etalchar{+}}24]{MR4814524}
Chian~Yeong Chuah, Brett Hungar, Kyle Kawagoe, David Penneys, Mario Tomba,
  Daniel Wallick, and Shuqi Wei.
\newblock Boundary algebras of the {K}itaev quantum double model.
\newblock {\em J. Math. Phys.}, 65(10):Paper No. 102201, 2024.
\newblock \cite{MR4814524} \doi{10.1063/5.0212164} \arxiv{2309.13440}.

\bibitem[CHPJP22]{MR4419534}
Quan Chen, Roberto Hern\'{a}ndez~Palomares, Corey Jones, and David Penneys.
\newblock Q-system completion for {$\rm C^*$} 2-categories.
\newblock {\em J. Funct. Anal.}, 283(3):Paper No. 109524, 2022.
\newblock \mathscinet{MR4419534} \doi{10.1016/j.jfa.2022.109524}
  \arxiv{2105.12010}.

\bibitem[CJP24]{MR4753059}
Quan Chen, Corey Jones, and David Penneys.
\newblock A categorical {C}onnes' {$\chi(M)$}.
\newblock {\em Math. Ann.}, 389(3):2051--2121, 2024.
\newblock \mathscinet{MR4753059} \doi{10.1007/s00208-023-02695-7}
  \arxiv{2111.06378}.

\bibitem[CNN18]{MR3764565}
Matthew Cha, Pieter Naaijkens, and Bruno Nachtergaele.
\newblock The complete set of infinite volume ground states for {K}itaev's
  abelian quantum double models.
\newblock {\em Comm. Math. Phys.}, 357(1):125--157, 2018.
\newblock \mathscinet{MR3764565} \doi{10.1007/s00220-017-2989-4}
  \arXiv{1608.04449}.

\bibitem[Con73]{MR0341115}
Alain Connes.
\newblock Une classification des facteurs de type {${\rm III}$}.
\newblock {\em Ann. Sci. \'Ecole Norm. Sup. (4)}, 6:133--252, 1973.
\newblock \mathscinet{MR0341115}.

\bibitem[CPGSV17]{MR3614053}
J.~I. Cirac, D.~P\'{e}rez-Garc\'{\i}a, N.~Schuch, and F.~Verstraete.
\newblock Matrix product density operators: renormalization fixed points and
  boundary theories.
\newblock {\em Ann. Physics}, 378:100--149, 2017.
\newblock \arxiv{1606.00608} \mathscinet{MR3614053}
  \doi{10.1016/j.aop.2016.12.030}.

\bibitem[CPSV11]{PhysRevB.83.245134}
J.~Ignacio Cirac, Didier Poilblanc, Norbert Schuch, and Frank Verstraete.
\newblock Entanglement spectrum and boundary theories with projected
  entangled-pair states.
\newblock {\em Phys. Rev. B}, 83:245134, Jun 2011.
\newblock \arxiv{1103.3427} \doi{10.1103/PhysRevB.83.245134}.

\bibitem[CW23]{PhysRevB.107.155136}
Arkya Chatterjee and Xiao-Gang Wen.
\newblock Symmetry as a shadow of topological order and a derivation of
  topological holographic principle.
\newblock {\em Phys. Rev. B}, 107:155136, Apr 2023.
\newblock \doi{10.1103/PhysRevB.107.155136} \arxiv{2203.03596}.

\bibitem[DCY15]{MR3420332}
Kenny De~Commer and Makoto Yamashita.
\newblock Tannaka-{K}rein duality for compact quantum homogeneous spaces {II}.
  {C}lassification of quantum homogeneous spaces for quantum {$\rm SU(2)$}.
\newblock {\em J. Reine Angew. Math.}, 708:143--171, 2015.
\newblock \mathscinet{MR3420332} \doi{10.1515/crelle-2013-0074}
  \arxiv{1212.3413}.

\bibitem[DMNO13]{MR3039775}
Alexei Davydov, Michael M{\"u}ger, Dmitri Nikshych, and Victor Ostrik.
\newblock The {W}itt group of non-degenerate braided fusion categories.
\newblock {\em J. Reine Angew. Math.}, 677:135--177, 2013.
\newblock \mathscinet{MR3039775} \arxiv{1009.2117}.

\bibitem[EGNO15]{MR3242743}
Pavel Etingof, Shlomo Gelaki, Dmitri Nikshych, and Victor Ostrik.
\newblock {\em Tensor categories}, volume 205 of {\em Mathematical Surveys and
  Monographs}.
\newblock American Mathematical Society, Providence, RI, 2015.
\newblock \mathscinet{MR3242743} \doi{10.1090/surv/205}.

\bibitem[EK98]{MR1642584}
David~E. Evans and Yasuyuki Kawahigashi.
\newblock {\em Quantum Symmetries on Operator Algebras}.
\newblock Oxford Mathematical Monographs. Oxford Science Publications. The
  Clarendon Press, Oxford University Press, New York, 1998.
\newblock xvi+829 pp. ISBN: 0-19-851175-2, \mathscinet{MR1642584}.

\bibitem[FM77a]{MR0578656}
Jacob Feldman and Calvin~C. Moore.
\newblock Ergodic equivalence relations, cohomology, and von {N}eumann
  algebras. {I}.
\newblock {\em Trans. Amer. Math. Soc.}, 234(2):289--324, 1977.
\newblock \mathscinet{MR0578656} \doi{10.2307/1997924}.

\bibitem[FM77b]{MR0578730}
Jacob Feldman and Calvin~C. Moore.
\newblock Ergodic equivalence relations, cohomology, and von {N}eumann
  algebras. {II}.
\newblock {\em Trans. Amer. Math. Soc.}, 234(2):325--359, 1977.
\newblock \mathscinet{MR0578730} \doi{10.2307/1997925}.

\bibitem[FN15]{MR3426207}
Leander Fiedler and Pieter Naaijkens.
\newblock Haag duality for {K}itaev's quantum double model for abelian groups.
\newblock {\em Rev. Math. Phys.}, 27(9):1550021, 43, 2015.
\newblock \mathscinet{MR3426207} \doi{10.1142/S0129055X1550021X}
  \arxiv{1406.1084}.

\bibitem[GdlHJ89]{MR999799}
Frederick~M. Goodman, Pierre de~la Harpe, and Vaughan~F.R. Jones.
\newblock {\em Coxeter graphs and towers of algebras}.
\newblock Mathematical Sciences Research Institute Publications, 14.
  Springer-Verlag, New York, 1989.
\newblock x+288 pp. ISBN: 0-387-96979-9, \mathscinet{MR999799}.

\bibitem[GHK{\etalchar{+}}24]{2305.14068}
David Green, Peter Huston, Kyle Kawagoe, David Penneys, Anup Poudel, , and Sean
  Sanford.
\newblock Enriched string-net models and their excitations.
\newblock {\em Quantum}, 8:1301, 2024.
\newblock \doi{10.22331/q-2024-03-28-1301} \arxiv{2305.14068}.

\bibitem[GKMR10]{0907.3724}
Debashish Goswami, Zolt{\'a}n K{\'a}d{\'a}r, Annalisa Marzuoli, and Mario
  Rasetti.
\newblock {M}icroscopic description of {2D} topological phases, duality, and
  {3D} state sums.
\newblock {\em Advances in Mathematical Physics}, 2010:671039, 2010.
\newblock \arxiv{0907.3724} \doi{10.1155/2010/671039}.

\bibitem[GLR85]{MR808930}
P.~Ghez, R.~Lima, and J.~E. Roberts.
\newblock {$W\sp \ast$}-categories.
\newblock {\em Pacific J. Math.}, 120(1):79--109, 1985.
\newblock \mathscinet{MR808930}.

\bibitem[GMP{\etalchar{+}}23]{MR4598730}
Pinhas Grossman, Scott Morrison, David Penneys, Emily Peters, and Noah Snyder.
\newblock The {E}xtended {H}aagerup fusion categories.
\newblock {\em Ann. Sci. \'{E}c. Norm. Sup\'{e}r. (4)}, 56(2):589--664, 2023.
\newblock \mathscinet{MR4598730} \doi{10.24033/asens.2541} \arxiv{1810.06076}.

\bibitem[Haa96]{MR1405610}
Rudolf Haag.
\newblock {\em Local quantum physics}.
\newblock Texts and Monographs in Physics. Springer-Verlag, Berlin, second
  edition, 1996.
\newblock Fields, particles, algebras. \mathscinet{MR1405610}
  \doi{10.1007/978-3-642-61458-3}.

\bibitem[Haa16]{MR3465431}
Jeongwan Haah.
\newblock An invariant of topologically ordered states under local unitary
  transformations.
\newblock {\em Comm. Math. Phys.}, 342(3):771--801, 2016.
\newblock \mathscinet{MR3465431} \doi{10.1007/s00220-016-2594-y}.

\bibitem[HBJP23]{MR4640433}
Peter Huston, Fiona Burnell, Corey Jones, and David Penneys.
\newblock Composing topological domain walls and anyon mobility.
\newblock {\em SciPost Phys.}, 15(3):Paper No. 076, 85, 2023.
\newblock \mathscinet{MR4640433} doi{10.21468/SciPostPhys.15.3.076}
  \arxiv{2208.14018}.

\bibitem[HLO{\etalchar{+}}22]{PhysRevLett.128.231603}
Tzu-Chen Huang, Ying-Hsuan Lin, Kantaro Ohmori, Yuji Tachikawa, and Masaki
  Tezuka.
\newblock Numerical evidence for a haagerup conformal field theory.
\newblock {\em Phys. Rev. Lett.}, 128:231603, Jun 2022.
\newblock \doi{10.1103/PhysRevLett.128.231603} \arxiv{2110.03008}.

\bibitem[HM06]{math-ph/0602036}
H.~Halvorson and M.~M{\"u}ger.
\newblock Algebraic quantum field theory.
\newblock In J.~Butterfield and J.~Earman, editors, {\em Philosophy of
  Physics}, pages 731--922. Elsevier, 2006.
\newblock \arxiv{math-ph/0602036}.

\bibitem[Hol23]{MR4580529}
Stefan Hollands.
\newblock Anyonic chains -- {$\alpha$}-induction -- {CFT} -- defects --
  subfactors.
\newblock {\em Commun. Math. Phys.}, 399(3):1549--1621, 2023.
\newblock \mathscinet{MR4580529} \doi{10.1007/s00220-022-04581-w}
  \arxiv{2205.15243}.

\bibitem[{Hon}09]{0907.2204}
Seung-Moon {Hong}.
\newblock {On symmetrization of 6j-symbols and Levin-Wen Hamiltonian}, July
  2009.
\newblock \arxiv{0907.2204}.

\bibitem[HP17]{MR3663592}
Andr\'e Henriques and David Penneys.
\newblock Bicommutant categories from fusion categories.
\newblock {\em Selecta Math. (N.S.)}, 23(3):1669--1708, 2017.
\newblock \mathscinet{MR3663592} \doi{10.1007/s00029-016-0251-0}
  \arxiv{1511.05226}.

\bibitem[IW23]{2310.05790}
Kansei Inamura and Xiao-Gang Wen.
\newblock 2+1d symmetry-topological-order from local symmetric operators in
  1+1d, 2023.
\newblock \arxiv{2310.05790}.

\bibitem[Izu00]{MR1782145}
Masaki Izumi.
\newblock The structure of sectors associated with {L}ongo-{R}ehren inclusions.
  {I}. {G}eneral theory.
\newblock {\em Comm. Math. Phys.}, 213(1):127--179, 2000.
\newblock \mathscinet{MR1782145} \doi{10.1007/s002200000234}.

\bibitem[Izu01]{MR1832764}
Masaki Izumi.
\newblock The structure of sectors associated with {L}ongo-{R}ehren inclusions.
  {II}. {E}xamples.
\newblock {\em Rev. Math. Phys.}, 13(5):603--674, 2001.
\newblock \mathscinet{MR1832764} \doi{10.1142/S0129055X01000818}.

\bibitem[JF22]{MR4444089}
Theo Johnson-Freyd.
\newblock On the classification of topological orders.
\newblock {\em Comm. Math. Phys.}, 393(2):989--1033, 2022.
\newblock \mathscinet{MR4444089} \doi{10.1007/s00220-022-04380-3}
  \arxiv{2003.06663}.

\bibitem[Jon21]{MR4374438}
V.~F.~R. Jones.
\newblock Planar algebras, {I}.
\newblock {\em New Zealand J. Math.}, 52:1--107, 2021.
\newblock \mathscinet{MR4374438} \doi{10.53733/172} \arxiv{math.QA/9909027}.

\bibitem[Jon24]{MR4814692}
Corey Jones.
\newblock D{HR} bimodules of quasi-local algebras and symmetric quantum
  cellular automata.
\newblock {\em Quantum Topol.}, 15(3):633--686, 2024.
\newblock \mathscinet{MR4814692} \doi{10.4171/qt/216} \arxiv{2304.00068}.

\bibitem[JP11]{MR2812459}
Vaughan F.~R. Jones and David Penneys.
\newblock The embedding theorem for finite depth subfactor planar algebras.
\newblock {\em Quantum Topol.}, 2(3):301--337, 2011.
\newblock \arXiv{1007.3173}, \mathscinet{MR2812459}, \doi{10.4171/QT/23}.

\bibitem[JP17]{MR3687214}
Corey Jones and David Penneys.
\newblock Operator algebras in rigid {$\rm C^*$}-tensor categories.
\newblock {\em Comm. Math. Phys.}, 355(3):1121--1188, 2017.
\newblock \mathscinet{MR3687214} \doi{10.1007/s00220-017-2964-0}
  \arxiv{1611.04620}.

\bibitem[JS97]{MR1473221}
Vaughan~F.R. Jones and V.S. Sunder.
\newblock {\em Introduction to Subfactors}.
\newblock London Mathematical Society Lecture Note Series, 234. Cambridge
  University Press, Cambridge, 1997.
\newblock xii+162 pp. ISBN: 0-521-58420-5, \mathscinet{MR1473221}.

\bibitem[Kaw20]{MR4109480}
Yasuyuki Kawahigashi.
\newblock A remark on matrix product operator algebras, anyons and subfactors.
\newblock {\em Lett. Math. Phys.}, 110(6):1113--1122, 2020.
\newblock \mathscinet{MR4109480} \doi{10.1007/s11005-020-01254-4}
  \arxiv{1907.12169}.

\bibitem[Kaw21]{MR4272039}
Yasuyuki Kawahigashi.
\newblock Two-dimensional topological order and operator algebras.
\newblock {\em Internat. J. Modern Phys. B}, 35(8):Paper No. 2130003, 16, 2021.
\newblock \mathscinet{MR4272039} \doi{10.1142/S0217979221300036}
  \arxiv{2102.10953}.

\bibitem[Kis00]{MR1772604}
A.~Kishimoto.
\newblock Locally representable one-parameter automorphism groups of {AF}
  algebras and {KMS} states.
\newblock {\em Rep. Math. Phys.}, 45(3):333--356, 2000.
\newblock \mathscinet{MR1772604} \doi{10.1016/S0034-4877(00)80003-8}.

\bibitem[Kit03]{MR1951039}
A.~Yu. Kitaev.
\newblock Fault-tolerant quantum computation by anyons.
\newblock {\em Ann. Physics}, 303(1):2--30, 2003.
\newblock \mathscinet{MR1951039} \doi{10.1016/S0003-4916(02)00018-0}
  \arxiv{quant-ph/9707021}.

\bibitem[Kit06]{cond-mat/0506438}
Alexei Kitaev.
\newblock Anyons in an exactly solved model and beyond.
\newblock {\em Annals of Physics}, 321(1):2--111, 2006.
\newblock January Special Issue. \doi{10.1016/j.aop.2005.10.005}
  \arxiv{cond-mat/0506438}.

\bibitem[KK12]{MR2942952}
Alexei Kitaev and Liang Kong.
\newblock Models for gapped boundaries and domain walls.
\newblock {\em Comm. Math. Phys.}, 313(2):351--373, 2012.
\newblock \mathscinet{MR2942952} \doi{10.1007/s00220-012-1500-5}
  \arxiv{1104.5047}.

\bibitem[KLPG19]{MR3927082}
Michael~J. Kastoryano, Angelo Lucia, and David Perez-Garcia.
\newblock Locality at the boundary implies gap in the bulk for 2{D} {PEPS}.
\newblock {\em Comm. Math. Phys.}, 366(3):895--926, 2019.
\newblock \arxiv{1709.07691} \mathscinet{MR3927082}
  \doi{10.1007/s00220-019-03404-9}.

\bibitem[KN20]{MR4071117}
Kohtaro Kato and Pieter Naaijkens.
\newblock An entropic invariant for {2D} gapped quantum phases.
\newblock {\em J. Phys. A}, 53(8):085302, jan 2020.
\newblock \mathscinet{MR4071117} \doi{10.1088/1751-8121/ab63a5}
  \arxiv{1810.02376}.

\bibitem[Kon14]{MR3204497}
L.~Kong.
\newblock Some universal properties of {L}evin-{W}en models.
\newblock In {\em X{VII}th {I}nternational {C}ongress on {M}athematical
  {P}hysics}, pages 444--455. World Sci. Publ., Hackensack, NJ, 2014.
\newblock \mathscinet{MR3204497} \doi{10.1142/9789814449243\_0042}
  \arxiv{1211.4644}.

\bibitem[KPW04]{MR2085108}
Tsuyoshi Kajiwara, Claudia Pinzari, and Yasuo Watatani.
\newblock Jones index theory for {H}ilbert {$C\sp *$}-bimodules and its
  equivalence with conjugation theory.
\newblock {\em J. Funct. Anal.}, 215(1):1--49, 2004.
\newblock \mathscinet{MR2085108} \doi{10.1016/j.jfa.2003.09.008}
  \arxiv{math/0301259}.

\bibitem[Kri70a]{MR0259624}
Wolfgang Krieger.
\newblock On constructing non-{$\sp{\ast} $}isomorphic hyperfinite factors of
  type {III}.
\newblock {\em J. Functional Analysis}, 6:97--109, 1970.
\newblock \mathscinet{MR02359624} \doi{10.1016/0022-1236(70)90049-2}.

\bibitem[Kri70b]{MR0414823}
Wolfgang Krieger.
\newblock On the {A}raki-{W}oods asymptotic ratio set and non-singular
  transformations of a measure space.
\newblock In {\em Contributions to {E}rgodic {T}heory and {P}robability
  ({P}roc. {C}onf., {O}hio {S}tate {U}niv., {C}olumbus, {O}hio, 1970)}, volume
  Vol. 160 of {\em Lecture Notes in Math}, pages pp 158--177. Springer, Berlin,
  1970.

\bibitem[KW00]{MR1624182}
Tsuyoshi Kajiwara and Yasuo Watatani.
\newblock Jones index theory by {H}ilbert {$C^*$}-bimodules and {$K$}-theory.
\newblock {\em Trans. Amer. Math. Soc.}, 352(8):3429--3472, 2000.
\newblock \mathscinet{MR1624182}, \doi{10.1090/S0002-9947-00-02392-8}.

\bibitem[KZ22]{2205.05565}
Liang Kong and Zhi-Hao Zhang.
\newblock An invitation to topological orders and category theory, 2022.
\newblock \arxiv{2205.05565}.

\bibitem[LDOV23]{2112.09091}
Laurens Lootens, Clement Delcamp, Gerardo Ortiz, and Frank Verstraete.
\newblock Dualities in one-dimensional quantum lattice models: Symmetric
  {H}amiltonians and matrix product operator intertwiners.
\newblock {\em PRX Quantum}, 4:020357, Jun 2023.
\newblock \doi{10.1103/PRXQuantum.4.020357} \arxiv{2112.09091}.

\bibitem[LDV22]{2211.03777}
Laurens Lootens, Clement Delcamp, and Frank Verstraete.
\newblock {Dualities in one-dimensional quantum lattice models: topological
  sectors}, 11 2022.
\newblock \arxiv{2211.03777}.

\bibitem[LLB21]{PhysRevB.103.195155}
Chien-Hung Lin, Michael Levin, and Fiona~J. Burnell.
\newblock Generalized string-net models: A thorough exposition.
\newblock {\em Phys. Rev. B}, 103:195155, May 2021.
\newblock \doi{10.1103/PhysRevB.103.195155} \arxiv{2012.14424}.

\bibitem[LR97]{MR1444286}
R.~Longo and J.~E. Roberts.
\newblock A theory of dimension.
\newblock {\em $K$-Theory}, 11(2):103--159, 1997.
\newblock \mathscinet{MR1444286} \doi{10.1023/A:1007714415067}
  \arxiv{funct-an/9604008}.

\bibitem[LVDCSV22]{PhysRevB.105.085130}
Laurens Lootens, Bram Vancraeynest-De~Cuiper, Norbert Schuch, and Frank
  Verstraete.
\newblock Mapping between morita-equivalent string-net states with a constant
  depth quantum circuit.
\newblock {\em Phys. Rev. B}, 105:085130, Feb 2022.
\newblock \doi{10.1103/PhysRevB.105.085130} \arxiv{2112.12757}.

\bibitem[LW05]{PhysRevB.71.045110}
Michael~A. Levin and Xiao-Gang Wen.
\newblock String-net condensation: A physical mechanism for topological phases.
\newblock {\em Phys. Rev. B}, 71:045110, Jan 2005.
\newblock doi{10.1103/PhysRevB.71.045110} \arxiv{cond-mat/0404617}.

\bibitem[M{\"u}g03]{MR1966525}
Michael M{\"u}ger.
\newblock From subfactors to categories and topology. {II}. {T}he quantum
  double of tensor categories and subfactors.
\newblock {\em J. Pure Appl. Algebra}, 180(1-2):159--219, 2003.
\newblock \mathscinet{MR1966525} \doi{10.1016/S0022-4049(02)00248-7}
  \arXiv{math.CT/0111205}.

\bibitem[Naa11]{MR2804555}
Pieter Naaijkens.
\newblock Localized endomorphisms in {K}itaev's toric code on the plane.
\newblock {\em Rev. Math. Phys.}, 23(4):347--373, 2011.
\newblock \mathscinet{MR2804555} \doi{10.1142/S0129055X1100431X}
  \arxiv{1012.3857}.

\bibitem[NS97]{MR1463825}
Florian Nill and Korn{\'e}l Szlach{\'a}nyi.
\newblock Quantum chains of {H}opf algebras with quantum double cosymmetry.
\newblock {\em Comm. Math. Phys.}, 187(1):159--200, 1997.
\newblock \mathscinet{MR1463825} \doi{10.1007/s002200050132}.

\bibitem[Ocn88]{MR996454}
Adrian Ocneanu.
\newblock Quantized groups, string algebras and {G}alois theory for algebras.
\newblock In {\em Operator algebras and applications, Vol.\ 2}, volume 136 of
  {\em London Math. Soc. Lecture Note Ser.}, pages 119--172. Cambridge Univ.
  Press, Cambridge, 1988.
\newblock \mathscinet{MR996454}.

\bibitem[Oga22]{MR4362722}
Yoshiko Ogata.
\newblock A derivation of braided {$C^*$}-tensor categories from gapped ground
  states satisfying the approximate {H}aag duality.
\newblock {\em J. Math. Phys.}, 63(1):Paper No. 011902, 48, 2022.
\newblock \mathscinet{MR4362722}, \doi{10.1063/5.0061785}, \arxiv{2106.15741}.

\bibitem[Oga24]{MR4721705}
Yoshiko Ogata.
\newblock Type of local von {N}eumann algebras in abelian quantum double
  models.
\newblock {\em Ann. Henri Poincar\'{e}}, 25(4):2353--2387, 2024.
\newblock \mathscinet{MR4721705} \doi{10.1007/s00023-023-01363-5}
  \arxiv{2212.09036}.

\bibitem[Pen20]{MR4133163}
David Penneys.
\newblock Unitary dual functors for unitary multitensor categories.
\newblock {\em High. Struct.}, 4(2):22--56, 2020.
\newblock \mathscinet{MR4133163} \arxiv{1808.00323}.

\bibitem[PV15]{MR3406647}
Sorin Popa and Stefaan Vaes.
\newblock Representation theory for subfactors, {$\lambda$}-lattices and {$\rm
  C^*$}-tensor categories.
\newblock {\em Comm. Math. Phys.}, 340(3):1239--1280, 2015.
\newblock \mathscinet{MR3406647} \doi{10.1007/s00220-015-2442-5}
  \arxiv{1412.2732}.

\bibitem[QW20]{QIU2020168318}
Yang Qiu and Zhenghan Wang.
\newblock Ground subspaces of topological phases of matter as error correcting
  codes.
\newblock {\em Annals of Physics}, 422:168318, 2020.
\newblock \doi{https://doi.org/10.1016/j.aop.2020.168318} \arxiv{2004.11982}.

\bibitem[SV93]{MR1234107}
K.~Szlach{\'a}nyi and P.~Vecserny{\'e}s.
\newblock Quantum symmetry and braid group statistics in {$G$}-spin models.
\newblock {\em Comm. Math. Phys.}, 156(1):127--168, 1993.
\newblock \mathscinet{MR1234107} \doi{10.1007/BF02096735}.

\bibitem[Tak02]{MR1873025}
Masamichi Takesaki.
\newblock {\em Theory of operator algebras. {I}}, volume 124 of {\em
  Encyclopaedia of Mathematical Sciences}.
\newblock Springer-Verlag, Berlin, 2002.
\newblock Reprint of the first (1979) edition, Operator Algebras and
  Non-commutative Geometry, 5, ISBN: 3-540-42248-X, \mathscinet{MR1873025}.

\bibitem[Tak03a]{MR1943007}
M.~Takesaki.
\newblock {\em Theory of operator algebras. {III}}, volume 127 of {\em
  Encyclopaedia of Mathematical Sciences}.
\newblock Springer-Verlag, Berlin, 2003.
\newblock Operator Algebras and Non-commutative Geometry, 8,
  \mathscinet{MR1943007} \doi{10.1007/978-3-662-10453-8}.

\bibitem[Tak03b]{MR1943006}
Masamichi Takesaki.
\newblock {\em Theory of operator algebras. {II}}, volume 125 of {\em
  Encyclopaedia of Mathematical Sciences}.
\newblock Springer-Verlag, Berlin, 2003.
\newblock Operator Algebras and Non-commutative Geometry, 6
  \mathscinet{MR1943006}.

\bibitem[vLSWW24]{2401.07299}
Lauritz van Luijk, Alexander Stottmeister, Reinhard~F. Werner, and Henrik
  Wilming.
\newblock Embezzlement of entanglement, quantum fields, and the classification
  of von {N}eumann algebras.
\newblock \arxiv{2401.07299}, 2024.

\bibitem[VLVD{\etalchar{+}}22]{PhysRevLett.128.231602}
Robijn Vanhove, Laurens Lootens, Maarten Van~Damme, Ramona Wolf, Tobias~J.
  Osborne, Jutho Haegeman, and Frank Verstraete.
\newblock Critical lattice model for a {H}aagerup conformal field theory.
\newblock {\em Phys. Rev. Lett.}, 128:231602, Jun 2022.
\newblock \mathscinet{MR4447615} \doi{10.1103/PhysRevLett.128.231602}
  \arxiv{2110.03532}.

\bibitem[Wal23]{MR4650344}
Daniel Wallick.
\newblock An algebraic quantum field theoretic approach to toric code with
  gapped boundary.
\newblock {\em J. Math. Phys.}, 64(10):Paper No. 102301, 29, 2023.
\newblock \mathscinet{MR4650344} \doi{10.1063/5.0149891} \arXiv{2212.01952}.

\bibitem[Yam04]{MR2091457}
Shigeru Yamagami.
\newblock Frobenius duality in {$C^*$}-tensor categories.
\newblock {\em J. Operator Theory}, 52(1):3--20, 2004.
\newblock \mathscinet{MR2091457}.

\end{thebibliography}

\end{document}